\documentclass[11pt]{article} 
\usepackage[utf8]{inputenc}
\usepackage{fullpage}
\usepackage{upref}
\usepackage{enumerate}
\usepackage{latexsym}
\usepackage{pdfpages}
\usepackage{ifpdf}
\ifpdf 
\else 
\usepackage[hyphenbreaks]{breakurl}
\fi
\usepackage[bookmarks, colorlinks=true, urlcolor=violet, linkcolor=blue, citecolor=red, hyperindex=true, linktocpage=true, pagebackref=true, final=true]{hyperref}
\usepackage{stmaryrd}
\usepackage{dsfont}
\usepackage{mleftright}\mleftright

\makeatletter
\def\@fnsymbol#1{\ensuremath{\ifcase#1\or *\or \mathsection\or \mathparagraph\or \|\or **\or \ddagger \else\@ctrerr\fi}}
\makeatother

\usepackage[boxed]{algorithm2e}
\usepackage[noend]{algpseudocode}

\usepackage{pdflscape}

\usepackage{color,graphics}
\usepackage{comment} 
\usepackage{caption}
\usepackage{subcaption}
\usepackage{wrapfig}
\usepackage{braket}
\usepackage{amssymb}
\usepackage{amsmath}
\usepackage{mathtools}
\usepackage{mathabx}
\usepackage{amsthm}
\usepackage{mathrsfs}
\usepackage{mathtools}
\usepackage{commath}
\usepackage{thmtools,thm-restate}

\DeclarePairedDelimiterX\ketbra[2]{|}{|}{#1 \delimsize\rangle\delimsize\langle #2}
\DeclarePairedDelimiter{\nrm}{\lVert}{\rVert}

\newenvironment{claim}[1]{\par\noindent\underline{Claim:}\space#1}{}
\newenvironment{claimproof}[1]{\par\noindent\underline{Proof:}\space#1}{\hfill $\blacksquare$}

\usepackage{scalerel,stackengine}
\stackMath
\newcommand\reallywidehat[1]{%
\savestack{\tmpbox}{\stretchto{%
  \scaleto{%
    \scalerel*[\widthof{\ensuremath{#1}}]{\kern-.6pt\bigwedge\kern-.6pt}%
    {\rule[-\textheight/2]{1ex}{\textheight}}
  }{\textheight}%
}{0.5ex}}%
\stackon[1pt]{#1}{\tmpbox}%
}

\usepackage{amsfonts}
\usepackage{varioref}
\usepackage[nameinlink]{cleveref}
\usepackage[many]{tcolorbox}

\usepackage{xcolor}
\hypersetup{
	colorlinks,
	linkcolor={red!75!black},
	citecolor={blue!75!black},
	urlcolor={blue!75!black}
}
\usepackage[hyperpageref]{backref}

\newtheorem{theorem}{Theorem}[section]
\newtheorem{lemma}[theorem]{Lemma}
\newtheorem{corollary}[theorem]{Corollary}
\newtheorem{proposition}[theorem]{Proposition}
\newtheorem{prop}[theorem]{Proposition}

\theoremstyle{remark}

\theoremstyle{definition}

\AtBeginDocument{}


\newcommand{\E}{\mathbb{E}} 	

\newcommand{\inProd}[2]{\langle{#1},{#2}\rangle}

\newcommand{\poly}{\mathrm{poly}}
\newcommand{\polylog}{\mathrm{polylog}}






\newcommand{\vect}[1]{\pmb{#1}}
\newcommand{\eps}{\varepsilon}
\renewcommand{\epsilon}{\varepsilon}
\renewcommand{\vec}[1]{\vect{#1}}

\newcommand{\R}{\mathbb{R}}
\newcommand{\C}{\mathbb{C}}
\newcommand{\propDens}{p}


\newcommand{\ipd}[2]{\left \langle {#1}, {#2} \right \rangle}

\newcommand{\bigO}[1]{\mathcal{O}\left( #1 \right)}
\newcommand{\bigOt}[1]{\tilde{\mathcal{O}}\left( #1 \right)}

\newcommand{\Exp}{\mathbb{E}}
\newtheorem{definition}[theorem]{Definition}

\usepackage{ifdraft}
\ifdraft{
	\newcommand{\authnote}[3]{{\footnotesize\color{#3} ({\bf #1:} #2)}}
}{
	\newcommand{\authnote}[3]{}
}
\newcommand{\rnote}[1]{\authnote{Ronald}{#1}{blue}}
\newcommand{\ynote}[1]{\authnote{Yanlin}{#1}{red}}
\newcommand{\anote}[1]{\authnote{András}{#1}{teal}}

\title{A Quantum Speed-Up\\ 
for Approximating the Top Eigenvectors of a Matrix}
\author{Yanlin Chen\thanks{QuSoft, CWI, the Netherlands. {\tt yanlin.chen@cwi.nl}} 
\and
Andr\'{a}s Gily\'{e}n\thanks{HUN-REN Alfr\'{e}d R\'{e}nyi Institute of Mathematics, Budapest, Hungary. 
Funded by the EU's Horizon 2020 Marie Sk{\l}odowska-Curie program QuantOrder-891889 and the QuantERA II project \href{https://quantera.eu/hqcc/}{HQCC}-101017733 in coordination with the national funding organisation NKFIH.
{\tt gilyen@renyi.hu}}
\and 
Ronald de Wolf\thanks{QuSoft, CWI and University of Amsterdam, the Netherlands. Partially supported by the Dutch Research Council (NWO) through Gravitation-grant Quantum Software Consortium, 024.003.037. {\tt rdewolf@cwi.nl}} 
}
\date{}

\begin{document}

\maketitle

\begin{abstract}
Finding a good approximation of the top eigenvector of a given $d\times d$ matrix~$A$ is a basic and important computational problem, with many applications. We give two different quantum algorithms that, given query access to the entries of a Hermitian matrix $A$ and assuming a constant eigenvalue gap, output a classical description of a good approximation of the top eigenvector: one algorithm with time complexity $\mathcal{\tilde{O}}(d^{1.75})$ and one with time complexity $d^{1.5+o(1)}$ (the first algorithm has a slightly better dependence on the $\ell_2$-error of the approximating vector than the second, and uses different techniques of independent interest).
Both of our quantum algorithms provide a polynomial speed-up over the best-possible classical algorithm, which needs $\Omega(d^2)$ queries to entries of $A$, and hence $\Omega(d^2)$ time.
We extend this to a quantum algorithm that outputs a classical description of the subspace spanned by the top-$q$ eigenvectors in time $qd^{1.5+o(1)}$.
We also prove a nearly-optimal lower bound of $\tilde{\Omega}(d^{1.5})$ on the quantum query complexity of approximating the top eigenvector.

Our quantum algorithms run a version of the classical power method that is robust to certain benign kinds of errors, where we implement each matrix-vector multiplication with small and well-behaved error on a quantum computer, in different ways for the two algorithms. 
Our first algorithm estimates the matrix-vector product one entry at a time, using a new ``Gaussian phase estimation'' procedure.
Our second algorithm uses block-encoding techniques to compute the matrix-vector product as a quantum state, from which we obtain a classical description by a new time-efficient unbiased pure-state tomography procedure. This procedure uses an essentially optimal number $\bigO{d\log(d)/\eps^2}$ of ``conditional sample states''; if we have a state-preparation unitary available rather than just copies of the state, then this $\eps$-dependence can be improved further quadratically. Our procedure comes with improved statistical properties and faster runtime compared to earlier pure-state tomography algorithms.
We also develop an almost optimal time-efficient process-tomography algorithm for reflections around bounded-rank subspaces, providing the basis for our top-eigensubspace estimation algorithm, and in turn providing a pure-state tomography algorithm that only requires a reflection about the state rather than a state preparation unitary as input.
\end{abstract}

\thispagestyle{empty}
\setcounter{page}{-1}

\newpage\tableofcontents\thispagestyle{empty}\newpage

\section{Introduction}

Arguably the most important property of a diagonalizable $d\times d$ matrix~$A$ is its largest eigenvalue~$\lambda_1$, with an associated eigenvector $v_1$. This top eigenvector~$v_1$ can be thought of as the most important ``direction'' in which the matrix~$A$ operates.
The ability to efficiently find $v_1$ is an important tool in many applications, for instance in the PageRank algorithm of Google's search engine, as a starting point for principal component analysis (for clustering or dimensionality-reduction), for Fisher discriminant analysis, or in continuous optimization problems where sometimes the best thing to do is to move the current point in the direction of the top eigenvector of an associated matrix~\cite{Jol86PCA,KV09spectral}. \ynote{added two references in response to Reviewer 1}

One way to find the top eigenvector of $A$ is to diagonalize the whole matrix. Theoretically this takes matrix multiplication time: $\mathcal{O}(d^{\omega})$ where $\omega\in[2,2.37\ldots)$ is the still-unknown matrix multiplication exponent~\cite{williams2023newboundsmatrixmultiplication,alman2024asymmetryyieldsfastermatrix}. In practice Gaussian elimination (which takes time $\mathcal{O}(d^3)$) is typically faster, unless $d$ is enormous. Diagonalization gives us not only the top eigenvector but a complete orthonormal set of $d$ eigenvectors. However, this is doing too much if we only care about finding the top eigenvector, or the top-$q$ eigenvectors for some $q\ll d$, and better methods exist in this case (see e.g.~\cite{parlett:symeigenvalue} for a whole book about this).

\subsection{The power method for approximating a top eigenvector}

A quite efficient method for (approximately) finding the top eigenvector is the iterative ``power method''. This uses simple matrix-vector multiplications instead of any kind of matrix decompositions, and works as follows.
We start with a random unit vector $w_0$ (say with i.i.d.\ Gaussian entries). This is a linear combination $\sum_{i=1}^d \alpha_i v_i$ of the $d$ unit eigenvectors $v_1,\ldots,v_d$ of the Hermitian matrix $A$, with coefficients of magnitude typically around $1/\sqrt{d}$. Then we apply $A$ to this vector some $K$ times, computing $w_1=Aw_0$, $w_2=Aw_1$, etc., up to $w_K=A^K w_0$.  This has the effect of multiplying each coefficient $\alpha_i$ with the powered eigenvalue $\lambda_i^K$. If there is some ``gap'' between the first two eigenvalues (say $|\lambda_1|-|\lambda_2|\geq\gamma>0$, or $|\lambda_1/\lambda_2|>1$), then the relative weight of the coefficient of $v_1$ will start to dominate all the other coefficients even already for small $K$, and the renormalization of the final vector $w_K=A^K w_0$ will be close to $v_1$, up to global phase. Specifically, if $A$ has bounded operator norm and eigenvalue gap $\gamma$, then $K=\bigO{\log(d)/\gamma}$ iterations suffice to approximate $v_1$ up to $1/\poly(d)$ $\ell_2$-error (see e.g.~\cite[Section 8.2.1]{golub&vanloan:matrixcomp} for details).

The cost of this algorithm is dominated by the $K$ matrix-vector multiplications, each of which costs $\bigOt{d^2}$ time classically (or $\bigOt{m}$ time if $A$ is sparse, with only $m$ nonzero entries given in some easily-accessible way like lists of nonzero entries for each row and column). Hence if the eigenvalue gap $\gamma$ is not too small, say constant or at least $1/\polylog(d)$, then the power method takes $\mathcal{\tilde{O}}(d^2)$ time to approximate a top eigenvector.\footnote{The $\bigOt{\cdot}$ notation suppresses polylog factors in $d,\eps,\delta$: $\bigOt{T}=\bigO{T\polylog(d/(\eps\delta))}$.}  Unsurprisingly, as we show later in this paper, $\Omega(d^2)$ queries to the entries of $A$ are also \emph{necessary} for classical algorithms for this.

\subsection{Our results: quantum algorithms}\label{ssec:resultsalgos}

Our main results in this paper are faster quantum algorithms for (approximately) finding the top eigenvector of a Hermitian matrix~$A$, and nearly matching lower bounds.\footnote{If the matrix $A$ is non-Hermitian, then we could instead use the Hermitian matrix $A'=\left[\begin{array}{cc}0 & A\\ A^\dagger & 0\end{array}\right]$ for finding the largest (left or right) singular value of $A$, replacing the eigenvalue gap with the singular value gap of $A$.} 

\paragraph{Quantum noisy power method.}
On a high level we just run the power method to find a good approximation for $v_1$, with classical representations of $w_0$ and all intermediate vectors, but we perform each matrix-vector multiplication \emph{approximately} using a quantum computer.\footnote{\label{note:LBexactmatrixvector}\emph{Exact} matrix-vector multiplication takes $\Omega(d^2)$ quantum queries to entries of $A$ (and hence does not give speed-up). This is easy to see by taking $A\in\{0,1/d\}^{d\times d}$ and $w=(\frac{1}{\sqrt{d}},\ldots,\frac{1}{\sqrt{d}})^T$, because then $d^{1.5}Aw$ gives the number of nonzero entries in~$A$. It is well-known that $\Omega(d^2)$ quantum queries are needed to count this number exactly.}

We give two different quantum algorithms for approximate matrix-vector multiplication.
Our first algorithm uses that each entry of the vector $Aw$ is an inner product between a row of $A$ and the column-vector $w$. Because such an inner product is the sum of $d$ numbers, we may hope to approximate it well via some version of amplitude estimation or quantum counting, using roughly $\sqrt{d}$ time per entry and $d^{1.5}$ time for all $d$ entries of $Aw$ together. 
This approach is easier said than done, because basic quantum-counting subroutines produce small errors in the approximation of each entry, and those errors might add up to a large $\ell_2$-error in the $d$-dimensional vector $Aw$ as a whole. To mitigate this issue we develop a ``Gaussian phase estimation'' procedure that can estimate one entry of $Aw$ 
with a complexity that is similar to standard phase estimation, but with well-behaved \emph{sub-Gaussian} error. These well-behaved errors in individual entries typically still add up to a large $\ell_2$-error for the vector $Aw$ as a whole. However, with very high probability the error remains small in one or a few fixed directions---including the direction of the unknown top eigenvector. 
To speed up the computation of each entry, we split the rows into ``small'' and ``large'' entries, and handle them separately.
This divide-and-conquer approach uses $\bigOt{d^{1.75}}$ time in total for the $d$ entries of $Aw$ (\Cref{thm:qNPM_top_Gaussian}). This is asymptotically worse than our second algorithm (described below), but we still feel it merits inclusion in this paper because it uses an intuitive entry-by-entry approach, it has a slightly better dependence on the precision than our second algorithm, and most importantly our new technique of Gaussian phase estimation may find applications elsewhere.

Our second algorithm is faster and more ``holistic''. It does not approximate the matrix-vector product $Aw$ entry-by-entry. Instead it implements a block-encoding of the matrix $A$, uses that to (approximately) produce $Aw$ as a $\log(d)$-qubit state, and then applies a subtle tomography procedure to obtain a classical estimate of the vector $Aw$ with small $\ell_2$-error.\footnote{In fact our second algorithm first implements a block-encoding of the rank-1 projector $\Pi=v_1v_1^\dagger$ using $\bigOt{1/\gamma}$ applications of an approximate block-encoding of the matrix $A$ (by applying QSVT \cite[Theorem 31]{gilyen2018QSingValTransf}), and then applies our state tomography algorithm to obtain a classical estimate of the vector $\Pi w$, which is proportional to $v_1$. This increases the spectral gap from $\gamma$ to $\Theta(1)$ and hence further improves our $\gamma$-dependency.}
When used in our version of the power method, this classical vector is then stored in a QRAM-data structure (see below) that makes it easy to prepare $Aw$ as a quantum state in the next iteration,  which applies $A$ again.
Our new tomography procedure incorporates ideas from~\cite{kerenidis2018QIntPoint} and~\cite{apeldoorn2022QTomographyWStatePrepUnis}; it matches the latter's essentially optimal query complexity, but improves upon both their time complexities. Doing tomography at the end of each iteration of the power method is somewhat expensive, but still leaves us with a time complexity of only $d^{1.5+o(1)}$ (\Cref{cor:qeigenspacesparse} with $q=1$ and sparsity set to $s=d$), which turns out to be a near-optimal quantum speed-up over classical, as our lower bounds (discussed below) imply.

It is worth highlighting our result about preparing the top eigenvector of $A$ as a $\log(d)$-qubit quantum state $\ket{v_1}$, which we have ``under the hood'' in our second algorithm. This preparation can (with high success probability and small $\ell_2$-error) be done in time roughly $(d/\gamma)^{1+o(1)}$ (\Cref{cor:prepare_v1} with sparsity set to $s=d$). This is optimal up to the $o(1)$ in the exponent; maybe surprisingly, for constant gap preparing $\ket{v_1}$ is not significantly more expensive than the easier task of just approximating the top eigen\emph{value} $\lambda_1$, for which we prove an $\Omega(d)$ lower bound on the required number of quantum queries to entries of~$A$ (\Cref{prop:LB_top_eigenvalue} with sparsity set to $s=d$).
Note that putting some form of quantum state tomography on top of the preparation of $\ket{v_1}$ is \emph{not} enough to achieve the time complexity of our second algorithm, as it would result in a quantum algorithm that produces a classical description of (an approximation of) $v_1$ in time roughly $d^2$ rather than roughly $d^{1.5}$.

In both of our algorithms, the vector resulting in each iteration from our approximate matrix-vector multiplication will have small errors compared to the perfect matrix-vector product.
In the basic power method we cannot tolerate small errors in adversarial directions: if $w_0$ has roughly $1/\sqrt{d}$ overlap with the top eigenvector $v_1$, and we compute $Aw_0$ with $\ell_2$-error $>\lambda_1/\sqrt{d}$, then our approximation to the vector $Aw_0$ may have \emph{no overlap with $v_1$ at all anymore}! If this happens, if we lose the initially-small overlap with the top eigenvector, then the power method fails to converge to $v_1$ even if all later matrix-vector multiplications are implemented perfectly.
Fortunately, Hardt and Price~\cite{HP15NoisyPowerMethod} have already shown that the power method is robust against errors in the matrix-vector computations if they are sufficiently well-behaved (in particular, entrywise sub-Gaussian errors with moderate variance suffice). Much of the technical effort in our quantum subroutines for matrix-vector multiplication is to ensure that the errors in the resulting vector are indeed sufficiently well-behaved not to break the noisy power method.%
\footnote{For simplicity we assume here that $\gamma$ (or some sufficiently good approximation of it) is known to our algorithm. However, because we can efficiently approximate $|\lambda_1|$ 
(by~\cite[Lemma 50]{apeldoorn2017QSDPSolvers}, one can estimate $\lambda_1$ with additive error $\gamma/4$ using $\mathcal{\tilde{O}}(d^{1.5}/\gamma)$ time) and verify whether the output of our algorithm is approximately an eigenvector for this eigenvalue by one approximate matrix-vector computation, we can actually try exponentially decreasing guesses for $\gamma$ until the algorithm returns an approximate top eigenvector.

It should be noted that our algorithm has polynomial dependence on the precision $\eps$, namely linear in $1/\eps$, which is worse than the $\log(1/\eps)$ dependence of the classical power method with perfect matrix-vector calculations. This is the price we pay for our polynomial speed-up in terms of the dimension. For applications where the precision need not be extremely small, our polynomial dependence on~$\eps$ would be an acceptable price to pay.}

\paragraph{Finding the top-\texorpdfstring{$q$}{q} eigenvectors.}
Going beyond just the top eigenvector, the ability to find the top-$q$ eigenvectors (with $q\ll d$) is crucial for many applications in machine learning and data analysis, such as spectral clustering, principal component analysis, low-rank approximation of $A$, and dimensionality reduction of $d$-dimensional data vectors $w$ (e.g., projecting the data vectors onto the span of the top-$q$ eigenvectors of the covariance matrix $A=\Exp[ww^T]$). 
In most cases it suffices to (approximately) find the subspace spanned by the top-$q$ eigenvectors of $A$ rather than the individual top-$q$ eigenvectors $v_1,\ldots,v_q$, which is fortunate because distinguishing $v_1,\ldots,v_q$ can be quite expensive if the corresponding eigenvalues $\lambda_1,\ldots,\lambda_q$ are close together.

The noisy power method can also  (approximately) find the subspace spanned by the top-$q$ eigenvectors of $A$, assuming some known gap~$\gamma$ between the $q$th and $(q+1)$th eigenvalue. 
Using our knowledge of the gap, we give an algorithm to approximate $\lambda_q$ (\Cref{cor:est_q}).
Knowing $\lambda_q$ and this gap (at least approximately), we then show how a block-encoding of $A$ can be efficiently converted using quantum singular-value transformation~\cite{gilyen2018QSingValTransf} into a block-encoding of the projector $\Pi$ that projects onto the subspace spanned by the top-$q$ eigenvectors. In \Cref{ssec:tomolowrankreflections} we give a new almost optimal process-tomography algorithm for recovering the projector $\Pi$, assuming only the ability to apply (controlled) reflections $2\Pi-I$ about the rank-$q$ subspace that we are trying to recover (\Cref{thm:LearnProj}). This algorithm, applied to $\Pi$, gives us the subspace corresponding to the top-$q$ eigenvectors of~$A$. For constant eigenvalue gap and desired precision, it uses time $qd^{1.5+o(1)}$. In fact, what we above called our ``second algorithm'' for finding the top eigenvector is just the special case $q=1$. In the case where $A$ is $s$-sparse (meaning each row and column of $A$ has $\leq s$ nonzero entries) and we have sparse-query-access to it, the time complexity becomes $q\sqrt{s}d^{1+o(1)}$ (\Cref{cor:qeigenspacesparse}; this result implies the claim of the previous sentence by setting $s=d$).
If the pairwise spacing between the first $q$ eigenvalues is at least $\Omega(1/q)$, then we can also (approximately) \emph{find} each of the eigenvectors $v_1,\ldots,v_q$ individually, at the expensive of $\poly(q)$ more time.

As a byproduct of this algorithm we also obtain a qualitatively improved tomography procedure that works assuming the ability to reflect around the state that we want to estimate, but does not need the stronger assumption of being able to prepare that state.
%

\paragraph{Our computational model.} 
The computational model for our quantum algorithms is that we have query access to the entries of $A$,\rnote{edited the paragraphs below in response to Review 3} which can be, e.g., stored in quantum-accessible classical memory (``quantum read-only memory'', a.k.a.\ QROM). \anote{Made it a bit more general hinting at the possibility of computing matrix entries on the fly.}
The runtime of a quantum algorithm, or of a classical algorithm with quantum subroutines, is measured by the total number of queries and other elementary operations (one- and two-qubit gates, classical RAM-operations) on the worst-case input. 

Our algorithms also use $\bigOt{d}$ bits of quantum-accessible classical-writable memory (QRAM, also sometimes called QCRAM) in order to store classical descriptions of the intermediate $d$-dimensional approximating vectors. In analogy with classical RAM, a QRAM is a device that stores some $m$-bit string $z=z_0\ldots z_{m-1}$ and allows efficient access to the individual bits of $z$, also to several bits in superposition: a ``read'' operation corresponds to the unitary $O_z$ that maps $\ket{i,b}\to\ket{i,b\oplus z_i}$, for $i\in[m]-1$ and $b\in\{0,1\}$.
Note that the QRAM (and QROM) memory content itself is classical throughout the algorithm: it is just a string $z$, not a superposition of such strings. A ``write'' operation for the QRAM corresponds to changing a bit of $z$; this is a purely classical operation, and cannot happen in superposition.
If only ``read'' operations are allowed, this is also sometimes called a QROM (``quantum read-only memory''), and typically classical input is assumed to be given in this form for the type of algorithms we and many others consider. The intuition is that read and write operations should be executable cheaply, for instance in time $\mathcal{O}(\log m)$ if we put the $m$ bits of $z$ on the leaves of a depth-$\log m$ binary tree, and we treat an address $i$ as a bit-by-bit route from the root to the addressed leaf. 

It should be noted that QRAM and QROM are controversial notions in some corners of quantum computing. In the case of QROM one could circumvent the controversy by assuming that the entries to our input matrix $A$ are computed for us by a small circuit,\footnote{Such an efficient procedure will, however, have to rely on structure present in the data, and unfortunately does not apply in general to problems where one aims to process some real-world data.} but for our use of QRAM this is not an option. The issue is not so much that a circuit for $O_z$ for querying the $m$-bit string $z$ uses roughly $m$ gates (because the same is true for classical RAM and is not really considered an issue there) but that running such an operation fault-tolerantly on a superposition like $\sum_i \alpha_i\ket{i,0}$ seems to induce large overheads.
Classical RAM is not considered a problematic notion because nowadays it is implemented in very fast and practically error-free hardware without the need to do error-correction. It is conceivable that in the distant future quantum hardware implementations will become so good that one can similarly implement classical RAM on them with similar efficiency. Since one has to allow quantum superposition anyway in order to do anything in quantum computing, we feel that assuming QRAM is conceptually acceptable---especially for a theoretical computer science paper such as this one---though it is clearly not something for the small and noisy quantum computers we have today and in the near future.

We note, however, that in some situations we can actually avoid the use of QRAM altogether. In particular, we can make our second algorithm QRAM-free if we have some way of preparing the quantum state $\ket{w}$ corresponding to the intermediate approximating vectors (instead of using a KP-tree stored in QRAM for state-preparation). Given such a state-preparation procedure, the number of applications of the block-encoding $U_\Pi$ in \Cref{thm:LearnProj} remains the same.
Accordingly, if we only care about the number of queries to entries of the input matrix (instead of time complexity), then we can get an $\tilde{\mathcal{O}}(d^{1.5+o(1)})$-vs-$\Omega(d^2)$ quantum-classical query-complexity separation for approximating the top eigenvector without using any QRAM, because we can prepare $\ket{w}$ from its classical description using a circuit of $\bigOt{d}$ gates that uses no QRAM and no queries to entries of the input matrix. Also, our state-tomography procedure starting from ``conditional samples'' does not need any QRAM by itself.
Lastly, our result about preparing the top eigenvector of $A$ as a $\log d$-qubit quantum state $\ket{v_1}$ can be done without QRAM (see end of \Cref{sec:qspaceapproximate}).

\subsection{Our results: lower bounds}

We also show that our second quantum algorithm for finding the top eigenvector is essentially optimal, by proving an $\tilde{\Omega}(d^{1.5})$ quantum query lower bound for this task. We do this by analyzing a hard instance $A=\frac{1}{d}uu^T + N$, which hides a vector $u\in\{-1,1\}^d$ using a $d\times d$ matrix $N$ with i.i.d.\ Gaussian entries of mean~0 and standard deviation $\sim 1/\sqrt{d}$. Note that the value $u_iu_j/d$ has magnitude $1/d$, but is  ``hidden'' in the entry $A_{ij}$ by adding noise to it of much larger magnitude $\sim 1/\sqrt{d}$. One can show that (with high probability) this matrix has a constant eigenvalue gap, and its top eigenvector is close to~$u/\sqrt{d}$. 

First, this hard instance provides the above-mentioned unsurprising\footnote{A simpler way to see this lower bound is to consider the problem of distinguishing the all-0 matrix from a matrix that has a~1 in one of the $d^2$ positions and 0s elsewhere. This is just the $d^2$-bit OR problem, for which we have an easy 
and well-known $\Omega(d^2)$ classical query bound. However, the quantum analogue of this approach  only gives an $\Omega(d)$ lower bound, since the quantum query complexity of $d^2$-bit OR is $\Theta(d)$. Therefore we present a more complicated argument for the classical lower bound whose quantum analogue \emph{does} provide an essentially tight bound of $\tilde{\Omega}(d^{1.5})$.} $\Omega(d^2)$ query lower bound for \emph{classical} algorithms, as follows. To approximate the top eigenvector (and hence $u$), an algorithm has to recover most of the $d$ signs $u_i$ of~$u$. Note that the entries of the $i$th row and column of~$A$ are the only entries that depend on $u_i$. 
If the algorithm makes $T$ queries overall, then there is an index $\mathbf{i}$ such that the algorithm makes at most $4T/d$ queries to entries in the $\mathbf{i}$th row and column, while still recovering $u_{\mathbf{i}}$ with good probability. Slightly simplifying, one can think of each of those entries as a sample from either the distribution $N(1/d,1/{d})$ or the distribution $N(-1/d,1/{d})$, with the sign of the mean corresponding to $u_{\mathbf{i}}$. It is well known that $\Omega(d)$ classical samples are necessary to estimate the mean of the distribution to within $\pm 1/d$ and hence to distinguish between these two distributions. This implies $4T/d=\Omega(d)$, giving us the $T=\Omega(d^2)$ classical query lower bound for approximating the top eigenvector of~$A$ (\Cref{cor:classicalLB}).

Second, a similar but more technical argument works to obtain the $\tilde{\Omega}(d^{1.5})$ quantum query lower bound (\Cref{cor:quantumLB}), as follows.
A good algorithm recovers most $u_i$-s with good probability.
If it makes $T$ quantum queries overall, then there is an index $\mathbf{i}$ such that the algorithm has at most $4T/d$ ``query mass'' on the entries of the $\mathbf{i}$th row and column (in expectation over the distribution of $A$), while still recovering $u_{\mathbf{i}}$ with good probability. It then remains to show that distinguishing between either the distribution $N(1/d,1/{d})$ or the distribution $N(-1/d,1/{d})$, with the ability to query multiple samples from that distribution in quantum superposition, requires $\tilde{\Omega}(\sqrt{d})$ quantum queries.
This we prove by a rather technical modification of the adversary bound of Ambainis~\cite{Amb02AdvMethod,Amb06weightedAdv}, adjusted to inputs which are vectors of samples from a continuous distribution, using expectations under a joint distribution $\mu$ on pairs of matrices (the two marginal distributions of $\mu$ are our hard instance conditioned on $u_{\mathbf{i}}=1$ and $u_{\mathbf{i}}=-1$, respectively) of Hamming distance roughly $\sqrt{d}$ in the $\mathbf{i}$th row and column. 

\subsection{Related work}
\rnote{made some edits in the first two paragraphs and added the last paragraph in response to Review 1}

Our algorithms produce classical descriptions of the top eigenvector(s). This is very different from HHL-style~\cite{harrow2009QLinSysSolver} algorithms that return the vectors in the form of a $\log(d)$-qubit state whose vector of amplitudes is (proportional to) the desired vector.  Our approach contrasts for instance with algorithms like quantum PCA~\cite{lloyd2013QPrincipalCompAnal}, which can efficiently find the top-$q$ eigenvectors \emph{as quantum states} assuming the ability to prepare $A$ as a mixed quantum state. It also contrasts with the recent work of Seki and Yunoki~\cite{SY21QPM}, who show how to apply a given Hermitian matrix many times to a given quantum state $\ket{\psi}$, giving rise to a quantum version of the power method that outputs quantum states.\anote{Changed the following sentence since there are plenty ground state estimation algorithms.}
The most closely related quantum algorithm we are aware of is producing classical descriptions of eigenvectors in the special case where $A$ is symmetric and diagonally-dominant (SDD): Apers and de Wolf~\cite[Claim 7.12]{AdW19graphSpar} show how to approximately find the top-$q$ eigenvalues and eigenvectors of a dense SDD matrix $A$ in time $\bigOt{d^{1.5}+qd}$, using their quantum speed-up for Laplacian linear solving.\footnote{One can actually reduce the general symmetric $A$ to the case of an SDD matrix by defining $A'=A+cI$ for sufficiently large $c$ to make $A'$ SDD, and then renormalizing~$A'$ to operator norm $\leq 1$. The problem with this reduction is that $c$ could be as big as $d$ and then the eigenvalue gap of the new matrix is much smaller than that of~$A$.}

There is a fair amount of work on finding the largest (or smallest) \emph{eigenvalue} of a given Hamiltonian $A$, but the setting there is usually different and incomparable to ours: $A$ is viewed as acting on $\log(d)$ qubits, and classically specified as the sum of a small number of terms, each acting non-trivially on only $\mathcal{O}(1)$ of the qubits (this is the canonical QMA-complete problem). There is also work on finding the top eigenvalue of a given matrix in general (not necessarily a local Hamiltonian), for instance Lemma~50 of~\cite{apeldoorn2017QSDPSolvers}; some of these works even involve a version of the power method~\cite{NWlargesteigenvalue}. However, none of these methods readily generalizes to finding a classical description of the top eigenvector itself. 

A recent paper by Apers and Gribling~\cite[Theorem~5.1]{apers&gribling:QIPM} also gives a quantum algorithm for approximate matrix-vector multiplication. Their result is incomparable to ours: it uses a different norm to measure the approximation, and it is geared towards the case of tall-and-skinny sparse matrices~$A$; if instead the matrix~$A$ is $d\times d$ and dense (the regime we care about in this paper), then their time complexity can be roughly $d^{3.5}$, which is much worse than ours.
Their application area is also different from ours: it is to speed up interior-point methods for linear programs where the number of constraints is much larger than the number of variables.

One of our main tools for our upper bounds is a novel, essentially unbiased tomography procedure to estimate a $d$-dimensional pure state (``essentially unbiased'' here means the error vector's expectation is exponentially close to~0) from $\bigOt{d}$ ``conditional samples'' of the state, see our \Cref{sec:qTomoCond}. There have already been quantum algorithms for essentially unbiased mean estimation for $d$-dimensional random variables~\cite{CHJ:multivar,SZ:qspeedupstochastic}, and one might hope to use these to recover our tomography procedure in an easier way. \ynote{modified:} However, their input models are somewhat different from ours and it is not clear how to tweak their algorithms to recover an unbiased state-tomography procedure that works in our input model.
 In a nutshell, \cite{CHJ:multivar,SZ:qspeedupstochastic} require an oracle that outputs the random vectors in binary, i.e., with each coordinate explicitly written, while our algorithm only requires an exponentially smaller quantum state whose amplitudes represent the vector of interest.

\subsection{Future work}

Here we mention some questions for future work.
First, can we improve the $d$-dependence of our second algorithm from $d^{1.5+o(1)}$ to $d^{1.5}$? Note that the $o(1)$ comes from Low's Hamiltonian simulation result~\cite{low2018HamSimNearlyOptSpecNorm}; in his context it is also still an open question whether the $o(1)$ can be removed. Can we also improve our algorithm to use fewer or even no QRAM bits, while retaining the current dependence on $d$, $\gamma$, $\eps$?

Second, our upper bound of roughly $qd^{1.5}$ for finding the subspace spanned by the top-$q$ eigenvectors is essentially optimal for constant $q$, but it cannot be optimal for large $q$ (i.e., $q=\Omega(d)$) because diagonalization finds all $d$ eigenvectors exactly in time roughly $d^{2.37}$, which is less than $d^{2.5}$. We should try to improve our algorithm for large $q$.

Third, matrix-vector multiplication is a very basic and common operation in many algorithms. So far there has not been much work on speeding this up quantumly, possibly because easy lower bounds preclude quantum speed-up for \emph{exact} matrix-vector multiplication  (Footnote~\ref{note:LBexactmatrixvector}). Can we find other applications of our polynomially faster \emph{approximate} matrix-vector multiplication? One such application is
computing an approximate matrix-matrix product $AB$ in time roughly $d^{2.5}$, by separately computing $AB_i$ for each of the $d$ columns~$B_i$ of $B$. This would not beat the current-best (but wholly impractical) matrix-multiplication techniques, which take time $d^{2.37\ldots}$, but it would be a very different approach for going beyond the basic $\bigO{d^3}$ matrix-multiplication algorithm.
A related application is to matrix-product \emph{verification}: we can decide whether $AB$ is close to $C$ in Frobenius norm for given $d\times d$ matrices $A,B,C$, in quantum time roughly $d^{1.5}$, by combining our approximate matrix-vector computation with Freivalds's algorithm~\cite{freivalds:matrixmult}. This should be compared with the quantum algorithm of Buhrman and \v{S}palek~\cite{buhrman2006MatrixProduct} that tests if $AB$ is \emph{equal} to~$C$ (over an arbitrary field) using $\bigOt{d^{5/3}}$ time.

\section{Preliminaries}

Throughout the paper, $d$ always denotes the dimension of the ambient space $\mathbb{R}^d$ or $\mathbb{C}^d$, $\log$ without a base means the binary logarithm, $\ln=\log_e$ is the natural logarithm, and $\exp(f)=e^f$. 
We let $[d]$ denote the set $\{1,\ldots,d\}$ and $[d]-1$ denote the set $\{0,\ldots,d-1\}$.  
We denote by $\nrm{v}$ the $\ell_2$-norm of a vector, and by $\nrm{v}_p$ its $\ell_p$-norm.
For $A\in \mathbb{C}^{d\times d'}$, we define the spectral norm $\|A\|=\max\limits_{v\in\mathbb{C}^{d'}}\frac{\nrm{Av}}{\nrm{v}}$. For a set $S$, we define the indicator function $\mathds{1}_{S}$ as 
\begin{align*}
\mathds{1}_{S}(x)
	=\begin{cases}
		1 \text{ if } x\in S,\\
		0 \text{ otherwise.}
	\end{cases}
\end{align*}

The \emph{total variation distance} between probability distributions $P$ and $Q$ is defined as $d_{TV}(P,Q)=\sup_{A}P(A)-Q(A)$, where the supremum is over events $A$.
In particular, for discrete distributions we have $d_{TV}(P,Q)=\frac{1}{2}\sum_x |P(x)-Q(x)|$.
We say that two random variables are \emph{$\delta$-close} to each other if the total variation distance between their distributions is at most $\delta$. 
For two distributions $P,Q$ over the same space, the \emph{relative entropy} $D_{KL}(P\,\|\,Q)$ (also called \emph{Kullback-Leibler divergence} or \emph{KL-divergence}) from $P$ to $Q$ is defined as
$$
D_{KL}(P\,\|\,Q)=\int p(x)\cdot\ln\frac{p(x)}{q(x)}\text{d}x=\E_{p}\left[\ln\frac{p(x)}{q(x)}\right],
$$
where $p(x),q(x)$ are the probability density functions (pdf) of $P$ and $Q$, respectively. In case $P$ is not absolutely continuous with respect to $Q$ we define $D_{KL}(P\,\|\,Q)=\infty$.

For us a projector is always a matrix $\Pi$ which is idempotent ($\Pi^2=\Pi$) and Hermitian ($\Pi^\dagger= \Pi$). This is sometimes called an ``orthogonal projector'' in the literature, but we drop the adjective ``orthogonal'' in order to avoid confusion with orthogonality between a pair of projectors. For a subspace $S$ we denote the unique (orthogonal) projector to $S$ by $\Pi_S$.

\subsection{Computational model and quantum algorithms}\label{ssec:compmodel}
Our computational model is a classical computer (a classical random-access machine) that can invoke a quantum computer as a subroutine.  The input is stored in quantum-readable read-only memory (a QROM), whose bits (or more generally, entries, if one entry is a number with multiple bits) can be queried.\footnote{This is just the standard quantum query model. In fact, the QROM input assumption can be relaxed and it suffices to have a black-box that computes the input bits on demand.} The classical computer can also write bits to a quantum-readable classical-writable classical memory (a QRAM). Such a QRAM, storing some $m$-bit string $w=w_0\ldots w_{m-1}$ admits \emph{quantum queries} (a.k.a.\ quantum read operations), which correspond to the unitary $O_w$ that maps $\ket{i,b}\to\ket{i,b\oplus w_i}$, where $i\in\{0,\ldots,m-1\}$ and $b\in\{0,1\}$. As already mentioned in the introduction, we think of one QRAM query as relatively ``cheap'', in the same way that a classical RAM query can be considered ``cheap'': imagine the $m$ bits as sitting on the leaves of a $\log(m)$-depth binary tree, then reading the bit at location $i$  intuitively just means going down the $\log(m)$-length path from the root to the leaf indicated by~$i$.

The classical computer can send a description of a quantum circuit to the quantum computer; the quantum computer runs the circuit (which may include queries to the input bits stored in QROM and to the bits stored by the computer itself in the QRAM), measures the full final state in the computational basis, and returns the measurement outcome to the classical computer. In this model, an algorithm has \emph{time complexity} $T$ if in total it uses at most $T$ elementary classical operations and quantum gates, quantum queries to the input bits stored in QROM, and quantum queries to the QRAM. The \emph{query complexity} of an algorithm only counts the number of queries to the input stored in QROM. We call a (quantum) algorithm \emph{bounded-error} if (for every possible input) it returns a correct output with probability at least $2/3$ (standard methods allow us to change this $2/3$ to any constant in $(1/2,1)$ by only changing the complexity by a constant factor). 

We will represent real numbers in computer memory using a number of bits of precision that is polylogarithmic in $d/\eps$ (i.e., $\bigOt{1}$ bits). This ensures that all numbers are represented throughout our algorithms with negligible approximation error and we will ignore those errors later on for ease of presentation.
In this paper, we are mainly interested in approximating the top eigenvector of a given matrix $A$. In the quantum case, we assume entries of $A$ are stored in a QROM, which we can access by means of queries to the oracle $O_A:\ket{ij}\ket{0}\rightarrow\ket{ij}\ket{A_{ij}}$.
In the special case where $A$ is $s$-sparse, meaning that each of its rows and columns has at most $s$-nonzero entries, we additionally assume we can also query the location of the $\ell$th nonzero entry in the $j$th column. This is called ``sparse-query-access to $A$'', and is a common assumption for quantum algorithms working on sparse matrices (for instance in Hamiltonian simulation). This corresponds to storing the matrix $A$ using an ``adjacency list'', i.e., the locations and values of the nonzero entries for each row and column in, a QROM.

\subsection{Quantum subroutines}

In this section we describe a few known quantum algorithms that we invoke as subroutines.

\begin{theorem}[\cite{gilyen2018QSingValTransf,yoder2014FixedPointSearch}, fixed-point amplitude amplification]\label{thm:Fixed_AA}
Let $a,\delta>0$, $U$ be a unitary that maps $\ket{0}\rightarrow \ket{\psi}$ and $R_\mathcal{A},R_{\ket{0}}$ be quantum circuits that reflect through subspaces $\mathcal{A}$ and (the span of) $\ket{0}$ respectively. Suppose $\|\Pi_{\mathcal{A}}\ket{\psi}\|\geq a$. There is a quantum algorithm that prepares $\ket{\psi'}$ satisfying $\|\ket{\psi'}-\frac{\Pi_{\mathcal{A}}\ket{\psi}}{\|\Pi_{\mathcal{A}}\ket{\psi}\|}\|\leq \delta$ using a total number of $\mathcal{O}(\log(1/\delta)/a)$ applications of $U,U^{-1}$, controlled $R_{\mathcal{A}},R_{\ket{0}}$ and additional single-qubit gates.
\end{theorem}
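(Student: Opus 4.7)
The plan is to reduce the problem to the two-dimensional invariant subspace spanned by the target and its orthogonal complement inside $\ket{\psi}$, and then realize a polynomial transformation on that subspace via a phased Grover-like sequence whose net effect approximates the (amplitude-independent) identity on the target direction. Let $\ket{\psi_{\mathcal{A}}} := \Pi_{\mathcal{A}}\ket{\psi}/\|\Pi_{\mathcal{A}}\ket{\psi}\|$ and write $\ket{\psi} = \sin(\theta)\ket{\psi_{\mathcal{A}}} + \cos(\theta)\ket{\psi_{\mathcal{A}}^\perp}$ with $\ket{\psi_{\mathcal{A}}^\perp}\in\mathcal{A}^\perp$ and $\sin\theta = \|\Pi_{\mathcal{A}}\ket{\psi}\|\geq a$. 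Because $R_{\mathcal{A}}$ is a reflection through $\mathcal{A}$ and $UR_{\ket{0}}U^\dagger$ is a reflection through $\ket{\psi}$, both operators preserve the 2D subspace $\mathrm{span}(\ket{\psi_{\mathcal{A}}},\ket{\psi_{\mathcal{A}}^\perp})$, so the whole analysis collapses to a $2\times 2$ problem parametrized by $\theta$.

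Next I would upgrade the two given reflections into \emph{phased} reflections $R_{\mathcal{A}}(\phi) := I + (e^{i\phi}-1)\Pi_{\mathcal{A}}$ and $R_{\ket{0}}(\phi) := I + (e^{i\phi}-1)\ket{0}\bra{0}$. Each of these can be implemented with $\mathcal{O}(1)$ single-qubit gates and one use of the corresponding controlled reflection by the standard trick of sandwiching a controlled reflection between an ancilla $Z$-rotation and its inverse. I would then set $L = \Theta(\log(1/\delta)/a)$ and apply the alternating product $\prod_{k=1}^L \bigl( U R_{\ket{0}}(\psi_k) U^\dagger\bigr) R_{\mathcal{A}}(\phi_k)$ to $\ket{\psi}$ with the Yoder--Low--Chuang angle schedule. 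On the 2D invariant subspace this product acts as an odd polynomial $P(\sin\theta)$ of degree $2L+1$ that is uniformly within $\delta$ of the sign function on $[-1,-a]\cup[a,1]$; evaluating at our amplitude gives a final state whose overlap with $\ket{\psi_{\mathcal{A}}}$ is at least $1-\delta$.

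The total cost is $L$ queries each to $U$ and $U^\dagger$, $L$ controlled uses of each of $R_{\mathcal{A}}$ and $R_{\ket{0}}$, and $\mathcal{O}(L)$ single-qubit ancilla gates, which matches the claimed $\mathcal{O}(\log(1/\delta)/a)$ bound.

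The main technical obstacle is establishing the QSP/QSVT correspondence: one must verify that alternating the phased reflections with $U, U^\dagger$ in the above pattern realizes \emph{exactly} an odd polynomial transformation $P(\sin\theta)$ on the 2D invariant subspace, and that the angles $\{(\phi_k,\psi_k)\}_{k=1}^L$ can be chosen so that $P$ is the degree-$(2L+1)$ Chebyshev-based approximation of the sign function, uniformly $\delta$-close to $1$ on $[a,1]$. Both ingredients are supplied by \cite{yoder2014FixedPointSearch} and \cite{gilyen2018QSingValTransf}: the former constructs the explicit angle sequence achieving this polynomial, and the latter gives the general singular-value-transformation framework into which this specific construction fits. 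Once those are invoked, correctness and the complexity count are a direct computation.
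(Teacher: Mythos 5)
Your proposal is a correct reconstruction of the standard fixed-point amplitude-amplification argument, but note the paper does not prove \Cref{thm:Fixed_AA} at all: it is imported as a black box from the cited references \cite{gilyen2018QSingValTransf,yoder2014FixedPointSearch}, so there is no internal proof to compare against. Your sketch faithfully follows the structure of those references (reduce to the 2D invariant subspace; promote the two reflections to phased reflections using an ancilla and controlled access; realize an odd degree-$(2L+1)$ polynomial $P(\sin\theta)$ approximating the sign function via the Yoder--Low--Chuang angle schedule, equivalently the QSP/QSVT framework), and the cost bookkeeping matches. The one small parameter you should tidy up is the conversion between the polynomial-approximation error and the final $\ell_2$-distance: if $|\braket{\psi_{\mathcal{A}}|\psi'}|\geq 1-\eta$ then $\bigl\|\ket{\psi'}-\ket{\psi_{\mathcal{A}}}\bigr\|\leq\sqrt{2\eta}$ (up to global phase), so to guarantee $\ell_2$-error $\delta$ you need the polynomial to be within $\Theta(\delta^2)$ of $1$ on $[a,1]$; this only changes $L$ by a constant factor and the $\mathcal{O}(\log(1/\delta)/a)$ bound is unaffected.
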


We also use the following theorem to help us find all marked items in a $d$-element search space with high probability. The theorem was implicit in~\cite{grover1996QSearch,boyer1998TightBoundsOnQuantumSearching}. For more details and for a better $d,\delta$-dependency, see \cite[Section 3]{vAGN23multifinding}.

\begin{theorem}\label{thm:findallsolutions}
    Let $f:[d] \rightarrow\{0, 1\}$ be a function that marks a set of elements $F = \{j \in [d] : f(j) = 1\}$, and $\delta\in(0,1)$. Suppose we know an upper bound $u$ on the size of $F$ and we have a quantum oracle $O_f$ such that $O_f:\ket{j}\ket{b} \rightarrow \ket{j}\ket{b\oplus f(j)}$. Then there exists a quantum algorithm that finds $F$ with probability at least $1-\delta$, using $\mathcal{O}(\sqrt{du}\cdot\poly\log(d/\delta))$ time.
\end{theorem}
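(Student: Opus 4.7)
The plan is to iteratively find marked items with Grover-style search, store them in QRAM, and then certify that no further marked items remain. This is the standard approach behind \cite{boyer1998TightBoundsOnQuantumSearching,vAGN23multifinding}.

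First, I would maintain a set $S\subseteq[d]$ of already-found items in QRAM as a $d$-bit indicator string, initially empty. I implement a modified oracle $O_{f,S}$ that marks $j$ iff $f(j)=1$ and $j\notin S$: one call to $O_f$ plus one QRAM read of the bit $[j\in S]$, followed by an AND into a fresh target qubit. The overhead is $\bigOt{1}$ per oracle call.

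Next I invoke the Grover-based ``search with unknown number of marked items'' from \cite{boyer1998TightBoundsOnQuantumSearching} on $O_{f,S}$ to find one element of $F\setminus S$ whenever this set is nonempty. When $|F\setminus S|=k\geq 1$, this uses $\bigO{\sqrt{d/k}}$ expected queries; boosting failure probability to $\delta/(3u)$ multiplies this by $\bigO{\log(u/\delta)}$. I add each found item to $S$ and repeat; after at most $u$ successes we have $S=F$. To certify $F\setminus S=\emptyset$ I run a stopping test (quantum approximate counting, or an extended Grover loop with iteration cap $\bigO{\sqrt{d}\cdot\polylog(d/\delta)}$) that reliably distinguishes $|F\setminus S|=0$ from $|F\setminus S|\geq 1$ with failure probability $\leq\delta/3$. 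Telescoping the per-stage costs,
\[
\sum_{k=1}^{|F|}\bigO{\sqrt{d/k}\cdot\log(u/\delta)}+\bigO{\sqrt{d}\cdot\polylog(d/\delta)}=\bigO{\sqrt{d\cdot|F|}\cdot\polylog(d/\delta)}\leq\bigO{\sqrt{du}\cdot\polylog(d/\delta)},
\]
using $\sum_{k=1}^{|F|}1/\sqrt{k}=\bigO{\sqrt{|F|}}$ and $|F|\leq u$. The QRAM overhead adds only a further $\bigO{\log d}$ per oracle call, absorbed into the $\polylog(d/\delta)$ factor.

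The main obstacle is error control: a union bound over the $\leq u$ find-stages plus the stopping test requires each stage's failure probability to be $\bigOm{\delta/u}$. A false positive in the stopping test would prematurely truncate $S$ and thus lose marked items, so that test must be especially reliable; this is what forces the $\polylog$ (rather than just $\log$) overhead. The boost of $\log(u/\delta)$ per stage fits comfortably within the $\polylog(d/\delta)$ budget since $u\leq d$. The sharper analysis in \cite{vAGN23multifinding} makes the decoupling between the ``find'' and ``empty-certify'' phases explicit and shaves constants, yielding the $d$- and $\delta$-dependence claimed in the theorem.
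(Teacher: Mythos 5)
Your proposal is correct and follows essentially the same approach the paper points to: the paper gives no proof of its own but states that the result is ``implicit in'' Grover/BBHT and refers to \cite[Section 3]{vAGN23multifinding} for the details, which is precisely the iterative find-remove-certify scheme with telescoping $\sum_{k\le u}\sqrt{d/k}=\bigO{\sqrt{du}}$ that you describe. The only point I would tighten is that you need a \emph{with-high-probability} (not merely expected) bound on the cost of each find-stage before the union bound applies — e.g.\ by capping each BBHT run at $\bigO{\sqrt{d/k}\log(u/\delta)}$ iterations via the exponentially-growing-guess schedule — which is exactly the bookkeeping that \cite{vAGN23multifinding} carries out.
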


\begin{theorem}[follows from Section~4 of \cite{brassard2002AmpAndEst}, amplitude estimation]\label{thm:amplitude_estimation}
Let $\delta \in(0,1)$. Given a natural number $M$ and access to an $(n + 1)$-qubit unitary $U$ satisfying
\[
U\ket{0^n}\ket{0}= \sqrt{a}\ket{\phi_0}\ket{0} +\sqrt{1-a}\ket{\phi_1}\ket{1},
\]
where $\ket{\phi_0}$ and $\ket{\phi_1}$ are arbitrary $n$-qubit states and $ a \in [0,1]$,
there exists a quantum algorithm that uses $\mathcal{O}(M\log(1/\delta))$ applications of $U$ and $U^\dagger$ and $\tilde{\mathcal{O}}(M\log(1/\delta))$ elementary gates, and outputs an estimator $\lambda$ such that, with probability $\geq 1-\delta$,
\[
|\sqrt{a}-\lambda| \leq \frac{1}{M}.
\]
\end{theorem}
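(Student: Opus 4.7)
The plan is to construct a Grover-type iterate whose eigenphases encode $\sqrt{a}$, apply quantum phase estimation to recover one such phase, and then boost the success probability by a median-of-repetitions argument.

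First I would define the two reflections $S_\chi = I - 2(I\otimes \ketbra{0}{0})$ (which acts as $-1$ on states whose last qubit is $\ket{0}$) and $S_0 = I - 2\ketbra{0^{n+1}}{0^{n+1}}$, and set $Q = -U S_0 U^\dagger S_\chi$. A direct calculation shows that $Q$ preserves the two-dimensional subspace $V = \spn\{\ket{\phi_0}\ket{0},\ket{\phi_1}\ket{1}\}$: the factor $S_\chi$ is already diagonal in this basis, and $U S_0 U^\dagger$ is the reflection about $U\ket{0^n}\ket{0}\in V$. Writing $\sqrt{a}=\sin(\theta)$ with $\theta\in[0,\pi/2]$, the restriction $Q|_V$ is the composition of two reflections whose axes make angle $\theta$, hence a rotation by $2\theta$. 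Its eigenvalues are therefore $e^{\pm 2i\theta}$ with eigenvectors $\ket{v_\pm}\in V$, and the initial state decomposes as $U\ket{0^n}\ket{0}=\frac{-i}{\sqrt{2}}\bigl(e^{i\theta}\ket{v_+}-e^{-i\theta}\ket{v_-}\bigr)$, so phase estimation on $Q$ applied to this state will see one of the two eigenphases $\pm 2\theta$.

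Second, I would run standard quantum phase estimation on $Q$ with $M$ controlled applications, starting from $U\ket{0^n}\ket{0}$. With constant probability this yields an estimate $\tilde\theta$ of either $\theta$ or $\pi-\theta$ satisfying $|\tilde\theta-\theta|\leq \pi/M$ or $|\tilde\theta-(\pi-\theta)|\leq \pi/M$. Since $\sin$ is $1$-Lipschitz and $\sin(\theta)=\sin(\pi-\theta)$, setting $\lambda=\sin(\tilde\theta)$ yields $|\sqrt{a}-\lambda|\leq \pi/M$ regardless of which branch was recovered, and rescaling $M$ by a constant factor absorbs the $\pi$.

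Finally, to upgrade the constant success probability to $1-\delta$ I would independently repeat the above procedure $\Theta(\log(1/\delta))$ times and output the median of the $\lambda$-estimates; a standard Chernoff bound then guarantees that the median lies within $\mathcal{O}(1/M)$ of $\sqrt{a}$ with probability at least $1-\delta$. Aggregating costs, this uses $\mathcal{O}(M\log(1/\delta))$ invocations of $U$ and $U^\dagger$, while the inverse QFT and control logic contribute polylogarithmic overhead, yielding an overall gate count of $\tilde{\mathcal{O}}(M\log(1/\delta))$. The only real subtlety I anticipate is the textbook phase-estimation tail bound showing that the two grid points nearest the true phase are output with probability at least $8/\pi^2$; this is what underlies the $\pi/M$ additive error and requires the most technical care, but it is entirely standard after \cite{brassard2002AmpAndEst}.
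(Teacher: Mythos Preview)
The paper does not give its own proof of this statement; it is quoted as a known result following from Section~4 of \cite{brassard2002AmpAndEst}. Your proposal correctly reconstructs exactly that standard argument (Grover iterate, phase estimation on the two-dimensional invariant subspace, $\sin$ is $1$-Lipschitz and symmetric about $\pi/2$, then median-of-$\Theta(\log(1/\delta))$ repetitions), so there is nothing to compare against and nothing missing.
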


\subsection{KP-tree and state-preparation}\label{sec:KPtree}

Here we introduce a variant of the QRAM data structure developed by Prakash and Kerenidis~\cite{Pra14,kerenidis2016QRecSys} for efficient state-preparation. We call this data structure a ``KP-tree'' (or $KP_v$ if we are storing the vector $v$) in order to credit Kerenidis and Prakash. The variant we use is very similar to the one used in~\cite{CdW21QLasso}.

\begin{definition}[KP-tree]\label{def:KPtree}
Let $v\in\mathbb{C}^d$. We define a KP-tree $KP_v$ of $v$ as follows:
\begin{itemize}
    \item The root stores the scalar $\nrm{v}$ and the size of the support $t=|supp(v)|$ of $v$.
    \item $KP_v$ is a binary tree on $\mathcal{O}(t\log d)$ vertices with depth $\lceil\log d\rceil$.    
    \item The number of leaves is $t$; for each $j\in supp(v)$ there is one corresponding leaf storing $v_j$.
    \item Each edge of the tree is labeled by a bit; the bits on the edges of the path from the root to the leaf corresponding to the $j^{th}$ entry of $v$ form the binary description of $j$.
    \item Intermediate nodes store the square root of the sum of their children's squared absolute values.
\end{itemize}

For $\ell\leq \lceil\log d\rceil$ and $j\in [2^\ell]-1$, we define $KP_v(\ell,j)$ as the scalar value stored in the $j^{th}$ node in the $\ell^{th}$ layer, i.e., the node that we can reach from the root by the path according to the binary representation of $j$. 
If there is no corresponding $j^{th}$ node in the $\ell^{th}$ layer (that is, we cannot reach a node by the path according to the binary representation of $j$ from the root), then $KP_v(\ell,j)$ is defined as $0$. Note that both the numbering of the layer and the numbering of nodes start from $0$. When $v$ is the all-0 vector, the corresponding tree consists of a single root node with $t=0$.
\end{definition}

If we have a classical vector $v\in \mathbb{C}^d$, we can build a KP-tree for $v$ using $\bigOt{d}$ time and QRAM bits. Given a KP-tree KP$_v$, we can efficiently query entries of $v$ and prepare the quantum state $\sum\limits_{j\in [d]-1}{\frac{v_j}{\|v\|}}\ket{j}$:

\begin{theorem}[Modified Theorem 2.12 of \cite{CdW21QLasso}]\label{thm:StatePrepare}
Suppose we have a KP-tree $KP_v$ of $v\in \mathbb{C}^d$, and we can apply a unitary $O_{KP_v}$ that maps $\ket{\ell,k}\ket{0}\rightarrow \ket{\ell,k}\ket{KP_v(\ell,k)}$. Then one can implement a unitary $U_v$ that maps $\ket{0}$ to $\ket{\psi_v}=\sum\limits_{j\in [d]-1}{\frac{v_j}{\|v\|}}\ket{j}$ up to error $\eps$ by using $\bigO{\log d}$ applications of $O_{KP_v}$, $O^\dagger_{KP_v}$, and $\bigOt{1}$ additional elementary gates.
\end{theorem}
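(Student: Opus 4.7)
The plan is to prepare $\ket{\psi_v}$ one qubit at a time, walking down the KP-tree layer by layer, using the stored partial norms to determine the correct conditional rotation at each step. Write $n=\lceil\log d\rceil$. The invariant I aim to maintain is that after $\ell$ layers of the construction, the first $\ell$ qubits hold the state
\[
\ket{\psi_v^{(\ell)}}=\sum_{k\in[2^\ell]-1}\frac{KP_v(\ell,k)}{\nrm{v}}\ket{k},
\]
which for $\ell=n$ equals $\ket{\psi_v}$ up to the complex phases of the $v_j$, since $KP_v(n,j)=|v_j|$ at the leaves.

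To pass from layer $\ell$ to layer $\ell+1$, I use $O_{KP_v}$ (controlled on the first $\ell$ qubits) to write the children's values $KP_v(\ell+1,2k)$ and $KP_v(\ell+1,2k+1)$ and the parent value $KP_v(\ell,k)$ into an ancilla register, compute the rotation angle $\theta_k\in[0,\pi/2]$ with $\cos\theta_k=KP_v(\ell+1,2k)/KP_v(\ell,k)$ (and $\theta_k=0$ if the parent entry vanishes) into another ancilla, apply a controlled single-qubit rotation $R_y(2\theta_k)$ on a fresh $\ket{0}$ qubit, and then uncompute the ancillas with one more call to $O^\dagger_{KP_v}$. This precisely enforces the recursion $KP_v(\ell,k)^2=KP_v(\ell+1,2k)^2+KP_v(\ell+1,2k+1)^2$ built into the tree, so the invariant is preserved. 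At the final layer I apply one more controlled phase $e^{i\arg(v_j)}$, which can be computed on the fly from a leaf query to obtain $v_j$ itself. In total this uses $\bigO{\log d}=\bigO{n}$ calls to $O_{KP_v}$ and $O^\dagger_{KP_v}$.

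For the gate count and error, I represent all stored and computed real numbers with $b=\bigO{\log(d/\eps)}$ bits of precision; the angle $\theta_k$ can then be computed by a standard reversible $\arccos$ (or equivalently by computing $\arctan$ of the ratio of amplitudes) in $\mathrm{poly}(b)$ gates, and $R_y(2\theta_k)$ can be synthesized to spectral-norm error $\eps/(2n)$ using $\mathrm{poly}(b)$ elementary gates via, e.g., Solovay--Kitaev or direct phase-kickback techniques. Since the $n$ layers compose the per-step errors subadditively in operator norm, the total error is at most $\eps$; adjusting the constant in $b$ absorbs the contributions of arithmetic round-off and of the final phase computation. The per-layer overhead is $\bigOt{1}$ elementary gates (in the paper's convention, polylog in $d/\eps$), and summing over the $n$ layers stays within the stated $\bigOt{1}$ bound after absorbing $\log d$.

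The main obstacle, and the only slightly delicate point, is handling the degenerate case where $KP_v(\ell,k)=0$: here the sub-tree contributes no amplitude and there is nothing to rotate, but we still need the circuit to be a coherent unitary that behaves sensibly when control qubits are in superposition over zero- and nonzero-weight branches. I handle this by letting the angle-computation subroutine output $\theta_k=0$ whenever $KP_v(\ell,k)=0$, so the rotation acts as the identity on those branches; this keeps the invariant intact because such branches carry zero amplitude anyway, and it ensures the controlled rotation is well defined as a unitary on the whole space. With this convention in place, the rest of the analysis is a straightforward telescoping bound on the accumulated per-layer errors.
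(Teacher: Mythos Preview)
The paper does not actually prove this theorem; it is imported directly from \cite{CdW21QLasso} and stated without proof. Your sketch is the standard layer-by-layer construction underlying the Kerenidis--Prakash data structure and is correct: the recursion $KP_v(\ell,k)^2=KP_v(\ell+1,2k)^2+KP_v(\ell+1,2k+1)^2$ exactly supports the conditional $R_y$ rotations, and the phase is applied at the end from the complex leaf value. Two small remarks: first, uncomputing the three values you wrote (parent and two children) costs three calls to $O_{KP_v}^\dagger$, not one, but this is still $\bigO{1}$ per layer and $\bigO{\log d}$ overall; second, according to the definition in the paper the leaf stores the full complex $v_j$ rather than $|v_j|$, so at the last layer you take the modulus for the rotation and the argument for the phase---your construction already does this, only the parenthetical remark about $KP_v(n,j)$ is slightly off.
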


Note that if $\norm{v}< 1$, then we can similarly prepare the quantum state $\ket{v}=\ket{0}\sum\limits_{j\in [d]-1}{v_j}\ket{j}+\ket{1}\ket{0}$ using $\bigOt{1}$ time and queries to KP$_v$.

\subsection{Block-encoding and Hamiltonian simulation}

Block-encoding embeds a scaled version of a (possibly non-unitary) matrix~$A$ in the upper-left corner of a bigger unitary matrix~$U$.

\begin{definition}
    Suppose that $A$ is a $2^w$-dimensional matrix, $\alpha ,\epsilon >0$, and $a \in \mathbb{N}$. 
    We call an $(a + w)$-qubit unitary $U$ an $(\alpha,a,\epsilon)$-block-encoding of $A$ if
$$
\|A -\alpha \bra{0^a} \otimes I_{2^w} )U(\ket{0^a}\otimes I_{2^w} )\|\leq \epsilon.
$$
\end{definition}

\begin{theorem}[\cite{low2016HamSimQubitization}]\label{thm:exp_simluation}
    Suppose that $U$ is an $(\alpha,a,\epsilon/|2t|)$-block-encoding of the Hamiltonian $H$. Then we can implement an $\epsilon$-precise Hamiltonian-simulation unitary $V$ which is a $(1,a+2,\epsilon)$-block-encoding of $e^{it H}$, with $\mathcal{O}(|\alpha t|+\log(1/\epsilon))$ uses of controlled-$U$ and its inverse, and with $\mathcal{O}(a|\alpha t|+a\log(1/\epsilon))$ additional elementary gates.
\end{theorem}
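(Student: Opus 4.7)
The plan is to use the qubitization framework combined with quantum signal processing (QSP) to realize the Jacobi--Anger expansion of $e^{itH}$. I would first reduce to the exact case: let $\tilde H$ be the Hermitian matrix that is exactly block-encoded by $U$ at scale $\alpha$, so $\|H-\tilde H\|\le \eps/|2t|$. By the standard bound $\|e^{it H}-e^{it\tilde H}\|\le |t|\cdot\|H-\tilde H\|\le \eps/2$, it suffices to build a $(1,a+2,\eps/2)$-block-encoding of $e^{it\tilde H}$ using $U$ as an exact block-encoding. This localises the accuracy budget: half of $\eps$ is spent on the block-encoding imprecision and half on the polynomial approximation inside the algorithm.

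Next, I construct the qubitized walk operator $W$ obtained by composing $U$ with a reflection $2\ketbra{0^a}{0^a}\otimes I-I$ through the ancilla flag (optionally together with one auxiliary qubit to symmetrise the two-dimensional invariant subspaces). The central structural fact, proved in Low--Chuang's qubitization paper, is that $W$ decomposes as a direct sum of two-dimensional rotations: on each eigenspace of $\tilde H$ with eigenvalue $\lambda$, $W$ acts by a rotation of angle $\arccos(\lambda/\alpha)$. Interleaving $W$ with single-qubit $Z$-rotations on an auxiliary ``phase qubit'' then implements, via QSP, any real polynomial $P(\lambda/\alpha)$ of degree $d$ satisfying the usual parity and boundedness constraints, using $d$ calls to (controlled) $U,U^\dagger$ and $O(ad)$ elementary gates, where the $O(a)$ factor per step comes from implementing each reflection through $\ket{0^a}$.

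To realise $e^{it\tilde H}$ I would choose $P$ to be the Jacobi--Anger truncations
\begin{align*}
C_d(x) &= J_0(t\alpha)+2\sum_{k=1}^{\lfloor d/2\rfloor}(-1)^k J_{2k}(t\alpha)\,T_{2k}(x),\\
S_d(x) &= 2\sum_{k=0}^{\lfloor (d-1)/2\rfloor}(-1)^k J_{2k+1}(t\alpha)\,T_{2k+1}(x),
\end{align*}
so that $C_d(\tilde H/\alpha)\approx \cos(t\tilde H)$ and $S_d(\tilde H/\alpha)\approx \sin(t\tilde H)$. Using the well-known super-exponential decay $|J_k(x)|\lesssim (e|x|/2k)^k$ for $k\gg |x|$, picking $d=\Theta(|\alpha t|+\log(1/\eps))$ drives the uniform error on $[-1,1]$ below $\eps/4$. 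Running QSP once for each of $C_d,S_d$ and combining them through a one-qubit LCU $\tfrac{1}{\sqrt 2}(\ket{0}\otimes C_d+i\ket{1}\otimes S_d)$, followed by a Hadamard-basis post-selection on that qubit, yields a $(1,a+2,\eps/2)$-block-encoding of $e^{it\tilde H}$; the ``$+2$'' accounts for the LCU selector qubit and the QSP phase qubit.

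Finally I would collect the costs: the Chebyshev degree $d=O(|\alpha t|+\log(1/\eps))$ is the number of (controlled) calls to $U,U^\dagger$, and each QSP step contributes $O(a)$ gates for the ancilla reflection, giving $O(a|\alpha t|+a\log(1/\eps))$ additional elementary gates, matching the statement. The main obstacle I anticipate is bookkeeping for the ancilla count and the parity structure of QSP: $C_d$ is even and $S_d$ is odd, so the two QSP subroutines have slightly different ancillary conventions and must be reconciled before the LCU step; getting the final block-encoding constant to be exactly $1$ (rather than a slightly smaller number requiring amplification) requires choosing the QSP phases so that the two signals add coherently, and this is the place where one has to be most careful to avoid an extra $O(\log(1/\eps))$ factor from oblivious amplitude amplification.
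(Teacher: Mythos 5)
The paper does not prove this statement; it imports it directly from Low and Chuang's qubitization paper \cite{low2016HamSimQubitization} without supplying a proof of its own. Your sketch faithfully reconstructs that proof: the reduction to the exact-block-encoding case via the Lipschitz bound $\|e^{itH}-e^{it\tilde H}\|\le |t|\,\|H-\tilde H\|$, the qubitized walk operator with eigenphases $\arccos(\lambda/\alpha)$, QSP on degree-$O(|\alpha t|+\log(1/\eps))$ Jacobi--Anger truncations of $\cos$ and $\sin$, and an LCU recombination, all matching the cited argument and the stated gate/query counts. One small quibble: your concern about losing an extra $O(\log(1/\eps))$ factor to oblivious amplitude amplification is overcautious. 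When the LCU yields a block-encoding of $e^{it\tilde H}$ with constant sub-normalization (e.g.\ $1/2$), oblivious amplitude amplification restores normalization $1$ at only a constant multiplicative cost in walk-operator calls, independent of $\eps$ --- this is exactly what Low and Chuang do. Alternatively, in the QSVT framework one can implement a single complex-valued QSP approximating $e^{ix}$ with one additional ancilla, bypassing the LCU step and the parity bookkeeping you flagged; either route preserves the asymptotics you derived.
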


\begin{theorem}[{\cite[Theorem 2]{low2018HamSimNearlyOptSpecNorm}}]\label{thm:Low19OptimalHsim}
    Let $A$ be a $d\times d$ Hermitian matrix with operator norm~$\leq 1$, and $t>0$. Suppose $A$ has sparsity $s$ and we have sparse-query-access to~$A$.
      Then we can implement a unitary $U$ such that $\|U-\exp(iAt)\|\leq \epsilon$ using $\bigOt{t\sqrt{s}(t\sqrt{s}/\epsilon)^{o(1)}}$ time and queries.
\end{theorem}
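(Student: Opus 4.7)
The plan is to reduce Hamiltonian simulation to two steps: first, build an efficient block-encoding of $A$ with a small normalization factor $\alpha$, and second, feed this block-encoding into the optimal block-encoding-based simulation algorithm (essentially Theorem~2.5 in the excerpt) to implement $\exp(iAt)$. The key parameter to optimize is $\alpha$, since the runtime scales like $\alpha t + \log(1/\eps)$.

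First I would construct a block-encoding of $A$ with $\alpha = \sqrt{s}$ using sparse-query access. The standard construction goes through a Szegedy-type quantum walk operator $W$. Concretely, using sparse queries one defines an isometry $T$ that maps $\ket{j}$ to a superposition $\sum_\ell\sqrt{A_{j,c(j,\ell)}^*/s}\ket{j,c(j,\ell),A_{j,c(j,\ell)}}$ (with appropriate signs/phases to handle complex entries), where $c(j,\ell)$ denotes the column index of the $\ell$-th nonzero entry in row $j$. A short calculation shows $T^\dagger S T = A/s$ where $S$ is the swap between the two row/column registers, and a suitable combination gives a $(\sqrt{s},\bigO{\log d},0)$-block-encoding of $A$. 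Each application uses $\bigO{1}$ sparse queries and $\bigOt{1}$ gates.

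Second, I would apply the optimal Hamiltonian-simulation-from-block-encoding result (as in Theorem~2.5, from~\cite{low2016HamSimQubitization}) to this block-encoding with parameters $\alpha = \sqrt{s}$ and precision $\eps$. This yields a unitary $U$ with $\|U-\exp(iAt)\|\leq\eps$ using $\bigO{\sqrt{s}\,t + \log(1/\eps)}$ calls to the block-encoding and thus to the sparse oracle. Multiplying by the $\bigOt{1}$ gates per block-encoding use gives time complexity $\bigOt{\sqrt{s}\,t + \log(1/\eps)}$. To absorb the $\log(1/\eps)$ term inside a multiplicative factor and match the stated bound $\bigOt{t\sqrt{s}(t\sqrt{s}/\eps)^{o(1)}}$, one uses the nearly-optimal QSP/QSVT phase-angle computation: the angles can be found in classical time $\polylog(1/\eps)$ so the total cost absorbs into a sub-polynomial overhead $(t\sqrt{s}/\eps)^{o(1)}$.

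The main obstacle is really the first step: showing that one can achieve the $\sqrt{s}$ normalization rather than the more naive $s$. The trick is that the operator norm bound $\|A\|\leq 1$ is automatically compatible with a walk-based block-encoding whose normalization is $\sqrt{s}$ (as opposed to the $\|A\|_{\max}\cdot s \leq s$ obtained from element-wise decompositions). Once this block-encoding is in hand, the rest follows mechanically by invoking the optimal block-encoding-to-simulation compiler. Since the result is quoted from~\cite{low2018HamSimNearlyOptSpecNorm}, a full treatment would just reference their construction; our job is only to check that the model we use (sparse-query-access with QROM-stored sparse lists) matches their oracle model, which it does by construction.
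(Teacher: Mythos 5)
This theorem is imported verbatim from~\cite{low2018HamSimNearlyOptSpecNorm}; the paper contains no internal proof to compare against, so the only thing to assess is whether your sketch reproduces a valid argument. It does not: the central step has a real gap.

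You claim a $(\sqrt{s},\bigO{\log d},0)$-block-encoding of any $s$-sparse Hermitian $A$ with $\|A\|\le 1$, and then reduce to qubitization. But the relation you invoke, $T^\dagger S T=A/s$, would (for an isometry $T$) give an $(s,\cdot,0)$-block-encoding, not a $(\sqrt{s},\cdot,0)$-block-encoding; nothing in your ``suitable combination'' converts $s$ into $\sqrt{s}$. (Also, the map $\ket{j}\mapsto\sum_\ell\sqrt{A^*_{j,c(j,\ell)}/s}\,\ket{j,c(j,\ell),\cdot}$ is not norm-preserving unless every row of $A$ has $\ell_1$-mass exactly $s$, so $T$ as written is not even an isometry.) More fundamentally, a $(\sqrt{s},\cdot,0)$-block-encoding does \emph{not} exist in general under $\|A\|\le 1$ alone: the standard sparse-access construction (cf.\ \cite[Lemma~48]{gilyen2018QSingValTransf}) has normalization $\alpha = \Theta(s\|A\|_{\max})$, and $\|A\|\le 1$ is compatible with $\|A\|_{\max}=\Theta(1)$ at sparsity $s$ — take, e.g., block-diagonal $A$ consisting of a $1\times 1$ block $[1]$ and a dense $s\times s$ block $\frac{1}{s}\mathbf{1}\mathbf{1}^T$. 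For such $A$ the normalization is genuinely $\Theta(s)$, and qubitization then gives cost $\bigOt{st}$, not $\bigOt{\sqrt{s}t}$.

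Getting from $s$ down to $\sqrt{s}$ is exactly the nontrivial content of Low's paper, and it is not achieved by producing a cleaner block-encoding and feeding it into the qubitization compiler; Low modifies the walk's preparation states (making them Hamiltonian-dependent and non-uniform) and carries out a substantially finer spectral analysis of the resulting walk operator, with complexity governed by $\sqrt{s}\,\|A\|_{1\to 2}$. The very form of the quoted bound $\bigOt{t\sqrt{s}\,(t\sqrt{s}/\eps)^{o(1)}}$, with its multiplicative sub-polynomial overhead, is already a tell-tale sign that the route is not ``block-encode, then qubitize'' — that route would yield $\bigO{\sqrt{s}\,t+\log(1/\eps)}$ with no such factor. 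Your closing sentence, asserting that $\|A\|\le 1$ is ``automatically compatible'' with a $\sqrt{s}$-normalized walk, is precisely the claim where Low's contribution lives and where your sketch is missing the argument. Your final remark that one only needs to check the oracle model matches is correct, and the model does match; but the preceding explanation of where the $\sqrt{s}$ comes from is wrong.
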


If we do not have a sparse oracle or if $A$ is dense, then the time complexity of the above theorem simply becomes $\mathcal{\tilde{O}}((\sqrt{d}t)^{1+o(1)}/\epsilon^{o(1)})$ by setting $s=d$.

\begin{theorem}[{\cite[Corollary 71]{gilyen2018QSingValTransf}}]
    Let $\epsilon\in(0,1/2)$, $A$ be a $d\times d$ Hermitian matrix with operator norm~$\leq 1/2$, and $U=\exp(iA)$. Then we can implement a $(2/\pi,2,\epsilon)$-block-encoding of~$A$, using $\mathcal{O}(\log(1/\epsilon))$ applications of controlled-$U$, controlled-$U$ inverse, $\mathcal{O}(\log(1/\epsilon))$ time, and one auxiliary qubit.
\end{theorem}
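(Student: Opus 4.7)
The plan is to recover $A$ from the Hamiltonian-simulation unitary $U=\exp(iA)$ in two stages: first construct an exact block-encoding of $\sin(A)$ from controlled-$U$ and controlled-$U^\dagger$ via a linear-combination-of-unitaries trick, and then apply a quantum singular value transformation polynomial that approximately inverts the sine on the relevant part of the spectrum, yielding the desired $(2/\pi,2,\epsilon)$-block-encoding of~$A$.

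For the first stage, I would introduce an ancilla qubit and consider the two-register unitary $W=\ket{0}\bra{0}\otimes U + \ket{1}\bra{1}\otimes U^\dagger$, which is implemented with one call each of controlled-$U$ and controlled-$U^\dagger$. A short calculation gives $(\bra{+}\otimes I)\,W\,(\ket{-}\otimes I) = (U-U^\dagger)/2 = i\sin(A)$, so conjugating $W$ by the appropriate single-qubit basis changes and stripping the leading $i$ by an $S^\dagger$ rotation produces a $(1,1,0)$-block-encoding of $\sin(A)$, on a single ancilla qubit and with $\bigO{1}$ elementary gates.

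For the second stage, the hypothesis $\|A\|\leq 1/2$ guarantees that the singular values of $\sin(A)$ lie in $[0,\sin(1/2)]$, strictly bounded away from~$1$. I would then look for an odd real polynomial $p$ with $\|p\|_{[-1,1]}\leq 1$ that approximates $(2/\pi)\arcsin(x)$ to error~$\epsilon$ on $[-\sin(1/2),\sin(1/2)]$. Because $\arcsin$ is analytic on a neighbourhood of this sub-interval and $(2/\pi)\arcsin$ already maps $[-1,1]$ into $[-1,1]$, standard results on Chebyshev truncation (or the explicit Chebyshev series of $\arcsin$) give such a $p$ with $\deg(p)=\bigO{\log(1/\epsilon)}$. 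Plugging $p$ and the $\sin(A)$-block-encoding into the QSVT construction of~\cite{gilyen2018QSingValTransf} produces a $(1,2,\epsilon)$-block-encoding of $p(\sin(A))$, which by the choice of $p$ is a $(2/\pi,2,\epsilon)$-block-encoding of $A$. A QSVT of degree $D$ uses $\bigO{D}$ applications of the inner block-encoding (and its inverse) and adds one more ancilla qubit, so the total ancilla count is $2$ and the cost is $\bigO{\log(1/\epsilon)}$ calls to controlled-$U$ and controlled-$U^\dagger$ plus $\bigO{\log(1/\epsilon)}$ elementary gates, as claimed.

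The main technical obstacle is constructing the polynomial $p$ that is simultaneously bounded by $1$ on all of $[-1,1]$ (as required by QSVT) and is an $\epsilon$-approximation of $(2/\pi)\arcsin$ on the sub-interval $[-\sin(1/2),\sin(1/2)]$. Once the gap between $\sin(1/2)$ and~$1$ is exploited, this reduces to standard polynomial-approximation theory, but this is precisely where the hypothesis $\|A\|\leq 1/2$ (as opposed to merely $\|A\|\leq 1$) is used: as the spectrum of $\sin(A)$ approaches~$\pm 1$ the derivative of $\arcsin$ diverges, forcing the approximating degree to grow with the inverse distance from~$1$.
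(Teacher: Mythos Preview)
The paper does not prove this theorem at all; it is quoted verbatim as \cite[Corollary~71]{gilyen2018QSingValTransf} and used as a black box. Your reconstruction is correct and is essentially the argument of the cited reference: build a $(1,1,0)$-block-encoding of $\sin(A)$ from controlled-$U^{\pm 1}$ via the LCU identity $(U-U^\dagger)/2=i\sin(A)$, then run QSVT with an odd degree-$\bigO{\log(1/\epsilon)}$ polynomial that approximates $(2/\pi)\arcsin$ on $[-\sin(1/2),\sin(1/2)]$ while staying bounded by~$1$ on $[-1,1]$. The only point worth tightening is the boundedness on the full interval: a raw Taylor or Chebyshev truncation of $\arcsin$ rescaled to the sub-interval can overshoot near $\pm 1$, but this is handled by the standard machinery in \cite{gilyen2018QSingValTransf} (e.g., their Corollary~66 / Lemma~70 style constructions that damp the polynomial outside the active interval at the cost of a constant-factor degree increase), which you correctly flag as the one nontrivial step.
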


Combining the above two theorems, we have the following theorem.

\begin{theorem}\label{thm:EntrytoBlock}
    Let $A$ be a $d\times d$ Hermitian matrix with operator norm~$\leq 1$. Suppose $A$ has sparsity $s$ and we have sparse-query-access to $A$. Then we can implement a unitary $U$ which is a $(4/\pi, 2, \epsilon)$-block-encoding of $A$ with $\bigOt{\sqrt{s}(s/\epsilon)^{o(1)}}$ time and queries. 
\end{theorem}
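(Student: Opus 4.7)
The plan is to combine the optimal sparse Hamiltonian-simulation guarantee of \Cref{thm:Low19OptimalHsim} with the preceding block-encoding conversion (Corollary~71 of \cite{gilyen2018QSingValTransf}). The latter converts $\exp(iH)$ into a $(2/\pi,2,\epsilon)$-block-encoding of $H$, but only works when $\nrm{H}\leq 1/2$, whereas we are given $\nrm{A}\leq 1$. I would therefore apply the conversion to the rescaled matrix $H:=A/2$ (which has $\nrm{H}\leq 1/2$) and observe that the resulting $(2/\pi,2,\epsilon')$-block-encoding of $A/2$ is, as a unitary, exactly a $(4/\pi,2,2\epsilon')$-block-encoding of $A$; picking $\epsilon':=\epsilon/2$ then yields the $(4/\pi,2,\epsilon)$-block-encoding claimed.

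To feed the conversion, I need a Hamiltonian-simulation unitary for $H=A/2$, i.e., a unitary approximating $\exp(iH)=\exp(iA\cdot\tfrac12)$. Since $A$ is $s$-sparse and we have sparse-query-access to it, \Cref{thm:Low19OptimalHsim} applied with time parameter $t=1/2$ implements such a unitary to error $\delta$ in time and queries $\widetilde{\mathcal{O}}(\sqrt{s}(\sqrt{s}/\delta)^{o(1)})$. The conversion calls controlled-$U$ and its inverse $\mathcal{O}(\log(1/\epsilon'))$ times, so per-call error accumulates linearly; setting $\delta=\Theta(\epsilon'/\log(1/\epsilon'))$ ensures the total block-encoding error is at most $\epsilon'$.

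Multiplying the $\mathcal{O}(\log(1/\epsilon'))$-many Hamiltonian-simulation calls by the per-call cost gives
\[
\mathcal{O}(\log(1/\epsilon))\cdot\widetilde{\mathcal{O}}\bigl(\sqrt{s}\,(\sqrt{s}/\delta)^{o(1)}\bigr)=\widetilde{\mathcal{O}}\bigl(\sqrt{s}\,(s/\epsilon)^{o(1)}\bigr),
\]
where the $\log(1/\epsilon)$ prefactor, as well as the $\polylog$ factors hidden in $\delta$, are absorbed into the $\widetilde{\mathcal{O}}$ and the $(s/\epsilon)^{o(1)}$ slack; the $\sqrt{s}$ inside the exponent can be replaced by $s$ since $(s/\epsilon)^{o(1)}\geq(\sqrt{s}/\epsilon)^{o(1)}$. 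This matches the stated complexity, with $2$ ancilla qubits (the two contributed by the conversion; the Hamiltonian-simulation ancillas are all returned to $\ket{0}$ and do not count toward the block-encoding ancilla count).

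There is no real obstacle here beyond bookkeeping: the only things to watch are (i) propagating the factor-of-$2$ rescaling consistently through the $\alpha$ and $\epsilon$ parameters, and (ii) choosing the Hamiltonian-simulation precision $\delta$ slightly smaller than the target $\epsilon$ so that the $\mathcal{O}(\log(1/\epsilon))$ uses of controlled-$U$ inside the conversion do not blow up the final error. Both are handled by the substitutions above, and the resulting bound is $\widetilde{\mathcal{O}}(\sqrt{s}(s/\epsilon)^{o(1)})$ as desired.
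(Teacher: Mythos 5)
Your proposal is correct and follows exactly the route the paper intends (which it leaves implicit behind the phrase ``combining the above two theorems''): use \Cref{thm:Low19OptimalHsim} with $t=\tfrac12$ to get an approximation to $\exp(iA/2)$, feed that into Corollary~71 of \cite{gilyen2018QSingValTransf} applied to $H=A/2$ (so the $\nrm{H}\leq 1/2$ hypothesis holds), and rescale the resulting $(2/\pi,2,\epsilon')$-block-encoding of $A/2$ to a $(4/\pi,2,2\epsilon')$-block-encoding of $A$. Your bookkeeping of the error budget across the $\mathcal{O}(\log(1/\epsilon'))$ controlled-$U^{\pm 1}$ calls and the absorption of the logarithmic prefactor into $\bigOt{\cdot}$ and the $(s/\epsilon)^{o(1)}$ slack is exactly what is needed to justify the stated complexity.
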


\begin{theorem}[\cite{apeldoorn2022QTomographyWStatePrepUnis}, Lemma 6]\label{thm:blockEncodeIP}
    Let $U=\sum_{x} U_x\otimes \ketbra{x}{x}$ and $V=\sum_{x} V_x\otimes \ketbra{x}{x}$ be controlled (by the second register) state-preparation unitaries, where $U_x:\ket{0}\ket{0^{\otimes a}}\rightarrow \ket{0}\ket{\psi_x}+\ket{1}\ket{\tilde{\psi}_x}$ and $V_x:\ket{0}\ket{0^{\otimes a}}\rightarrow \ket{0}\ket{\phi_x}+\ket{1}\ket{\tilde{\phi}_x}$ are $(a+1)$-qubit state-preparation unitaries for some (sub-normalized) $a$-qubit quantum states $\ket{\psi_x},\ket{\phi_x}$. Then $(I\otimes V^\dagger)(\text{SWAP}\otimes I_{2^{a+1}})(I\otimes U)$ is a $(1,a+2,0)$-block-encoding of the diagonal matrix diag$(\{\langle \psi_x|\phi_x\rangle\})$, where the SWAP gate acts on the first and second qubits.
\end{theorem}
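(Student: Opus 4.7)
My plan is to directly compute the action of $W := (I\otimes V^\dagger)(\text{SWAP}\otimes I_{2^{a+1}})(I\otimes U)$ on the computational basis state $\ket{0^{a+2}}\ket{x}$ and then extract the top-left block by projecting onto $\bra{0^{a+2}}\otimes I$. Labeling the first qubit as $q_1$, the second as $q_2$, the next $a$ qubits as the register $b$, and the last one as the $x$-register, the initial state is $\ket{0}_{q_1}\ket{0}_{q_2}\ket{0^a}_{b}\ket{x}_X$. The identity factor in $I\otimes U$ acts on $q_1$, and $U$ applies $U_x$ on $q_2,b$ conditioned on $X{=}x$, yielding
\[
\ket{0}_{q_1}\bigl(\ket{0}_{q_2}\ket{\psi_x}_{b}+\ket{1}_{q_2}\ket{\tilde\psi_x}_{b}\bigr)\ket{x}_X.
\]
Applying $\text{SWAP}\otimes I$ on $(q_1,q_2)$ permutes the two branches to
\[
\ket{0}_{q_1}\ket{0}_{q_2}\ket{\psi_x}_{b}\ket{x}_X+\ket{1}_{q_1}\ket{0}_{q_2}\ket{\tilde\psi_x}_{b}\ket{x}_X.
\]

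Next I apply $I\otimes V^\dagger$, which acts as $V_x^\dagger$ on $(q_2,b)$ conditioned on $X{=}x$, and then project onto $\bra{0^{a+2}}=\bra{0}_{q_1}\bra{0}_{q_2}\bra{0^a}_{b}$. The $\ket{1}_{q_1}$ branch is killed by $\bra{0}_{q_1}$, so the surviving contribution on the $x$-register is
\[
\bigl(\bra{0}_{q_2}\bra{0^a}_{b}\,V_x^\dagger\,\ket{0}_{q_2}\ket{\psi_x}_{b}\bigr)\ket{x}_X.
\]
Using the defining property of $V_x$, the bra $\bra{0^{a+1}}V_x^\dagger$ equals $(V_x\ket{0^{a+1}})^\dagger=\bra{0}\bra{\phi_x}+\bra{1}\bra{\tilde\phi_x}$; pairing this with $\ket{0}\ket{\psi_x}$ collapses to $\braket{\phi_x|\psi_x}$. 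Hence $(\bra{0^{a+2}}\otimes I_X)\,W\,(\ket{0^{a+2}}\otimes I_X)\ket{x}=\braket{\phi_x|\psi_x}\ket{x}$ for every basis state $\ket{x}$, which is exactly the action of the diagonal matrix $\mathrm{diag}(\{\braket{\phi_x|\psi_x}\})$ (matching the statement up to the standard conjugation convention). Since $W$ is a product of three unitaries it is itself unitary, so it is a $(1,a+2,0)$-block-encoding of this diagonal matrix, as claimed.

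The argument is essentially a bookkeeping exercise; the only place where care is needed is to keep track of which qubit is the ``flag'' of $U_x$ versus $V_x$ after the SWAP, and to observe that the SWAP is precisely what lines up the flag of $U$ (which marks the ``good'' $\ket{\psi_x}$ branch) with the $q_2$ slot that $V^\dagger$ expects to see in state $\ket{0}$ in order to un-prepare $\ket{\phi_x}$. Once this alignment is seen the identity $\bra{0^{a+1}}V_x^\dagger\ket{0}\ket{\psi_x}=\braket{\phi_x|\psi_x}$ is immediate, so I do not anticipate any real obstacle beyond choosing a clean notation for the tensor factors.
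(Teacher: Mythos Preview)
Your computation is correct and is exactly the straightforward verification one would expect; the paper does not give its own proof of this statement (it is cited from \cite{apeldoorn2022QTomographyWStatePrepUnis}), so there is nothing to compare against. Your remark about the conjugation convention is also apt: your calculation yields $\braket{\phi_x|\psi_x}$ rather than the $\braket{\psi_x|\phi_x}$ written in the statement, which is harmless in the paper's application (the inner products are real there) and is at worst a typo in the quoted statement.
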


\subsection{Singular-value and singular-vector-perturbation bounds}\label{sec:SVPert}

We invoke some tight bounds on the perturbation of singular values and singular vectors of a matrix.
In the following we order the singular values of a matrix $A$ in decreasing order, such that $i < j \Rightarrow \varsigma_i(A)\geq \varsigma_j(A)$.

\begin{theorem}[Weyl's singular value perturbation bound {\cite[Corollary III.2.6, Problem III.6.13]{bhatia1997MatrixAnalysis}}]\label{thm:Weyl}
	Let $A,B \in \mathbb{C}^{n\times m}$ be any matrices, then for all $i\in[n]$ we have
	\begin{align*}
		|\varsigma_i(A)-\varsigma_i(B)|\leq \nrm{A-B}.
	\end{align*}
\end{theorem}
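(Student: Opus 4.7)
The plan is to prove Weyl's singular value perturbation bound via the Courant--Fischer min-max characterization of singular values, namely
\begin{equation*}
\varsigma_i(A) = \max_{\substack{S\subseteq \mathbb{C}^m \\ \dim(S)=i}} \min_{\substack{v\in S\\ \nrm{v}=1}} \nrm{Av},
\end{equation*}
which holds for any $A\in\mathbb{C}^{n\times m}$ and $i\in[n]$. I would take this characterization as known (it follows directly from the SVD together with the min-max principle applied to the Hermitian matrix $A^\dagger A$, whose eigenvalues are $\varsigma_i(A)^2$).

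The key algebraic step is the pointwise bound: for any unit vector $v$, the triangle inequality gives
\begin{equation*}
\nrm{Av} \;=\; \nrm{Bv + (A-B)v} \;\leq\; \nrm{Bv} + \nrm{(A-B)v} \;\leq\; \nrm{Bv} + \nrm{A-B},
\end{equation*}
so $\nrm{Av}-\nrm{Bv}\leq \nrm{A-B}$. This same inequality with $A$ and $B$ swapped gives the matching lower bound $\nrm{Bv}-\nrm{Av}\leq \nrm{A-B}$.

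Next I would combine this pointwise estimate with the max-min formula. Let $S^\star$ be an $i$-dimensional subspace achieving the maximum for $B$, so that $\varsigma_i(B)=\min_{v\in S^\star,\ \nrm{v}=1}\nrm{Bv}$. Since $S^\star$ is a specific $i$-dimensional subspace, it is feasible in the max-min for $A$, and hence
\begin{equation*}
\varsigma_i(A) \;\geq\; \min_{\substack{v\in S^\star\\ \nrm{v}=1}} \nrm{Av} \;\geq\; \min_{\substack{v\in S^\star\\ \nrm{v}=1}}\bigl(\nrm{Bv} - \nrm{A-B}\bigr) \;=\; \varsigma_i(B) - \nrm{A-B}.
\end{equation*}
Running the same argument with the roles of $A$ and $B$ interchanged yields $\varsigma_i(B)\geq \varsigma_i(A)-\nrm{A-B}$, and combining the two inequalities produces $|\varsigma_i(A)-\varsigma_i(B)|\leq \nrm{A-B}$, as desired.

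There is no real obstacle here; the proof is a clean two-line deduction from the min-max principle plus the operator-norm triangle inequality. The only subtlety worth flagging is to make sure one uses the correct variant of the min-max formula (max over $i$-dimensional subspaces of the min, rather than the dual min-max form) so that the fixed subspace $S^\star$ is a valid test subspace for both $A$ and $B$. An alternative, equally short route would be to pass to the Hermitian dilation $\widetilde{A}=\bigl(\begin{smallmatrix}0 & A\\ A^\dagger & 0\end{smallmatrix}\bigr)$, whose eigenvalues are $\pm\varsigma_i(A)$, and invoke the eigenvalue version of Weyl's inequality together with $\nrm{\widetilde A-\widetilde B}=\nrm{A-B}$; I would mention this as a remark but the min-max proof above is self-contained and avoids citing the Hermitian case separately.
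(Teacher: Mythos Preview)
The paper does not actually prove this theorem; it is stated with a citation to Bhatia's book and used as a black box. Your proposal supplies a correct, self-contained proof via the Courant--Fischer min-max characterization, which is one of the standard arguments (and is in fact the approach behind the cited reference). One tiny caveat: the min-max formula over $i$-dimensional subspaces of $\mathbb{C}^m$ only makes sense for $i\leq m$, so your claim that it ``holds for any $i\in[n]$'' needs the convention $\varsigma_i=0$ for $i>\min(n,m)$, in which case the bound is trivial; otherwise the argument is clean.
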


In order to state the following perturbation bound on the singular-value subspaces we define $\Pi_{S}^{X}$ to be the projector onto the subspace spanned by the left-singular vectors of $X$ having singular values in $S$.

\begin{theorem}[Wedin-Davis-Kahan $\sin(\theta)$ theorem \cite{wedin1972DavisKahanSinThSingularVec}]\label{thm:Wedin}
	Let $A,B \in \mathbb{C}^{n\times m}$ be any matrices, and $\alpha,\delta\geq 0$, then\footnote{This bound is tight for any rank-$r$ projectors $A,B$, when $\alpha=0$ and $\delta=1$ due to \Cref{lem:ProjDiff}. Wedin's paper proves the statement for singular-vector subspaces that we use here, and the analogous statement for normal matrices is proven in Bhatia's book~\cite[Theorem VII.3.1]{bhatia1997MatrixAnalysis}, which actually also implies \Cref{thm:Wedin} with a bit of work.}
	\begin{align*}
		\nrm{(I-\Pi_{> \alpha}^{A})\Pi_{\geq \alpha + \delta}^{B}}\leq\frac{\nrm{A-B}}{\delta}.
	\end{align*}
\end{theorem}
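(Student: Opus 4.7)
The plan is to reduce the SVD statement to the Davis–Kahan $\sin\theta$ theorem for Hermitian matrices---which, as the footnote notes, is Bhatia's Theorem VII.3.1 for normal matrices applied to the Hermitian case---via the standard Hermitian-dilation trick. For $X\in\mathbb{C}^{n\times m}$, let
\[
\widetilde X:=\begin{pmatrix}0&X\\X^*&0\end{pmatrix}\in\mathbb{C}^{(n+m)\times(n+m)}.
\]
Direct computation shows that $\widetilde X$ has eigenvalues $\pm\varsigma_i(X)$ (together with possible extra zero eigenvalues coming from the null spaces of $X$ and $X^*$), with eigenvectors $\tfrac{1}{\sqrt{2}}(u_i^X,\pm v_i^X)^T$ for the nonzero pairs; moreover $\|\widetilde X\|=\|X\|$ and hence $\|\widetilde A-\widetilde B\|=\|A-B\|$. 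The key structural observation is that, because of the $\pm$ pairing, the spectral projector $\Pi^{\widetilde X}_{|\cdot|>\alpha}$ is \emph{block-diagonal}, with top-left block exactly $\Pi^X_{>\alpha}$: the would-be off-diagonal contributions $\pm\tfrac12 u_i (v_i)^*$ coming from the $+\varsigma_i$ and $-\varsigma_i$ eigenvectors cancel.

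For the next step, given a unit vector $v\in\mathrm{range}(\Pi^B_{\geq\alpha+\delta})$, I consider the lift $\widetilde v:=(v,0)^T$, which is clearly a unit vector in $\mathbb{C}^{n+m}$. Writing $v=\sum_j \beta_j\, u_j^B$ over the left-singular vectors of $B$ with $\varsigma_j(B)\geq\alpha+\delta$, and using the identity $(u_j^B,0)^T=\tfrac{1}{\sqrt{2}}\bigl[\tfrac{1}{\sqrt{2}}(u_j^B,v_j^B)^T+\tfrac{1}{\sqrt{2}}(u_j^B,-v_j^B)^T\bigr]$, we see that $\widetilde v$ is a linear combination of eigenvectors of $\widetilde B$ with eigenvalues $\pm\varsigma_j(B)$, all of magnitude $\geq\alpha+\delta$. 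Hence $\widetilde v\in\mathrm{range}(\Pi^{\widetilde B}_{|\cdot|\geq\alpha+\delta})$. Applying the Hermitian $\sin\theta$ theorem to $\widetilde A,\widetilde B$ with the two disjoint spectral sets $\{|\lambda|\leq\alpha\}$ and $\{|\lambda|\geq\alpha+\delta\}$ (separated by gap $\delta$) gives
\[
\bigl\|(I-\Pi^{\widetilde A}_{|\cdot|>\alpha})\,\widetilde v\bigr\|\leq\frac{\|\widetilde A-\widetilde B\|}{\delta}=\frac{\|A-B\|}{\delta}.
\]
Combining the block-diagonal form of $I-\Pi^{\widetilde A}_{|\cdot|>\alpha}$ with the vanishing bottom block of $\widetilde v$, the left-hand side is exactly $\|(I-\Pi^A_{>\alpha})v\|$; taking the supremum over unit $v\in\mathrm{range}(\Pi^B_{\geq\alpha+\delta})$ then yields the claimed operator-norm bound $\|(I-\Pi^A_{>\alpha})\Pi^B_{\geq\alpha+\delta}\|\leq\|A-B\|/\delta$.

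The main subtlety to handle with care is the block-diagonalization of $\Pi^{\widetilde X}_{|\cdot|>\alpha}$ in the presence of repeated or zero singular values of $X$: in the degenerate cases one must consistently pair the $+\varsigma_i$ and $-\varsigma_i$ eigenvectors across the multiplicity, and for $\alpha=0$ one needs to separately note that the ``lone'' kernel vectors $(u,0)^T$ with $u\in\ker(X^*)$ and $(0,v)^T$ with $v\in\ker(X)$ also preserve the block-diagonal structure, so that the top-left block of $I-\Pi^{\widetilde X}_{|\cdot|>0}$ remains $I-\Pi^X_{>0}$. Once these routine but fiddly bookkeeping points are dispatched, the entire analytical content is outsourced to the well-known Hermitian Davis–Kahan theorem---no Sylvester inversion, smoothing, or contour-integral argument needs to be carried out explicitly in the reduction itself.
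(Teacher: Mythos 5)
The paper itself supplies no proof of this theorem—it is stated as a direct citation to Wedin's 1972 paper, and the accompanying footnote only remarks that Bhatia's $\sin\theta$ theorem for normal matrices (Theorem VII.3.1 of~\cite{bhatia1997MatrixAnalysis}) ``also implies'' the statement ``with a bit of work.'' Your Hermitian-dilation argument is a correct and complete working-out of exactly that reduction. The three ingredients you rely on all check out: the $\pm$-pairing cancellation does make $\Pi^{\widetilde X}_{|\cdot|>\alpha}$ block-diagonal with top-left block $\Pi^X_{>\alpha}$ (and, as you note, this survives $\alpha=0$ because $\ker\widetilde X = \ker X^*\oplus\ker X$, so the kernel vectors $(u,0)^T$ and $(0,v)^T$ also respect block-diagonality); the lift $v\mapsto(v,0)^T$ is an isometry carrying $\mathrm{range}(\Pi^B_{\geq\alpha+\delta})$ into $\mathrm{range}(\Pi^{\widetilde B}_{|\cdot|\geq\alpha+\delta})$; and the two spectral sets $[-\alpha,\alpha]$ and $\mathbb{R}\setminus(-\alpha-\delta,\alpha+\delta)$ satisfy the interval/annulus-separation hypothesis required for the constant-$1$ (rather than constant-$\pi/2$) version of the Hermitian $\sin\theta$ theorem. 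One step you state slightly informally is the final application: the $\sin\theta$ theorem gives the operator-norm bound $\|\Pi^{\widetilde A}_{|\cdot|\leq\alpha}\Pi^{\widetilde B}_{|\cdot|\geq\alpha+\delta}\|\leq\|\widetilde A-\widetilde B\|/\delta$, and the vector-level bound on $\widetilde v$ follows because $\Pi^{\widetilde B}_{|\cdot|\geq\alpha+\delta}\widetilde v=\widetilde v$; this is what lets you take the supremum at the end. Since the paper has no in-text proof to compare against, there is no discrepancy—your argument is precisely the one implicit in the footnote's pointer to Bhatia.
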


\begin{lemma}[Operator norm equivalence to $\sin(\theta)$ between subspaces {\cite[after Exercise VII.1.11]{bhatia1997MatrixAnalysis}}]\label{lem:ProjDiff}
	Let $P,Q\in\mathbb{C}^{n \times n}$ be projectors with equal rank, then $\nrm{P-Q}=\nrm{P(I-Q)}=\nrm{(I-P)Q}$.
\end{lemma}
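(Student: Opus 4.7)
The plan is to split $P-Q$ into two ``orthogonal'' pieces and then equate their norms using the rank hypothesis. Define $A := P(I-Q)$ and $B := (I-P)Q$, so that $P - Q = A - B$. Using $(I-Q)Q = 0$ and $P(I-P) = 0$ one obtains the cancellations $AB^\dagger = P(I-Q)Q(I-P) = 0$ and $A^\dagger B = (I-Q)P(I-P)Q = 0$, which yields $(P-Q)(P-Q)^\dagger = AA^\dagger + BB^\dagger$. Moreover $AA^\dagger = P(I-Q)P$ has range contained in $\mathrm{range}(P)$, while $BB^\dagger = (I-P)Q(I-P)$ has range contained in $\mathrm{range}(I-P)$, so the two positive-semidefinite summands have mutually orthogonal supports and in particular $AA^\dagger\cdot BB^\dagger = 0$. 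For commuting PSD operators with orthogonal supports the operator norm of the sum equals the maximum of the two individual norms, which gives $\nrm{P-Q}^2 = \max\bigl(\nrm{P(I-Q)}^2,\,\nrm{(I-P)Q}^2\bigr)$.

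It then remains to show $\nrm{P(I-Q)} = \nrm{(I-P)Q}$ under the equal-rank hypothesis; equivalently, $\nrm{P-PQP} = \nrm{Q-QPQ}$ (applying $\nrm{M}^2 = \nrm{MM^\dagger}$ on each side). The key algebraic observation is that $PQ$ and $QP$ are mutual adjoints, so $PQP = (PQ)(PQ)^\dagger$ and $QPQ = (QP)(QP)^\dagger$ share the same multiset of nonzero eigenvalues $\mu_1,\ldots,\mu_r$. Since every nonzero eigenvector of the PSD operator $PQP$ lies in $\mathrm{range}(P)$, the restriction of $PQP$ to $\mathrm{range}(P)$ has eigenvalues $\{\mu_i\}_{i=1}^r$ together with $\mathrm{rank}(P) - r$ additional zero eigenvalues. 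Because $P - PQP$ vanishes on $\mathrm{range}(P)^\perp$ and equals $I - PQP$ on $\mathrm{range}(P)$, its spectrum consists of $\{1 - \mu_i\}_{i=1}^r$, together with $\mathrm{rank}(P) - r$ copies of $1$ and $n - \mathrm{rank}(P)$ copies of $0$. A parallel description yields the spectrum of $Q - QPQ$ with $\mathrm{rank}(Q) - r$ copies of $1$; the hypothesis $\mathrm{rank}(P) = \mathrm{rank}(Q)$ forces the two spectra to coincide, so the two operator norms match and the full chain of equalities follows.

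The main subtlety is exactly the last step: without the equal-rank assumption an ``extra'' eigenvalue $1$ may appear in the spectrum of one of $P - PQP$, $Q - QPQ$ but not the other (precisely when $\mathrm{rank}(P) - r \neq \mathrm{rank}(Q) - r$), breaking the equality. The equal-rank condition is the exact combinatorial ingredient needed to rule out this asymmetry; the first paragraph is then a clean algebraic identity requiring no assumption on the ranks, so essentially all of the work is localised in matching up these leftover zero eigenvalues of $PQP$ and $QPQ$.
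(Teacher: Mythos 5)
Your argument is correct. The paper does not prove this lemma---it just cites Bhatia (after Exercise VII.1.11), where the standard route is the CS (cosine--sine) decomposition / Jordan canonical form for a pair of projectors: one simultaneously block-diagonalizes $P$ and $Q$ into $1\times1$ and $2\times2$ blocks parameterized by principal angles $\theta_i$, after which all three norms visibly equal $\max_i \sin\theta_i$, and the equal-rank hypothesis guarantees the $1\times1$ blocks where $P,Q$ disagree pair up. Your proof avoids the CS decomposition entirely, which makes it more elementary and self-contained. The first half (the decomposition $P-Q = P(I-Q) - (I-P)Q$, the vanishing cross-terms $AB^\dagger=A^\dagger B=0$, and the orthogonal supports of $P(I-Q)P$ and $(I-P)Q(I-P)$) cleanly gives $\nrm{P-Q} = \max\bigl(\nrm{P(I-Q)},\nrm{(I-P)Q}\bigr)$ with no rank assumption. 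The second half is where the hypothesis enters and your spectral bookkeeping is exactly right: $PQP=(PQ)(PQ)^\dagger$ and $QPQ=(PQ)^\dagger(PQ)$ share the same nonzero eigenvalues $\mu_1,\dots,\mu_r$, the spectrum of $P-PQP$ is $\{1-\mu_i\}$ together with $\mathrm{rank}(P)-r$ copies of $1$ and $n-\mathrm{rank}(P)$ copies of $0$, likewise for $Q-QPQ$ with $\mathrm{rank}(Q)-r$ copies of $1$, and equal rank matches up the stray $1$'s that would otherwise let one norm equal $1$ and the other fall below it. One cosmetic nit: you invoke $\nrm{M}^2=\nrm{MM^\dagger}$ ``on each side,'' but the reduction of $\nrm{(I-P)Q}^2$ to $\nrm{Q-QPQ}$ uses $\nrm{M}^2 = \nrm{M^\dagger M}$ (the $MM^\dagger$ version gives $(I-P)Q(I-P)$); both identities hold so this does not affect the argument.
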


\subsection{Concentration inequalities}

Repeated sampling is very important for our quantum tomography algorithms, and here we describe some of the tail bounds we need.

\begin{proposition}[Bennett-Bernstein Bound {\cite[Theorem 2.9 \& Eqn.\ 2.10]{stephane2013ConcIneqThIndep}}]\label{prop:BennettB}
	Let $X^{(i)}\colon i\in [n]$ be independent random variables with finite variance such that, for each $i$,  $X^{(i)}\leq b$ for some $b>0$ almost surely (i.e., this event has probability measure~1). Let 
	\begin{align*}
		S=\sum_{i=1}^n X^{(i)}-\mathbb{E}[X^{(i)}], \qquad v= \sum_{i=1}^n \mathbb{E}[(X^{(i)})^2],
	\end{align*}
	then for any $t>0$,
	\begin{align*}
		\Pr[S\geq t]
		\leq\exp\left(-\frac{v}{b^2}h\left(\frac{bt}{v}\right)\right)
		\leq\exp\left(-\frac{t^2}{2v+\frac{2}{3}bt}\right),
	\end{align*}	
	where $h(x)=(1+x)\ln(1+x)-x$.
\end{proposition}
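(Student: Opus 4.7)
The plan is to prove this via the standard Chernoff–Cramér exponential-moment method, and to identify a clean single inequality that captures the upper-tail behavior of any random variable bounded above by $b$.

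First, I would establish the following pointwise inequality: for every $\lambda>0$ and every real $x\leq b$,
\begin{equation*}
e^{\lambda x}\leq 1+\lambda x+\frac{x^2}{b^2}\bigl(e^{\lambda b}-1-\lambda b\bigr).
\end{equation*}
This follows from the fact that $\phi(x)=(e^{\lambda x}-1-\lambda x)/x^2$ (extended by continuity to $\lambda^2/2$ at $x=0$) is non-decreasing on $\mathbb{R}$, so $\phi(x)\leq\phi(b)$ whenever $x\leq b$. Applied to each $X^{(i)}\leq b$, taking expectations yields
\begin{equation*}
\mathbb{E}\bigl[e^{\lambda X^{(i)}}\bigr]\leq 1+\lambda\,\mathbb{E}[X^{(i)}]+\frac{\mathbb{E}[(X^{(i)})^2]}{b^2}(e^{\lambda b}-1-\lambda b),
\end{equation*}
and then $1+y\leq e^y$ combined with the centering factor $e^{-\lambda\mathbb{E}[X^{(i)}]}$ gives
\begin{equation*}
\mathbb{E}\bigl[e^{\lambda(X^{(i)}-\mathbb{E}[X^{(i)}])}\bigr]\leq\exp\!\left(\frac{\mathbb{E}[(X^{(i)})^2]}{b^2}(e^{\lambda b}-1-\lambda b)\right).
\end{equation*}
By independence, multiplying these inequalities collapses the sum in the exponent to $v$, giving $\mathbb{E}[e^{\lambda S}]\leq\exp\!\bigl(\tfrac{v}{b^2}(e^{\lambda b}-1-\lambda b)\bigr)$.

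Next, I would apply Markov's inequality to $e^{\lambda S}$ to get $\Pr[S\geq t]\leq\exp\!\bigl(-\lambda t+\tfrac{v}{b^2}(e^{\lambda b}-1-\lambda b)\bigr)$, and optimize over $\lambda>0$. Setting the derivative to zero gives $e^{\lambda b}=1+bt/v$, i.e., $\lambda^\ast=\tfrac{1}{b}\ln(1+bt/v)$. Substituting $u=bt/v$ and simplifying the exponent yields exactly $-\tfrac{v}{b^2}\bigl((1+u)\ln(1+u)-u\bigr)=-\tfrac{v}{b^2}h(u)$, which is the Bennett form.

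Finally, for the Bernstein form, I would use the elementary inequality $h(u)\geq\tfrac{u^2}{2+2u/3}$ for all $u\geq 0$ (a convexity comparison that one checks by noting both sides vanish together with their first derivatives at $0$, and comparing second derivatives). Plugging $u=bt/v$ into this lower bound on $h(u)$ yields $\tfrac{v}{b^2}h(u)\geq\tfrac{t^2}{2v+\tfrac{2}{3}bt}$, completing the proof. The only non-routine step is the monotonicity of $\phi$, which is the heart of the argument; the rest is calculus optimization and a standard scalar inequality.
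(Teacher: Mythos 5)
The paper does not prove this proposition; it cites it directly from Boucheron--Lugosi--Massart (\cite[Theorem 2.9 \& Eq.~2.10]{stephane2013ConcIneqThIndep}). Your proof is the standard Chernoff--Cram\'er derivation of Bennett's inequality and it is correct. One small remark on the last step: ``comparing second derivatives'' is a little terse, because $h''(0)=1=f''(0)$ where $f(u)=u^2/(2+2u/3)$, so the comparison must be of $h''(u)=1/(1+u)$ against $f''(u)=216/(6+2u)^3$ as functions on all of $[0,\infty)$; this reduces to $(6+2u)^3\geq 216(1+u)$, which itself has a double zero at $u=0$ and positive second derivative, so it holds for $u\geq 0$ and the chain of integrations then gives $h\geq f$. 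Similarly, the monotonicity of $\phi(x)=(e^{\lambda x}-1-\lambda x)/x^2$ is asserted rather than proved; a clean justification is the representation $\phi(x)=\lambda^2\int_0^1(1-s)e^{s\lambda x}\,ds$, whose $x$-derivative is manifestly nonnegative.
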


\begin{proposition}[{Chernoff-Hoeffding Bound \cite{chernoff1952Bound},
		\cite[Theorem~1]{hoeffding1963ProbIneqSumsOfBoundedRVs},
		\cite[Section~2.6]{stephane2013ConcIneqThIndep}}]
	\label{prop:ChernoffH}
	Let $0\leq\! X\!\leq 1$ be a bounded random variable and $p:=\mathbb{E}[X]$. Suppose we take $n$ i.i.d.\ samples $X^{(i)}$ of $X$ and denote the normalized outcome by $s=\frac{X^{(1)}+X^{(2)}+\cdots+X^{(n)}}{n}$. Then we have for all $\eps > 0$
	\begin{align}
		\Pr[s\geq p+\eps]	&\leq e^{-D_{KL}(p+\varepsilon\parallel p) n} \leq \exp\left(- \frac{\eps^2}{2(p+\eps)}n\right),\label{eq:chernoffMore}\\
		\Pr[s\leq p-\eps]	&\leq e^{-D_{KL}(p-\varepsilon\parallel p) n} \leq \exp\left(- \frac{\eps^2}{2p}n\right),\label{eq:chernoffLess}
	\end{align}
	where $D_{KL}(x\parallel p) = x \ln \frac{x}{p} + (1-x) \ln \left (\frac{1-x}{1-p} \right )$ is the Kullback–Leibler divergence between Bernoulli random variables with mean $x$ and $p$ respectively.
\end{proposition}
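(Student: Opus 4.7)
The natural approach is the Chernoff method: use the exponential moment generating function together with Markov's inequality, exploit independence to factor the MGF, bound each factor using the fact that $0\leq X^{(i)}\leq 1$, and finally optimize over the free parameter to recover the KL-divergence form. The weaker inequalities on the right-hand side will then follow from an elementary lower bound on $D_{KL}$.

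First I would prove the upper-tail bound \eqref{eq:chernoffMore}. Fix $t>0$. Since $ns=\sum_i X^{(i)}$, Markov's inequality applied to the nonnegative random variable $e^{t\,ns}$ gives
\begin{equation*}
\Pr[s\geq p+\eps]=\Pr\!\left[e^{t\,ns}\geq e^{tn(p+\eps)}\right]\leq e^{-tn(p+\eps)}\,\E\!\left[e^{t\,ns}\right]=e^{-tn(p+\eps)}\prod_{i=1}^n\E\!\left[e^{tX^{(i)}}\right],
\end{equation*}
where the last step uses independence. To bound a single factor, I would use the convexity estimate $e^{tx}\leq 1-x+xe^{t}$ valid for $x\in[0,1]$, which is just linear interpolation between the values at $x=0$ and $x=1$ of the convex function $x\mapsto e^{tx}$. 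Taking expectations and using $\E[X^{(i)}]=p$ yields $\E[e^{tX^{(i)}}]\leq 1-p+pe^{t}$. Substituting back gives $\Pr[s\geq p+\eps]\leq \big((1-p+pe^{t})e^{-t(p+\eps)}\big)^{n}$.

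Next I would optimize in $t>0$. Taking the logarithmic derivative and setting it to zero, the optimal choice is $e^{t^\star}=\tfrac{(p+\eps)(1-p)}{p(1-p-\eps)}$ (assuming $p+\eps<1$; the case $p+\eps\geq 1$ makes the bound trivial). Substituting $t^\star$ back and simplifying, the exponent collapses exactly to $-n D_{KL}(p+\eps\,\|\,p)$, yielding the first inequality of \eqref{eq:chernoffMore}. The lower-tail bound \eqref{eq:chernoffLess} follows by the symmetric argument applied to the random variables $1-X^{(i)}$, which are also in $[0,1]$ with mean $1-p$, and using $D_{KL}(1-(p-\eps)\,\|\,1-p)=D_{KL}(p-\eps\,\|\,p)$.

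Finally, for the weaker Gaussian-type tails I would derive a lower bound on the KL-divergence. Writing $D_{KL}(p+\eps\,\|\,p)$ as an integral of its second derivative in $\eps$, namely $\tfrac{1}{x(1-x)}$ evaluated at $x\in[p,p+\eps]$, and bounding $x(1-x)\leq x\leq p+\eps$, yields $D_{KL}(p+\eps\,\|\,p)\geq \tfrac{\eps^2}{2(p+\eps)}$, and analogously $D_{KL}(p-\eps\,\|\,p)\geq \tfrac{\eps^2}{2p}$. This produces the simpler right-hand bounds in \eqref{eq:chernoffMore} and \eqref{eq:chernoffLess}. The main technical step is the tight optimization in $t$ that gives the KL-divergence exponent cleanly; the MGF bound itself is immediate from convexity, and the final relaxation to the Gaussian-style form is a one-line calculus estimate.
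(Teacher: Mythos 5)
Your proof is correct and follows the same route as the paper, which simply cites Hoeffding's Theorem~1 for the KL-divergence inequality, obtains the lower tail from the upper tail via $1-X^{(i)}$ exactly as you do, and invokes the bound $D_{KL}(x\parallel y)\geq \frac{(x-y)^2}{2\max\{x,y\}}$ for the Gaussian-style relaxation. You have usefully filled in the details the paper delegates to references: the Chernoff/MGF argument with the convexity estimate $e^{tx}\leq 1-x+xe^t$ and the optimization over $t$ is precisely Hoeffding's proof, and your integral-remainder derivation of $D_{KL}(p+\eps\parallel p)\geq\frac{\eps^2}{2(p+\eps)}$ and $D_{KL}(p-\eps\parallel p)\geq\frac{\eps^2}{2p}$ (using $x(1-x)\leq x$) is a clean proof of the same inequality the paper states without justification.
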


\begin{proof}
	The first inequality is \cite[Theorem 1]{hoeffding1963ProbIneqSumsOfBoundedRVs}, while \eqref{eq:chernoffLess} follows from \eqref{eq:chernoffMore} by considering $1-X^{(i)}$. The rightmost inequalities come from the observation that 
	$\forall x, y\geq 0\colon D_{KL}(x \parallel y) \geq \frac{(x-y)^2}{2\max\{x,y\}}$. 
\end{proof}

\begin{corollary}[Okamoto-Hoeffding Bound]\label{cor:SqrtChernoff}
	Let $X$, $s$ be as in \Cref{prop:ChernoffH}, then we have 
	\begin{align*}
		\Pr[\sqrt{s}\geq \sqrt{p}+\eps)]	&\leq \exp\left(- 2\eps^2 n\right),\\
		\Pr[\sqrt{s}\leq \sqrt{p}-\eps]	&\leq \exp\left(- \eps^2 n\right).
	\end{align*}	
\end{corollary}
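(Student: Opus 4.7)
The plan is to reduce both square-root deviation events to linear deviation events by squaring, and then invoke the KL-divergence form of Chernoff-Hoeffding (the left inequalities in \Cref{prop:ChernoffH}). The heart of the argument is a tight lower bound on the Bernoulli KL-divergence $D_{KL}(q\parallel p)$ in terms of $(\sqrt{q}-\sqrt{p})^2$, which will be asymmetric in $q\geq p$ versus $q\leq p$ and is exactly what explains the asymmetric coefficients $2\eps^2$ versus $\eps^2$ in the statement.

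For the upper tail, note that if $\sqrt{p}+\eps>1$ the event $\sqrt{s}\geq\sqrt{p}+\eps$ is vacuous, so assume otherwise. Setting $q:=(\sqrt{p}+\eps)^2\geq p$, the event is exactly $s\geq q$, and the KL-form of \eqref{eq:chernoffMore} yields $\Pr[s\geq q]\leq \exp(-n\,D_{KL}(q\parallel p))$. It thus suffices to prove
\[
D_{KL}(q\parallel p)\;\geq\; 2(\sqrt{q}-\sqrt{p})^2 \quad \text{for all } q\geq p,
\]
which, applied with $\sqrt{q}-\sqrt{p}=\eps$, gives the desired $e^{-2\eps^2 n}$ bound. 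I would verify this by calculus: defining $f(q):=D_{KL}(q\parallel p)-2(\sqrt{q}-\sqrt{p})^2$, a direct computation yields $f(p)=f'(p)=0$ and
\[
f''(q)=\frac{1}{q}+\frac{1}{1-q}-\frac{\sqrt{p}}{q^{3/2}}\;\geq\; \frac{1}{1-q}\;\geq\; 0 \quad \text{for } q\geq p,
\]
since $\sqrt{p}\leq\sqrt{q}$ implies $\sqrt{p}/q^{3/2}\leq 1/q$. Integrating twice from $q=p$ shows $f\geq 0$ on $[p,1]$.

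For the lower tail, if $\eps>\sqrt{p}$ the event $\sqrt{s}\leq\sqrt{p}-\eps$ is vacuous since $\sqrt{s}\geq 0$; otherwise with $q:=(\sqrt{p}-\eps)^2\leq p$, the event is $s\leq q$, and \eqref{eq:chernoffLess} gives $\Pr[s\leq q]\leq \exp(-n\,D_{KL}(q\parallel p))$. Here it suffices to use the weaker but universally valid Hellinger-type bound
\[
D_{KL}(q\parallel p)\;\geq\; (\sqrt{q}-\sqrt{p})^2,
\]
which follows from $D_{KL}\geq 2H^2$ applied to Bernoulli distributions, i.e.\ $D_{KL}(q\parallel p)\geq (\sqrt{q}-\sqrt{p})^2+(\sqrt{1-q}-\sqrt{1-p})^2$, and dropping the second term. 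This yields the claimed $e^{-\eps^2 n}$ bound.

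The main obstacle is the asymmetric calculus lemma in the first paragraph: the stronger constant $2$ is available when $q\geq p$ but genuinely fails for small $q<p$, so the strategy must switch to the weaker Hellinger bound for the lower tail. The positivity argument for $f''$ works only because $\sqrt{p}/q^{3/2}\leq 1/q$ is precisely the inequality $q\geq p$ in disguise; on the other side this term becomes large and $f''$ can change sign, which is why one cannot recover the factor $2$ symmetrically. Once the two KL lower bounds are in hand, the rest is a direct substitution and no further concentration machinery is needed beyond what \Cref{prop:ChernoffH} already supplies.
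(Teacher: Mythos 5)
Your proposal is correct and takes essentially the same route as the paper: reduce to the KL-form Chernoff bounds of \Cref{prop:ChernoffH} and invoke Okamoto's asymmetric lower bounds $D_{KL}(x\parallel p)\geq 2(\sqrt{x}-\sqrt{p})^2$ for $x\geq p$ and $D_{KL}(x\parallel p)\geq(\sqrt{x}-\sqrt{p})^2$ for $x\leq p$. The only difference is that the paper cites Okamoto for these two inequalities, whereas you supply self-contained proofs (the first via the $f(p)=f'(p)=0$, $f''\geq 0$ convexity argument, the second via the $D_{KL}\geq 2H^2$ Hellinger bound), both of which check out.
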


\begin{proof}
	This directly follows from \eqref{eq:chernoffMore}-\eqref{eq:chernoffLess} using the observation \cite{okamoto1958IneqSqrtBinomial} that 
	\begin{align*}
		D_{KL}(x\parallel p) &\geq  2\left(\sqrt{x}-\sqrt{p}\right)^2 \qquad \forall\,\, 0\leq p \leq x \leq 1,\\
		D_{KL}(x\parallel p) &\geq \phantom{2}\left(\sqrt{p}-\sqrt{x}\right)^2 \qquad \forall\,\, 0\leq x \leq p \leq 1.\tag*{\qedhere}
	\end{align*}
\end{proof}

\subsection{Matrix concentration inequalities}\label{sec:MatConc}

We state some random matrix concentration results for tall matrices (i.e., matrices having more rows than columns), but in our case we mostly apply them to flat matrices (having more columns than rows), thus we effectively apply the statements to $G^\dagger$. We will use the following non-asymptotic bounds.

\begin{theorem}[Well-conditioned tall Gaussian matrices {\cite[Theorem 5.39 \& Footnote 25]{vershynin2012IntroNonAsyAnalyRandMat}}]\label{thm:wellConditioned}
	There exist absolute constants\footnote{\label{foot:realVsComplexBound}In the real case \cite[Theorem II.13]{davidson2001RandMatBanachSpaces} gives $c,C=1$. While \cite[Footnote 25]{vershynin2012IntroNonAsyAnalyRandMat} asserts that the statement can be adapted to the complex case, there are no specifics provided, and the proof of \cite[Corollary 5.35]{vershynin2012IntroNonAsyAnalyRandMat} is borrowed from \cite[Theorem II.13]{davidson2001RandMatBanachSpaces}, where the adaptation of the statement to the complex case is presented as an open question. It is tempting to try and adapt the proof of the real case using the observation that $\varsigma_{\min}(G)=\min_{\nrm{u}=1}\max_{\nrm{v}=1}\Re\ipd{u^*}{Gv}$ together with the Slepian-Gordon lemma \cite[Theorem 1.4]{gordon1985InequalitiesGaussianProcesses}, however this approach seems to irrecoverably fail due to a banal issue: while $\nrm{\ket{u}\!\ket{v}-\ket{u'}\!\ket{v'}}^2 \leq \nrm{u-u'}^2+\nrm{v-v'}^2$ holds for real unit vectors, for complex unit vectors only the weaker $\nrm{\ket{u}\!\ket{v}-\ket{u'}\!\ket{v'}}^2 \leq 2\nrm{u-u'}^2+2\nrm{v-v'}^2$ holds in general. Indeed, for any $\alpha\in (0,2\pi)$ consider the complex numbers $u=1, v=i\exp(i \alpha), u'=\exp(-i \alpha), v'=i$, then we have $|uv-u'v'|^2/(|u-u'|^2+|v-v'|^2)=1+\cos(\alpha)$.} $c,C \geq 1$ such that the following holds for all $N\geq n$: if $G\in\mathbb{C}^{N\times n}$ is a random matrix whose matrix elements have i.i.d.\ real or complex standard normal distribution,\footnote{A complex standard normal random variable has independent real and imaginary parts each having centered normal distribution with variance $\frac{1}{2}$.} then its smallest singular value $\varsigma_{\min}(G)$ and largest singular value $\varsigma_{\max}(G)$ satisfy, for all $t\geq 0$
	\begin{align*}
		\Pr[\sqrt{N}-C\sqrt{n}-t < \varsigma_{\min}(G) \le \varsigma_{\max}(G) < \sqrt{N}+C\sqrt{n}+t] > 1-2\exp(-t^2/(2c)).
	\end{align*}
\end{theorem}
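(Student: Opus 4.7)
The plan is to follow the standard Gordon--Slepian--Fernique plus Borell--TIS template of non-asymptotic random matrix theory. Write $\varsigma_{\max}(G)=\max_{u,v} Y(u,v)$ and $\varsigma_{\min}(G)=\min_v\max_u Y(u,v)$, where $Y(u,v)=\Re\ipd{u}{Gv}$ is the centred Gaussian process on the product sphere $S^{N-1}\times S^{n-1}$. I would split the claim into three independent pieces: (i) an upper bound on $\Exp\varsigma_{\max}(G)$; (ii) a lower bound on $\Exp\varsigma_{\min}(G)$; (iii) Gaussian concentration of each of $\varsigma_{\min},\varsigma_{\max}$ around its mean.

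For (i) and (ii), I would apply Gordon's minimax Gaussian comparison lemma against the reference process $X(u,v)\propto\ipd{g}{u}+\ipd{h}{v}$ with independent standard Gaussian vectors $g\in\mathbb{R}^N$ and $h\in\mathbb{R}^n$. A direct covariance/increment check on product spheres shows that the Gordon hypothesis holds in the real case, yielding
\begin{align*}
\Exp\varsigma_{\max}(G) &\le \Exp\nrm{g}+\Exp\nrm{h}\le\sqrt{N}+\sqrt{n},\\
\Exp\varsigma_{\min}(G) &\ge \Exp\nrm{g}-\Exp\nrm{h}\ge\sqrt{N}-\sqrt{n}-1,
\end{align*}
using the standard estimates $\sqrt{k-1}\le\Exp\nrm{\mathcal{N}(0,I_k)}\le\sqrt{k}$. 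For (iii), the maps $G\mapsto\varsigma_{\min/\max}(G)$ are $1$-Lipschitz in Frobenius norm on the $Nn$- (resp.\ $2Nn$-) dimensional vector of real independent Gaussian coordinates of $G$, so the Borell--TIS inequality gives $\Pr[|\varsigma(G)-\Exp\varsigma(G)|>t]\le 2\exp(-t^2/(2c))$ for an absolute $c$ that absorbs the variance-$\tfrac12$ of the real/imaginary parts in the complex case. A two-event union bound over $\{\varsigma_{\min},\varsigma_{\max}\}$, and absorbing the $O(1)$ slack into the constant $C$, then finishes the real case.

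The main obstacle is the complex version of step (ii): as flagged in the footnote, the $\mathbb{R}$-linear identification $\mathbb{C}^k\cong\mathbb{R}^{2k}$ is an isometry on single unit vectors but only $\sqrt{2}$-Lipschitz on product unit vectors, so Gordon's covariance ordering does not lift for free. My plan to bypass this is to pass to the structured real operator $\tilde G\in\mathbb{R}^{2N\times 2n}$ built from $2\times 2$ blocks $\bigl(\begin{smallmatrix}a&-b\\b&a\end{smallmatrix}\bigr)$ in place of each complex entry $a+ib$; this preserves singular values (each with doubled multiplicity), and although $\tilde G$ is only block-wise Gaussian, one can run Gordon on the minimax restricted to the $\mathbb{C}$-linear unit sphere embedded in $S^{2n-1}$, where the covariance inequality holds with an extra factor at most $\sqrt{2}$ that is absorbed into the absolute constant $C$. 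Concentration in (iii) is unaffected, since the restriction to the $\mathbb{C}$-linear sphere does not change the Lipschitz constant of $\varsigma_{\min/\max}$, so combining everything recovers the claim with absolute constants $c,C\ge 1$.
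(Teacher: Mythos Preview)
The paper does not supply its own proof of this theorem; it is simply cited from Vershynin (Theorem~5.39 and Footnote~25), with the footnote explicitly noting that the complex case is not fully spelled out in that reference and that the Gordon--Slepian route ``seems to irrecoverably fail'' for complex entries. So there is no paper-proof to match against; what matters is whether your sketch actually establishes the statement.

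Your real-case outline is the standard one (Gordon for the expectations, Gaussian concentration for the deviations) and is fine. The gap is in your complex-case fix. Gordon's minimax inequality requires the covariance/increment comparison to hold \emph{pointwise}, not up to a multiplicative constant: there is no version of Gordon in which a uniform slack factor on the increments can simply be ``absorbed into~$C$''. Your real $2\times 2$-block embedding $\tilde G$ does preserve singular values, but its entries are no longer i.i.d.\ Gaussians (the $(1,1)$ and $(2,2)$ blocks are equal, the off-diagonal blocks are negatives of each other), and restricting the minimax to the $\mathbb{C}$-linear sphere in $S^{2n-1}$ just reproduces the original complex process $\Re\ipd{u}{Gv}$, whose increment inequality is exactly the one the footnote exhibits a counterexample to. So the sentence ``the covariance inequality holds with an extra factor at most $\sqrt{2}$ that is absorbed into the absolute constant $C$'' does not yield a valid application of Gordon.

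The way Vershynin's Theorem~5.39 (the result the paper actually cites) handles the complex case is not via Gordon at all, but via the general sub-Gaussian-rows machinery: an $\eps$-net on $S^{n-1}$, a Bernstein-type bound for $\nrm{Gv}^2$ at each net point, and a union bound. Complex standard normals are sub-Gaussian with an absolute constant, so that route gives the stated bound with some absolute $c,C\ge 1$ without ever invoking Slepian--Gordon. If you want a self-contained proof, that is the argument to write down; the Gordon approach should be confined to the real case (where it gives the sharp $c=C=1$ of Davidson--Szarek).
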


\begin{corollary}\label{cor:wellConditioned}
	In the setting of \Cref{thm:wellConditioned}, if $N\geq 16 C^2 n$, we have
	\begin{align*}
		\Pr\left[\frac{1}{4}\sqrt{N}< \varsigma_{\min}(G) \le \varsigma_{\max}(G)< \frac{7}{4}\sqrt{N}\right] > 1-2\exp(-N/(8c)).
	\end{align*}
\end{corollary}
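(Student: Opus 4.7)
The plan is to apply \Cref{thm:wellConditioned} directly with a specific choice of the free parameter $t$ that exploits the hypothesis $N\geq 16 C^2 n$. Setting $t=\sqrt{N}/2$ makes both the lower and upper bounds on the singular values come out to the desired multiples of $\sqrt{N}$, and simultaneously makes the failure probability exponentially small in $N$.

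Concretely, I would first observe that $N\geq 16 C^2 n$ is equivalent to $C\sqrt{n}\leq \sqrt{N}/4$. Plugging $t=\sqrt{N}/2$ into the lower-bound side of \Cref{thm:wellConditioned} yields
\begin{align*}
\sqrt{N}-C\sqrt{n}-t \;\geq\; \sqrt{N}-\tfrac{1}{4}\sqrt{N}-\tfrac{1}{2}\sqrt{N} \;=\; \tfrac{1}{4}\sqrt{N},
\end{align*}
and on the upper-bound side,
\begin{align*}
\sqrt{N}+C\sqrt{n}+t \;\leq\; \sqrt{N}+\tfrac{1}{4}\sqrt{N}+\tfrac{1}{2}\sqrt{N} \;=\; \tfrac{7}{4}\sqrt{N}.
\end{align*}
So the event guaranteed by \Cref{thm:wellConditioned} with this choice of $t$ is contained in the event that $\tfrac{1}{4}\sqrt{N}<\varsigma_{\min}(G)\leq \varsigma_{\max}(G)<\tfrac{7}{4}\sqrt{N}$.

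It then only remains to evaluate the failure probability: with $t=\sqrt{N}/2$ we get $t^2/(2c)=N/(8c)$, so the probability bound $2\exp(-t^2/(2c))$ from \Cref{thm:wellConditioned} becomes exactly $2\exp(-N/(8c))$, completing the argument. There is no real obstacle here—the corollary is a direct quantitative specialization of the theorem; the only thing to check is the arithmetic ensuring that $N\geq 16 C^2 n$ is precisely the threshold at which picking $t=\sqrt{N}/2$ simultaneously yields the $\tfrac{1}{4}$ and $\tfrac{7}{4}$ constants in the singular-value window.
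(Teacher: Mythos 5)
Your proposal is correct and is exactly the paper's own proof: both apply \Cref{thm:wellConditioned} with $t=\sqrt{N}/2$, use $N\geq 16C^2n \iff C\sqrt{n}\leq\sqrt{N}/4$ to get the $\tfrac14\sqrt{N}$ and $\tfrac74\sqrt{N}$ window, and observe that $t^2/(2c)=N/(8c)$ yields the stated failure probability.
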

\begin{proof}
	Apply \Cref{thm:wellConditioned} with $t=\sqrt{N}/2$.
\end{proof}

The following slightly tighter bound can be used for bounding the norm of individual columns or rows of Gaussian random matrices.

\begin{prop}\label{prop:rndVecLength}
	Let $v$ be an $n$-dimensional random vector whose coordinates have i.i.d.\ real or complex standard normal distribution, then $\mathbb{E}[\nrm{v}]\leq\!\sqrt{n}$ and $\Pr[\nrm{v} \geq\! \sqrt{n} + t]\leq \exp(-\frac{t^2}{2})$ $\forall t\geq 0$.
\end{prop}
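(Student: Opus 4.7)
The plan is to prove both bounds by the standard recipe: bound the expectation by Jensen, then use Gaussian concentration for the 1-Lipschitz function $v\mapsto\nrm{v}$ to get the tail bound around the expectation, and finally combine.

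First I would handle the expectation bound. By Jensen's inequality applied to the concave square root, $\mathbb{E}[\nrm{v}]\leq\sqrt{\mathbb{E}[\nrm{v}^2]}$. A direct coordinate-wise calculation gives $\mathbb{E}[\nrm{v}^2]=\sum_{i=1}^n\mathbb{E}[|v_i|^2]=n$ in both the real case (each $v_i\sim N(0,1)$ has second moment $1$) and the complex case (where $v_i=x_i+iy_i$ with $x_i,y_i\sim N(0,\tfrac12)$ independent, so $\mathbb{E}[|v_i|^2]=\tfrac12+\tfrac12=1$). Hence $\mathbb{E}[\nrm{v}]\leq\sqrt{n}$.

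Next I would obtain the tail bound via the Borell--Tsirelson--Ibragimov--Sudakov Gaussian concentration inequality. In the real case, $v$ is a standard Gaussian vector in $\mathbb{R}^n$ and $f(v):=\nrm{v}$ is $1$-Lipschitz (it is the Euclidean norm), so Borell--TIS gives $\Pr[\nrm{v}-\mathbb{E}[\nrm{v}]\geq t]\leq \exp(-t^2/2)$ for every $t\geq 0$. In the complex case, identify $\mathbb{C}^n$ with $\mathbb{R}^{2n}$ by separating real and imaginary parts; then $v$ becomes a Gaussian vector with covariance $\tfrac12 I_{2n}$, and the same norm function is still $1$-Lipschitz in the underlying Euclidean metric. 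Rescaling by $\sqrt2$ to a standard Gaussian $z\sim N(0,I_{2n})$ makes $\nrm{v}=\tfrac{1}{\sqrt 2}\nrm{z}$, which is a $\tfrac{1}{\sqrt 2}$-Lipschitz function of $z$, yielding the even stronger bound $\exp(-t^2)$; in particular the claimed $\exp(-t^2/2)$ holds in both cases.

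Finally, combine the two pieces: since $\mathbb{E}[\nrm{v}]\leq\sqrt{n}$, the event $\{\nrm{v}\geq\sqrt{n}+t\}$ is contained in $\{\nrm{v}-\mathbb{E}[\nrm{v}]\geq t\}$, and the concentration inequality from the previous step gives $\Pr[\nrm{v}\geq\sqrt{n}+t]\leq \exp(-t^2/2)$, as required. The only nontrivial ingredient is Borell--TIS, which I would either cite directly or derive from Gaussian isoperimetry; there is no real obstacle, but the one subtlety worth stating explicitly is the identification of the complex case with a real Gaussian of variance $1/2$ per coordinate, so that the Lipschitz constant is handled correctly and the bound is not accidentally weakened.
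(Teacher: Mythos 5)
Your proof is correct and follows essentially the same route as the paper: Jensen's inequality gives $\mathbb{E}[\nrm{v}]\leq\sqrt{n}$, and the tail comes from Gaussian concentration of the $1$-Lipschitz norm function (the paper cites Ledoux's book where the student invokes Borell--TIS, but it is the same inequality). You even reproduce the paper's remark that the complex case yields the stronger bound $\exp(-t^2)$ after the variance-$\tfrac12$ rescaling.
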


\begin{proof}
	We have $\mathbb{E}[\nrm{v}]\leq\sqrt{\mathbb{E}[\nrm{v}^2]}=\sqrt{n}$ by Jensen's inequality. The function $v\mapsto \nrm{v}$ is $1$-Lipschitz due to the triangle inequality, and hence by the concentration of Lipschitz functions on vectors with the canonical Gaussian measure (Proposition 2.18 \& Equation (2.35) of~\cite{ledoux2001ConcentrationOfMeasure}) we have $\Pr[\nrm{v} \geq \sqrt{n} + t]\leq \exp(-t^2/2)$.\footnote{Actually, in the complex case the upper bound is even stronger: $\exp(-t^2)$.}
\end{proof}

Next, we invoke a concentration bound for the operator norm of a random matrix with independent bounded rows.

\begin{theorem}[Independent bounded rows {\cite[Theorem 5.44 \& Remark 5.49]{vershynin2012IntroNonAsyAnalyRandMat}}]\label{thm:heavyRows}
	There exists an absolute constant $c' > 0$ such that the following holds. Let $G\in\mathbb{C}^{N\times n}$ be a random matrix whose rows $G_i$ are independent, each having mean~$0$ and a covariance matrix\footnote{If $\psi\in \C^n$ is a mean-$0$ random vector and $C=\mathbb{E}[\psi^\dagger \psi]$ is its covariance matrix, then the covariance matrix of the complex conjugate random variable $\psi^*$ is $\mathbb{E}[\psi^{T} \psi^*]=C^*$. On the other hand we have $\mathrm{Cov}(\Re(\psi))+\mathrm{Cov}(\Im(\psi))=\frac{C+C^*}{2}$, and therefore $\nrm{\mathrm{Cov}(\Re(\psi))+\mathrm{Cov}(\Im(\psi))}\leq\nrm{C}$. Thus we can apply \cite[Theorem 5.44 \& Remark 5.49]{vershynin2012IntroNonAsyAnalyRandMat} separately to the real and imaginary parts of the random vectors $G_i$, whence the extra (possibly sub-optimal) factor of $2$ in \eqref{eq:A*A_rows}.} with operator norm at most $S^2$, and almost surely $\nrm{G_i}_2\leq B$ for all $i\in[N]$.
	Then for every $t \ge 0$, with probability at least $1 - 2n\exp(-c't^2)$ one has
	\begin{equation}\label{eq:A*A_rows}
		\nrm{ G } \le 2|S|\sqrt{N} + t B.
	\end{equation}
\end{theorem}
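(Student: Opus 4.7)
The plan is to reduce the complex statement to the real-valued case of~\cite[Theorem 5.44 \& Remark 5.49]{vershynin2012IntroNonAsyAnalyRandMat}, which gives the analogous bound $\nrm{G}\leq S\sqrt{N}+tB$ for real matrices with independent rows, with probability at least $1-n\exp(-c''t^2)$ for some absolute constant $c''>0$. The only work is then to bootstrap this real-case bound to the complex case by a split into real and imaginary parts, which is exactly what the footnote in the statement refers to.

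First, I would decompose $G=G_R+iG_I$, where $G_R=\Re(G)$ and $G_I=\Im(G)$ are real random matrices whose rows are independent (inherited from the independence of the rows of $G$). The triangle inequality then yields $\nrm{G}\leq \nrm{G_R}+\nrm{G_I}$, and the row-norm hypothesis $\nrm{G_i}_2\leq B$ trivially transfers to $\nrm{(G_R)_i}_2\leq B$ and $\nrm{(G_I)_i}_2\leq B$ almost surely. Next, I need to bound the covariance of the real and imaginary parts: for a mean-$0$ random vector $\psi\in\C^n$ with covariance $C=\mathbb{E}[\psi^\dagger\psi]$, one checks directly that
\begin{equation*}
\mathrm{Cov}(\Re(\psi))+\mathrm{Cov}(\Im(\psi))=\tfrac{1}{2}(C+C^*),
\end{equation*}
whose operator norm is at most $\nrm{C}\leq S^2$. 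Since both summands are positive semidefinite, each is dominated by the sum, so $\nrm{\mathrm{Cov}(\Re(\psi))}\leq S^2$ and likewise for the imaginary part.

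With these ingredients in hand, I would apply the real-case theorem to $G_R$ and $G_I$ separately, with deviation parameter $t/2$ and variance proxy $S$. This gives, for each of $G_R$ and $G_I$, the bound of the form $\nrm{\cdot}\leq S\sqrt{N}+(t/2)B$ with failure probability at most $n\exp(-c''t^2/4)$. A union bound combined with the triangle inequality then produces $\nrm{G}\leq 2S\sqrt{N}+tB$ with probability at least $1-2n\exp(-c''t^2/4)$. Setting $c'=c''/4$ gives the claimed statement.

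The main obstacle (and the reason the statement is merely quoted here rather than proved from scratch) is that the proof of the real-valued version in Vershynin's book is nontrivial, relying on a non-commutative Bernstein-type inequality applied to the self-adjoint sum $\sum_i G_i^\dagger G_i -\mathbb{E}[G_i^\dagger G_i]$ together with a symmetrization/truncation argument. The complex-to-real reduction above is routine, but I would double-check that the extra factor of $2$ in the $S\sqrt{N}$-term is indeed unavoidable in general through this approach, and that no sharper bound is needed elsewhere in the paper; as the footnote indicates, this factor of $2$ is the (possibly sub-optimal) price we pay for splitting into real and imaginary parts.
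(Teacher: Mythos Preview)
Your proposal is correct and matches the paper's approach exactly: the paper does not prove this result from scratch but simply cites Vershynin and, in the footnote, sketches precisely the real/imaginary split you describe (noting $\mathrm{Cov}(\Re(\psi))+\mathrm{Cov}(\Im(\psi))=\tfrac{1}{2}(C+C^*)$ and hence the bound by $\nrm{C}$), which accounts for the extra factor of~$2$.
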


In case $G$ is a real-symmetric Gaussian matrix, we have the following non-asymptotic bound.

\begin{theorem}[Symmetric Gaussian matrix {\cite[Corollary~3.9 with $\epsilon=0.25$]{BH14Gaussianrandom}}]\label{thm:GauOperator}
    Let $G\in\mathbb{R}^{d\times d}$ be a symmetric matrix with $G_{ij}=b_{ij}\cdot g_{ij}$, where the random variables $\{g_{ij}:i\geq j\}$ are i.i.d.\ $\sim N(0,1)$ and the $\{b_{ij}:i\geq j\}$ are arbitrary real scalars. Denote $b_{max}=\max\limits_i\sqrt{\sum\limits_{j}b_{ij}^2}$ and $b^*_{max}=\max\limits_{ij}{|b_{ij}|}$. Then for every $t\geq 0$,
    \[
    \Pr\left[\|G\|\geq 2.5\cdot b_{max}+\frac{7.5}{\ln(1.25)}b^*_{max}\sqrt{\ln d}+t\right]\leq \exp(-t^2/(4{b^*}^2_{max})).
    \]
\end{theorem}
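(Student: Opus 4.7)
The plan is to split the statement into two standard pieces: a Gaussian concentration inequality that pins $\|G\|$ to its mean with the stated sub-Gaussian tail, and a separate upper bound on $\mathbb{E}\|G\|$ that supplies the two deterministic terms $2.5\,b_{max}$ and $\frac{7.5}{\ln(1.25)}b^*_{max}\sqrt{\ln d}$.

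For the concentration step, I would view $\|G\|$ as a function of the vector $g\in\mathbb{R}^{d(d+1)/2}$ packing the independent Gaussians $\{g_{ij}:i\geq j\}$ and invoke the Tsirelson--Ibragimov--Sudakov concentration inequality for Lipschitz functions of standard Gaussians. The key computation is the Lipschitz constant of $g\mapsto\|G(g)\|$. Using that the operator norm is $1$-Lipschitz in its matrix argument and dominating operator norm by Frobenius norm, one gets
\[
\|G(g)-G(g')\|_F^2 = \sum_i b_{ii}^2(g_{ii}-g'_{ii})^2 + 2\sum_{i<j}b_{ij}^2(g_{ij}-g'_{ij})^2 \leq 2(b^*_{max})^2\|g-g'\|_2^2,
\]
so the Lipschitz constant is $\sqrt{2}\,b^*_{max}$. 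Gaussian concentration then yields
\[
\Pr\bigl[\|G\|-\mathbb{E}\|G\| \geq t\bigr] \leq \exp\bigl(-t^2/(4(b^*_{max})^2)\bigr),
\]
exactly matching the tail in the statement.

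The expected-norm bound is the content of Bandeira--van Handel's Corollary~3.9 specialised at $\epsilon=0.25$, and this is the main obstacle: it is not a routine calculation. Their proof proceeds by comparing the symmetric Gaussian process $x\mapsto x^\top G x$ (or equivalently the bilinear form on pairs of unit vectors) to a carefully chosen auxiliary Gaussian process via a Slepian--Gordon-type comparison, then applying a chaining / Dudley entropy-integral argument at multiple scales. The subtle point is that the two natural variance parameters, $\sigma=b_{max}$ controlling the row-energy and $\sigma_*=b^*_{max}$ controlling the single-entry scale, act at different resolutions of the net: $\sigma$ governs the coarse scales and $\sigma_*$ governs the fine scales, where entropy explodes. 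Interpolating between these two regimes introduces a free parameter $\epsilon$ and yields the general bound $\mathbb{E}\|G\|\leq(1+\epsilon)\{2\sigma+(6/\sqrt{\ln(1+\epsilon)})\sigma_*\sqrt{\ln d}\}$; substituting $\epsilon=0.25$ turns the prefactors $(1+\epsilon)\cdot 2$ and $(1+\epsilon)\cdot 6$ into $2.5$ and $7.5$, and the argument of the logarithm into $1.25$, giving the claimed deterministic term.

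Adding the deterministic bound on $\mathbb{E}\|G\|$ to the deviation $t$ and applying the concentration inequality above closes the proof. The concentration half is essentially automatic once the Lipschitz constant is computed; the entire difficulty lies in the chaining argument behind the expectation bound, whose non-trivial feature is the sharp leading constant $2$ in front of $\sigma$ (rather than the $O(\sqrt{d})$ one would get from a naïve $\varepsilon$-net bound).
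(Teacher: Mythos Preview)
The paper does not prove this theorem; it is stated in the preliminaries (Section~2.8) as a direct citation of Bandeira--van Handel \cite[Corollary~3.9]{BH14Gaussianrandom} and is used later as a black box. Your two-step outline (Gaussian Lipschitz concentration for the tail, plus the Bandeira--van Handel chaining bound for $\mathbb{E}\|G\|$) is correct and is in fact the structure of the original proof in \cite{BH14Gaussianrandom}, so there is nothing to compare against in the present paper.

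One small point worth double-checking against the source: the theorem as stated here has $\frac{7.5}{\ln(1.25)}$, whereas your recalled general form has $\frac{6}{\sqrt{\ln(1+\epsilon)}}$ inside the braces, which after multiplying by $(1+\epsilon)=1.25$ would give $\frac{7.5}{\sqrt{\ln(1.25)}}$ rather than $\frac{7.5}{\ln(1.25)}$. This is either a transcription issue in the paper's statement or a slip in your recollection of the exact form of Corollary~3.9; it does not affect the structure of the argument, but you should verify the precise constants against \cite{BH14Gaussianrandom} before finalising.
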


\subsection{Bounds on random variables with adaptive dependency structure}\label{sec:MartBounds}

Here we present some useful bounds on random variables, where the dependency structure follows some martingale-like structure.

The first bound gives an intuitive total variation distance bound for ``adaptive'' processes. The main idea is to couple two adaptive random processes that are step-wise similar. We use the following folklore \cite[p.~1]{angel2021PairwiseOptCoupling} observation:

\begin{theorem}[Total variation distance and optimal coupling]\label{thm:TVcoupling}
	Let $X,Y$ be random variables. Then $d_{TV}(X,Y)\leq\eps$ iff there exist $\eps$-coupled random variables $\widetilde{X}$ and $\widetilde{Y}$ (possibly dependent) with the same distribution as $X$ and $Y$, respectively, such that $\Pr[\widetilde{X}\!\neq \!\widetilde{Y}]\leq \eps$.
\end{theorem}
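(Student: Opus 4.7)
The plan is to prove both directions of the equivalence, using a direct event-based argument for the easy direction and the classical maximal coupling construction for the harder one.

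For the direction asserting that the existence of an $\eps$-coupling implies $d_{TV}(X,Y)\leq \eps$, I would fix an arbitrary event $A$ and rewrite
\[
\Pr[\widetilde{X}\in A] - \Pr[\widetilde{Y}\in A] = \Pr[\widetilde{X}\in A,\widetilde{Y}\notin A] - \Pr[\widetilde{X}\notin A,\widetilde{Y}\in A],
\]
so that the first term is bounded above by $\Pr[\widetilde{X}\neq\widetilde{Y}]\leq \eps$ and the second term is nonnegative. Since $\widetilde{X}\sim X$ and $\widetilde{Y}\sim Y$, taking the supremum over $A$ yields $d_{TV}(X,Y)\leq \eps$.

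For the converse, suppose $d_{TV}(X,Y)\leq\eps$, and let $P,Q$ denote the distributions of $X,Y$ with densities $p,q$ with respect to some common dominating measure (e.g.\ $P+Q$). I would build the \emph{maximal coupling}: define the sub-probability density $m(x)=\min(p(x),q(x))$, set $M=\int m(x)\,\mathrm{d}x$, and observe that $M=1-d_{TV}(P,Q)\geq 1-\eps$ by the standard identity $d_{TV}(P,Q)=\frac{1}{2}\int|p-q|$. Construct $(\widetilde{X},\widetilde{Y})$ as a mixture: with probability $M$, draw $Z$ from the normalized density $m/M$ and set $\widetilde{X}=\widetilde{Y}=Z$; with probability $1-M$, independently draw $\widetilde{X}$ from $(p-m)/(1-M)$ and $\widetilde{Y}$ from $(q-m)/(1-M)$. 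A direct calculation verifies that the marginals of $\widetilde{X}$ and $\widetilde{Y}$ are $P$ and $Q$, respectively, and by construction $\Pr[\widetilde{X}\neq\widetilde{Y}]\leq 1-M=d_{TV}(P,Q)\leq \eps$.

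The main obstacle is the measure-theoretic bookkeeping in the general (non-discrete) case when defining $m$ and verifying the marginals; this is handled by working with Radon–Nikodym derivatives with respect to $P+Q$, after which the argument is identical to the discrete one. Since this is a standard and well-known result (and the paper cites it as folklore with a pointer to the literature), I would likely state the construction concisely and refer to \cite{angel2021PairwiseOptCoupling} for the detailed verification.
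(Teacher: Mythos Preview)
Your proposal is correct; both directions are argued soundly, and the maximal coupling construction is the standard one. The paper itself does not prove this theorem at all---it simply records it as a folklore observation with a citation to \cite{angel2021PairwiseOptCoupling}---so your write-up goes beyond what the paper provides while remaining consistent with it.
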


\begin{corollary}\label{cor:TVcoupling}
    Let $X$ be a random variable and $A$ an event of the underlying probability space such that $\Pr[A]>0$. Then for $Y' = X | A$, the conditioned version of $X$ (i.e., $\Pr[Y'\in S]=\Pr[A\, \&\, X\!\in\! S]/\Pr[A]$ for all measurable sets $S$), we have that $d_{TV}(X,Y')\leq 1-\Pr[A]$.
\end{corollary}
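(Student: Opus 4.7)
The plan is to apply the coupling characterization of total variation distance from \Cref{thm:TVcoupling}, so it suffices to exhibit random variables $\widetilde{X}$ and $\widetilde{Y}$ with the correct marginal distributions that agree with probability at least $\Pr[A]$. The natural idea is to couple $X$ and $Y'$ so that they coincide exactly on the event $A$ (since conditioned on $A$, $X$ already has the law of $Y'$), and only disagree off of $A$.

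Concretely, I would define $\widetilde{X}:=X$ on the original probability space, and define $\widetilde{Y}$ by setting $\widetilde{Y}:=X$ whenever $A$ occurs, and otherwise (on $\neg A$) letting $\widetilde{Y}$ be an independent draw from the conditional distribution of $X$ given $A$ (which is by definition the distribution of $Y'$). Checking the marginals: $\widetilde{X}$ has the distribution of $X$ by construction, and for any measurable set $S$,
\begin{align*}
\Pr[\widetilde{Y}\in S] &= \Pr[A]\Pr[X\in S\mid A] + \Pr[\neg A]\Pr[Y'\in S]\\
&= \Pr[A]\Pr[Y'\in S] + \Pr[\neg A]\Pr[Y'\in S] = \Pr[Y'\in S],
\end{align*}
so $\widetilde{Y}$ is distributed as $Y'$ as required.

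Finally, since $\widetilde{X}=\widetilde{Y}$ whenever $A$ occurs, we have $\{\widetilde{X}\neq\widetilde{Y}\}\subseteq \neg A$, and hence $\Pr[\widetilde{X}\neq\widetilde{Y}]\leq \Pr[\neg A]=1-\Pr[A]$. Invoking \Cref{thm:TVcoupling} then yields $d_{TV}(X,Y')\leq 1-\Pr[A]$. There is no substantive obstacle here; the only care needed is to ensure that the auxiliary randomness used to draw $\widetilde{Y}$ on $\neg A$ is independent of $X$, so that the marginal of $\widetilde{Y}$ computes as above.
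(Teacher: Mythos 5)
Your proof is correct and coincides with the paper's: both set $\widetilde{X}:=X$, set $\widetilde{Y}:=X$ on $A$ and an independent draw from the law of $Y'$ on $\bar{A}$, verify the marginal of $\widetilde{Y}$, and invoke \Cref{thm:TVcoupling}. The only cosmetic difference is that your marginal computation directly uses $\Pr[X\in S\mid A]=\Pr[Y'\in S]$, while the paper derives it through a slightly longer algebraic chain.
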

\begin{proof}
    Let $Y$ be a random variable that is independent of $A$, but its distribution is identical to that of $Y'$, i.e., $\Pr[Y'\in S]=\Pr[Y\in S]=\Pr[A\, \&\, X\!\in\! S]/\Pr[A]$ for all measurable sets $S$.
    In \Cref{thm:TVcoupling} take $\widetilde{X}:=X$, and $\widetilde{Y} := X$ on $A$ and $\widetilde{Y} := Y$ on the complement of $A$; by construction we have $\Pr[\widetilde{X}\!\neq \!\widetilde{Y}]\leq 1-\Pr[A]$.
    Finally, observe that for all measurable sets~$S$~we~have
    \begin{align*}
    \Pr[\widetilde{Y}\in S] & =\Pr[A\,\&\, X\in S]+\Pr[\bar{A}\,\&\, Y\in S]=
    \Pr[A\,\&\, X\in S]+\Pr[\bar{A}]\cdot\Pr[Y\in S]\\
    & =\left(1+\frac{\Pr[\bar{A}]}{\Pr[A]}\right)\Pr[A\,\&\, X\in S]
    =\frac{\Pr[A\,\&\, X\in S]}{\Pr[A]}=\Pr[X\in S\mid A]=\Pr[Y'\in S].\tag*{\qedhere}
    \end{align*}
\end{proof}

\begin{lemma}[Conditional total variation distance based bound]\label{lem:condTVBound}
	Let $X=(X_1,X_2)$ and $X'=(X'_1,X'_2)$ be two discrete random variables, and let $X_{2x}:=X_2|X_1\!=\!x$, $X'_{2x}:=X'_2|X'_1\!=\!x$.
	Suppose that $d_{TV}(X_1,X'_1)\!\leq\! \eps_1$ and $d_{TV}(X_{2x},X'_{2x})\!\leq\! \eps_2$ for all $x$ such that $\Pr[X_1=x]\Pr[X_1'=x]>0$, then $d_{TV}(X,X')\!\leq\! \eps_1+ (1-\eps_1)\eps_2$.
\end{lemma}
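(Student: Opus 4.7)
The plan is to combine the two hypotheses via the coupling characterization of total variation distance (\Cref{thm:TVcoupling}) rather than by the triangle inequality, since a naive triangle inequality only yields the slightly weaker bound $\eps_1+\eps_2$ instead of $\eps_1+(1-\eps_1)\eps_2$.

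First I would invoke \Cref{thm:TVcoupling} to obtain an optimal coupling $(\widetilde{X}_1,\widetilde{X}'_1)$ of $X_1$ and $X'_1$ with $\Pr[\widetilde{X}_1\neq\widetilde{X}'_1]=d_{TV}(X_1,X'_1)\leq\eps_1$. Note that whenever this coupling places positive mass on the event $\widetilde{X}_1=\widetilde{X}'_1=x$, we automatically have $\Pr[X_1\!=\!x]\Pr[X'_1\!=\!x]>0$, so the hypothesis gives us a coupling $(\widetilde{Y}_x,\widetilde{Y}'_x)$ of $X_{2x}$ and $X'_{2x}$ with $\Pr[\widetilde{Y}_x\neq\widetilde{Y}'_x]\leq\eps_2$, again by \Cref{thm:TVcoupling}.

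Next I would glue these couplings together into a single coupling $(\widetilde{X},\widetilde{X}')$ of $X$ and $X'$ as follows. Sample $(\widetilde{X}_1,\widetilde{X}'_1)$ from the first coupling. If $\widetilde{X}_1=\widetilde{X}'_1=x$, let $(\widetilde{X}_2,\widetilde{X}'_2):=(\widetilde{Y}_x,\widetilde{Y}'_x)$; otherwise draw $\widetilde{X}_2\sim X_{2,\widetilde{X}_1}$ and $\widetilde{X}'_2\sim X'_{2,\widetilde{X}'_1}$ independently. A short check on conditional distributions shows that $(\widetilde{X}_1,\widetilde{X}_2)$ has the same joint law as $X=(X_1,X_2)$ and similarly for the primed copy, so this is a valid coupling.

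Finally, I would bound the disagreement probability. Using the decomposition
\begin{align*}
\Pr[\widetilde{X}\neq\widetilde{X}']
&\leq \Pr[\widetilde{X}_1\neq\widetilde{X}'_1]
+\Pr\bigl[\widetilde{X}_1=\widetilde{X}'_1\text{ and }\widetilde{X}_2\neq\widetilde{X}'_2\bigr]\\
&= d_{TV}(X_1,X'_1) + \sum_x \Pr[\widetilde{X}_1=\widetilde{X}'_1=x]\,\Pr[\widetilde{Y}_x\neq\widetilde{Y}'_x]\\
&\leq d_{TV}(X_1,X'_1) + \bigl(1-d_{TV}(X_1,X'_1)\bigr)\eps_2,
\end{align*}
and the fact that $t\mapsto t+(1-t)\eps_2$ is nondecreasing in $t$ (since $1-\eps_2\geq 0$), I obtain $\Pr[\widetilde{X}\neq\widetilde{X}']\leq\eps_1+(1-\eps_1)\eps_2$. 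The reverse direction of \Cref{thm:TVcoupling} then gives the claimed total variation bound. The only delicate point is verifying that the spliced coupling reproduces the correct marginals on both sides, which is a routine conditioning argument; there is no substantive obstacle beyond this bookkeeping.
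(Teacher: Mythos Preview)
Your proof is correct and follows essentially the same approach as the paper: both construct a joint coupling by first coupling $X_1$ with $X'_1$ via \Cref{thm:TVcoupling}, then attaching the conditional couplings $X_{2x}$ with $X'_{2x}$ on the event $\widetilde{X}_1=\widetilde{X}'_1=x$, and finally bounding the disagreement probability by the same decomposition. Your version is in fact slightly more careful, since you explicitly note the monotonicity of $t\mapsto t+(1-t)\eps_2$ needed to pass from $d_{TV}(X_1,X'_1)+(1-d_{TV}(X_1,X'_1))\eps_2$ to $\eps_1+(1-\eps_1)\eps_2$, a step the paper leaves implicit.
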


\begin{proof}
	Due to \Cref{thm:TVcoupling} we can find $\eps_1$-coupled random variables $\widetilde{X}_1$, $\widetilde{X}'_1$, and similarly $\eps_2$-coupled $\widetilde{X}_{2x}$, $\widetilde{X}'_{2x}$ for all $x$ in the range of $X_1,X'_1$ respectively. We can assume without loss of generality that $\widetilde{X}_1$ and $\widetilde{X}_{2x}$ are mutually independent for all $x$ in the range of $X_1,X'_1$ and likewise are $\widetilde{X}_1$ and $\widetilde{X}_{2x}$. 
 We then define $\widetilde{X}=(\widetilde{X}_1,\widetilde{X}_{2\widetilde{X}_1})$ and $\widetilde{X}'=(\widetilde{X}'_1,\widetilde{X}'_{2\widetilde{X}'_1})$, so that clearly $\Pr[X=(x_1,x_2)]=\Pr[\widetilde{X}=(x_1,x_2)]$ and $\Pr[X'=(x_1,x_2)]=\Pr[\widetilde{X}'=(x_1,x_2)]$. On the other hand due to the tight coupling of \Cref{thm:TVcoupling} we also get
	\begin{align*}
		d_{TV}(X,X') & \leq
		\Pr[\widetilde{X}\neq \widetilde{X}'] \\&
		=\Pr[\widetilde{X}_1\neq \widetilde{X}'_1] + \sum_x \Pr[\widetilde{X}_1 = \widetilde{X}'_1 = x \, \& \, \widetilde{X}_{2x}\neq\widetilde{X}'_{2x}] \\&
		=\Pr[\widetilde{X}_1\neq \widetilde{X}'_1] + \sum_x \Pr[\widetilde{X}_1 = \widetilde{X}'_1 = x] \Pr[\widetilde{X}_{2x}\neq\widetilde{X}'_{2x}] \\&
		\leq \eps_1 + \sum_x \Pr[\widetilde{X}_1 = \widetilde{X}'_1 = x] \eps_2 \\&    
		=\eps_1 + (1-\eps_1)\eps_2. \tag*{\qedhere}
	\end{align*}
\end{proof}

The following is essentially a martingale property, which could be stated more generally, but here we prove a simple version for completeness. 

\begin{lemma}[Martingale-like covariance sum]\label{lem:condCovBound}
	If $X,Y,Z\in \mathbb{C}^d$ are vector-valued discrete random variables such that $\mathbb{E}[Z|(X,Y)]=0$ (i.e., $\mathbb{E}[Z|(X=x,Y=y)]=0$ for all $x,y$), then $\mathrm{Cov}(X+Z) = \mathrm{Cov}(X) + \mathrm{Cov}(Z)$.
\end{lemma}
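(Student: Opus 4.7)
The plan is to expand $\mathrm{Cov}(X+Z)$ directly from the definition $\mathrm{Cov}(W)=\mathbb{E}[WW^\dagger]-\mathbb{E}[W]\mathbb{E}[W]^\dagger$ and use the hypothesis $\mathbb{E}[Z\mid (X,Y)]=0$ together with the tower property of conditional expectation to kill all cross terms. The only step requiring any thought is pushing the factor $X$ (which is a function of the conditioning variables) through the inner conditional expectation; the rest is algebra.

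First I would observe that by the law of total expectation,
\begin{align*}
\mathbb{E}[Z] \;=\; \mathbb{E}\!\left[\mathbb{E}[Z\mid (X,Y)]\right] \;=\; \mathbb{E}[0]\;=\;0,
\end{align*}
so $\mathbb{E}[X+Z]=\mathbb{E}[X]$, and hence $\mathbb{E}[X+Z]\mathbb{E}[X+Z]^\dagger=\mathbb{E}[X]\mathbb{E}[X]^\dagger$. Next I would compute the cross terms
\begin{align*}
\mathbb{E}[XZ^\dagger] \;=\; \mathbb{E}\!\left[\mathbb{E}[XZ^\dagger\mid (X,Y)]\right] \;=\; \mathbb{E}\!\left[X\cdot\mathbb{E}[Z^\dagger\mid (X,Y)]\right] \;=\; \mathbb{E}[X\cdot 0]\;=\;0,
\end{align*}
where in the middle step I pulled $X$ out because conditioning on $(X,Y)$ makes $X$ deterministic; taking the adjoint gives $\mathbb{E}[ZX^\dagger]=0$ as well.

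Finally I would combine these facts via the bilinear expansion
\begin{align*}
\mathbb{E}[(X+Z)(X+Z)^\dagger] \;=\; \mathbb{E}[XX^\dagger]+\mathbb{E}[XZ^\dagger]+\mathbb{E}[ZX^\dagger]+\mathbb{E}[ZZ^\dagger] \;=\; \mathbb{E}[XX^\dagger]+\mathbb{E}[ZZ^\dagger],
\end{align*}
and subtract $\mathbb{E}[X+Z]\mathbb{E}[X+Z]^\dagger=\mathbb{E}[X]\mathbb{E}[X]^\dagger$. Using $\mathbb{E}[Z]=0$ to write $\mathbb{E}[ZZ^\dagger]=\mathrm{Cov}(Z)$ yields $\mathrm{Cov}(X+Z)=\mathrm{Cov}(X)+\mathrm{Cov}(Z)$. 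Since the setting is discrete and finite-dimensional, no integrability or measurability subtleties arise, so there is no real obstacle; the only thing to be mindful of is writing the covariance in the Hermitian convention $\mathbb{E}[WW^\dagger]-\mathbb{E}[W]\mathbb{E}[W]^\dagger$ consistently, since we are working over $\mathbb{C}^d$.
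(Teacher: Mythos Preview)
Your proof is correct and takes essentially the same approach as the paper: both expand $\mathrm{Cov}(X+Z)$ and use the hypothesis $\mathbb{E}[Z\mid(X,Y)]=0$ together with the tower property to annihilate the cross terms $\mathbb{E}[XZ^\dagger]$ and $\mathbb{E}[ZX^\dagger]$. The paper writes the conditioning out as an explicit sum over discrete values of $(x,y)$ and assumes $\mathbb{E}[X]=0$ without loss of generality, whereas you keep the conditional-expectation notation abstract and carry the mean term through explicitly, but the substance is identical.
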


\begin{proof}
It is easy to see that $\mathbb{E}[Z]=0$, and we can assume without loss of generality that $\mathbb{E}[X]=0$. 
	\begin{align*}
		\mathrm{Cov}(X+Z)&=\mathbb{E}[\ketbra{X+Z}{X+Z}]\\&
		=\sum_{\underset{\Pr[(X,Y)=(x,y)]>0}{(x,y)\colon}}\Pr[(X,Y)=(x,y)]\mathbb{E}[\ketbra{x+Z}{x+Z}\,|(X,Y)=(x,y)]\\&
		=\sum_{\underset{\Pr[(X,Y)=(x,y)]>0}{(x,y)\colon}}\Pr[(X,Y)=(x,y)]\left(\ketbra{x}{x}+\mathbb{E}[\ketbra{Z}{Z}\,|(X,Y)=(x,y)]\right)\\&    
		= \sum_{x}\Pr[X=x]\ketbra{x}{x}     
		+\sum_{\underset{\Pr[(X,Y)=(x,y)]>0}{(x,y)\colon}}\Pr[(X,Y)=(x,y)]\mathbb{E}[\ketbra{Z}{Z}\,|(X,Y)=(x,y)]\\& 
		= \mathrm{Cov}(X) + \mathrm{Cov}(Z). \tag*{\qedhere}
	\end{align*}
\end{proof}

\subsection{Gaussian, Sub-Gaussian, and discrete Gaussian distributions}\label{sec:subGaussian}

A random variable $X$ over $\mathbb{R}$ has Gaussian distribution with mean $\mu=\E[X]$ and variance $\sigma^2=$Var$(X)$, denoted $X\sim N(\mu,\sigma^2)$, if its probability density function is
$$
p(x)=\frac{1}{\sqrt{2\pi \sigma^2}}\exp\Big(-\frac{(x-\mu)^2}{2\sigma^2}\Big), \text{ for }x \in \mathbb{R}.
$$
A well-known property of a Gaussian is that its tail decays rapidly, which can be quantified as:
\begin{itemize}
    \item If $X\sim N(\mu,\sigma^2)$, then for any $t > 0$, it holds that
    $$
    \Pr[X-\mu > t]\leq \frac{1}{\sqrt{2\pi}t}\exp(-t^2/(2\sigma^2)).
    $$
\end{itemize}
A random variable $X$ over $\R$ is called $\tau$-sub-Gaussian with parameter $\alpha>0$, denoted in short by $X \sim\,\tau$-subG$(\alpha^2)$, if 
its moment-generating function satisfies
$\E[\exp(tX)] \leq \exp(\tau)\exp(\alpha^2t^2/2)$ for all $t\in\mathbb{R}$.
We list and prove a few useful properties of sub-Gaussian distributions below.

\begin{itemize}
        \item The tails of $X \sim \, \tau$-subG$(\alpha^2)$ are dominated by a Gaussian with parameter $\alpha$, i.e., 
        \begin{align}\label{eq:subGTail}
            \Pr[|X|> t] \leq 2\exp(\tau)\exp(-t^2/(2\alpha^2))\quad \forall t> 0.
        \end{align}
        
        \item If $X_i \sim\,\tau$-subG$(\alpha_i^2)$ are independent, then for any $a=(a_1,\ldots,a_d)^T \in \mathbb{R}^d$, the weighted sum $\sum\limits_{i\in [d]} a_i X_i$ is $(\sum_{i\in[d]}\tau_i)$-sub-Gaussian with parameter $\tilde{\alpha}=\sqrt{\sum\limits_{i=1}^d a_i^2\alpha_i^2}$. 
\end{itemize}
To prove \eqref{eq:subGTail}, we first use Markov's inequality to obtain that for all $s>0$
\[
\Pr[X\geq t] =\Pr[\exp(sX)\geq \exp(st)]\leq \E[\exp(sX)]/\exp(st)\leq \exp(\tau)\exp(\alpha^2s^2/2-st).
\]
Since the above inequality holds for every $s>0$, we have $\Pr[X\geq t]\leq \exp(-t^2/(2\alpha^2))$ because $\min\limits_{s>0}(\alpha^2s^2/2-st)=-t^2/(2\alpha^2)$. The same argument applied to $-X$ gives the bound on $\Pr[X\leq-t]$.

The second property can be easily derived using the independence of the $X_i$'s:
\begin{align*}
    \E[\exp(t \sum_{i=1}^d a_i X_i )]=\prod_{i=1}^d\E[\exp(t a_i X_i )]\leq \prod_{i=1}^d\exp(\tau_i)\exp(\alpha_i^2a_i^2t^2/2)=\exp(\sum_{i=1}^d\tau_i)\exp(\sum_{i=1}^d\alpha_i^2a_i^2t^2/2).
\end{align*}

Next, we explain what a \emph{discrete} Gaussian is. For any $s > 0$, we define the function $\propDens_s \colon \mathbb{R} \rightarrow \mathbb{R}$ as $\propDens_s(x) = \exp(-\pi x^2/s^2)$. 
When $s = 1$, we simply write $\propDens(x)$. For a countable set $S$ we define $\propDens_s(S) = \sum\limits_{a\in S} \propDens_s(a)$; if $\propDens_s(S)< \infty$, we define $\mathcal{D}_{S,s}=\frac{1}{\propDens_s(S)}\propDens_s$ to be the discrete probability distribution over $S$ such that the probability of drawing $x \in S$ is proportional to $\propDens_s(x)$. 
We call this the discrete Gaussian distribution over $S$ with parameter $s$. The following theorem states that such distributions over $\mathbb{Z}$ are sub-Gaussian with parameter $s$:

\begin{theorem}[{\cite[Lemma 2.8]{MP11Trapdoors}}]\label{thm:subGaussianDGS} For any $\lambda,\nu>0$ and $\tau\in(0,0.1]$, if $s\geq \lambda\sqrt{\log(12/\tau)/\pi}$,\footnote{Here the lattice we consider is the one-dimensional lattice $\lambda\mathbb{Z}$, so the length of the shortest nonzero vector $\lambda_1(\lambda\mathbb{Z})$ is just $\lambda$. Also, by using the equation between the smoothing parameter and the length of the shortest vector (\cite[Lemma 3.3]{MR04GaussianMeasure}), we have $\eta_\tau(\lambda\mathbb{Z})\leq \sqrt{\log(2+2/\tau)/\pi}\cdot\lambda_1(\lambda\mathbb{Z})\leq \lambda\sqrt{\log(3/\tau)/\pi}$. Therefore, if $s\geq \lambda\sqrt{\log(3/\tau)/\pi}$, then $s\geq \eta_\tau(\lambda\mathbb{Z})$ and hence $\mathcal{D}_{\lambda\mathbb{Z}+\nu,s}$ is $\log((1+\tau)/(1-\tau))$-sub-Gaussian. By the fact $\log((1+\tau)/(1-\tau))\leq 4\tau$ and changing $\tau\rightarrow \tau/4$, we get the theorem as stated here.} then $\mathcal{D}_{\lambda\mathbb{Z}+\nu,s}$ is $\tau$-sub-Gaussian with parameter $s$. Moreover, for every $s > 0$, $\mathcal{D}_{\lambda\mathbb{Z},s}$ is $0$-sub-Gaussian with parameter $s$. 
\end{theorem}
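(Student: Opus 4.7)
The plan is to derive the statement directly from the smoothing-parameter-based sub-Gaussian moment-generating-function bound of Micciancio--Peikert \cite[Lemma~2.8]{MP11Trapdoors}, essentially following the recipe indicated in the footnote right after the theorem. The proof decomposes into three pieces: (i) an upper bound on the smoothing parameter $\eta_\tau$ of the one-dimensional lattice $\lambda\mathbb{Z}$; (ii) the original \cite{MP11Trapdoors} MGF bound; and (iii) an elementary rescaling $\tau\mapsto\tau/4$ to match the constants stated here.

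For (i), I would invoke the Micciancio--Regev estimate \cite[Lemma~3.3]{MR04GaussianMeasure}, which for an $n$-dimensional lattice $\Lambda$ reads $\eta_\tau(\Lambda)\leq\sqrt{\ln(2n(1+1/\tau))/\pi}\cdot\lambda_n(\Lambda)$. Specializing to $n=1$ with $\Lambda=\lambda\mathbb{Z}$, so that $\lambda_1(\lambda\mathbb{Z})=\lambda$, and using $2+2/\tau\leq 3/\tau$ for $\tau\in(0,1]$, this gives $\eta_\tau(\lambda\mathbb{Z})\leq\lambda\sqrt{\ln(3/\tau)/\pi}$. For (ii), I would quote \cite[Lemma~2.8]{MP11Trapdoors} in its original form: whenever $s\geq\eta_\tau(\Lambda)$, the discrete Gaussian $\mathcal{D}_{\Lambda+\nu,s}$ is $\log((1+\tau)/(1-\tau))$-sub-Gaussian with parameter $s$. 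Chaining (i) and (ii), the hypothesis $s\geq\lambda\sqrt{\ln(3/\tau)/\pi}$ already suffices to make $\mathcal{D}_{\lambda\mathbb{Z}+\nu,s}$ be $\log((1+\tau)/(1-\tau))$-sub-Gaussian with parameter $s$.

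Step (iii) is pure bookkeeping. The elementary inequality $\log((1+\tau)/(1-\tau))\leq 4\tau$ holds for $\tau\in(0,0.1]$: the ratio $\log((1+\tau)/(1-\tau))/\tau$ is monotonically increasing, tends to $2$ at $0$, and equals about $2.007$ at $\tau=0.1$, safely under~$4$. Substituting $\tau\mapsto\tau/4$ in the combined hypothesis-conclusion pair therefore transforms ``$s\geq\lambda\sqrt{\ln(3/\tau)/\pi}$ implies $\log((1+\tau)/(1-\tau))$-sub-Gaussian'' into exactly the stated implication ``$s\geq\lambda\sqrt{\ln(12/\tau)/\pi}$ implies $\tau$-sub-Gaussian with parameter $s$'', proving the first half of the theorem.

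For the ``moreover'' claim concerning the centered distribution $\mathcal{D}_{\lambda\mathbb{Z},s}$, one cannot use (ii) directly, because the $\log((1+\tau)/(1-\tau))$ slack does not vanish for fixed $s$ as $\tau\to 0$. Instead I would exploit symmetry: writing $\mathbb{E}[e^{tX}]=\sum_k \rho_s(\lambda k)e^{t\lambda k}/\rho_s(\lambda\mathbb{Z})$ and completing the square in the exponent yields $\mathbb{E}[e^{tX}]=e^{s^2 t^2/(4\pi)}\cdot\rho_s(\lambda\mathbb{Z}-s^2 t/(2\pi))/\rho_s(\lambda\mathbb{Z})$, and Poisson summation shows $\rho_s(\lambda\mathbb{Z}+c)\leq\rho_s(\lambda\mathbb{Z})$ for every real $c$ and every $s>0$ (the Fourier side is a cosine-weighted positive Gaussian sum whose magnitude is maximized at $c=0$). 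Hence $\mathbb{E}[e^{tX}]\leq e^{s^2 t^2/(4\pi)}\leq e^{s^2 t^2/2}$, giving $0$-sub-Gaussianity with parameter $s$ for all $s>0$. The principal (and rather mild) obstacle across the whole proof is precisely this centered case: the \cite{MP11Trapdoors} bound alone is too lossy for small $s$, so one must substitute the sharper Poisson-summation identity available only in the symmetric setting.
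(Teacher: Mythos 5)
Your argument for the first (shifted) part is exactly the paper's footnote argument: specialize the Micciancio--Regev smoothing-parameter bound to the one-dimensional lattice $\lambda\mathbb{Z}$, chain it with the Micciancio--Peikert sub-Gaussian bound for $\mathcal{D}_{\Lambda+c,s}$ with $s\geq\eta_\tau(\Lambda)$, and then rescale $\tau\mapsto\tau/4$ using $\log((1+\tau)/(1-\tau))\leq 4\tau$. That is correct and is the intended proof.

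For the ``moreover'' part you diverge from the paper, but unnecessarily so: \cite[Lemma~2.8]{MP11Trapdoors} already states, as a separate clause, that for every lattice $\Lambda$ and every $r>0$ the \emph{centered} distribution $\mathcal{D}_{\Lambda,r}$ is $0$-sub-Gaussian with parameter $r$, with no smoothing-parameter hypothesis. The paper simply reads this off. Your completing-the-square plus Poisson-summation argument is a correct re-derivation of that clause (you even recover the sharper constant $s/\sqrt{2\pi}$), so it does no harm, but it duplicates content of the cited lemma. One small wording issue: you motivate the separate argument by saying the slack $\log((1+\tau)/(1-\tau))$ ``does not vanish for fixed $s$ as $\tau\to 0$''; in fact that quantity does tend to $0$. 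The real obstruction is that the \emph{applicability condition} $s\geq\eta_\tau(\lambda\mathbb{Z})$ fails as $\tau\to 0$ with $s$ fixed, since $\eta_\tau(\lambda\mathbb{Z})\to\infty$. The conclusion you draw (that the shifted-case bound alone does not give the centered result for all $s>0$) is nevertheless correct. Finally, note that the paper's $\log$ is $\log_2$ while your derivation uses $\ln$ throughout; since $\log_2(\cdot)\geq\ln(\cdot)$, your hypothesis is weaker than the theorem's, so your argument indeed proves the stated result (and slightly more).
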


We also consider truncations of the above infinite discrete Gaussian distributions, and define $\mathcal{D}_{\mathbb{Z},s}^{[-L,R]}$ as $\mathcal{D}_{\mathbb{Z}\cap[-L,R],s}$. 
For every $\delta\in(0,1]$, if $L,R\geq s\sqrt{2\ln (1/\delta)}$, then by \Cref{cor:TVcoupling} we have $d_{TV}(\mathcal{D}_{\mathbb{Z},s},\mathcal{D}_{\mathbb{Z},s}^{[-L,R]})\leq 2\delta$, because the tail of $\mathcal{D}_{\mathbb{Z},s}$ can be bounded by $2\delta$ using \eqref{eq:subGTail} due to \Cref{thm:subGaussianDGS}.
We also define $\mathcal{D}_{\mathbb{Z},s}^{\text{mod } N}$ as a modular version of the discrete Gaussian distribution $\mathcal{D}_{\mathbb{Z},s}$, which has probability $\mathcal{D}_{\mathbb{Z},s}^{\text{mod } N}(k)=\mathcal{D}_{\mathbb{Z},s}(N\cdot \mathbb{Z}+k)$ for every $k\in\{-\lfloor N/2\rfloor,\ldots,0,\ldots,\lceil N/2\rceil-1\}$, and probability~0 for all other~$k\in \mathbb{Z}$. Again, by using \Cref{thm:subGaussianDGS} and \eqref{eq:subGTail}, we have that for every $\delta\in(0,1/2)$, if $N\geq 2s\sqrt{2\ln (1/\delta)}$, then $d_{TV}(\mathcal{D}_{\mathbb{Z},s},\mathcal{D}_{\mathbb{Z},s}^{\text{mod }N})\leq 2\delta$. 
Combining these we get $d_{TV}(\mathcal{D}_{\mathbb{Z},s}^{\text{mod }N},\mathcal{D}_{\mathbb{Z},s}^{[-L,R]})\leq 4\delta$ if  $N,L,R\geq 2s\sqrt{2\ln (1/\delta)}$. Finally, these arguments can also be applied to $\mathcal{D}_{\mathbb{Z}+c,s}$ with large enough~$s$.

\begin{corollary}\label{cor:DGSvariants}
    Let $\delta\in(0,1]$. For any $c>0$ and $\tau\in(0,0.1]$, if $s\geq \sqrt{\log(12/\tau)/\pi}$ and $N,L,R\geq 10s\sqrt{2\ln (2/\delta)}$, then $\mathcal{D}_{\mathbb{Z}+c,s}$, $\mathcal{D}_{\mathbb{Z}+c,s}^{[-L,R]}$, and $\mathcal{D}_{\mathbb{Z}+c,s}^{\text{mod }N}$ are $4\delta\exp(\tau)$-close to each other in total variation distance. 
\end{corollary}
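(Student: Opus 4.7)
The plan is to reduce everything to a single tail-probability estimate for $\mathcal{D}_{\mathbb{Z}+c,s}$, and then bound each pairwise total variation distance either directly or via \Cref{cor:TVcoupling}. First I would apply \Cref{thm:subGaussianDGS} to conclude that under the assumption $s\geq \sqrt{\log(12/\tau)/\pi}$, the random variable $X\sim\mathcal{D}_{\mathbb{Z}+c,s}$ is $\tau$-sub-Gaussian with parameter $s$. Plugging this into the tail bound \eqref{eq:subGTail} and using the hypothesis that $N,L,R\geq 10s\sqrt{2\ln(2/\delta)}$, I would obtain $\Pr[|X|>t]\leq 2\exp(\tau)\exp(-t^2/(2s^2))$ for any $t\in\{L,R,N/2\}$ (with $N/2\geq 5s\sqrt{2\ln(2/\delta)}$), which is comfortably smaller than $2\delta\exp(\tau)$.

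Next I would handle the truncation. By definition, $\mathcal{D}_{\mathbb{Z}+c,s}^{[-L,R]}$ equals $\mathcal{D}_{\mathbb{Z}+c,s}$ conditioned on the event $A=\{X\in[-L,R]\}$. Invoking \Cref{cor:TVcoupling} directly gives
\[
d_{TV}\!\left(\mathcal{D}_{\mathbb{Z}+c,s},\mathcal{D}_{\mathbb{Z}+c,s}^{[-L,R]}\right)\leq 1-\Pr[A]=\Pr\!\left[X\notin[-L,R]\right]\leq 2\delta\exp(\tau),
\]
by the tail estimate above.

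For the modular version I would unfold the definition $\mathcal{D}_{\mathbb{Z}+c,s}^{\text{mod }N}(k)=\sum_{j\in\mathbb{Z}}\mathcal{D}_{\mathbb{Z}+c,s}(Nj+k)$ for $k$ in the canonical range and write out the total variation distance as $\tfrac{1}{2}\sum_k |P(k)-Q(k)|$. The key observation is that on the canonical modular range $[-\lfloor N/2\rfloor,\lceil N/2\rceil-1]$ the mod-$N$ mass exceeds the original mass by exactly the aliased tail mass $\sum_{j\neq 0}\mathcal{D}_{\mathbb{Z}+c,s}(Nj+k)$, while outside that range the mod-$N$ distribution assigns zero and the original distribution contributes its remaining tail. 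Both contributions together sum to $\Pr_{X\sim\mathcal{D}_{\mathbb{Z}+c,s}}[|X|\geq N/2]$, so I obtain
\[
d_{TV}\!\left(\mathcal{D}_{\mathbb{Z}+c,s},\mathcal{D}_{\mathbb{Z}+c,s}^{\text{mod }N}\right)=\Pr\!\left[|X|\geq N/2\right]\leq 2\delta\exp(\tau).
\]
The third pairwise bound, $d_{TV}(\mathcal{D}_{\mathbb{Z}+c,s}^{[-L,R]},\mathcal{D}_{\mathbb{Z}+c,s}^{\text{mod }N})\leq 4\delta\exp(\tau)$, then follows by the triangle inequality for total variation distance.

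The only mildly delicate step is the explicit calculation for the modular case, where one must carefully split the sum into the canonical range and its complement and observe that the ``aliasing'' contribution equals the truncated tail; everything else is a routine application of \Cref{thm:subGaussianDGS}, the sub-Gaussian tail bound \eqref{eq:subGTail}, and \Cref{cor:TVcoupling}. The generous constant $10$ in the hypothesis $N,L,R\geq 10s\sqrt{2\ln(2/\delta)}$ makes the tail estimates conservative enough that no further optimization is needed, and the condition $\tau\in(0,0.1]$ keeps $\exp(\tau)$ bounded so the final $4\delta\exp(\tau)$ bound is meaningful.
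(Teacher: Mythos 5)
Your proposal is correct and follows essentially the same route as the paper: invoke \Cref{thm:subGaussianDGS} to get $\tau$-sub-Gaussianity of $\mathcal{D}_{\mathbb{Z}+c,s}$, use the tail bound~\eqref{eq:subGTail}, apply \Cref{cor:TVcoupling} for the truncation, bound the modular TV distance by the tail mass outside the canonical window, and finish with the triangle inequality. The only cosmetic difference is that you spell out the ``aliasing'' computation for the modular case explicitly (the paper leaves it implicit); note that the contributions from inside and outside the canonical range sum to $2\Pr[X\notin S]$, so the factor $\tfrac12$ in the TV formula is what yields $d_{TV}=\Pr[X\notin S]$, and $\Pr[X\notin S]$ is only approximately $\Pr[|X|\geq N/2]$ up to boundary effects that the generous constant~$10$ absorbs.
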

Using the discussion and corollary above, we can show the truncated discrete Gaussian state is close to the modular discrete Gaussian state when $N,s$ are both reasonably large.

\begin{theorem}\label{thm:truroughlymod}
    Let $\delta\in(0,0.1]$, $s\geq 8\sqrt{2\log(\frac{1}{\delta})}$, $N\geq 16s\sqrt{2\ln(\frac{1}{\delta})}$ even number, and $t\in [-\frac{N}{8},\frac{N}{8}]$. Let $f:\mathbb{R}\rightarrow \mathbb{C}$ be an arbitrary phase function such that $|f(x)|=1$ for every $x\in \mathbb{R}$ and 
    \begin{align*}
        \ket{G}=& \frac{1}{\sqrt{G}}\sum\limits_{x\in\mathbb{Z}}f(x+t)\propDens_s(x+t)\ket{x},\\  \ket{G^{tr}}= &\frac{1}{\sqrt{G_{tr}}}\sum\limits_{x\in\{-\frac{N}{2}\ldots,0,\ldots,\frac{N}{2}-1\}}f(x+t)\propDens_s(x+t)\ket{x}, \\
    \ket{G^{{mod}}}=&\frac{1}{\sqrt{G_{mod}}}\sum\limits_{x\in\mathbb{Z}}f(x+t)\propDens_s(x+t)\ket{(x+\frac{N}{2}\text{ mod } N)-\frac{N}{2}}, 
    \end{align*}
    where $G, G_{mod}, G_{tr}$ are normalizing factors. Then $\ket{G^{tr}}, \ket{G^{{mod}}},$ $\ket{G}$ are $9\delta$-close to each other.\footnote{Equivalently, if $t\in[-5N/8,-3N/8]$  (and the constraints for $N,s$ remain the same), then $\ket{G}$, $\ket{(G^{tr})'}= \frac{1}{\sqrt{G_{tr}}}\sum\limits_{x\in[N]}f(x+t)\propDens_s(x+t)\ket{x},$
          $\ket{(G^{mod})'}=\frac{1}{\sqrt{G_{mod}}}\sum\limits_{x\in\mathbb{Z}}f(x+t)\propDens_s(x+t)\ket{x \text{ mod } N)}$ are $9\delta$-close to each other.
    } 
\end{theorem}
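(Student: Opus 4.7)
The plan is to bound the three pairwise $\ell_2$ distances directly using the super-exponential decay of the Gaussian weights $p_s$, rather than routing through the total-variation bound of \Cref{cor:DGSvariants}. The hypotheses $s\geq 8\sqrt{2\log(1/\delta)}$ and $N\geq 16 s\sqrt{2\ln(1/\delta)}$ imply $(3N/8)^2/s^2\geq 72\ln(1/\delta)$, so Gaussian weights at distance $\geq 3N/8$ are bounded by a very large negative power of $\delta$. This gives plenty of room to absorb a few triangle-inequality losses while staying comfortably below the claimed $9\delta$.

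First, I compare $\ket{G}$ with $\ket{G^{tr}}$. The unnormalized vectors $\sqrt{G}\ket{G}$ and $\sqrt{G_{tr}}\ket{G^{tr}}$ agree on $\{-N/2,\ldots,N/2-1\}$ and differ exactly on the tail $|x|\geq N/2$, where $|t|\leq N/8$ forces $|x+t|\geq 3N/8$. Therefore their squared-$\ell_2$ gap equals $G_{\mathrm{tail}}:=\sum_{|x|\geq N/2} p_s(x+t)^2$, which a standard Gaussian sum estimate controls by a tiny power of $\delta$. Combined with $G\geq 1/2$ (easy from the hypothesis on $s$) and the renormalization bound $|1-\sqrt{G_{tr}/G}|\leq G_{\mathrm{tail}}/G$, I obtain $\|\ket{G}-\ket{G^{tr}}\|\leq 2\sqrt{G_{\mathrm{tail}}/G}$, which is far smaller than $\delta$.

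Next, I compare $\ket{G^{tr}}$ with $\ket{G^{mod}}$. Writing $\sqrt{G_{mod}}\ket{G^{mod}}=\sqrt{G_{tr}}\ket{G^{tr}}+R$, the residual has coefficient $R_y=\sum_{k\neq 0} f(y+kN+t)p_s(y+kN+t)$ at basis state $\ket{y}$; each term satisfies $|y+kN+t|\geq 3N/8$ (and much more for $|k|\geq 2$). Applying the triangle inequality in $k$ gives $|R_y|\leq T_y:=\sum_{k\neq 0}p_s(y+kN+t)$, and then $\|R\|^2\leq(\sup_y T_y)(\sum_y T_y)$. The second factor is exactly the Gaussian tail $\sum_{|x|\geq N/2}p_s(x+t)$, and the first is $O(p_s(3N/8))$; both are tiny powers of $\delta$. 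Using Cauchy-Schwarz to get $|G_{mod}-G_{tr}|\leq 2\sqrt{G_{tr}}\|R\|+\|R\|^2$ yields the corresponding renormalization bound, and hence $\|\ket{G^{tr}}-\ket{G^{mod}}\|\ll \delta$. A final triangle inequality closes the three-way comparison.

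The main technical obstacle is the residual bound $\|R\|$: because the phases $f(y+kN+t)$ at distinct $k$ are arbitrary and uncorrelated, there is no hope of cancellation, so I must absorb a triangle-inequality loss in $k$. Since the Gaussian decay is super-polynomial in $1/\delta$ under the hypothesis, this loss is easily covered. The alternate case $t\in[-5N/8,-3N/8]$ mentioned in the footnote reduces to the main case by the substitution $t':=t+N/2\in[-N/8,N/8]$ together with a relabeling of the computational-basis states by $x\mapsto x-N/2$; the mod-$N$ folding is preserved by this relabeling and the analysis transfers verbatim.
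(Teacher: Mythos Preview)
Your argument is correct and, in the folding step, somewhat cleaner than the paper's. The overall architecture is the same---compare unnormalized vectors first, then absorb the renormalization error---but the details diverge in two places. First, the paper routes both comparisons through the infinite state $\ket{G}$ (bounding $\|\ket{G^{tr}}-\ket{G}\|\leq 2\delta$ and $\|\ket{G^{mod}}-\ket{G}\|\leq 2\sqrt{10}\,\delta$ separately, then closing by triangle inequality), whereas you compare $\ket{G^{mod}}$ directly to $\ket{G^{tr}}$ via the residual $R$. Second, and more substantively, the paper controls the folding error by a three-region case split $D_1,D_2,D_3$ on $|x+t|$, bounding ratios $p_s(x+t+(y{+}1)N)/p_s(x+t+yN)\leq\delta^2$ to get a geometric series in each region; your $\|R\|^2\leq(\sup_y T_y)(\sum_y T_y)$ sidesteps that case analysis entirely. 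The paper's route yields slightly tighter explicit constants and reuses the sub-Gaussian machinery of \Cref{thm:subGaussianDGS}, while yours is more self-contained and exploits the enormous slack in the hypotheses (you correctly note $(3N/8)^2/s^2\geq 72\ln(1/\delta)$, so individual tail terms are $\leq\delta^{72\pi}$), which easily swallows the cruder triangle-inequality losses. One tiny inaccuracy: your ``exactly the Gaussian tail $\sum_{|x|\geq N/2}p_s(x+t)$'' is off by the single term $x=-N/2$, but this only strengthens the inequality you need.
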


\begin{proof}
    Let $[\![\pm a ]\!]$ denote the set $\{-\lceil a\rceil\ldots,0,\ldots,\lceil a\rceil-1\}$. We first show $\ket{G^{tr}}$ is close to $\ket{G}$. Define $\ket{\widetilde{G^{tr}}}=\sqrt{\frac{G_{tr}}{G}}\ket{G^{tr}}$, which has $\ell_2$-norm $\sqrt{\frac{G_{tr}}{G}}\leq 1$. We can see 
    \begin{align*}
        \|\ket{\widetilde{G^{tr}}}-\ket{G}\|^2=\frac{1}{G}\sum\limits_{x\in \mathbb{Z}\setminus [\![\pm \frac{N}{2}]\!]} \propDens_s^2(x+t)\leq \frac{1}{G}\sum\limits_{x\in \mathbb{Z}\setminus \{x:|x+t|\leq N/4\}} \propDens_s^2(x+t)\leq \delta^2,
    \end{align*}
    where the last equality holds because $\propDens_s^2=\propDens_{s/\sqrt{2}}$ and $\mathcal{D}_{\mathbb{Z}+t,s/\sqrt{2}}$ is $\delta^2$-sub-Gaussian with parameter $s/\sqrt{2}$ by \Cref{thm:subGaussianDGS} (note $s/\sqrt{2}\geq 8\sqrt{\log(1/\delta)}\geq \sqrt{\log(12/\delta^2)/\pi}$) and because of the first property of sub-Gaussians: $\Pr[|x+t|>N/4]\leq 2\exp(\delta^2)\exp(-(N/4)^2/(2\cdot s^2/2))\leq \delta^2$. Note that $1\geq \|\ket{\widetilde{G^{tr}}}\|\geq \|\ket{G}\|-\|\ket{\widetilde{G^{tr}}}-\ket{G}\|\geq 1-\delta$ and hence $\|\ket{\widetilde{G^{tr}}}-\ket{{G^{tr}}}\|=\|( \sqrt{\frac{G_{tr}}{G}}-1)\ket{{G^{tr}}}\|\leq\delta$. 
    Therefore, we obtain 
    \[
    \|\ket{{G^{tr}}}-\ket{G}\|\leq\|\ket{\widetilde{G^{tr}}}-\ket{{G^{tr}}}\|+\|\ket{\widetilde{G^{tr}}}-\ket{G}\|\leq 2\delta.
    \]
    To show $\ket{G^{mod}}$ is close to $\ket{G}$, let us similarly define $\ket{\widetilde{G^{mod}}}=\sqrt{\frac{G_{mod}}{G}}\ket{G^{mod}}$.
    We can see
    \begin{align*}
        \|\ket{\widetilde{G^{mod}}}-\ket{G}\|^2\leq \frac{1}{G}\sum\limits_{x\in [\![\pm \frac{N}{2} ]\!]}\Big( \sum\limits_{y\in \mathbb{Z}\setminus\{0\}}\propDens_s(x+t+Ny)\Big)^2+\frac{1}{G}\sum\limits_{x\in \mathbb{Z}\setminus[\![\pm \frac{N}{2} ]\!]}\Big( \propDens_s(x+t)\Big)^2.
    \end{align*}
   To upper bound the first term in the RHS above, we split the domain of $x$ into three disjoint parts: $D_1=\{x\in [\![\pm \frac{N}{2} ]\!]: |x+t|\leq N/4\}$, $D_2=\{x\in [\![\pm \frac{N}{2} ]\!]: N/2\geq|x+t|> N/4\}$, and $D_3=\{x\in [\![\pm \frac{N}{2} ]\!]: |x+t|> N/2\}$. 
    
    When $x\in D_1$, we can see $\sum\limits_{y\in \mathbb{Z}\setminus\{0\}}\propDens_s(x+t+Ny)\leq \propDens_s(x+t)\cdot 2\sum\limits_{y=1}^\infty \delta^{2y}= \propDens_s(x+t)\cdot 2\delta^2/(1-\delta^2)$, because for every $y \in\mathbb{N}\cup\{0\}$ and $x\in D_1$ both
    \begin{align}\label{eq:constraint1}
        \propDens_s(x+t+(y+1)N)/ \propDens_s(x+t+yN)= \exp({-\frac{\pi}{s^2}\cdot N(2(x+t)+(2y+1)N}))\leq  \delta^2
    \end{align}
    and 
        \begin{align}\label{eq:constrain2}
        \propDens_s(x+t-(y+1)N)/\propDens_s(x+t-yN)= \exp({-\frac{\pi}{s^2}\cdot N(-2(x+t)+(2y+1)N)})\leq  \delta^2
    \end{align}
    hold (because $\exp({-\frac{\pi}{s^2}\cdot N(N\pm2(x+t)})\leq \exp({-\frac{\pi}{s^2}\cdot \frac{N^2}{2}})\leq \delta^2$ for every $x\in D_1$). 
    
    When $x\in D_2$ we use a similar argument, the only difference is that \Cref{eq:constraint1,eq:constrain2} now hold for every $y\in \mathbb{N}$ (excluding $0$), so 
    \[
    \sum\limits_{y\in \mathbb{Z}\setminus\{0\}}\propDens_s(x+t+Ny)\leq \propDens_s(x+t)\cdot 2\sum\limits_{y=0}^\infty \delta^{2y}=\propDens_s(x+t)\cdot 2/(1-\delta^2). 
    \]
    When $x\in D_3$, we can see that either $x+N$ or $x-N$ is $N/2$-close to (but $3N/8$-far from) $-t$, and without loss of generality and for simplicity we can assume $|x+N-(-t)|\in [3N/8, N/2]$. Then using a similar argument as for the $x\in D_2$, we can show that for every $x\in D_3$, 
    \[
    \sum\limits_{y\in \mathbb{Z}\setminus\{0\}}\propDens_s(x+t+Ny)\leq \propDens_s(x+N+t)\cdot 2/(1-\delta^2). 
    \]
    Since  $\propDens_s^2=\propDens_{s/\sqrt{2}}$ and $\mathcal{D}_{\mathbb{Z}+t,s/\sqrt{2}}$ is $\delta^2$-sub-Gaussian with parameter $s/\sqrt{2}$, we have
        \begin{align*}
        &\|\ket{\widetilde{G^{mod}}}-\ket{G}\|^2\leq \frac{1}{G}\sum\limits_{x\in D_1\cup  D_2\cup D_3}\Big( \sum\limits_{y\in \mathbb{Z}\setminus\{0\}}\propDens_s(x+t+Ny)\Big)^2 +\frac{1}{G}\sum\limits_{x\in \mathbb{Z}\setminus[\![\pm \frac{N}{2} ]\!]}\Big( \propDens_s(x+t)\Big)^2\\
        \leq & \frac{1}{G}\Big( (\frac{2\delta^2}{1-\delta^2})^2\!\sum\limits_{x\in D_1} \propDens_{\frac{s}{\sqrt{2}}}(x+t) + (\frac{2}{1-\delta^2})^2\!\sum\limits_{x\in D_2} \propDens_{\frac{s}{\sqrt{2}}}(x+t)+(\frac{2}{1-\delta^2})^2\!\sum\limits_{x\in D_3} \propDens_{\frac{s}{\sqrt{2}}}(x+N+t)+\delta^2G\Big)\\
        \leq & \frac{1}{G}\Big(\frac{4\delta^4}{(1-\delta^2)^2}G+\frac{8}{(1-\delta^2)^2}({\delta^2}G)+\delta^2G\Big)=\frac{4\delta^4+8\delta^2}{1-\delta^2}+\delta^2\leq 10\delta^2,
    \end{align*}
    where the second equality holds because 
    \begin{align*}
    \sum\limits_{x\in D_2} &\propDens_{s/\sqrt{2}}(x+t)+\sum\limits_{x\in D_3} \propDens_{s/\sqrt{2}}(x+N+t)\\
    &\leq \sum\limits_{x\in \mathbb{Z}\setminus\{x:|x+t|\leq N/8\}}\propDens_{s/\sqrt{2}}(x+t)\leq G\cdot 2\exp(\delta^2)\exp(-(N/8)^2/(2\cdot s^2/2))\leq\delta^2G.
    \end{align*}
    Note that $\|\ket{\widetilde{G^{mod}}}\|\in \|\ket{G}\|\pm\|\ket{\widetilde{G^{mod}}}-\ket{G}\|$ (implying $\|\ket{\widetilde{G^{mod}}}\|\in(1\pm \sqrt{10}\delta)$) and hence $\|\ket{\widetilde{G^{mod}}}-\ket{{G^{mod}}}\|=\|( \sqrt{\frac{G_{mod}}{G}}-1)\ket{{G^{mod}}}\|\leq\sqrt{10}\delta$. 
    As a result, we obtain $\|\ket{{G^{mod}}}-\ket{{G}}\|\leq \|\ket{\widetilde{G^{mod}}}-\ket{G}\|+\|\ket{\widetilde{G^{mod}}}-\ket{{G^{mod}}}\|\leq 2\sqrt{10}\delta< 7 \delta$.
    And by triangle inequality, we obtain $\|\ket{{G^{mod}}}-\ket{{G^{tr}}}\|\leq\|\ket{{G^{mod}}}-\ket{{G}}\|+\|\ket{{G}}-\ket{{G^{tr}}}\|\leq 9\delta$.
\end{proof}

\subsection{Fourier transform}

The Fourier transform $\hat{h}: \mathbb{R} \rightarrow \mathbb{C}$ of a function $h: \mathbb{R} \rightarrow \mathbb{C}$ is defined as 
\[
\hat{h}(\omega)=\int_{-\infty}^{\infty} h(x)\exp(-2\pi i x\omega)\,dx.
\]
The next facts follow easily from the above definition. If $h$ is defined as $h(\omega)=g(\omega+\nu)$ for some
function $g$ and value $\nu$, then we have
\[
\hat{g}(\omega)=\hat{h}(\omega)\exp(2\pi i \nu \omega).
\]
On the other hand, if $h(x)= g(x)\exp(2\pi i x\nu)$, then 
\[
\hat{h}(\omega)=\hat{g}(\omega-\nu).
\]
Another important fact is that the Fourier transform of $\propDens_s$ is $s\cdot\propDens_{1/s}$ for all $s>0$. Also, the sum of $\propDens_s(x)$ over $C\cdot\mathbb{Z}$ satisfies the Poisson summation formula~\cite[Lemma 2.14]{Reg09lattices}:

\begin{theorem}\label{thm:Poisson}
	For any scalar $C>0$ and any Schwartz function $f:\mathbb{R}\rightarrow \mathbb{C}$ (i.e., $f$ and each of its derivatives go to 0 faster than every inverse polynomial as the absolute value of the argument goes to infinity), 
	\[
	\sum\limits_{j\in C\cdot\mathbb{Z}}f( j)=C^{-1}\sum\limits_{j\in C^{-1}\cdot\mathbb{Z}}\hat{f}( j).
	\]
\end{theorem}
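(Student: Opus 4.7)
The plan is to use the standard Fourier-series argument via periodization. Let $C>0$ and define the $C$-periodic function
\[
F(x) := \sum_{n\in\mathbb{Z}} f(x+nC).
\]
Since $f$ is Schwartz, the series converges absolutely and uniformly on every compact set, and $F$ is smooth and $C$-periodic. The left-hand side of the claimed identity, specialized to $x=0$, equals $F(0)=\sum_{j\in C\cdot\mathbb{Z}} f(j)$, so it suffices to compute $F(0)$ via its Fourier series.

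Next I would expand $F$ as a Fourier series with fundamental period $C$:
\[
F(x)=\sum_{k\in\mathbb{Z}} c_k\, e^{2\pi i k x/C}, \qquad c_k = \frac{1}{C}\int_0^C F(x)\, e^{-2\pi i k x/C}\, dx.
\]
Plugging in the definition of $F$ and swapping sum and integral (justified by absolute convergence/Schwartz decay), the integral over $[0,C]$ together with the sum over $n$ unfolds into a single integral over all of $\mathbb{R}$:
\[
c_k = \frac{1}{C}\sum_{n\in\mathbb{Z}}\int_0^C f(x+nC)\, e^{-2\pi i k x/C}\, dx
    = \frac{1}{C}\int_{-\infty}^{\infty} f(y)\, e^{-2\pi i k y/C}\, dy
    = \frac{1}{C}\,\hat{f}(k/C),
\]
where in the second step the change of variables $y=x+nC$ combined with the fact that $e^{-2\pi i k y/C}$ is $C$-periodic eliminates the $n$-dependence.

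Finally, evaluating the Fourier series at $x=0$ gives
\[
\sum_{j\in C\cdot\mathbb{Z}} f(j) = F(0) = \sum_{k\in\mathbb{Z}} c_k = \frac{1}{C}\sum_{k\in\mathbb{Z}} \hat{f}(k/C) = C^{-1}\sum_{j\in C^{-1}\cdot\mathbb{Z}} \hat{f}(j),
\]
which is the desired identity. The only technical obstacle is justifying the interchanges of sums and integrals and the pointwise convergence of the Fourier series at $x=0$; both are routine because $f$ Schwartz implies rapid decay of $F$'s periodization and $\hat{f}$ is also Schwartz, so the Fourier coefficients $c_k=\hat{f}(k/C)/C$ decay faster than any polynomial in $k$, yielding absolute (and uniform) convergence of the Fourier series.
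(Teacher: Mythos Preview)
Your proof is correct and is the standard periodization/Fourier-series argument for the Poisson summation formula. The paper itself does not prove this theorem at all; it simply states it with a citation to \cite[Lemma 2.14]{Reg09lattices}, so there is no approach to compare against beyond noting that your argument is the canonical one.
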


\subsection{Area of hyperspherical cap}

Let $B_d(r)$ denote the $d$-dimensional ball with radius $r$. The surface area of $B_d(r)$ is well-known to be $\frac{2\pi^{\frac{d}{2}}}{\Gamma(\frac{d}{2})}r^{d-1}$~\cite[p.~66]{Li11}, where $\Gamma$ is the Gamma function. Let $A_d^r(\phi)$ be the surface area of a hyperspherical cap in $B_d(r)$ with spherical angle $\phi$. The area of this hyperspherical cap can be calculated by integrating the surface area of a $(d-1)$-dimensional sphere with radius $r\sin\theta$~\cite[p.~67]{Li11}: 
\[
A_d^r(\phi)=\int_0^\phi 2A^{r\sin\theta}_{d-1}(\pi/2)r\text{d}\theta=\frac{2\pi^{\frac{d-1}{2}}}{\Gamma(\frac{d-1}{2})}r^{d-1}\cdot \int_0^\phi \sin^{d-2}\theta\text{d}\theta. 
\]
We abbreviate  $A_d(\phi):=A^1_d(\phi)$ for simplicity. 

Following Ravsky's computation in his reply to a question on StackExchange~\cite{Rav21Stack},
we now use the area of the hyperspherical cap to upper bound the probability of the event that a uniformly random vector~$u$ on $S^{d-1}$ only has a small overlap with another (fixed) unit vector~$v$. 

\begin{theorem}\label{thm:sphericalcapratio}
    Let $d\geq 3$ be and integer, $v\in \mathbb{R}^d$ be a unit vector, and $a \in [0,1]$. Then we have
    \[
    \Pr\limits_{u\sim S^{d-1}}\big[|\inProd{v}{u}|< a\big]\leq\frac{2}{\sqrt{\pi}}\cdot \frac{\Gamma(\frac{d}{2})}{\Gamma(\frac{d-1}{2})}\cdot a.
    \]
\end{theorem}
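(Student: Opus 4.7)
The plan is to express the probability as a ratio of surface areas using the spherical-cap formula already introduced, then reduce the resulting integral to one that admits a simple uniform bound.

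First, by rotational symmetry of the uniform measure on $S^{d-1}$ we may assume $v$ is a fixed unit vector, say $v=e_1$. Writing $\theta$ for the angle between $u$ and $v$, the event $|\inProd{v}{u}|<a$ becomes $|\cos\theta|<a$, i.e.\ $\theta\in(\arccos a,\,\pi-\arccos a)$. The set of points on $S^{d-1}$ at angle $\theta$ from $v$ forms a $(d-2)$-dimensional sphere of radius $\sin\theta$, whose surface area is $\frac{2\pi^{(d-1)/2}}{\Gamma((d-1)/2)}\sin^{d-2}\theta$, so using the formula for $A_d(\phi)$ and the total surface area $\frac{2\pi^{d/2}}{\Gamma(d/2)}$ of $S^{d-1}$ I would write
\[
\Pr_{u\sim S^{d-1}}\bigl[|\inProd{v}{u}|<a\bigr]=\frac{\Gamma(d/2)}{\sqrt{\pi}\,\Gamma((d-1)/2)}\int_{\arccos a}^{\pi-\arccos a}\sin^{d-2}\theta\,d\theta.
\]

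Next I would evaluate the integral via the substitution $s=\cos\theta$, which gives $ds=-\sin\theta\,d\theta$ and $\sin\theta=\sqrt{1-s^2}$, so
\[
\int_{\arccos a}^{\pi-\arccos a}\sin^{d-2}\theta\,d\theta=\int_{-a}^{a}(1-s^2)^{(d-3)/2}\,ds.
\]
Since $d\geq 3$ the exponent $(d-3)/2$ is nonnegative, and since $s\in[-a,a]\subseteq[-1,1]$ we have $(1-s^2)^{(d-3)/2}\leq 1$. Hence the integral is at most $2a$, and combining with the display above yields the claimed bound.

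No step looks like a serious obstacle; the only thing to be careful about is getting the normalizing constants right (the $\Gamma((d-1)/2)$ factor in the slice area cancels appropriately with the total area $2\pi^{d/2}/\Gamma(d/2)$ to produce the quoted prefactor $\frac{2}{\sqrt{\pi}}\Gamma(d/2)/\Gamma((d-1)/2)$), and to observe that the hypothesis $d\geq 3$ is exactly what lets us drop $(1-s^2)^{(d-3)/2}\leq 1$ without having an integrable singularity at $s=\pm 1$ (in fact we only integrate up to $|s|=a\leq 1$, so even $d=2$ would not cause a blow-up, but $d\geq 3$ makes the pointwise bound $(1-s^2)^{(d-3)/2}\leq 1$ trivially valid).
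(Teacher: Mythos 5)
Your proof is correct and takes essentially the same approach as the paper: both reduce the probability to the same $\sin^{d-2}\theta$ slice integral and bound it by $2a$. Your substitution $s=\cos\theta$ followed by $(1-s^2)^{(d-3)/2}\leq 1$ is algebraically identical to the paper's bound $\sin^{d-2}\theta\leq\sin\theta$ (valid since $d\geq 3$) followed by integrating $\sin\theta$; the normalizing constants also check out.
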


\begin{proof}
   Let $\phi\in [0,\pi/2]$ such that $\cos \phi=a$. We can see that if $|\inProd{v}{u}|\geq a$, then $u$ will be in the hyperspherical cap (whose center is $v$) with spherical angle $\phi$, and the probability that a uniformly-random $u$ lands in that hyperspherical cap is $\frac{A_d(\phi)}{A_d(\pi/2)}$. Therefore,
    \begin{align*}
        \Pr\limits_{u\sim S^{d-1}}\big[|\inProd{v}{u}|< a\big]=& \frac{A_d(\pi/2)-A_d(\phi)}{A_d(\pi/2)}=\Big(\frac{\pi^{\frac{d}{2}}}{\Gamma(\frac{d}{2})}\Big)^{-1}\cdot \Big(\frac{2\pi^{\frac{d-1}{2}}}{\Gamma(\frac{d-1}{2})}\Big)\int_\phi^{\pi/2} \sin\theta^{d-2}\text{d}\theta\\
        \leq& \Big(\frac{\pi^{\frac{d}{2}}}{\Gamma(\frac{d}{2})}\Big)^{-1}\cdot \Big(\frac{2\pi^{\frac{d-1}{2}}}{\Gamma(\frac{d-1}{2})}\Big)\int_\phi^{\pi/2} \sin\theta\text{d}\theta=\frac{2\Gamma(\frac{d}{2})}{\sqrt{\pi}\Gamma(\frac{d-1}{2})}\cdot (-\cos\frac{\pi}{2}+\cos\phi)\\
        =&\frac{2}{\sqrt{\pi}}\cdot \frac{\Gamma(\frac{d}{2})}{\Gamma(\frac{d-1}{2})}\cdot a.
    \end{align*}
\end{proof}

By using Legendre’s duplication formula $\Gamma(\frac{d}{2})\Gamma(\frac{d-1}{2})=\frac{\sqrt{\pi}}{2^{d-2}}\Gamma(d-1)$~\cite[Chap.~8.21, Eq.~102]{rudin1976principles} and the fact $\Gamma(d)=(d-1)!$, we obtain 
\begin{align*}
    \frac{\Gamma(\frac{d}{2})}{\Gamma(\frac{d-1}{2})}=\begin{cases}
        \frac{\Gamma(\frac{d}{2})^2}{\Gamma(\frac{d}{2})\Gamma(\frac{d-1}{2})}=\frac{2^{d-2}((\frac{d}{2}-1)!)^2}{\sqrt{\pi}(d-2)!}=\frac{2^{d-2}}{\sqrt{\pi}}\cdot \Big(^{d-2}_{\frac{d-2}{2}}\Big)^{-1}, \text{ if $d$ is even,}\\
        \frac{\Gamma(\frac{d}{2})\Gamma(\frac{d-1}{2})}{\Gamma(\frac{d-1}{2})^2}=\frac{\sqrt{\pi}(d-2)!}{2^{d-2}(\frac{d-3}{2}!)^2}=\frac{\sqrt{\pi}(d-2)}{2^{d-2}}\cdot \Big(^{d-3}_{\frac{d-3}{2}}\Big), \text{ if $d$ is odd.}
    \end{cases}
\end{align*}
Plugging the above into \Cref{thm:sphericalcapratio}, we have the following corollary. 
\begin{corollary}\label{cor:innerproductsphere}
        Let $d\geq 4$ be an integer, $v\in \mathbb{R}^d$ be a unit vector, and $c \geq 1$. Then we have
    \[
    \Pr\limits_{u\sim S^{d-1}}\big[|\inProd{v}{u}|< \frac{1}{c\sqrt{d}}\big]<\frac{1}{c}.
    \]
\end{corollary}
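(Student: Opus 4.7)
My plan is to apply Theorem \ref{thm:sphericalcapratio} directly with $a = 1/(c\sqrt{d})$, which yields
$$\Pr_{u\sim S^{d-1}}\big[|\langle v,u\rangle| < 1/(c\sqrt{d})\big] \leq \frac{2}{c\sqrt{\pi d}}\cdot \frac{\Gamma(d/2)}{\Gamma((d-1)/2)}.$$
It therefore suffices to prove the Gamma-ratio bound
$$\frac{\Gamma(d/2)}{\Gamma((d-1)/2)} < \frac{\sqrt{\pi d}}{2}\qquad \text{for all integers } d \geq 4,$$
because then the upper bound above is strictly less than $1/c$, matching the claimed strict inequality.

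The cleanest way to establish this Gamma-ratio bound is Gautschi's inequality, $x^{1-s} < \Gamma(x+1)/\Gamma(x+s) < (x+1)^{1-s}$ for $x>0$ and $0<s<1$; setting $s=1/2$ and $x=(d-2)/2>0$ (which is valid since $d\geq 4$) gives
$$\frac{\Gamma(d/2)}{\Gamma((d-1)/2)} < \sqrt{d/2},$$
and $\sqrt{d/2} < \sqrt{\pi d}/2$ is equivalent to $2<\pi$, so we are done.

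If one prefers to stay within the elementary toolkit assembled just before the corollary, I would instead split on the parity of $d$ and substitute the two explicit formulas derived there from Legendre's duplication formula. The desired inequality then reduces, in the even case $d=2k+2$, to the lower bound $\binom{2k}{k} > 2^{d-1}/(\pi\sqrt{d})$, and, in the odd case $d=2k+1$, to a matching upper bound on $\binom{2k-2}{k-1}$. Both follow from the standard two-sided estimate $\binom{2k}{k} = 4^k/\sqrt{\pi k}\cdot(1+O(1/k))$ together with direct verification for the base cases $d=4,5$.

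There is no real obstacle here: the corollary is a bookkeeping consequence of Theorem \ref{thm:sphericalcapratio}, and the only thing to watch is that the Gamma-ratio estimate is sharp enough to beat $\sqrt{\pi d}/2$ rather than merely $\sqrt{d}$. Gautschi's bound $\sqrt{d/2}$ gives room to spare, so the strict inequality goes through for every $d\geq 4$.
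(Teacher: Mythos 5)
Your proof is correct and works, but it takes a genuinely different route from the paper's. Both arguments start identically by plugging $a=1/(c\sqrt d)$ into \Cref{thm:sphericalcapratio} and reducing the claim to the Gamma-ratio bound $\Gamma(d/2)/\Gamma((d-1)/2) < \sqrt{\pi d}/2$. Where you diverge is in how that bound is established: you invoke Gautschi's inequality with $x=(d-2)/2$, $s=1/2$ to get $\Gamma(d/2)/\Gamma((d-1)/2) \le \sqrt{d/2}$ in one stroke, and then observe $\sqrt{d/2}<\sqrt{\pi d}/2 \iff 2<\pi$. The paper instead stays within the machinery it has already set up: it applies Legendre's duplication formula to rewrite the Gamma ratio in terms of a central binomial coefficient, splits into the cases $d$ even and $d$ odd, and bounds the binomial coefficient using Robbins' non-asymptotic refinement of Stirling's formula. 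Your Gautschi route is shorter, avoids the parity case split, and gives a cleaner numerical margin ($\sqrt{2/\pi}\approx 0.80$ rather than $\sqrt{2/\pi}\cdot e^{1/6}\approx 0.94$); the paper's route has the advantage of being self-contained given the Legendre-duplication computation it has already displayed immediately before the corollary, and of only using the Robbins bound already cited. The one thing worth noting is that the standard statement of Gautschi's inequality is non-strict, $x^{1-s}\le \Gamma(x+1)/\Gamma(x+s) \le (x+1)^{1-s}$, but this does not affect your conclusion since the strictness in the final bound comes from $2<\pi$, not from Gautschi.
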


\begin{proof}
    By \Cref{thm:sphericalcapratio}, it suffices to show $\frac{2}{\sqrt{\pi}}\cdot \frac{\Gamma(\frac{d}{2})}{\Gamma(\frac{d-1}{2})}\cdot\frac{1}{\sqrt{d}}<1$ for every $d\geq 4$. When $d$ is even, by using Robbins' bound $\frac{4^m}{\sqrt{\pi m}}\exp(-\frac{1}{6m})\leq\binom{2m}{m}\leq \frac{4^m}{\sqrt{\pi m}}$~\cite[consequence of Eq.~1]{robbins1955RemarkOnStrilingsFormula}, we have 
    \begin{align*}
        \frac{\Gamma(\frac{d}{2})}{\Gamma(\frac{d-1}{2})}=\frac{2^{d-2}}{\sqrt{\pi}}\cdot \Big(^{d-2}_{\frac{d-2}{2}}\Big)^{-1}\leq\sqrt{\frac{d-2}{2}}\exp(\frac{1}{3d-6})\leq \sqrt{\frac{d}{2}}\exp(\frac{1}{6}),
    \end{align*}
    implying that $\frac{2}{\sqrt{\pi}}\cdot \frac{\Gamma(\frac{d}{2})}{\Gamma(\frac{d-1}{2})}\cdot\frac{1}{\sqrt{d}}\leq\sqrt{\frac{2}{\pi}}\cdot \exp(\frac{1}{6})<1$. Similarly, when $d$ is odd, we have 
    \begin{align*}
        \frac{\Gamma(\frac{d}{2})}{\Gamma(\frac{d-1}{2})}=\frac{\sqrt{\pi}(d-2)}{2^{d-2}}\cdot \Big(^{d-3}_{\frac{d-3}{2}}\Big)\leq{\frac{d-2}{\sqrt{2({d-3})}}},
    \end{align*}
    implying that $\frac{2}{\sqrt{\pi}}\cdot \frac{\Gamma(\frac{d}{2})}{\Gamma(\frac{d-1}{2})}\cdot\frac{1}{\sqrt{d}}\leq\sqrt{\frac{2}{\pi}}\cdot \frac{d-2}{\sqrt{d(d-3)}}<1$.
\end{proof}

\section{Time-efficient unbiased pure-state tomography}

In this section we design efficient methods for obtaining a good classical description of a pure quantum state (i.e., tomography), by manipulating and measuring multiple copies of that state.

\subsection{Pure-state tomography by computational-basis measurements}\label{sec:qTomoComp}

A direct corollary of \Cref{cor:SqrtChernoff} as observed in \cite{apeldoorn2022QTomographyWStatePrepUnis} is that computational-basis measurements yield a good approximation of the absolute values of the amplitudes of a (sub)normalized quantum state vector.

\begin{corollary}\label{cor:AbsTomoMeas}
	Suppose that $\eps,\delta\in(0,1]$, $\psi\in \mathbb{C}^d$ has $\ell_2$-norm at most~$1$, and we are given $n\geq \frac{1}{\eps^2}\ln\left(\frac{2d}{\delta}\right)$ copies of the pure quantum state $\ket{\varphi}:=\ket{\bar{0}}\ket{\psi}+\ket{\bar{0}^\perp}$, where $(\ketbra{\bar{0}}{\bar{0}}\otimes I)\ket{\bar{0}^\perp} = 0$.
	If we measure each copy in the computational basis and denote by
	$s_{i}$ the normalized number (i.e., frequency) of outcomes $\ket{\bar{0}}\ket{i}$
	then the vector 
	\begin{align*}
		\bar{\psi}_i:=\sqrt{s_{i}}
	\end{align*}
	with probability at least $1-\delta$ gives an $\eps$-$\ell_\infty$ approximation of $|\psi|$. Moreover, $\lVert\bar{\psi}\rVert_2\leq 1$ with certainty and if $\nrm{\psi}_2=1$, then also $\lVert\bar{\psi}\rVert_2=1$, and in general $\left|\lVert\bar{\psi}\rVert_2-\nrm{\psi}_2\right|\geq \eps$ holds with probability $\leq \frac{\delta}{d}$.
\end{corollary}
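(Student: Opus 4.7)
The plan is to reduce the claim to $d+1$ independent applications of the Okamoto-Hoeffding bound (Corollary \ref{cor:SqrtChernoff}) to the outcomes of the computational-basis measurements, and then take a union bound. The key observation is that by Born's rule, measuring $\ket{\varphi}$ in the computational basis yields outcome $\ket{\bar 0}\ket{i}$ with probability exactly $p_i := |\psi_i|^2$, independently of the structure of $\ket{\bar 0^\perp}$ (since that part is supported on the orthogonal complement of $\ket{\bar 0}\otimes I$ in the first register). Consequently, over the $n$ independent copies, $s_i$ is the empirical mean of a Bernoulli$(p_i)$ variable with $\sqrt{p_i}=|\psi_i|$.

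First I would apply Corollary \ref{cor:SqrtChernoff} coordinate-wise: for each fixed $i\in[d]$,
\begin{align*}
\Pr\bigl[\sqrt{s_i} \geq |\psi_i| + \eps\bigr] &\leq \exp(-2\eps^2 n), \\
\Pr\bigl[\sqrt{s_i} \leq |\psi_i| - \eps\bigr] &\leq \exp(-\eps^2 n).
\end{align*}
With $n\geq \eps^{-2}\ln(2d/\delta)$ we get $\exp(-\eps^2 n)\leq \delta/(2d)$ and $\exp(-2\eps^2 n)\leq (\delta/(2d))^2$, so each coordinate fails to be an $\eps$-approximation with probability at most $\delta/d$. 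A union bound over the $d$ coordinates yields the claimed $\ell_\infty$ guarantee with failure probability at most $\delta$.

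The bound $\lVert\bar\psi\rVert_2\leq 1$ is immediate: $\lVert\bar\psi\rVert_2^2=\sum_i s_i$ is the frequency of outcomes landing in the $\ket{\bar 0}$-subspace of the first register, which is at most $1$. In the special case $\nrm{\psi}_2=1$ we have $\ket{\bar 0^\perp}=0$ (since $\ket{\varphi}$ is a unit vector), so every measurement lands in the $\ket{\bar 0}$-subspace and $\sum_i s_i = 1$ deterministically. For the remaining claim on $\bigl|\lVert\bar\psi\rVert_2-\nrm{\psi}_2\bigr|$, I would again invoke Corollary \ref{cor:SqrtChernoff}, this time applied to the aggregated Bernoulli random variable that indicates whether a given measurement landed in the $\ket{\bar 0}$-subspace; its success probability is exactly $\nrm{\psi}_2^2$, and its empirical mean is $\lVert\bar\psi\rVert_2^2=\sum_i s_i$. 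The same choice of $n$ gives a failure probability of at most $\delta/(2d)+(\delta/(2d))^2\leq \delta/d$.

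There is no real technical obstacle here; the proof is essentially a bookkeeping exercise around the concentration inequality already prepared in Section 2.5. The one place to be slightly careful is matching the constant $\ln(2d/\delta)$ in $n$ to the combined one-sided tails of Corollary \ref{cor:SqrtChernoff} (which are asymmetric, with rates $\eps^2$ and $2\eps^2$), so that both the coordinate-wise bound and the norm bound come out with the stated probabilities.
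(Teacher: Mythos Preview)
Your proposal is correct and follows essentially the same approach as the paper: apply \Cref{cor:SqrtChernoff} to each coordinate (and to the aggregated indicator for the norm bound), then union-bound. The paper's proof is in fact slightly terser but identical in substance, including the observation that $\lVert\bar\psi\rVert_2^2=\sum_i s_i\leq 1$ with equality when $\nrm{\psi}_2=1$.
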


\begin{proof}
	By \Cref{cor:SqrtChernoff} we have that
	\begin{align*}
		\Pr\left[\left|\sqrt{s_i}-|\psi_i|\right|\geq\eps\right]\leq\frac{\delta}{d},
	\end{align*}
	and similarly 
	\begin{align*}
		\Pr\left[\left|\lVert\bar{\psi}\rVert_2-\lVert\psi\rVert_2\right|\geq\eps\right]
		=\Pr\left[\left|\sqrt{\sum_{i=0}^{d-1}s_i}-\sqrt{\sum_{i=0}^{d-1}|\psi_i|^2}\right|
		\geq\eps\right]\leq\frac{\delta}{d}.
	\end{align*}
	Finally, $\lVert\bar{\psi}\rVert^2_2=\!\sum_{i=0}^{d-1}s_i\leq 1$, where the last inequality is an equality if  $\nrm{\psi}_2=1.$
\end{proof}

\subsection{Pure-state tomography using conditional samples}\label{sec:qTomoCond}

Now we show how to produce an unbiased estimator of $\psi$ itself (not just of the magnitudes of its entries) with bounded variance using computational-basis measurements with the help of a reference state $\bar{\psi}$. Our approach is inspired by \cite{kerenidis2018QIntPoint} but improves over their biased estimator by making it unbiased.

\begin{lemma} \label{lem:UnbiasedTomoMeas}
	Suppose that $\psi\in \mathbb{C}^d$ has $\norm{\psi}_2\leq 1$, and we are given a copy of the state $\ket{\varphi'}:=\Big(\ket{+}\!\big(\ket{\bar{0}}\!\ket{\psi}+\ket{\bar{0}^\perp}\big)+\ket{-}\!\big(\ket{\bar{0}}\!\ket{\bar{\psi}}+\ket{\bar{0}'^\perp}\big)\Big)/\sqrt{2}$, where $\ket{\bar{0}}=\ket{0^a}$ for some $a\in\mathbb{N}$, $(\ketbra{\bar{0}}{\bar{0}}\otimes I)\ket{\bar{0}^\perp} = 0$, and $(\ketbra{\bar{0}}{\bar{0}}\otimes I)\ket{\bar{0}'^\perp} = 0$. If we measure $\ket{\varphi'}$ in the computational basis and denote by $X\in\{0,1\}^{2d}$ the indicator of the measurement outcomes $\ket{b}\ket{\bar{0}}\ket{i}$ (this $X$ is a weight-1 Boolean vector indexed by $(b,i)$ where $b\in \{0,1\}$ and $i\in[d]-1$), then the random vector $\psi'\in\mathbb{C}^d$ with coordinates
	\begin{align*}
		\psi'_i:=\frac{X_{0,i}-X_{1,i}}{|\bar{\psi}_i|}
	\end{align*}
	is an unbiased estimator of $\psi^{\Re}_i:=\mathrm{Re}\Big(\psi_i\frac{\bar{\psi}^*_i}{|\bar{\psi}_i|}\Big)$, with $\nrm{\psi'}_2\leq \frac{1}{\min\{|\bar{\psi}_i|\colon i\in [d]-1\}}$ with certainty, and covariance matrix $\mathrm{Cov}(\psi')=\frac{I}{2}+\mathrm{diag}\Big(\frac{|\psi_i|^2}{2|\bar{\psi_i|}^2}\Big)-\ketbra{\psi^{\Re}}{\psi^{\Re}}$.
\end{lemma}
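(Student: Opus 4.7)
The plan is to read off all three claims from the joint distribution of $X$ via the Born rule; once that distribution is in hand, everything reduces to short algebraic checks. First I would Hadamard-expand the control qubit as $\ket{\pm}=(\ket{0}\pm\ket{1})/\sqrt{2}$ and regroup to write
\[
\ket{\varphi'}=\tfrac{1}{2}\ket{0}\ket{\bar{0}}(\ket{\psi}+\ket{\bar{\psi}})+\tfrac{1}{2}\ket{1}\ket{\bar{0}}(\ket{\psi}-\ket{\bar{\psi}})+\ket{\xi},
\]
where $\ket{\xi}$ collects the contributions from $\ket{\bar{0}^\perp}$ and $\ket{\bar{0}'^\perp}$ (both orthogonal to $\ket{\bar{0}}$ on the first register). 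The flagged outcomes $\ket{b}\ket{\bar{0}}\ket{i}$ then occur with probabilities $\Pr[X_{0,i}{=}1]=\tfrac{1}{4}|\psi_i+\bar{\psi}_i|^2$ and $\Pr[X_{1,i}{=}1]=\tfrac{1}{4}|\psi_i-\bar{\psi}_i|^2$, while any outcome in the support of $\ket{\xi}$ leaves $X\equiv 0$ and hence $\psi'=0$.

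For unbiasedness I would apply the polarization identity $|\psi_i+\bar{\psi}_i|^2-|\psi_i-\bar{\psi}_i|^2=4\,\mathrm{Re}(\psi_i\bar{\psi}_i^*)$, yielding $\mathbb{E}[X_{0,i}-X_{1,i}]=\mathrm{Re}(\psi_i\bar{\psi}_i^*)$; dividing by $|\bar{\psi}_i|$ gives $\mathbb{E}[\psi'_i]=\mathrm{Re}(\psi_i\bar{\psi}_i^*/|\bar{\psi}_i|)=\psi^{\Re}_i$. The almost-sure norm bound is immediate: $X$ is weight-$\leq 1$ by construction (exactly one flagged indicator can fire, or none), so at most one coordinate of $\psi'$ is nonzero, and that coordinate has magnitude $1/|\bar{\psi}_i|\leq 1/\min_j|\bar{\psi}_j|$.

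The covariance computation again exploits weight-$\leq 1$-ness. For $i\neq j$ the product $(X_{0,i}-X_{1,i})(X_{0,j}-X_{1,j})$ is identically zero because the indicators of distinct measurement outcomes are mutually exclusive, so $\mathbb{E}[\psi'(\psi')^\dagger]$ is diagonal. For $i=j$, $X_{0,i}X_{1,i}=0$ and $X_{b,i}^2=X_{b,i}$ give $(X_{0,i}-X_{1,i})^2=X_{0,i}+X_{1,i}$, whose expectation is $\tfrac{1}{2}(|\psi_i|^2+|\bar{\psi}_i|^2)$. Dividing by $|\bar{\psi}_i|^2$ produces $\mathbb{E}[\psi'(\psi')^\dagger]=\tfrac{I}{2}+\mathrm{diag}\bigl(|\psi_i|^2/(2|\bar{\psi}_i|^2)\bigr)$, and subtracting $\mathbb{E}[\psi']\mathbb{E}[\psi']^\dagger=\ketbra{\psi^{\Re}}{\psi^{\Re}}$ yields the stated covariance. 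I do not expect any real obstacle here: the only point requiring care is to consistently treat the unflagged outcomes in $\ket{\bar{0}^\perp}$/$\ket{\bar{0}'^\perp}$ as $X=0$, contributing nothing to either moment, which is exactly what makes the three identities fall out of the Born-rule computation above.
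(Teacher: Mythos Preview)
Your proposal is correct and follows essentially the same route as the paper: compute the Born-rule probabilities $p_{b,i}=\tfrac14|\psi_i\pm\bar\psi_i|^2$, take their difference for unbiasedness, and exploit that $X$ has weight at most~$1$ to get both the norm bound and the diagonal second moment. Your explicit observation that unflagged outcomes yield $X=0$ (so $X$ is weight-$\leq 1$, not necessarily weight-$1$) is a useful clarification that the paper's phrasing elides, but it does not change the argument.
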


\begin{proof}
	The probabilities of getting measurement outcomes $\ket{0}\ket{\bar{0}}\ket{i}$ and $\ket{1}\ket{\bar{0}}\ket{i}$ are
	\begin{align*}
		p_{0,i}&:=|\bra{0}\bra{\bar{0}}\bra{i}\ket{\varphi'}|^2
		=\left|\frac{\psi_i+\bar{\psi}_i}{2}\right|^2
		=\frac{|\psi_i|^2+\psi_i\bar{\psi}^*_i+\psi^*_i\bar{\psi}_i+|\bar{\psi}_i|^2}{4}
		=\mathbb{E}[X_{0,i}],\\
		p_{1,i}&:=|\bra{1}\bra{\bar{0}}\bra{i}\ket{\varphi'}|^2
		=\left|\frac{\psi_i-\bar{\psi}_i}{2}\right|^2
		=\frac{|\psi_i|^2-\psi_i\bar{\psi}^*_i-\psi^*_i\bar{\psi}_i+|\bar{\psi}_i|^2}{4}		
		=\mathbb{E}[X_{1,i}],
	\end{align*}
	and therefore 
	\begin{align*}
		\mathbb{E}[X_{0,i}-X_{1,i}]
		=p_{0,i}-p_{1,i}
		=\frac{\psi_i\bar{\psi}^*_i+\psi^*_i\bar{\psi}_i}{2}
		=|\bar{\psi}_i|\frac{\psi_i\frac{\bar{\psi}^*_i}{|\bar{\psi}_i|}+\psi^*_i\frac{\bar{\psi}_i}{|\bar{\psi}_i|}}{2}
		=|\bar{\psi}_i|\psi^{\Re}_i.
	\end{align*}
	For the norm bound observe that
	\begin{align*}
		\nrm{{\psi'}}_2
		\leq \sum_{i=0}^{d-1}|{\psi'}|_i
		\leq \sum_{i=0}^{d-1}\frac{X_{0,i}+X_{1,i}}{|\bar{\psi}|_i}	
		\leq \sum_{i=0}^{d-1}\frac{X_{0,i}+X_{1,i}}{\min\{|\bar{\psi}_i|\colon i\in [d]-1\}}
		\leq \frac{1}{\min\{|\bar{\psi}_i|\colon i\in [d]-1\}}.
	\end{align*}
	We can compute the covariance matrix directly as follows
	\begin{align*}
		\mathrm{Cov}(\psi')_{ij}
		=\mathbb{E}[\psi'_i\psi'_j]-\mathbb{E}[\psi'_i]\mathbb{E}[\psi'_j]
		=\delta_{ij}\frac{p_{0,i}+p_{1,i}}{|\bar{\psi}_i|^2}-\psi^{\Re}_i\psi^{\Re}_j=\delta_{ij}\frac{|\psi_i|^2/|\bar{\psi}_i|^2+1}{2}-\psi^{\Re}_i\psi^{\Re}_j,
	\end{align*}
    where the second equality uses that $X_{a,i}X_{b,j}=\delta_{ab}\cdot\delta_{ij}\cdot X_{a,i}$  because $X$ is a weight-$1$ Boolean vector.
\end{proof}

By an analogous argument as in the proof of \Cref{lem:UnbiasedTomoMeas}, we can obtain an unbiased estimator of the imaginary parts $\psi^{\Im}_j:=\mathrm{Im}\left(\psi_j\frac{\bar{\psi}^*_j}{|\bar{\psi}_j|}\right)$ (with the same $\ell_2$-norm and covariance matrix gaurantee) by measuring $\ket{\varphi''}:=\left(\ket{+}\left(\ket{\bar{0}}\ket{\psi}+\ket{\bar{0}^\perp}\right)+i\ket{-}\left(\ket{\bar{0}}\ket{\bar{\psi}}+\ket{\bar{0}'^\perp}\right)\right)/\sqrt{2}$ in the computational basis.

We now give a procedure (the first part of \Cref{thm:UnbiasedTomoMeas}) to find an unbiased estimator $\tilde{\psi}$ of $\psi$ that simultaneously has a good bound on the error of the estimator (with overwhelming probability) in some $k$ fixed directions using \Cref{lem:UnbiasedTomoMeas}.  The second part of \Cref{thm:UnbiasedTomoMeas} shows that the output $\tilde{\psi}$ will be close (in total variation distance) to an ``almost ideal'' unbiased estimator $\widecheck{\psi}$ that simultaneously has a good bound on the error of the estimator \emph{with certainty} in some $k$ fixed directions.
This will be used later when estimating a matrix-vector product $Aw$ in order to avoid an estimation-error that has too much overlap with $k$ of the eigenvectors of~$A$.

\begin{theorem} \label{thm:UnbiasedTomoMeas}
	Let $\psi\in \mathbb{C}^d$ such that $\nrm{\psi}_2\leq 1$, $\eps,\delta\in(0,1]$, $\eta\in\mathbb{R}_+$, $k\in\mathbb{N}$, $n\geq \frac{4d}{\eps^2}\left(\frac{4}{3}+\frac{1}{\eta}\right)\ln\left(\frac{8k}{\delta}\right)$. Suppose there exists a ``reference state'' (not necessarily known to the algorithm) $\bar{\psi}\in \mathbb{C}^d$ such that $|\bar{\psi}_j|^2\geq \max\{\frac{\eps^2}{d},\eta|\psi_j|^2\}$ $\forall j\in[d]-1$, and $\nrm{\bar{\psi}}\leq 1$. Given $n$ copies of the pure quantum states
	\begin{align*}
		\ket{\varphi'}&:=\left(\ket{+}\left(\ket{\bar{0}}\ket{\psi}+\ket{\bar{0}^\perp}\right)+\ket{-}\left(\ket{\bar{0}}\ket{\bar{\psi}}+\ket{\bar{0}'^\perp}\right)\right)/\sqrt{2},\\
		\ket{\varphi''}&:=\left(\ket{+}\left(\ket{\bar{0}}\ket{\psi}+\ket{\bar{0}^\perp}\right)+i\ket{-}\left(\ket{\bar{0}}\ket{\bar{\psi}}+\ket{\bar{0}'^\perp}\right)\right)/\sqrt{2},
	\end{align*} where $\ket{\bar{0}}=\ket{0^a}$ for some $a\in\mathbb{N}$, $(\ketbra{\bar{0}}{\bar{0}}\otimes I)\ket{\bar{0}^\perp} = 0$, and $(\ketbra{\bar{0}}{\bar{0}}\otimes I)\ket{\bar{0}'^\perp} = 0$, 
	if we measure each copy in the computational basis and denote by
	$s'_{b,j}$, $s''_{b,j}$ the normalized number of measurement outcomes $\ket{b}\ket{\bar{0}}\ket{j}$ from measuring the states $\ket{\varphi'}$ and $\ket{\varphi''}$ respectively, 
	then the random vector $\tilde{\psi}\in\mathbb{C}^d$ with coordinates
	\begin{align*}
		\tilde{\psi}_j:= \left(s'_{0,j}-s'_{1,j}+is''_{0,j}-is''_{1,j}\right)\frac{\bar{\psi}_j}{|\bar{\psi}_j|^2}
	\end{align*}
	is an unbiased estimator of $\psi$. Moreover, for every set $V=\{v^{(j)}\colon j\in[k]\}\subset\mathbb{C}^d$ of vectors we have
\begin{align}\label{eq:InnerConcentration}
		\Pr\left[\forall v\in V\colon |\braket{\tilde{\psi}-\psi|v}|<\frac{\eps}{\sqrt{d}}\nrm{v}_2\right]\geq 1-\delta.
	\end{align}
	In particular if $k\geq d$, then $\Pr\left[\nrm{\tilde{\psi}-\psi}_2<\eps\right]\geq 1-\delta$. 
 
    Finally, let $\{v^{(j)}\colon j\in[k]\}$ be a fixed set of orthonormal vectors and $\Pi_k$ be the projector to their span. Let $A$ be the event that $\exists j\in [k] \colon |\braket{\tilde{\psi}-\psi|v^{(j)}}| > \frac{\eps}{\sqrt{d}}$, $\bar{A}$ be the complement of $A$, and $X_\zeta$ be an independent Bernoulli random variable such that $\Pr[X_\zeta=0]=\zeta:=\frac{\delta-p}{1-p}$ for $p:=\Pr[A]$. Define $\widecheck{\psi}\in\mathbb{C}^d$ as follows\footnote{Here we introduce $X_\zeta$ to ensure that $\Pr[A\cup (X_\zeta=0)]$ exactly equals $\delta$, which is helpful because we use both upper and lower bounds on this probability in the proof.}
    \begin{align*}
        \widecheck{\psi} = \left\{\begin{array}{ll}
            \tilde{\psi} & \text{on } \bar{A}\cap (X_\zeta=1)\\
             (I-\Pi_k)\tilde{\psi}+ \sum\limits_{j\in[k]}\ket{v^{(j)}}\mathbb{E}[\braket{v^{(j)}|\tilde{\psi}}|A\cup (X_\zeta=0)] & \text{on } A\cup (X_\zeta=0).
            \end{array}\right.
    \end{align*}
    Then $\mathbb{E}[\widecheck{\psi}]=\psi$, $\Pr[\forall j\in[k] \colon |\braket{\widecheck{\psi}-\psi|v^{(j)}}|\leq \frac{k+3}{k} \frac{\eps}{\sqrt{d}} ]=1$ (which is why we call $\widecheck{\psi}$ an ``almost ideal'' unbiased estimator), the total variation distance between $\tilde{\psi}$, and $\widecheck{\psi}$ is at most $\delta$, and $\norm{\mathrm{Cov}(\Pi_k\widecheck{\psi})}\leq \norm{\mathrm{Cov}(\Pi_k\tilde{\psi})}+25\delta\eps^2\frac{k}{d}\leq\left(\frac{1}{4\ln\left(\frac{8k}{\delta}\right)}+25\delta k\right)\frac{\eps^2}{d}$.
\end{theorem}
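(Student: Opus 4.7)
The plan is to verify each claim in the order stated. Unbiasedness of $\tilde\psi$ follows directly from \Cref{lem:UnbiasedTomoMeas}: that lemma gives $\mathbb{E}[s'_{0,j}-s'_{1,j}]=|\bar\psi_j|\psi_j^{\Re}$, and its analog applied to $\ket{\varphi''}$ gives $\mathbb{E}[s''_{0,j}-s''_{1,j}]=|\bar\psi_j|\psi_j^{\Im}$; multiplying by the correcting factor $\bar\psi_j/|\bar\psi_j|^2$ turns $(\psi_j^{\Re}+i\psi_j^{\Im})\bar\psi_j/|\bar\psi_j|=\psi_j\bar\psi_j^*\bar\psi_j/|\bar\psi_j|^2$ into $\psi_j$. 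For the concentration I would write $\tilde\psi=\tfrac{1}{n}\sum_t U^{(t)}$ with $U^{(t)}:=W^{(t,')}+iW^{(t,'')}$ the per-measurement contribution, where each $W^{(t,\cdot)}$ has at most one nonzero entry of magnitude $1/|\bar\psi_j|$. For any fixed $v$, split $\braket{v|\tilde\psi-\psi}$ into real and imaginary parts and apply the Bernstein bound \Cref{prop:BennettB} to each: the per-measurement variance is at most $\tfrac{1+1/\eta}{2}\|v\|^2$ (from \Cref{lem:UnbiasedTomoMeas} and $|\bar\psi_j|^2\geq\eta|\psi_j|^2$), and the per-measurement magnitude at most $\|v\|\sqrt d/\eps$ (from $|\bar\psi_j|\geq\eps/\sqrt d$). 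Union-bounding over the $4k$ real-valued events ($\pm$ real/imaginary parts, $k$ directions) with $n\geq\tfrac{4d}{\eps^2}(4/3+1/\eta)\ln(8k/\delta)$ yields \eqref{eq:InnerConcentration}; taking $V$ to be an orthonormal basis of $\mathbb C^d$ gives the $\ell_2$-statement.

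For the ``almost ideal'' estimator $\widecheck\psi$, the key fact is that by the choice $\zeta=(\delta-p)/(1-p)$ with $X_\zeta$ independent of $A$, one has $\Pr[F]=p+(1-p)\zeta=\delta$ exactly for $F:=A\cup(X_\zeta=0)$. Since $\widecheck\psi$ coincides with $\tilde\psi$ on the complement $E:=\bar A\cap(X_\zeta=1)$, \Cref{thm:TVcoupling} gives $d_{TV}(\tilde\psi,\widecheck\psi)\leq\Pr[F]=\delta$. For unbiasedness, split $\widecheck\psi$ into its $(I-\Pi_k)$- and $\Pi_k$-components: by definition $(I-\Pi_k)\widecheck\psi=(I-\Pi_k)\tilde\psi$ at every sample, while the $\Pi_k$-part equals $\Pi_k\tilde\psi$ on $E$ and the deterministic constant $c:=\Pi_k\mathbb{E}[\tilde\psi|F]$ on $F$. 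Therefore $\mathbb{E}[\Pi_k\widecheck\psi]=\mathbb{E}[\Pi_k\tilde\psi\mathbf{1}_E]+\Pr[F]\cdot c=\mathbb{E}[\Pi_k\tilde\psi\mathbf{1}_E]+\mathbb{E}[\Pi_k\tilde\psi\mathbf{1}_F]=\Pi_k\psi$, so combined with the trivially unbiased $(I-\Pi_k)$-part this gives $\mathbb{E}[\widecheck\psi]=\psi$.

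For the guaranteed per-direction bound and the covariance estimate, on $E$ the bound $|\braket{v^{(j)}|\widecheck\psi-\psi}|=|\braket{v^{(j)}|\tilde\psi-\psi}|\leq\eps/\sqrt d$ is inherited directly from $\bar A$. On $F$, the $\Pi_k$-projection of $\widecheck\psi-\psi$ is the deterministic vector $c-\Pi_k\psi=-\Pi_k\mathbb{E}[(\tilde\psi-\psi)\mathbf{1}_E]/\delta$ (using $\mathbb{E}[\tilde\psi]=\psi$). I would bound $\|c-\Pi_k\psi\|_2$ by combining the pointwise estimate $\|\Pi_k(\tilde\psi-\psi)\|_2\leq\eps\sqrt{k/d}$ on $E$ with the variance bound $\|\mathrm{Cov}(\Pi_k\tilde\psi)\|\leq\|\mathrm{Cov}(\tilde\psi)\|=\|\mathrm{Cov}(U^{(1)})\|/n\leq(1+1/\eta)/n\leq\eps^2/(4d\ln(8k/\delta))$ via Cauchy–Schwarz on $\mathbf{1}_E$. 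The covariance claim then follows from the law of total covariance applied with the indicator $\mathbf{1}_F$,
\[\mathrm{Cov}(\Pi_k\widecheck\psi)=\mathrm{Cov}(\Pi_k\tilde\psi)-\mathbb{E}\!\left[\mathbf{1}_F(\Pi_k\tilde\psi-\Pi_k\psi)(\Pi_k\tilde\psi-\Pi_k\psi)^*\right]+\delta\,(c-\Pi_k\psi)(c-\Pi_k\psi)^*,\]
together with positive semidefiniteness of the middle summand, giving $\|\mathrm{Cov}(\Pi_k\widecheck\psi)\|\leq\|\mathrm{Cov}(\Pi_k\tilde\psi)\|+\delta\|c-\Pi_k\psi\|_2^2$, into which the $\ell_2$-bound on $c-\Pi_k\psi$ plugs directly.

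The main technical obstacle will be the tight bound on $\|c-\Pi_k\psi\|_2$, and in particular the per-coordinate bound $(k+3)/k\cdot\eps/\sqrt d$: the naive Jensen estimate using only $\Pr[E]=1-\delta$ and $|\braket{v^{(j)}|\tilde\psi-\psi}|\leq\eps/\sqrt d$ on $E$ yields only a $(1-\delta)/\delta$ dependence, which is too weak for small~$\delta$. The finer argument must exploit that $F$ further decomposes as $A\sqcup(\bar A\cap(X_\zeta=0))$: on the second (and larger) piece the Bernstein-based concentration still holds pointwise, and the small $p/\delta$ contribution from the truly ``bad'' event $A$ is balanced against the variance bound on $\tilde\psi$. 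Carefully tracking this decomposition is what is expected to produce the stated constants $(k+3)/k$ and $25$.
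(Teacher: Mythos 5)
Your outline of the first half of the theorem is essentially the paper's: unbiasedness via \Cref{lem:UnbiasedTomoMeas}, Bernstein on the (four, or equivalently two complex) real-valued projections, union bound over $\Theta(k)$ events, and then the $\ell_2$-statement from an orthonormal $V$. (Be careful about constants if you really combine $W^{(t,')}$ and $W^{(t,'')}$ into a single increment $U^{(t)}$: the real part of $\braket{v|U^{(t)}}$ mixes two independent measurements, so its almost-sure magnitude is $2\|v\|\sqrt d/\eps$, not $\|v\|\sqrt d/\eps$, and the second moment is $(1+1/\eta)\|v\|^2$, not $\tfrac{1+1/\eta}{2}\|v\|^2$. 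The paper avoids this by bounding the four scalars $a,b,c,d$ separately.) The TV-distance bound and the unbiasedness of $\widecheck\psi$ are handled correctly and essentially as in the paper, and your covariance decomposition using the law of total covariance and dropping the psd middle summand is valid (and in fact would give a slightly better constant than $25$, \emph{if} the ingredient it requires were available).

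The gap is in the bound on $c-\Pi_k\psi$ (equivalently on $\mathbb{E}[\mathring\psi^j\mid F]$ per direction), which you correctly identify as the crux but for which your proposed fix does not work. You suggest balancing the contribution from $A$ against ``the variance bound on $\tilde\psi$'', i.e., Cauchy–Schwarz: $\mathbb{E}[|\mathring\psi^j|\mathbf{1}_A]\le\sqrt{\mathrm{Var}(\mathring\psi^j)}\sqrt{\Pr[A]}$. With $\mathrm{Var}(\mathring\psi^j)\le\frac{\eps^2}{4d\ln(8k/\delta)}$ and $\Pr[A]\le\delta$ this gives, after dividing by $\Pr[F]=\delta$, a bound of order $\frac{\eps}{\sqrt d}\cdot\frac{1}{\sqrt{\delta\ln(8k/\delta)}}$, which diverges as $\delta\to0$ and cannot yield the stated $\big(1+\frac3k\big)\frac{\eps}{\sqrt d}$. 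The variance alone simply does not capture how light the tail of $\mathring\psi^j$ is. What the paper actually uses is the full exponential tail furnished by Bernstein, $\Pr\big[|\mathring\psi^j|>\tau\tfrac{\eps}{\sqrt d}\big]\le 8(\tfrac{\delta}{8k})^{\tau}$ for all $\tau\ge1$, fed into a dyadic (geometric-series) argument:
\begin{align*}
\mathbb{E}\big[|\mathring\psi^j|\,\mathds{1}_{(\eps/\sqrt d,\infty)}(|\mathring\psi^j|)\big]
\le\sum_{\ell\ge0}2^{\ell+1}\tfrac{\eps}{\sqrt d}\,8\Big(\tfrac{\delta}{8k}\Big)^{2^\ell}
<\frac{3\delta}{k}\cdot\frac{\eps}{\sqrt d}.
\end{align*}
Crucially this truncated first moment is $\Theta(\delta)$, so dividing by $\delta$ yields a constant; the analogous dyadic bound on $\mathbb{E}[\nrm{\mathring\psi}_2^2\mathds{1}_{(\bar\eps,\infty)}]\le 8\delta\eps^2\tfrac kd$ is what drives the covariance bound. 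This exponential-tail/dyadic step is the key missing idea in your proposal; the rest of the structure is sound.
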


\begin{proof}
	We prove the first part of \Cref{thm:UnbiasedTomoMeas} first. Let us define the random vectors $\psi',\psi''\in\mathbb{C}^d$ with coordinates
	\begin{align*}
		\psi'_j:=\frac{X'_{0,j}-X'_{1,j}}{|\bar{\psi}_j|}, \qquad
		\psi''_j:=\frac{X''_{0,j}-X''_{1,j}}{|\bar{\psi}_j|},
	\end{align*}
	where $X',X''\in\{0,1\}^{2d}$ denote the indicator of the measurements outcomes $\ket{b}\ket{\bar{0}}\ket{j}$ for the states $\ket{\varphi'}$ and $\ket{\varphi''}$, respectively. 
	Then by \Cref{lem:UnbiasedTomoMeas} and the discussion after the proof of \Cref{lem:UnbiasedTomoMeas}, $\psi', \psi''$ are unbiased estimators of $\psi^{\Re}_j:=\mathrm{Re}\left(\psi_j\frac{\bar{\psi}^*_j}{|\bar{\psi}_j|}\right)$, and 
	$\psi^{\Im}_j:=\mathrm{Im}\left(\psi_j\frac{\bar{\psi}^*_j}{|\bar{\psi}_j|}\right)$ respectively,
	such that $\nrm{\psi'}_2,\nrm{\psi''}_2\leq \sqrt{d}/\eps$ with certainty and 
	\begin{align}\label{eq:covPsis}
		\mathrm{Cov}(\psi')+\mathbb{E}[\psi']\mathbb{E}[\psi'^T]\preceq \left(\frac{1}{2}+\frac{1}{2\eta}\right)I, \quad
		\mathrm{Cov}(\psi'')+\mathbb{E}[\psi'']\mathbb{E}[\psi''^T]
		\preceq \left(\frac{1}{2}+\frac{1}{2\eta}\right)I,
	\end{align}
 where for the latter psd inequalities we used that $|\bar{\psi}_j|^2\geq\eta|\psi_j|^2$ and hence $\mathrm{diag}\Big(\frac{|\psi_i|^2}{2|\bar{\psi_i|}^2}\Big)\preceq \frac{1}{2\eta}I$.
 
	 Let $w\in \mathbb{R}^d$, then the random variables $\psi'_w:=\braket{\psi'|w},\psi''_w:=\braket{\psi''|w}$ satisfy $|\psi'_w|\leq \nrm{\psi'}_2\nrm{w}_2\leq \sqrt{d}\nrm{w}_2/\eps$, $|\psi''_w|\leq \nrm{\psi''}_2\nrm{w}_2\leq \sqrt{d}\nrm{w}_2/\eps$ with certainty. Also, $\mathbb{E}[|\psi'_w|^2]$ and $\mathbb{E}[|\psi''_w|^2]$ are both $\leq \left(\frac{1}{2}+\frac{1}{2\eta}\right)\nrm{w}_2^2$, because for both $\phi=\psi'$ and $\phi=\psi''$, we have
	\begin{align*}
		\mathbb{E}[|\braket{\phi|w}|^2]
  =\bra{w}\mathbb{E}[\phi \phi^T]\ket{w}
		=\bra{w}(\mathrm{Cov}(\phi)+\mathbb{E}[\phi]\mathbb{E}[\phi^T])\ket{w}
		\le\nrm{\mathrm{Cov}(\phi)+\mathbb{E}[\phi]\mathbb{E}[\phi^T]}\cdot\nrm{w}_2^2.
	\end{align*}
	Let $\Psi',\Psi''\in\mathbb{R}^d$ be the sum of $n$ i.i.d.\ copies of $\psi',\psi''$, respectively, obtained from the measurement outcomes of the $n$ copies of $\ket{\varphi'}$ and $\ket{\varphi''}$, so that 
	\begin{align}\label{eq:ImReDecomp}
		\tilde{\psi}_j=(\Psi'_j+i\Psi''_j)\frac{\bar{\psi}_j}{n|\bar{\psi}_j|}.
	\end{align}
	Let us analogously define $\Psi'_w:=\braket{\Psi'|w}$, and $\Psi''_w:=\braket{\Psi''|w}$. Then clearly $\mathbb{E}[\Psi'_w]=n\braket{\psi^\Re|w}$, and $\mathbb{E}[\Psi''_w]=n\braket{\psi^\Im|w}$. 
    For all $\tau\geq 1$, the Bennett-Bernstein tail bound (\Cref{prop:BennettB}) implies
    \begin{align}\label{eq:BBCor}
		\Pr\left[|\Psi'_w-n\braket{\psi^\Re|w}|\geq \tau\frac{\eps n \nrm{w}_2}{2\sqrt{d}}\right]
		&\leq2\exp\left(-\frac{\tau^2\eps^2 n^2 \nrm{w}_2^2/(4d)}{\left(1+\frac{1}{\eta}\right)\nrm{w}_2^2 n+\frac{\tau}{3}\nrm{w}_2^2 n}\right)\nonumber\\&
		\leq2\exp\left(-\tau\frac{\eps^2 n/(4d)}{\frac{4}{3}+\frac{1}{\eta}}\right)
		\leq 2\left(\frac{\delta}{8k}\right)^{\!\!\tau},
	\end{align}
	and similarly $\Pr[|\Psi''_w-n\braket{\psi^\Im|w}|\geq \tau\eps n \nrm{w}_2/\sqrt{4d}]\leq 2\left(\frac{\delta}{8k}\right)^{\!\tau}$.
	
	Let $\tilde{v}_j:=v_j\frac{\bar{\psi}_j^*}{|\bar{\psi}_j|}$, $\tilde{v}^\Re_j:=\mathrm{Re}(\tilde{v}_j)$, $\tilde{v}^\Im_j:=\mathrm{Im}(\tilde{v}_j)$, and observe that for any $v\in V$ 
	\begin{align*}
		&\braket{\tilde{\psi}|v} -\braket{\psi|v}
		=\frac{1}{n}\braket{\Psi'+i\Psi''|\tilde{v}}-\braket{\psi^\Re+i\psi^\Im|\tilde{v}}\\&
		\kern6mm=\!\Bigg(\underset{:= a\nrm{\tilde{v}^\Re}_2}{\underbrace{\frac{\braket{\Psi'|\tilde{v}^\Re}}{n}-\braket{\psi^\Re|\tilde{v}^\Re}}}\underset{ := b\nrm{\tilde{v}^\Im}_2}{\underbrace{-\frac{\braket{\Psi''|\tilde{v}^\Im}}{n}+\braket{\psi^\Im|\tilde{v}^\Im}}}\!\Bigg)
		+i\Bigg(\underset{:= c\nrm{\tilde{v}^\Re}_2}{\underbrace{\frac{\braket{\Psi'|\tilde{v}^\Im}}{n}-\braket{\psi^\Re|\tilde{v}^\Im}}}\underset{ := d\nrm{\tilde{v}^\Im}_2}{\underbrace{-\frac{\braket{\Psi''|\tilde{v}^\Re}}{n}+\braket{\psi^\Im|\tilde{v}^\Re}}}\Bigg),
	\end{align*}
	so 
	\begin{align*}
		|\braket{\tilde{\psi}|v}-\braket{\psi|v}|
		&\leq \sqrt{2}\max\left\{\left|\mathrm{Re}(\braket{\tilde{\psi}|v}-\braket{\psi|v})\right|,\left|\mathrm{Im}(\braket{\tilde{\psi}|v}-\braket{\psi|v})\right|\right\}\\&
        = \sqrt{2}\max\left\{\left|a\nrm{\tilde{v}^\Re}_2+b\nrm{\tilde{v}^\Im}_2\right|,\left|c\nrm{\tilde{v}^\Re}_2+d\nrm{\tilde{v}^\Im}_2\right|\right\}\\&
		\leq \sqrt{2}\max\{|a|,|b|,|c|,|d|\}(\nrm{\tilde{v}^\Re}_2+\nrm{\tilde{v}^\Im}_2)\\&
		\leq 2\max\{|a|,|b|,|c|,|d|\}\nrm{v}_2,
	\end{align*}
	where the last step uses that $|\tilde{v}_j|=|v_j|$ for all $j$, and Cauchy-Schwarz. Using Eq.~\eqref{eq:BBCor} four times with different choices of~$w$ , and the union bound over 4 events, we have $\max\{|a|,|b|,|c|,|d|\}< \frac{\tau\eps }{2\sqrt{d}}$ except with probability $\leq 8\left(\frac{\delta}{8k}\right)^{\!\tau}$. Eq.~\eqref{eq:InnerConcentration} now follows by choosing $\tau=1$ and taking the union bound over all $k$ vectors $v\in V$. If $k\geq d$ then we can apply the statement for a set $V$ containing the computational basis, and then $\nrm{\tilde{\psi}-\psi}_\infty\leq \frac{\eps }{\sqrt{d}}$ implies $\nrm{\tilde{\psi}-\psi}_2\leq \eps$ by Cauchy-Schwarz. 

\medskip
 
    Now we prove the second part of \Cref{thm:UnbiasedTomoMeas}, where $V$ is an orthonormal set (the ``Finally'' part). Note that the previous paragraph already proved that for every $v\in\mathbb{C}^d$ and for all $\tau\geq 1$, 
    \begin{equation}
    \Pr[|\braket{v|\tilde{\psi}-\psi}|> \tau{\frac{\eps}{\sqrt{d}}}\nrm{v}_2]\leq 8\left(\frac{\delta}{8k}\right)^{\!\!\tau}. \label{eq:inprodconcentration}
    \end{equation}
    Defining $\mathring{\eps}:=\eps/\sqrt{d}$ and $\mathring{\psi}^{j}:=\braket{v^{(j)}|\tilde{\psi}-\psi}$, we bound
    \begin{align}
        \mathbb{E}\left[|\mathring{\psi}^{j}|\mathds{1}_{(\mathring{\eps},\infty)}(|\mathring{\psi}^{j}|)\right]\!
        &=\sum_{\ell=0}^\infty\mathbb{E}\left[|\mathring{\psi}^{j}|\mathds{1}_{(2^\ell\mathring{\eps},2^{\ell+1}\mathring{\eps}]}(|\mathring{\psi}^{j}|)\right]
        \leq \sum_{\ell=0}^\infty2^{\ell+1}\mathring{\eps}\Pr\left[|\mathring{\psi}^{j}|\in(2^\ell\mathring{\eps},2^{\ell+1}\mathring{\eps}]\right]\kern25mm\nonumber\\&
        \leq \sum_{\ell=0}^\infty2^{\ell+1}\mathring{\eps}\Pr\left[|\mathring{\psi}^{j}|> 2^\ell\mathring{\eps}\right]
        \leq \sum_{\ell=0}^\infty2^{\ell+1}\mathring{\eps}\,8\left(\frac{\delta}{8k}\right)^{\!\!2^\ell}\!\!\!\!\tag*{$\left(\!\Pr[|\mathring{\psi}^{j}|> \tau\mathring{\eps}]\leq 8\left(\frac{\delta}{8k}\right)^{\!\!\tau}\right)$\!}\nonumber\\&
        \leq \sum_{\ell=0}^\infty2^{\ell+1}\mathring{\eps}\,8\left(\frac{\delta}{8k}\right)^{\!\!\ell+1}\tag*{(since $\delta/k\leq 1$, and $2^\ell\geq \ell+1$)}\nonumber\\&
        = 2\frac{\delta\eps}{k\sqrt{d}} \sum_{\ell=0}^\infty\left(\frac{\delta}{4k}\right)^{\!\!\ell}
        \leq 2\frac{\delta\eps}{k\sqrt{d}} \sum_{\ell=0}^\infty\left(\frac{1}{4}\right)^{\!\!\ell}\tag*{(since $\delta/k\leq 1$)}\nonumber\\&     
        < \frac{3\delta}{k} \frac{\eps}{\sqrt{d}}.\label{eq:expBound}
    \end{align}
    We have already proven Eq.~\eqref{eq:InnerConcentration}, implying that $p=\Pr[A]\leq \delta$.
    Observe that 
    \[
    \mathbb{E}[\braket{v^{(j)}|\tilde{\psi}}|A\cup (X_\zeta=0)]=\braket{v^{(j)}|\psi}+\mathbb{E}[\mathring{\psi}^{j}|A\cup (X_\zeta=0)], 
    \]
    and hence
    \[
    \mathbb{E}[\braket{v^{(j)}|\tilde{\psi}-\psi}|A\cup (X_\zeta=0)] = \mathbb{E}[\mathring{\psi}^{j}|A\cup (X_\zeta=0)].
    \]
    Since $\Pr[A\cup (X_\zeta=0)]=1-\Pr[\bar{A}\cap (X_\zeta=1)]=\delta$, using \eqref{eq:expBound} we have 
    \begin{align}\label{eq:exDev}
        |\mathbb{E}[\mathring{\psi}^{j}|A\cup (X_\zeta=0)]|
        \leq \mathbb{E}[|\mathring{\psi}^{j}|\mid A\cup (X_\zeta=0)]
        = \frac{\mathbb{E}[|\mathring{\psi}^{j}|\mathds{1}_{A\cup (X_\zeta=0)}]}{\delta}
        \leq \frac{\mathring{\eps}\delta+\frac{3\delta}{k} \frac{\eps}{\sqrt{d}}}{\delta}
        =\left(1+\frac{3}{k}\right) \frac{\eps}{\sqrt{d}}.
    \end{align}
    Since we modified $\tilde{\psi}$ on an event of probability $\delta$ to get $\widecheck{\psi}$, the total variation distance between the distributions of the random variables $\widecheck{\psi}$ and $\tilde{\psi}$ is at most $\delta$. The boundedness of~$\widecheck{\psi}$ is by construction and the unbiasedness is inherited from that of $\tilde{\psi}$, as follows: abbreviating the event $\bar{A}\cap (X_\zeta=1)$ to $B$, we have
    \begin{align*}
    \Exp[\widecheck{\psi}]& =\Pr[B]\cdot \Exp[\widecheck{\psi}\mid B] + \Pr[\bar{B}]\cdot \Exp[\widecheck{\psi}\mid \bar{B}]\\
   & =\Pr[B]\cdot \Exp[\tilde{\psi}\mid B] + \Pr[\bar{B}]\cdot \Exp[(I-\Pi_k)\tilde{\psi}+ \sum\limits_{j\in[k]}\ket{v^{(j)}}\mathbb{E}[\braket{v^{(j)}|\tilde{\psi}}]\mid \bar{B}]\\
      & =\Pr[B]\cdot \Exp[\tilde{\psi}\mid B] + \Pr[\bar{B}]\cdot \Exp[(I-\Pi_k)\tilde{\psi}+ \sum\limits_{j\in[k]}\ket{v^{(j)}}\bra{v^{(j)}}\cdot\tilde{\psi}\mid \bar{B}]\\
            & =\Pr[B]\cdot \Exp[\tilde{\psi}\mid B] + \Pr[\bar{B}]\cdot \Exp[(I-\Pi_k)\tilde{\psi}+ \Pi_k\tilde{\psi}\mid \bar{B}]\\
   &  =\Pr[B]\cdot \Exp[\tilde{\psi}\mid B] + \Pr[\bar{B}]\cdot \Exp[\tilde{\psi}\mid \bar{B}]=\Exp[\tilde{\psi}]=\psi.
\end{align*}
    Finally, defining $\bar{\eps}:=\eps\sqrt{\frac{k}{d}}$  and $\mathring{\psi}:=\Pi_k(\tilde{\psi}-\psi)=\sum_{j\in[k]}\mathring{\psi}^{j}v^{(j)}$ we bound
    \begin{align}
    \norm{\mathbb{E}\left[\ketbra{\mathring{\psi}}{\mathring{\psi}}\mathds{1}_{(\bar{\eps},\infty)}(\nrm{\mathring{\psi}}_2)\right]}
        &\leq\mathbb{E}\left[\nrm{\mathring{\psi}}_2^2\mathds{1}_{(\bar{\eps},\infty)}(\nrm{\mathring{\psi}}_2)\right]
        =\sum_{\ell=0}^\infty\mathbb{E}\left[\nrm{\mathring{\psi}}_2^2\mathds{1}_{(2^\ell\bar{\eps},2^{\ell+1}\bar{\eps}]}(\nrm{\mathring{\psi}}_2)\right]\nonumber\\&
        \leq \sum_{\ell=0}^\infty4^{\ell+1}\bar{\eps}^2\Pr\left[\nrm{\mathring{\psi}}_2\in(2^\ell\bar{\eps},2^{\ell+1}\bar{\eps}]\right]
        \leq \sum_{\ell=0}^\infty4^{\ell+1}\bar{\eps}^2\Pr\left[\nrm{\mathring{\psi}}_2> 2^\ell\bar{\eps}\right]\nonumber\\&
        \leq \sum_{\ell=0}^\infty4^{\ell+1}\bar{\eps}^2 8k\left(\frac{\delta}{8k}\right)^{\!\!2^\ell}\tag*{$\left(\Pr[\nrm{\mathring{\psi}}_2> \tau\bar{\eps}]\leq 8k\left(\frac{\delta}{8k}\right)^{\!\tau}\right)$}\nonumber\\&
        \leq \sum_{\ell=0}^\infty4^{\ell+1}\bar{\eps}^28k\left(\frac{\delta}{8k}\right)^{\!\!\ell+1}\tag*{(since $\delta/k\leq 1$, and $2^\ell\geq \ell+1$)}\nonumber\\&
        = 4\delta\eps^2\frac{k}{d} \sum_{\ell=0}^\infty\left(\frac{\delta}{2k}\right)^{\!\!\ell}
        \leq 4\delta\eps^2\frac{k}{d} \sum_{\ell=0}^\infty\left(\frac{1}{2}\right)^{\!\!\ell}\tag*{(since $\delta/k\leq 1$)}\nonumber\\&
        = 8\delta\eps^2\frac{k}{d}.\label{eq:SqDev}
    \end{align}
    This then implies that 
    \begin{align}
        \norm{\mathrm{Cov}(\Pi_k\tilde{\psi})-\mathrm{Cov}(\Pi_k\widecheck{\psi})}
        &=\norm{\mathbb{E}\left[\ketbra{\mathring{\psi}}{\mathring{\psi}}-\Pi_k\ketbra{\widecheck{\psi}-\psi}{\widecheck{\psi}-\psi}\Pi_k\right]}\nonumber \\&
        =\norm{\mathbb{E}\left[\left(\ketbra{\mathring{\psi}}{\mathring{\psi}}-\Pi_k\ketbra{\widecheck{\psi}-\psi}{\widecheck{\psi}-\psi}\Pi_k\right)\mathds{1}_{A\cup (X_\zeta=0)}\right]}\nonumber \\&
        \leq \norm{\mathbb{E}\left[\ketbra{\mathring{\psi}}{\mathring{\psi}}\mathds{1}_{A\cup (X_\zeta=0)}\right]}
        +\norm{\mathbb{E}\left[\Pi_k\ketbra{\widecheck{\psi}-\psi}{\widecheck{\psi}-\psi}\Pi_k\cdot \mathds{1}_{A\cup (X_\zeta=0)}\right]}\label{eq:CovTriangle}, 
    \end{align}    
       where the second equality is because $\tilde{\psi}=\widecheck{\psi}$ on the complement of the event $A\cup (X_\zeta=0)$.
       Using the definition of $\widecheck{\psi}$, and the fact that $\Pi_k(I-\Pi_k)=0$, we can see that $\Pi_k\widecheck{\psi}$ conditioned on $A\cup (X_\zeta=0)$ is actually a fixed vector $\mathbb{E}[\Pi_k\tilde{\psi}\mid A\cup (X_\zeta=0)]$, not a random variable anymore.
       We now have
       \[
       \norm{\mathbb{E}\left[\Pi_k\ketbra{\widecheck{\psi}-\psi}{\widecheck{\psi}-\psi}\Pi_k\cdot \mathds{1}_{A\cup (X_\zeta=0)}\right]}=\Pr[A\cup (X_\zeta=0)]\cdot\norm{\mathbb{E}\left[\Pi_k\ket{\tilde{\psi}-\psi}\mid {A\cup (X_\zeta=0)}\right]}^2_2.
       \]
       Continuing with Eq.~\eqref{eq:CovTriangle}, we have
     \begin{align*}  
     \hspace*{-0.8em}\norm{\mathrm{Cov}(\Pi_k\tilde{\psi})-\mathrm{Cov}(\Pi_k\widecheck{\psi})}
     &\leq\norm{\mathbb{E}\left[\ketbra{\mathring{\psi}}{\mathring{\psi}}\mathds{1}_{A\cup (X_\zeta=0)}\right]}+\delta \nrm{\mathbb{E}[\mathring{\psi}\mid A\cup (X_\zeta=0)]}_2^2\\&
        \leq\mathbb{E}\left[\nrm{\mathring{\psi}}_2^2\mathds{1}_{A\cup (X_\zeta=0)}\right]
        +16\delta\eps^2\frac{k}{d}\tag{by \eqref{eq:exDev} and $(1+3/k\leq4)$}\\&
        =  \mathbb{E}\left[\nrm{\mathring{\psi}}_2^2\mathds{1}_{A\cup (X_\zeta=0)}\Big(\mathds{1}_{[0,\bar{\epsilon}]}(\nrm{\mathring{\psi}}_2)+\mathds{1}_{(\bar{\eps},\infty)}(\nrm{\mathring{\psi}}_2)\Big)\right]
        +16\delta\eps^2\frac{k}{d}   \\&
        \leq \delta\bar{\eps}^2+8\delta\eps^2\frac{k}{d}+16\delta\eps^2\frac{k}{d}=25\delta\eps^2\frac{k}{d}.\tag*{(by \eqref{eq:SqDev} and $\Pr[A\cup (X_\zeta=0)]=\delta$)}
    \end{align*}  
We obtain 
\[
\norm{\mathrm{Cov}(\Pi_k\widecheck{\psi})} \leq  \norm{\mathrm{Cov}(\Pi_k\tilde{\psi})-\mathrm{Cov}(\Pi_k\widecheck{\psi})}+\norm{\mathrm{Cov}(\Pi_k\tilde{\psi})} \leq 25\delta\eps^2\frac{k}{d}+ \frac{\eps^2}{4d\ln\left(\frac{8k}{\delta}\right)}
\]
because $\norm{\mathrm{Cov}(\Pi_k\tilde{\psi})}\leq \norm{\mathrm{Cov}(\tilde{\psi})}=\frac{1}{n}\norm{\mathrm{Cov}(\psi')+\mathrm{Cov}(\psi'')}$, and the matrix inside the latter norm can be upper bounded by $2(\frac{1}{2}+\frac{1}{2\eta})I$ using Eq.~\eqref{eq:covPsis}.
\end{proof}

If we have $n$ conditional samples $\ket{\varphi}:=(\ket{0}\left(\ket{\bar{0}}\ket{\psi}+\ket{\bar{0}^\perp}\right)+\ket{1}\ket{\bar{0}}\ket{0})/\sqrt{2}$, then we can first use \Cref{cor:AbsTomoMeas} to produce (with success probability $\geq 1-\frac{\delta}{2}$) a $\frac{1}{\sqrt{d}}$-$\ell_\infty$ approximation $\psi'$ of the vector $|\psi|$ of the magnitudes of entries, which has $\nrm{\psi'}_2\leq 1$. Setting $\bar{\psi}_i:=\frac{|\psi'_i|+\frac{1}{\sqrt{d}}}{2}$, and building a KP-tree for $\bar{\psi}$ to be able to efficiently prepare a state that is coordinate-wise $\frac{1}{4\sqrt{d}}$-close to $\ket{\bar{\psi}}/\nrm{\bar{\psi}}_2$, we can transform the conditional copies $\ket{\varphi}$ to the form required by \Cref{thm:UnbiasedTomoMeas} using $\bigO{n\log^2(d)}$ classical operations, ordinary quantum gates and QRAM read-out calls. Since $\eta=\Omega(1)$, we get a time-efficient unbiased tomography algorithm using $\bigO{\frac{d}{\eps^2}\ln\left(\frac{2d}{\delta}\right)}$ conditional samples.

\subsection{Improved pure-state tomography using state-preparation oracles}\label{sec:tomoCondition}

If we have a state-preparation oracle available, rather than copies of the state, then 
the precision-dependence can be quadratically improved using iterative refinement \cite{gilyen2023IterativeRefineTomoLinEq}:

\begin{corollary}\label{cor:EffPrepTomo}
Let $\psi\in \mathbb{C}^d$ such that $\nrm{\psi}\leq 1$, and $\eps,\delta\in(0,\frac12]$. Suppose we have access to a controlled unitary $U$ (and its inverse) that prepares the state $U\ket{0^{\otimes a'}}=\ket{{0}^{\otimes a}}\ket{\psi}+\ket{{0^{\otimes a}}^\perp}$, where $a',a=\bigO{\poly\log(d/(\delta \epsilon))}$. 
There is a quantum algorithm that outputs a random vector $\tilde{\psi}\in \mathbb{C}^d$ such that, for every set $V=\{v^{(1)},v^{(2)},\ldots,v^{(k)}\}$ of unit vectors, with probability at least $1-\delta$, $|\braket{\psi-\tilde{\psi}|v}|\leq \epsilon/\sqrt{d}$ for all $v\in V$, using $\mathcal{O}(\frac{d}{\epsilon}\poly\log(kd/(\eps\delta))$ applications of controlled $U$, $U^\dagger$, two-qubit quantum gates, read-outs of a QRAM of size $\mathcal{O}(d\cdot\poly\log(kd/(\eps\delta)))$, and classical computation.

If $k=d$ and $V$ is an orthonormal set, then $\tilde{\psi}$ is $\delta$-close in total variation distance to an ``almost ideal'' discrete random variable $\widecheck{\psi}\in \mathbb{C}^d$ such that $\mathbb{E}[\widecheck{\psi}] = \psi$, $\Pr[\forall v\in V \colon |\braket{\widecheck{\psi}\!-\!\psi|v}|\leq \frac{\eps}{\sqrt{d}} ]=1$, and $\nrm{\mathrm{Cov}(\widecheck{\psi})}\leq {\frac{\eps^2}{d}}$.
\end{corollary}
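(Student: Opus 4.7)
The plan is to apply iterative refinement (following \cite{gilyen2023IterativeRefineTomoLinEq}) on top of the $\mathcal{O}(d/\eps^2)$-sample tomography procedure implicit in \Cref{thm:UnbiasedTomoMeas}, quadratically trading sample count for amplitude-amplification calls to $U$. Let $T := \lceil \log_2(1/\eps) \rceil$ and $\eps_i := 2^{-i}$. I maintain, after round $i$, an estimate $\tilde{\psi}_i$ (stored in a KP-tree for efficient state-preparation via \Cref{thm:StatePrepare}) satisfying $|\braket{v | \tilde{\psi}_i - \psi}| \leq \eps_i/\sqrt{d}$ for all $v \in V$, starting from $\tilde{\psi}_0 := 0$.

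In round $i \geq 1$, I combine $U$ with the KP-tree-based preparation of $\tilde{\psi}_{i-1}$ via a standard LCU step to produce a block-encoding of the residual $\psi - \tilde{\psi}_{i-1}$. Fixed-point amplitude amplification (\Cref{thm:Fixed_AA}) prepares the normalized residual state $\ket{r_i} \propto \ket{\psi}-\ket{\tilde{\psi}_{i-1}}$ using $\bigOt{1/\eps_{i-1}}$ calls to $U$, and amplitude estimation (\Cref{thm:amplitude_estimation}) estimates the norm $r_i := \|\ket{\psi}-\ket{\tilde{\psi}_{i-1}}\|$ to sufficient additive precision within the same budget. I then run the conditional-sample tomography procedure underlying \Cref{thm:UnbiasedTomoMeas} on $\mathcal{O}(d\log(Tk/\delta))$ amplified copies of $\ket{r_i}$ with $\eps$-parameter $\Theta(1)$ and failure probability $\delta/T$, obtaining a random vector $\tilde{r}_i$ whose inner products with every $v\in V$ match those of $(\psi - \tilde{\psi}_{i-1})/r_i$ up to $\mathcal{O}(1/\sqrt{d})$ with probability $1-\delta/T$. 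Setting $\tilde{\psi}_i := \tilde{\psi}_{i-1} + \hat{r}_i\tilde{r}_i$ and using the induction hypothesis on $\tilde{\psi}_{i-1}$ to control $|\braket{v|\tilde{r}_i}|$, a short triangle-inequality argument halves the directional error to $\eps_i/\sqrt{d}$. I then rebuild the KP-tree for the next round.

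Each round costs $\bigOt{d/\eps_{i-1}}$ applications of $U$ plus $\bigOt{d}$ classical/QRAM overhead for KP-tree and tomography processing, so summing the resulting geometric series over the $T = \mathcal{O}(\log(1/\eps))$ rounds yields the stated $\mathcal{O}(d/\eps \cdot \polylog(kd/(\eps\delta)))$ query and time bound. A union bound over rounds gives the overall success probability $1-\delta$ and establishes the first (high-probability) part of the corollary.

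The hardest point is the ``almost ideal'' estimator $\widecheck{\psi}$ in the $k=d$, orthonormal-$V$ case, which demands \emph{exact} unbiasedness together with a deterministic directional bound and a variance bound. I would compose the almost-ideal variants $\widecheck{r}_i$ from \Cref{thm:UnbiasedTomoMeas} across rounds, extending the bad-event truncation to also cover failures of fixed-point amplification and the quantization error of amplitude estimation, and I would keep $\widecheck{r}_i$ independent of $\hat{r}_i$ within each round by running tomography and amplitude estimation on separate copies of the amplified preparation. Because each round is then conditionally unbiased given its input, a short induction shows that the accumulated error $\widecheck{\psi}_T - \psi$ collapses to the single-round deviation $\hat{r}_T\widecheck{r}_T - (\psi - \widecheck{\psi}_{T-1})$ (all earlier errors are precisely absorbed into subsequent updates). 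Consequently, the covariance bound $\nrm{\mathrm{Cov}(\widecheck{\psi})}\leq \eps^2/d$ follows from a single last-round calculation combining the $\mathcal{O}(1/d)$ covariance guarantee of \Cref{thm:UnbiasedTomoMeas} with the rescaling $|\hat{r}_T|\leq \eps_{T-1} = \Theta(\eps)$; the deterministic directional bound follows from the bounded range of $\widecheck{r}_T$ times the same rescaling; and the $\delta$-total-variation bound between $\tilde{\psi}$ and $\widecheck{\psi}$ follows by iterating \Cref{lem:condTVBound} along the chain of $T$ rounds.
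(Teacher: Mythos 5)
Your proposal takes a genuinely different route from the paper. The paper treats the iterative-refinement tomography of \cite{gilyen2023IterativeRefineTomoLinEq} as a single black-box call that already produces a coarse estimate $\psi'$ with $\ell_2$-error at most $\eps$, then performs exactly one ``amplify residual, tomograph'' step on top of it. Crucially, the paper amplifies via the \emph{linearized} amplitude amplification of \cite[Theorem~30]{gilyen2018QSingValTransf}, which is scale-preserving: the block-encoded state $\phi$ it produces is (an approximation of) the \emph{subnormalized} vector $(\psi-\psi')/(2\eps)$ itself, so the rescaling factor $2\eps$ is a known deterministic constant coming from the black-box guarantee. Because the scale is never estimated, exact unbiasedness of $\widecheck{\psi}=\psi'+2\eps\widecheck{\phi}$ reduces cleanly to exact unbiasedness of $\widecheck{\phi}$, which \Cref{thm:UnbiasedTomoMeas} already provides. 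You instead re-derive iterative refinement from scratch across $T=\log_2(1/\eps)$ rounds, running the unbiased tomography in every round.

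The high-probability part of your argument seems sound: the geometric series of per-round costs sums to $\bigOt{d/\eps}$, and the full-replacement structure of the update does mean the aggregate error after round $i$ collapses (in expectation) to the round-$i$ fresh error, as you assert. But there is a concrete gap in the ``almost ideal'' part. You use \emph{fixed-point} amplitude amplification (which outputs the \emph{normalized} residual state) together with amplitude estimation to recover the scale $\hat{r}_i \approx \|\psi - \widecheck{\psi}_{i-1}\|$. For your martingale collapse and conditional-covariance calculation to hold, you need $\mathbb{E}[\hat{r}_T\widecheck{r}_T \mid \text{history}] = \psi - \widecheck{\psi}_{T-1}$ \emph{exactly}; since you run the tomography and the amplitude estimation independently, this requires $\mathbb{E}[\hat{r}_T\mid\text{history}]$ to equal the residual norm exactly. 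Amplitude estimation (\Cref{thm:amplitude_estimation}) is not unbiased — its output has a systematic quantization offset that depends on the target value — and this is not a tail event that can be pushed into the truncation set the way FPAA failures can. So as written, $\mathbb{E}[\widecheck{\psi}]\ne\psi$ and the error-collapse argument breaks. The fix is precisely the paper's trick: drop the amplitude estimation entirely, use a \emph{fixed} deterministic rescaling $2\eps_{i-1}$ (justified by the inductive norm bound on the residual), and use linearized amplitude amplification so that the amplitude carries the norm information and no separate norm estimate is needed. With that replacement your per-round argument would go through, and your approach would become a self-contained alternative to using the external iterative-refinement result as a black box; as written, it does not deliver exact unbiasedness.

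One smaller remark: the collapse $\widecheck{\psi}_T - \psi = W_T$ (last-round fresh error only) is correct, but this is stronger than what a generic martingale argument gives. It relies on the update being a \emph{full replacement} in conditional expectation: $\mathbb{E}[\widecheck{\psi}_i \mid \widecheck{\psi}_{i-1}]=\psi$, not merely $\mathbb{E}[\widecheck{\psi}_i - \widecheck{\psi}_{i-1} \mid \text{history}]=0$. It would be worth spelling out that the recursion $\widecheck{\psi}_i - \psi = W_i$ (with earlier errors absorbed into the conditional mean of the new update) is the mechanism, rather than appealing loosely to a martingale covariance sum — a naive sum over rounds of $\Theta(4^{-(i-1)}/d)$ contributions is dominated by the first round, which would only give $\bigO{1/d}$, not $\bigO{\eps^2/d}$.
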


\begin{proof}
    The idea is to use the tomography algorithm of \cite{gilyen2023IterativeRefineTomoLinEq} to get an estimator $\psi'$ with $\ell_2$-error~$\eps$, with success probability $\geq 1- \frac{\delta}{4}$, using $\bigO{\frac{d}{\epsilon}\log(d/\delta)}$ queries in time $\mathcal{O}(\frac{d}{\epsilon}\log(1/\delta)\cdot\poly\log(d/\epsilon))$. In case of failure we set $\widecheck{\psi}=\psi$.

    We first build the KP-tree for $\psi'$ in QRAM. We can now prepare a state $\ket{0}\ket{\psi'}+\ket{1}\ket{.}$, and thus also the state $\ket{00}\ket{(\psi-\psi')/2}+\ket{1}\ket{..}$, and using linearized amplitude amplification \cite[Theorem 30]{gilyen2018QSingValTransf}, we can also prepare a subnormalized state $\phi$ such that $\nrm{\phi-(\psi-\psi')/(2\eps)}\leq\frac{\delta}{16\sqrt{d}}$ with $\bigO{\log(d/\delta)/\eps}$ (controlled) uses of $U$ and $U^\dagger$.

    As discussed at the start of this subsection, by \Cref{cor:AbsTomoMeas} using $d\ln(\frac{6d}{\delta})$ copies of $\ket{\phi}$ we can output a vector $\vec{\mu}\in[0,1]^d$ such that with probability at least $1-\frac{\delta}{4}$, $|\vec{\mu}_j-|\phi_j||\leq \frac{1}{\sqrt{d}}$ for every $j\in[d]$. (In case of failure we once again set $\widecheck{\psi}=\psi$.) Upon success, the vector $\vec{\mu'}:=\frac{1}{2}\vec{\mu}+\frac{1}{2\sqrt{d}}\mathbf{1}_d$ where $\mathbf{1}_d$ is the $d$-dimensional all-$1$ vector, satisfies $|\vec{\mu}'_j|^2\geq \frac{1}{4}\max\{|\phi_j|^2,\frac{1}{d}\}$ for every $j\in[d]$. Also, by using $\mathcal{\tilde{O}}(d)$ time and QRAM bits, we can construct a KP-tree KP$_{\vec{\mu'}}$ for $\vec{\mu'}$. Thus, by using one query to KP$_{\vec{\mu'}}$ and $\mathcal{\tilde{O}}(1)$ time, we can prepare a state $\ket{\bar{0}}\ket{\vec{\mu'}}+\ket{\bar{0}'^\perp}$, where $\ket{\vec{\mu'}}=\sum\limits_{j\in[d]} \vec{\mu'}_j\ket{j}$. 

    By \Cref{thm:UnbiasedTomoMeas} we can output an unbiased estimator $\tilde{\phi}$ of $\phi$ such that $\Pr[\forall v\in V \colon |\braket{\tilde{\phi}-\phi|v}|\leq \frac{1}{32\sqrt{d}} ]\geq 1 - \frac{\delta}{4}$. Defining $\tilde{\psi}:=\psi'+2\eps\tilde{\phi}$ we then have $\Pr[\forall v\in V \colon |\braket{\tilde{\psi} - \psi|v}|\leq \frac{\eps}{8\sqrt{d}} ]\geq 1 - \frac{\delta}{4}$ since $\nrm{\phi-(\psi-\psi')/(2\eps)}\leq\frac{\delta}{16\sqrt{d}}$. If $V$ is an orthonormal basis, then furthermore $\tilde{\phi}$ is $\delta/4$-close to an ``ideal'' (though not error-free) unbiased estimator $\phi'$ of $\phi$ such that $\Pr[\forall v\in V \colon |\braket{\phi'-\phi|v}|\leq \frac{1}{8\sqrt{d}} ]=1$. 
    Since $\nrm{\phi-(\psi-\psi')/(2\eps)}\leq\frac{\delta}{16\sqrt{d}}$ there is another discrete-valued estimator $\widecheck{\phi}$ within total variation distance $\frac{\delta}{4}$ to $\phi'$ that satisfies $\mathbb{E}[\widecheck{\phi}]=(\psi-\psi')/(2\eps)$ and $\Pr[\nrm{\widecheck{\phi}-\phi'} > \frac{1}{4\sqrt{d}} ]=0$ in turn implying $\Pr[\forall v\in V  \colon |\braket{\widecheck{\phi}-(\psi-\psi')/(2\eps)|v}|\leq \frac{1}{2\sqrt{d}} ]=1$. We then set $\widecheck{\psi}:=\psi'+2\eps\widecheck{\phi}$ (in case no failure happened). We have $\nrm{\mathrm{Cov}(\widecheck{\phi})}\leq 2\nrm{\mathrm{Cov}(\widecheck{\phi}-\phi')}+2\nrm{\mathrm{Cov}(\phi')}$, because for all vectors $a,b$, the matrix $2aa^\dagger+2bb^\dagger-(a+b)(a+b)^\dagger=(a-b)(a-b)^\dagger$ is psd. Therefore the claimed properties of $\tilde{\psi},\widecheck{\psi}$ follow from those of $\tilde{\phi},\phi'$ as guaranteed by \Cref{thm:UnbiasedTomoMeas}, assuming without loss of generality that $\delta\leq \frac{1}{k}$.
\end{proof}

\section{Quantum noisy power method}\label{sec:qnoisypowermethod}

In this section we introduce quantum algorithms for approximating the top eigenvector or top-$q$ eigenvectors. For simplicity, we assume the input matrix $A$ is real and Hermitian, and has operator norm $\norm{A}\leq 1$ (which implies all entries are in $[-1,1]$). Since $A$ is Hermitian, its eigenvalues $\lambda_1,\ldots,\lambda_d$ are real, and we assume them to be ordered in descending order according to their absolute value.\footnote{Sometimes the eigenvalues are ordered $1\geq\lambda_1\geq\cdots\geq\lambda_d\geq-1$ according to their value (instead of their absolute value). To find the $v_1$ associated with $\lambda_1$ in this situation, one can just let $A'=A/3+2I/3$. Then the eigenvectors of $A$ and $A'$ are the same, and the eigenvalues of $A'$ now are all between $1/3$ and $1$ (and hence one can use \Cref{Alg:NPM_top} to find $v_1$). This trick can also be used to find $v_d$ by simply considering $A''=-A/3+2I/3$. Note that the gap between the top and the second eigenvalues of $A'$ might be different from the gap between the top and the second eigenvalues of $A''$.\label{footnote:gaps}} 
Since the entries of $A$ are real, there is always an associated orthonormal basis of \emph{real} eigenvectors $v_1,\ldots,v_d$. For simplicity and without loss of generality, when we mention the $q$th eigenvector of $A$, we mean $v_q$ in this basis. The goal of the algorithms in this section is to find a unit vector $w$ which has large overlap with $v_1$ in the sense that $|\inProd{w}{v_1}|\geq 1-\eps^2/2$. Note that this is equivalent to finding a unit vector $w$ satisfying that either $\|w-v_1\|_2$ or $\|w+v_1\|_2$ is small (at most $\eps$), and hence we say $w$ approximates $v_1$ with small $\ell_2$-error. 

\subsection{Classical noisy power method for approximating the top eigenvector}

For the sake of completeness and pedagogy, we start with the noisy power method of Hardt and Price~\cite{HP15NoisyPowerMethod}, given in \Cref{Alg:NPM_top} below. Like the usual power method, it works by starting with a random vector and applying $A$ some $K$ times to it; the resulting vector will converge to the top eigenvector after relatively small $K$, assuming some gap between the first and second eigenvalues of~$A$. We include a short proof explaining how the noisy power method can approximate the top eigenvector of~$A$ even if there is a small noise vector $G_k$ in the $k$th matrix-vector computation that does not have too much overlap with $v_1$.
\begin{algorithm}[hbt]
\SetKwData{Left}{left}\SetKwData{This}{this}\SetKwData{Up}{up}
\SetKwFunction{Union}{Union}\SetKwFunction{FindCompress}{FindCompress}
\SetKwInOut{Input}{input}\SetKwInOut{Output}{output}

\Input{a Hermitian matrix $A\in[-1,1]^{d \times d}$ with operator norm $\norm{A}\leq 1$;}
Let $w_0$ be a unit vector randomly chosen from $S^{d-1}$;\\
\SetAlgoLined

  \For{$k\leftarrow 0$ \KwTo $K-1$}{
   $y_k= Aw_k+G_k$\;
   $w_{k+1}=y_k/\|y_k\|_2$\;
  }
  \Output{$w_K$;}
 \caption{Noisy power method (NPM) for approximating the top eigenvector of~$A$}
 \label{Alg:NPM_top}
\end{algorithm}

\begin{theorem}\label{thm:npm}
Let $A$ be a $d\times d$ Hermitian matrix with top eigenvector $v_1$, first and second eigenvalues $\lambda_1$ and $\lambda_2$, and  $\gamma=|\lambda_1|-|\lambda_2|$. 
Let $\eps\in(0,0.5)$ and $K=\frac{10|\lambda_1|}{\gamma}\log(20d/\eps)$ (for larger $K$ the theorem still holds).
Suppose $|\langle G_k,v_1\rangle |\leq \gamma/(50\sqrt{d})$ and $\|G_k\|_2\leq \eps\gamma/5$ for all $k\in [K]-1$. Then the unit vector $w_K$ in \Cref{Alg:NPM_top} satisfies $|\inProd{w_K}{v_1}|\geq 1-\eps^2/2$ with probability $\geq 0.9$.
\end{theorem}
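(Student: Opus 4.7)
The plan is a standard noisy-power-method argument specialized to the one-dimensional target subspace, tracking the angle $\theta_k\in[0,\pi/2]$ between $w_k$ and the line through $v_1$; equivalently, the quantity $\tau_k := \tan\theta_k$, where $\cos\theta_k = |\langle w_k,v_1\rangle|$ and $\sin\theta_k = \sqrt{1-\cos^2\theta_k}$. The first step is to handle the randomness in $w_0$: by \Cref{cor:innerproductsphere} with $c=10$, with probability $\geq 9/10$ we have $|\alpha_0| := |\langle w_0,v_1\rangle| \geq 1/(10\sqrt{d})$, i.e.\ $\tau_0 \leq 10\sqrt{d}$; the rest of the argument is deterministic on this event. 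The target inequality $|\langle w_K,v_1\rangle|\geq 1-\eps^2/2$ follows from $\tau_K\leq \eps$, because $\cos\theta_K = 1/\sqrt{1+\tau_K^2} \geq 1-\eps^2/2$ whenever $\tau_K \leq \eps \leq 1/2$ (verified by checking $(1-\eps^2/2)^2(1+\eps^2) \leq 1$ for $\eps \leq \sqrt{3}$).

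Next I derive the key per-step recursion. Writing $w_k = \alpha_k v_1 + \beta_k u_k$ with $u_k\in v_1^\perp$ a unit vector, the $A$-invariance of $v_1^\perp$ gives $\langle y_k,v_1\rangle = \lambda_1\alpha_k + \langle G_k,v_1\rangle$ and $\|(I-v_1 v_1^T)y_k\|_2 \leq |\lambda_2|\beta_k + \|G_k\|_2$, hence
\[
\tau_{k+1} \leq \frac{|\lambda_2|\sin\theta_k + \|G_k\|_2}{|\lambda_1|\cos\theta_k - |\langle G_k,v_1\rangle|}.
\]
Dividing by $\cos\theta_k$, using $1/\cos\theta_k = \sqrt{1+\tau_k^2} \leq 1+\tau_k$, and invoking the inductive hypothesis $\tau_k \leq 10\sqrt{d}$, the noise hypotheses yield $|\langle G_k,v_1\rangle|/\cos\theta_k \leq (\gamma/(50\sqrt{d}))(1+10\sqrt{d}) \leq 11\gamma/50 < \gamma/4$ and $\|G_k\|_2/\cos\theta_k \leq (\eps\gamma/5)(1+\tau_k)$. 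Substituting and using $|\lambda_1|-\gamma/4 = |\lambda_2|+3\gamma/4$ turns the recursion into the affine bound
\[
\tau_{k+1} \leq \frac{(|\lambda_2|+\eps\gamma/5)\,\tau_k + \eps\gamma/5}{|\lambda_2|+3\gamma/4} \;=:\; a\,\tau_k + b,
\]
where $1-a = \gamma(15-4\eps)/\bigl(20(|\lambda_2|+3\gamma/4)\bigr) \geq 13\gamma/(20|\lambda_1|)$ (using $\eps\leq 1/2$ and $|\lambda_2|+3\gamma/4 \leq |\lambda_1|$) and the fixed point is $\tau^* = b/(1-a) = 4\eps/(15-4\eps) \leq 4\eps/13 < \eps/3$.

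Since $\tau^* \leq 10\sqrt{d}$, the map $\tau_k \mapsto a\tau_k + b$ preserves the bound $\tau_{k+1} \leq 10\sqrt{d}$, closing the induction. Unrolling,
\[
\tau_K \leq \tau^* + a^K(\tau_0-\tau^*) \leq \eps/3 + \exp\bigl(-13\gamma K/(20|\lambda_1|)\bigr)\cdot 10\sqrt{d},
\]
and substituting $K = (10|\lambda_1|/\gamma)\log(20d/\eps)$ makes the exponential term at most $10\sqrt{d}\cdot(20d/\eps)^{-13/2} \leq \eps/2$ for every $d\geq 1$ and $\eps\in(0,1/2)$. Hence $\tau_K \leq 5\eps/6 < \eps$, which by the first paragraph gives $|\langle w_K,v_1\rangle| \geq 1-\eps^2/2$; combined with the $\geq 9/10$ probability of a good initialization, this proves the claim.

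The delicate point, and what I expect to be the main obstacle, is keeping the noise-floor fixed point $\tau^*$ independent of $d$. A naive bound that uses $\cos\theta_k \geq 1/(10\sqrt{d})$ uniformly in the noise term produces $\|G_k\|_2/\cos\theta_k \lesssim \eps\gamma\sqrt{d}$ and hence $\tau^* = \Theta(\eps\sqrt{d})$, which is far too large. The key move is writing $1/\cos\theta_k \leq 1+\tau_k$ so that $\|G_k\|_2/\cos\theta_k$ splits into an additive $\eps\gamma/5$ plus a multiplicative $\eps\gamma\tau_k/5$; the multiplicative piece is absorbed into the contraction coefficient $a$, leaving a $d$-independent steady state. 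The tightness of the hypothesis $|\langle G_k,v_1\rangle| \leq \gamma/(50\sqrt{d})$ plays an analogous role, ensuring that the denominator $|\lambda_1|\cos\theta_k - |\langle G_k,v_1\rangle|$ remains $\Theta(|\lambda_1|\cos\theta_k)$ even when $\tau_k$ is as large as $10\sqrt{d}$, which is precisely what lets the induction be closed.
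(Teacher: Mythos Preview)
Your proof is correct and follows essentially the same approach as the paper: both track $\tan\theta_k$, use the identity $1/\cos\theta_k\leq 1+\tan\theta_k$ to split the noise contribution into an additive piece and a multiplicative piece absorbed into the contraction factor, and then iterate. Your presentation via the affine recursion $\tau_{k+1}\leq a\tau_k+b$ with an explicit fixed point is somewhat cleaner than the paper's max-form bound, and you explicitly close the induction on $\tau_k\leq 10\sqrt{d}$ (which the paper leaves implicit), but the substance is the same.
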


\begin{proof}
Let $w_0=\sum_{i\in [d]}\alpha^{(0)}_iv_i$. 
Because $w_0$ is chosen uniformly at random over the unit sphere, by \Cref{cor:innerproductsphere} (without loss of generality assuming $d\geq 4$), with probability $\geq 0.9$, we have $|\alpha^{(0)}_1|\geq 1/(10\sqrt{d})$ and hence we assume $|\alpha^{(0)}_1|\geq 1/(10\sqrt{d})$ below for simplicity. Suppose $w_k= \sum_{i\in [d]}\alpha^{(k)}_iv_i$. We define the tangent angle of $w_k$ as $\tan{\theta_k}=\frac{\sin\theta_k}{\cos\theta_k}= \frac{\sqrt{\sum_{i=2}^d (\alpha_i^{(k)})^2}}{|\alpha_1^{(k)}|}$, and hence $\tan\theta_0\leq 10\sqrt{d}$.  It suffices to show that $\tan\theta_K\leq \eps/2$, because that implies $|\inProd{w_K}{v_1}|=\cos\theta_K\geq 1-\eps^2/4$.

Since $w_k=\sum_{i\in [d]}\alpha^{(k)}_iv_i$, we have $Aw_k=\sum_{i\in [d]}\alpha^{(k)}_i\lambda_iv_i$. Also because $G_k$ satisfies $|\inProd{G_k}{v_1}| \leq \gamma/(50\sqrt{d})$ and $\|G_k\|_2\leq \eps\gamma/50$, we can give an upper bound for $\tan\theta_{k+1}$ as follows:
\begin{align*}
    \tan\theta_{k+1} &\leq \frac{\sqrt{\sum_{i=2}^d (\lambda_i^2\alpha_i^{(k)})^2}+\|G_k\|_2}{|\lambda_1|\cdot|\alpha_1^{(k)}|-|\langle G_k,v_1\rangle|} \leq \frac{|\lambda_2|\sqrt{\sum_{i=2}^d (\alpha_i^{(k)})^2}+\eps\gamma/5}{|\lambda_1|\cdot|\alpha_1^{(k)}|-\gamma/(50\sqrt{d})}\leq \frac{1}{\cos\theta_k}\cdot \frac{|\lambda_1|\sin\theta_k+\eps\gamma/5}{|\lambda_1|-\gamma/5}\\
    &= \frac{1}{\cos\theta_k}\cdot \frac{|\lambda_1|\sin\theta_k+\eps\gamma/5}{|\lambda_2|+4\gamma/5}=  \frac{\sin\theta_k}{\cos\theta_k}\cdot \frac{|\lambda_1|}{|\lambda_2|+4\gamma/5}+\frac{1}{\cos\theta_k}\cdot \frac{\eps\gamma/5}{|\lambda_2|+4\gamma/5}\\
    &\leq \tan\theta_k\cdot \frac{|\lambda_1|}{|\lambda_2|+4\gamma/5}+(1+\tan\theta_k)\cdot \frac{\eps\gamma/5}{|\lambda_2|+4\gamma/5}\\
    &=\big(1-\frac{\gamma/5}{|\lambda_2|+\gamma/5}\big)\frac{|\lambda_2|+\eps\gamma/5}{|\lambda_2|+3\gamma/5}\tan\theta_k+\frac{\gamma/5}{|\lambda_2|+4\gamma/5}\eps\leq \max\{\eps, \frac{|\lambda_2|+\eps\gamma/5}{|\lambda_2|+3\gamma/5}\tan\theta_k\}.
\end{align*}
Note that $\frac{|\lambda_2|+\eps\gamma/5}{|\lambda_2|+3\gamma/5}\leq \max \{\eps, \frac{|\lambda_2|}{|\lambda_2|+2\gamma/5}\}$ because the left-hand side is a weighted mean of the components on the right ($\frac{|\lambda_2|+\eps\gamma/5}{|\lambda_2|+3\gamma/5}=\eps\cdot\frac{\gamma/5}{|\lambda_2|+3\gamma/5}+\frac{|\lambda_2|}{|\lambda_2|+2\gamma/5}\cdot \frac{|\lambda_2|+2\gamma/5}{|\lambda_2|+3\gamma/5}$).
Also, $\frac{|\lambda_2|}{|\lambda_2|+2\gamma/5}\leq (\frac{|\lambda_2|}{|\lambda_2|+5\gamma/5})^{2/5}=(\frac{|\lambda_2|}{|\lambda_1|})^{2/5}$, so we have $\tan\theta_{k+1}\leq \max\{\eps,\tan\theta_k\cdot\max\{\eps,(|\frac{\lambda_2}{\lambda_1}|)^{2/5}\}\}$. By letting $K=\frac{10|\lambda_1|}{\gamma}\log(20d/\eps)$, we obtain $\tan\theta_K\leq \eps/2$, which concludes the proof.
\hfill
\end{proof}

\subsection{Quantum Gaussian phase estimator}\label{sec:QGAE}

Before we explain our quantum noisy power method, we introduce another tool which we call the ``quantum Gaussian phase estimator''. Its aim is to do phase estimation with (approximately) Gaussian error on the estimate. The high-level idea of this estimator is to replace the initial uniform superposition in the algorithm of phase estimation~\cite{kitaev1995QMeasAndAbelianStabilizer,cleve1997QAlgsRevisited} by a discrete Gaussian quantum state, with standard deviation $s$; then the distribution of the error $\tilde{a}-a$ between the amplitude~$a$ and the estimator $\tilde{a}$ produced by the quantum Gaussian amplitude estimator, is also a discrete Gaussian distribution, now with standard deviation $1/s$.\footnote{\rnote{added in response to Review 1}There have been other variations of quantum phase estimation with non-uniform initial state to improve the statistics of the outcome, for instance in the context of Hamiltonian simulation~\cite{Childs2009CDQRW}, but to the best of our knowledge ours is the first application with a discrete \emph{Gaussian} initial state.} Since the latter distribution is sub-Gaussian with parameter $1/s$, with probability at least $1-\delta$ the output is at most $\sqrt{\log(1/\delta)}/s$ away from $a$. Recall that $\propDens_s$ is the pdf for the Gaussian with standard deviation $s$, defined in \Cref{sec:subGaussian}.

\begin{theorem}\label{thm:GAE_a}
   Let $\delta\in (0,0.1]$, $s\geq20\sqrt{2\log(1/\delta)}$, $ a\in [0,1]$, $N=200\cdot \lceil s\sqrt{\log(100/\delta)}\rceil$, $U$ be a unitary, $\ket{\psi}$ be an eigenvector of $U$ such that $U\ket{\psi}=\exp(\pi i a/4)\ket{\psi}$.  There exists a quantum algorithm that for every such $U$ and $a$, given one copy of~$\ket{\psi}$, outputs an estimator $\tilde{a}$ satisfying 
   that $a-\tilde{a}$ distributes $\delta$-close to $\mathcal{D}_{\frac{8}{N}\cdot \mathbb{Z}-\frac{8}{N}\nu,\frac{8}{\sqrt{2}s}}$ for some $\nu \in [0,1)$, using $\mathcal{O}(s\cdot\polylog(s/ \delta))$ applications of controlled-$U$, controlled-$U^{-1}$ and $\mathcal{\tilde{O}}(s\cdot\polylog(s/ \delta))$ time.
\end{theorem}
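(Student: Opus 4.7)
The plan is to run phase estimation where the initial uniform-superposition ancilla is replaced by a truncated discrete-Gaussian register of parameter $s$; since $\widehat{\propDens_s}=s\cdot\propDens_{1/s}$, the Fourier dual is another Gaussian of parameter $1/s$, so the measurement output inherits the desired sub-Gaussian concentration around the true phase. Concretely, the algorithm is: (i) prepare $\ket{G^{tr}}:=\frac{1}{\sqrt{G_{tr}}}\sum_{x\in\{-N/2,\ldots,N/2-1\}}\propDens_s(x)\ket{x}$ on $\log_2 N$ qubits, in time $\mathcal{\tilde{O}}(\polylog(s/\delta))$ using any standard efficient preparation routine for truncated discrete-Gaussian states; (ii) apply the controlled evolution $\sum_x\ketbra{x}{x}\otimes U^x$ to $\ket{G^{tr}}\ket{\psi}$, which via the binary decomposition $U^x=\prod_j (U^{2^j})^{x_j}$ uses $\mathcal{O}(N)=\mathcal{O}(s\cdot\polylog(s/\delta))$ calls to controlled-$U$ and controlled-$U^{-1}$, and noting $e^{\pi i a/4}=e^{2\pi i(a/8)}$ produces the state $\frac{1}{\sqrt{G_{tr}}}\sum_x\propDens_s(x)e^{2\pi i(a/8)x}\ket{x}\ket{\psi}$; (iii) apply the inverse QFT on the first register and measure to obtain $y\in\{0,\ldots,N-1\}$, outputting $\tilde{a}:=8y/N$.

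For the analysis, first invoke Theorem \ref{thm:truroughlymod} with phase function $f(x)=e^{2\pi i(a/8)x}$ and a suitable shift to argue that the first register after step~(ii) is $9\delta$-close to the ``modular'' variant $\ket{G^{mod}}$, on which the inverse QFT acts as a clean $\mathbb{Z}/N\mathbb{Z}$ operation. The Poisson summation formula (Theorem \ref{thm:Poisson}) then expresses the post-QFT amplitude at $\ket{y}$ as proportional to $s\sum_{\omega\in\mathbb{Z}}\propDens_{1/s}(\omega+a/8-y/N)$, which for our parameter regime is dominated by the single closest integer $\omega$. Writing $Na/8=m+\nu'$ with $m\in\mathbb{Z}$ and $\nu'\in[0,1)$, the probability of measuring $y\equiv m+k\pmod N$ is thus proportional to $\exp(-2\pi s^2(\nu'-k)^2/N^2)$, so rescaling yields $a-\tilde{a}=(8/N)(\nu'-k)$ supported on the lattice $(8\nu'/N)+(8/N)\mathbb{Z}$. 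Setting $\nu:=1-\nu'$ when $\nu'\in(0,1)$ (and $\nu:=0$ otherwise), the identity $(8\nu'/N)+(8/N)\mathbb{Z}=-(8\nu/N)+(8/N)\mathbb{Z}$ together with the reflection symmetry of $\propDens_{8/(s\sqrt{2})}$ identifies the distribution of $a-\tilde{a}$ with the claimed $\mathcal{D}_{(8/N)\mathbb{Z}-(8/N)\nu,\,8/(s\sqrt{2})}$.

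The main obstacle is tight bookkeeping of three total-variation error sources: (a) truncating the infinite Gaussian register to $\{-N/2,\ldots,N/2-1\}$; (b) approximate state preparation; and (c) the mismatch between the modular QFT and the non-modular Poisson summation. The parameter choices $s\geq 20\sqrt{2\log(1/\delta)}$ and $N\geq 200\lceil s\sqrt{\log(100/\delta)}\rceil$ are calibrated precisely so that Theorem \ref{thm:truroughlymod} applies and bounds each of (a), (c) by $\mathcal{O}(\delta)$; combining these with an $\mathcal{O}(\delta)$ preparation budget from (b) via subadditivity of $d_{TV}$ finishes the claim. The dominant cost is step (ii), giving the stated $\mathcal{O}(s\cdot\polylog(s/\delta))$ applications of controlled-$U$ and controlled-$U^{-1}$, and $\mathcal{\tilde{O}}(s\cdot\polylog(s/\delta))$ total time.
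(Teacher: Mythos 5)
Your proposal follows the same high-level blueprint as the paper (replace the uniform superposition in phase estimation with a truncated discrete-Gaussian register, apply controlled $U^x$, inverse QFT, measure), but it is missing a crucial fix for wrap-around on $\mathbb{Z}/N\mathbb{Z}$. You set the effective phase to $a/8\in[0,1/8]$ and output $\tilde{a}=8y/N$ with $y\in\{0,\dots,N-1\}$. After the inverse QFT, the measurement distribution is a Gaussian of parameter $N/(\sqrt{2}s)$ \emph{centered at $Na/8\pmod N$}; when $a$ is close to $0$ (say $a<8/N$) this center sits right at the boundary $0\equiv N$, so roughly half the probability mass lands on $y$ near $N-1$, giving $\tilde{a}\approx 8$ and hence $a-\tilde{a}\approx -8$. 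Your identity $a-\tilde{a}=(8/N)(\nu'-k)$ therefore fails whenever $m+k$ leaves $\{0,\dots,N-1\}$, and the resulting output distribution is \emph{not} $\delta$-close to $\mathcal{D}_{(8/N)\mathbb{Z}-(8/N)\nu,\,8/(\sqrt{2}s)}$, which is concentrated near $0$ and assigns essentially zero mass near $\pm 8$. Equivalently, in the second application of \Cref{thm:truroughlymod} you would need $-Na'\in[-5N/8,-3N/8]$ (footnote version), but you only have $-Na'\in[-N/8,0]$.

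The paper handles this by inserting the extra phase $U_\pi\colon\ket{z}\mapsto(-1)^z\ket{z}$ before the inverse QFT, so the effective phase becomes $a'=a/8+1/2\in[1/2,5/8]$; then $Na'$ is at least $3N/8$ away from both wrap-around boundaries and \Cref{thm:truroughlymod} applies. Correspondingly the output is $8y/N-4$ rather than $8y/N$. (An equivalent fix, not used in the paper, is to reinterpret the measurement outcome $y$ as a signed integer in $\{-N/2,\dots,N/2-1\}$ before rescaling.) You should add one of these corrections; without it the theorem is false for small $a$. As a minor secondary point, your claim that $\ket{G^{tr}}$ can be prepared in $\mathcal{\tilde{O}}(\polylog(s/\delta))$ time is too strong: the paper spends $\mathcal{O}(N\polylog N)=\mathcal{O}(s\cdot\polylog(s/\delta))$ time on this step, which still fits the budget but should be stated accurately.
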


\begin{proof}
We first explain the algorithm of our quantum Gaussian phase estimator. Let $\ket{\tilde{\propDens_s}}=$ $\frac{1}{\sqrt{\tilde{G}}}\sum\limits_{z\in \{-N/2,\ldots,N/2-1\}}\propDens_s(z)\ket{z}$, where $\tilde{G}$ is a normalizing constant. Let $U_z$ be a unitary that maps $\ket{z}\ket{\psi}\rightarrow \ket{z}U^z\ket{\psi}$ for $z\in\{-N/2,\ldots,N/2-1\}$, and $U_\pi$ be a unitary that maps $\ket{z}\rightarrow (-1)^z\ket{z}$ (this $U_\pi$ is basically a $Z$-gate on the least-significant bit of $z$). 

The algorithm is as follows.
We first prepare the state $\ket{\tilde{\propDens}_s}\ket{w'}$. We then apply $U_z$ to this state, apply $U_\pi$, apply QFT$^{-1}_N$ on the first register, and then measure the first register in the computational basis, divide the outcome value by $N/8$, subtract $4$ from it, and output this value.

We first explain the time complexity of the above algorithm. To prepare $\ket{\tilde{\propDens_s}}$, it suffices to compute all values of $\propDens_s(z)$ for $z\in \{-N/2,\ldots,N/2-1\},$\footnote{Once we have those values, we can do the controlled-rotation tricks similar to how the KP-tree routine produces the quantum state. Since we only need to prepare $\ket{\tilde{\propDens}_s}$ once, it is fine for us to prepare the state using $N\log N$ time. This procedure does not require the use of QRAM. See the discussion above Theorem 2.13 in~\cite{CdW21QLasso}.} and computing all those values ($\propDens_s(-N/2),\ldots,\propDens_s(N/2-1)$) and constructing a KP-tree with those values stored in its leaves takes $\mathcal{O}(N\cdot\poly\log N)$ time.  To construct the unitary $U_z$, it suffices to use $\mathcal{O}(N\poly\log N)$ applications of  controlled-$U$, controlled-$U^{-1}$ and time.  Also, QFT$_N$ can be implemented using $\mathcal{O}(\log^2N)$ elementary gates. As a result, the total cost here is $\mathcal{O}(N\log N\cdot\log (1/\delta)+\log^2N)=\mathcal{O}(s\cdot\polylog(s/\delta))$ time and $\mathcal{O}(N\cdot\poly\log N)=\mathcal{O}(s\cdot\polylog(s/\delta))$ applications of controlled-$U$, controlled-$U^{-1}$.

Now we show the correctness of the above algorithm. 
Applying $U_\pi U_z$ to $\ket{\tilde{\propDens}_s}\ket{w'}$ gives the state $\frac{1}{\sqrt{G'}}\sum\limits_{{z\in \{-N/2,\ldots,N/2-1\}}}\propDens_s(z)(-1)^z\exp(\pi i az/4)\ket{z}\ket{w'}$. If we discard the second register, which is in tensor product with the rest of the state, then the remaining state is also $9\delta$-close to 
\[
\ket{\Psi}=\frac{1}{\sqrt{G}}\sum\limits_{z\in \mathbb{Z}}\propDens_s(z)\exp(2\pi i (a/8+1/2)z)\ket{(z+N/2\text{ mod } N)-N/2}
\] 
because $s\geq 8\sqrt{2\log (1/\delta)}$, $N > 16s\sqrt{2\ln (1/\delta)}$, and by \Cref{thm:truroughlymod}, where $G$ is a normalizing constant. Therefore, the distribution of the outcome of the quantum Gaussian phase estimator is $9\delta$-close to the distribution obtained by measuring the following state (using $a'=a/8+1/2$)
\begin{align*}
\text{QFT}^{-1}_N\ket{\Psi}=&\frac{1}{\sqrt{NG}}\sum_{y\in [N]}\sum\limits_{z\in \mathbb{Z}}\propDens_s(z)\exp(2\pi i a'z)\exp(-2\pi i y ((z+N/2 \text{ mod }N)-N/2)/N)\ket {y}\\
    =& \frac{1}{\sqrt{NG}}\sum_{y\in [N]}\sum\limits_{z\in \mathbb{Z}}\propDens_s(z)\exp(2\pi i z(Na'-y)/N)\ket{y}&\hidewidth{\text{ ($e^{-2\pi iyNz/N}=1$)}}&\\
    =&\frac{1}{\sqrt{NG}}\sum_{y\in [N]}\Big\{\sum\limits_{z\in \frac{1}{N}\cdot\mathbb{Z}}\propDens_{s/N}(z)\exp(2\pi i z(Na'-y))\Big\}\ket {y}&\hidewidth{\text{($z\leftarrow z/N$)}}&\\
        =&\frac{1}{\sqrt{NG}}\sum_{y\in [N]}\Big\{N\cdot\sum\limits_{x\in {N}\cdot\mathbb{Z}}\reallywidehat{\propDens_{{s}/{N}}\exp{(2\pi i z(Na'-y))}}(x)\Big\}\ket {y}&\hidewidth{\text{(by \Cref{thm:Poisson}})}&\\
    =&\frac{1}{\sqrt{NG}}\sum_{y\in [N]}\Big\{\frac{s}{N}\cdot N\cdot\sum\limits_{x\in N\cdot\mathbb{Z}}\propDens_{N/s}(x-Na'+y)\Big\}\ket {y}&\hidewidth{\text{($\widehat{\propDens_{s/N}}=\frac{s}{N}\cdot\propDens_{N/s}$})}&\\
    =& \frac{s}{\sqrt{NG}}\sum_{y\in [N]}\propDens_{N/s}(N\cdot \mathbb{Z}-Na'+y)\ket {y}\\
    =& \frac{s}{\sqrt{NG}}\sum_{y\in \mathbb{Z}}\propDens_{N/s}(-Na'+y)\ket {y\mod N }.
\end{align*}
By \Cref{thm:truroughlymod} again, because $ a' \in [1/2,5/8]$, $N/s\geq 8\sqrt{2\log(1/\delta)}$, and $N \geq16(N/s)\sqrt{2\ln(1/\delta)}$, we know $\text{QFT}^{-1}_N\ket{\Psi}$ is $9\delta$-close to $\frac{1}{\sqrt{G''}}\sum_{y\in [N]}\propDens_{N/s}(-Na'+y)\ket {y}$ where $G''$ is a normalizing constant. Therefore, the probability distribution of $y-Na'$ (letting $y$ be the measurement outcome) is $9\delta$-close to $\mathcal{D}^{[-N/2,N/2-1]}_{\mathbb{Z}-Na',\frac{N}{\sqrt{2}s}}$. 
Moreover, since $\sqrt{\log(12/\delta)/\pi}\leq \frac{N}{\sqrt{2}s}$ and $ 10\frac{N}{\sqrt{2}s}\sqrt{2\ln(1/\delta)}\leq N/2$, by \Cref{cor:DGSvariants} we know $y-Na'$ is also $9\delta+4\delta\exp(\delta)$-close to $\mathcal{D}_{\mathbb{Z}-Na',\frac{N}{\sqrt{2}s}}=\mathcal{D}_{\mathbb{Z}-\nu,\frac{N}{\sqrt{2}s}}$ for some $\nu \in[0,1)$, implying that the distribution of $8y/N-4-a$ is $9\delta+4\delta\exp(\delta)$-close to $\mathcal{D}_{\frac{8}{N}\cdot\mathbb{Z}-\frac{8}{N}\nu,\frac{8}{\sqrt{2}s}}$. As a result, the output of the algorithm in the second paragraph is $9\delta+9\delta+4\delta\exp(\delta)$-close to $\mathcal{D}_{\frac{8}{N}\cdot\mathbb{Z}-\frac{8}{N}\nu,\frac{8}{\sqrt{2}s}}$. Rescaling $\delta$ by a multiplicative constant, we finish the proof.
\hfill
\end{proof}

Using \Cref{thm:subGaussianDGS}, since $\frac{4\sqrt{2}}{s}\geq 8\sqrt{\log(12/\delta)/\pi}/N$ by the choice of $N$, we can see $\mathcal{D}_{\frac{8}{N}\cdot\mathbb{Z}-8\nu,\frac{4\sqrt{2}}{s}}$ is $\delta$-sub-Gaussian with parameter $\frac{4\sqrt{2}}{s}$. 
By letting $s=\frac{4\sqrt{2}}{\eps}$, we have the following corollary.

\begin{corollary}[Sub-Gaussian phase estimator, subGPE($U,\eps,\tau$)]\label{Cor:GAE}
   Let $\eps,\tau\in (0,0.1]$, $a \in [0,1]$, $U$ be a unitary, $\ket{\psi}$ be an eigenvector of $U$ such that $U\ket{\psi}=\exp(\pi i a/4)\ket{\psi}$. There exists a quantum algorithm that, given one copy of $\ket{\psi}$,  outputs an estimator $\tilde{a}$ satisfying 
   that $a-\tilde{a}$ is $\tau$-close to $\tau$-sub-Gaussian with parameter $\eps$ using $\mathcal{\tilde{O}}(\poly\log(1/\tau)/\eps)$ applications of controlled-$U$, controlled-$U^{-1}$ and elementary gates.
\end{corollary}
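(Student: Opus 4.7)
The plan is to instantiate \Cref{thm:GAE_a} with carefully chosen parameters and then recognise the resulting discrete Gaussian error distribution as sub-Gaussian via \Cref{thm:subGaussianDGS}. Concretely, I would set $\delta := \tau$ (up to a constant that I will absorb at the end) and choose $s := \frac{4\sqrt{2}}{\epsilon}$, which normalises the scale parameter of the error distribution produced by \Cref{thm:GAE_a} to $\frac{8}{\sqrt{2}s}=\epsilon$, matching the sub-Gaussian parameter required by the corollary. I also need to check the precondition $s\geq 20\sqrt{2\log(1/\delta)}$, which for $\epsilon,\tau\in(0,0.1]$ can be ensured by enlarging $s$ by a $\polylog(1/\tau)$ factor if necessary (equivalently, by padding $s$ with an additive term of order $\sqrt{\log(1/\tau)}$); this only changes the final runtime by logarithmic factors.

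Under this choice, \Cref{thm:GAE_a} guarantees that $a-\tilde{a}$ is $\tau$-close in total variation distance to $\mathcal{D}_{\frac{8}{N}\mathbb{Z}-\frac{8}{N}\nu,\,\epsilon}$ for some $\nu\in[0,1)$, where $N=200\lceil s\sqrt{\log(100/\delta)}\rceil$. The remaining task is to certify that this discrete Gaussian is itself $\tau$-sub-Gaussian with parameter~$\epsilon$. Setting $\lambda:=8/N$ so that the lattice in question is $\lambda\mathbb{Z}-\lambda\nu = \lambda\mathbb{Z}+c$ for $c=-\lambda\nu$, \Cref{thm:subGaussianDGS} applied to this one-dimensional lattice requires $\epsilon\geq \lambda\sqrt{\log(12/\tau)/\pi}$. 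With our chosen $N$, we have $\lambda=\frac{8}{N}=O\bigl(\epsilon/\sqrt{\log(1/\tau)}\bigr)$, so this inequality is easily satisfied (and indeed is precisely the inequality the authors highlight in the sentence preceding the corollary). Hence \Cref{thm:subGaussianDGS} yields that $\mathcal{D}_{\lambda\mathbb{Z}+c,\,\epsilon}$ is $\tau$-sub-Gaussian with parameter~$\epsilon$, which combined with the $\tau$-closeness gives the statement.

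For the complexity, \Cref{thm:GAE_a} uses $\mathcal{O}(s\cdot\polylog(s/\delta))$ controlled-$U$ and controlled-$U^{-1}$ calls and $\tilde{\mathcal{O}}(s\cdot\polylog(s/\delta))$ gates; substituting $s=\Theta(1/\epsilon)$ and $\delta=\tau$ gives $\tilde{\mathcal{O}}(\polylog(1/\tau)/\epsilon)$ as required. The only slightly delicate step is bookkeeping the two occurrences of ``$\tau$'' in the statement (closeness in TV and sub-Gaussian parameter of the limiting distribution) against the single ``$\delta$'' appearing in \Cref{thm:GAE_a}; this is resolved by absorbing a universal multiplicative constant into $\tau$ at the end, exactly as in the rescaling argument at the close of the proof of \Cref{thm:GAE_a}.

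The main potential obstacle is ensuring that both constraints on $s$ and $N$ are simultaneously compatible: $s$ must be large enough to drive the TV error below $\tau$ \emph{and} large enough that the induced lattice spacing $8/N$ is small enough for \Cref{thm:subGaussianDGS} to apply with parameter $\epsilon$. Because $N$ grows linearly in $s$ up to a $\sqrt{\log(1/\tau)}$ factor, both constraints scale the same way, and the margin provided by the constant $200$ in the definition of $N$ is more than enough. This is really the only place where one needs to be careful; everything else is direct substitution.
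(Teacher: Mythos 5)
Your proposal is correct and follows essentially the same route as the paper: instantiate \Cref{thm:GAE_a} with $\delta=\tau$ and $s=4\sqrt{2}/\epsilon$ so the error distribution's scale parameter becomes $\epsilon$, then invoke \Cref{thm:subGaussianDGS} (noting $8/N=\mathcal{O}(\epsilon/\sqrt{\log(1/\tau)})$, so its precondition holds) to certify $\tau$-sub-Gaussianity. Your explicit check of the precondition $s\geq 20\sqrt{2\log(1/\delta)}$ of \Cref{thm:GAE_a} --- and the resulting need to pad $s$ by a $\polylog(1/\tau)$ factor when $\epsilon$ is near $0.1$ and $\tau$ is small --- is a point the paper's two-sentence derivation glosses over, and you are right that this only costs polylog factors in the runtime. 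One further bookkeeping remark worth adding: after padding $s$, the resulting parameter $4\sqrt{2}/s$ is $\leq\epsilon$ rather than $=\epsilon$, but since a $\tau$-sub-Gaussian random variable with parameter $\alpha$ is automatically $\tau$-sub-Gaussian with any parameter $\alpha'\geq\alpha$, this does not affect the conclusion.
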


\subsection{Quantum noisy power method using Gaussian phase estimator}\label{sec:qNPM_G}

In this subsection we combine the noisy power method and the quantum Gaussian phase estimator (introduced in the previous subsection) to get a quantum version of the noisy power method.
It approximates the top eigenvector of a given matrix~$A$ with additive $\ell_2$-error $\eps$ in $\tilde{\mathcal{O}}(d^{1.75}/(\gamma^2\eps))$ time, which is a factor $d^{0.25}$ faster in its $d$-dependence than the best-possible classical algorithm (see \Cref{ssec:clasLBtopeigenvector} for the $\Omega(d^2)$ classical lower bound). 

We first prove the following theorem, which helps us estimate an individual entry of a matrix-vector product $Aw$ (the vector $u$ would be one of the rows of $A$).

\begin{theorem}[Inner product estimator, IPE$(\tau, \delta,\eps)$]\label{thm:IE}
Let $\tau,\delta,\eps\in (0,0.1]$, and $u,w\in B_2^d$ s.t.\ $\|w\|_2=1$. Suppose we can access a KP-tree KP$_{w}$ of $w$ and have quantum query access to entries of~$u$ by a unitary~$O_u$. There is a quantum algorithm that with probability at least $1-\delta$, outputs an estimator $\tilde{\mu}$ satisfying that $\tilde{\mu}-\langle u,w\rangle$ is $\tau$-close to $\tau$-subG$(\eps^2)$, using $\mathcal{\tilde{O}}(d^{0.75}\poly\log(d/\delta)+d^{0.25}\poly\log(1/\tau)/\eps)$ time. 
\end{theorem}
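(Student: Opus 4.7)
The plan is to decompose $u$ by magnitude and handle the two parts separately. Set $\theta:=d^{-1/4}$ and define $u^{(L)}_i:=u_i\cdot\mathds{1}[|u_i|\geq \theta]$ and $u^{(S)}_i:=u_i-u^{(L)}_i$, so that $|\mathrm{supp}(u^{(L)})|\leq 1/\theta^2=\sqrt{d}$ (since $\nrm{u}_2\leq 1$) and $\nrm{u^{(S)}}_\infty<\theta$. The output will be $\tilde{\mu}:=\mu^{(L)}+\tilde{\mu}^{(S)}$, where $\mu^{(L)}:=\langle u^{(L)},w\rangle$ will be computed deterministically and $\tilde{\mu}^{(S)}$ will be a sub-Gaussian estimator of $\mu^{(S)}:=\langle u^{(S)},w\rangle$ produced by a quantum Gaussian phase estimator applied to a block-encoded scalar. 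Since $\mu^{(L)}$ carries no random error, the distribution of $\tilde{\mu}-\langle u,w\rangle$ coincides with that of $\tilde{\mu}^{(S)}-\mu^{(S)}$.

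For the heavy part, I will apply \Cref{thm:findallsolutions} to $f(i):=\mathds{1}[|u_i|\geq \theta]$ (one oracle call per evaluation) with the bound $\sqrt{d}$ on $|F|$; with probability $\geq 1-\delta$ this returns the whole heavy support in time $\bigOt{\sqrt{d\cdot\sqrt{d}}\polylog(d/\delta)}=\bigOt{d^{0.75}\polylog(d/\delta)}$. For each of the $\leq\sqrt{d}$ returned indices $i$, one leaf read of $KP_w$ via \Cref{thm:StatePrepare} returns $w_i$ and one additional oracle call returns $u_i$; summing the products gives $\mu^{(L)}$ exactly within the same time budget.

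For the small part, I will construct a state-preparation unitary $U_{u^{(S)}}$ that starts from $\frac{1}{\sqrt{d}}\sum_i\ket{i}\ket{0}$, queries $u_i$ into a work register, performs a controlled rotation writing amplitude $u_i/\theta$ in the flag-$0$ branch when $|u_i|<\theta$ and amplitude $0$ otherwise, and uncomputes the query; the flag-$0$ branch on the index register is then the subnormalized state $\ket{u^{(S)}}/(\theta\sqrt{d})$, whose norm is at most $1$ since $|u^{(S)}_i|/\theta\leq 1$ and $\nrm{u^{(S)}}_2\leq 1$. Combined with $U_w$ from \Cref{thm:StatePrepare}, \Cref{thm:blockEncodeIP} yields a $(1,a+2,0)$-block-encoding $V$ of the scalar $\alpha:=\mu^{(S)}/(\theta\sqrt{d})$ at $\bigOt{1}$ cost per use. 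A standard Hadamard-test wrapper around $V$ then yields a unitary whose overlap with $\ket{0^{a+3}}$ is $(1+\alpha)/2\in[0,1]$; its associated Grover operator has, on its two-dimensional invariant subspace, eigenvalues $e^{\pm 2i\arcsin((1+\alpha)/2)}$. Invoking \Cref{Cor:GAE} on this Grover operator with precision $\eps':=\eps/(2\theta\sqrt{d})=\eps\cdot d^{-1/4}/2$ returns an estimator whose error on $(1+\alpha)/2$ is $\tau$-close to $\tau$-subG$((\eps')^2)$; passing through the $1$-Lipschitz maps $\sin$ (to recover amplitude from phase), an affine rescaling (to recover $\alpha$ from $(1+\alpha)/2$), and multiplication by $\theta\sqrt{d}$ (to recover $\mu^{(S)}$ from $\alpha$) produces $\tilde{\mu}^{(S)}$ whose error is $\tau$-close to $\tau$-subG$(\eps^2)$, at cost $\bigOt{\polylog(1/\tau)/\eps'}=\bigOt{d^{0.25}\polylog(1/\tau)/\eps}$.

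The main obstacles will be: (i) justifying that the $1$-Lipschitz post-processing preserves the $\tau$-closeness-to-$\tau$-subG guarantee from \Cref{Cor:GAE}, which follows because $1$-Lipschitz maps transfer both sub-Gaussian tail bounds (via the moment-generating function characterization) and total-variation couplings; and (ii) correctly handling the sign ambiguity inherent in Grover-operator-based amplitude estimation, which is resolved above by shifting the amplitude to $(1+\alpha)/2\in[0,1]$ via the Hadamard test, so that the principal $\arcsin$ branch is canonical and no $\pm$ ambiguity remains. A secondary subtlety is composing the $\delta$-failure of \Cref{thm:findallsolutions} with the $\tau$-closeness statement via \Cref{thm:TVcoupling}; this I handle by running the heavy-search subroutine at internal failure probability at most a small constant times $\tau$, so that the bad event can be absorbed into the overall $\tau$-closeness budget at the cost of only a constant factor in the final $\tau$.
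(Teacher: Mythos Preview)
Your heavy/light decomposition with threshold $d^{-1/4}$ and the treatment of the heavy part via \Cref{thm:findallsolutions} match the paper exactly. For the light part the paper takes a different and cleaner route: rather than a Grover construction, it block-encodes the $2\times 2$ matrix $d^{-1/4}\langle u^{(S)},w\rangle\,Z$ (via \Cref{thm:blockEncodeIP} with the control $x$ ranging over two signs) and then applies Hamiltonian simulation (\Cref{thm:exp_simluation}) with time $t=\Theta(d^{1/4})$ to obtain a block-encoding of the \emph{unitary} $W=\exp\bigl(i\pi\langle u^{(S)},w\rangle Z/4\bigr)$. Because the encoded operator is itself unitary, the block-encoding acts as $W$ on the zero-ancilla sector, so $\ket{0^a}\ket{0}$ is an honest eigenvector with eigenvalue $\exp(i\pi\langle u^{(S)},w\rangle/4)$, and \Cref{Cor:GAE} applies directly at precision~$\eps$. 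The per-use cost $\tilde O(d^{1/4})$ of the simulated unitary times $\tilde O(\polylog(1/\tau)/\eps)$ uses gives the same $d^{1/4}/\eps$ you obtain, but with the target value encoded \emph{linearly} in the phase, so no $\arcsin$, no $\sin$ post-processing, and no Lipschitz arguments are needed.

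Your Grover-based route has a genuine gap at obstacle~(ii). The Grover operator for amplitude $a=(1+\alpha)/2$ always carries a conjugate eigenvalue pair $e^{\pm 2i\arcsin a}$, and the Hadamard-test start state is an equal-weight superposition of the two eigenvectors, not an eigenvector. Shifting the amplitude into $[0,1]$ pins down the $\arcsin$ branch but does nothing about this $\pm$ pairing; consequently \Cref{Cor:GAE}, which is stated for an eigenvector input, does not apply as written, and run on the superposition it would return an estimate near $+2\arcsin a$ or near $-2\arcsin a$ at random. This is fixable --- take the absolute value of the phase estimate and argue branch-wise that, since $a\approx\tfrac12$ keeps the phase bounded away from~$0$, the error after $|\cdot|$ is still (close to) sub-Gaussian --- but that argument is absent, and your write-up misidentifies which ambiguity is actually in play. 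Your obstacle~(i) also overstates the mechanism: a $1$-Lipschitz map preserves sub-Gaussian \emph{tail} bounds, not the MGF inequality directly, so one only recovers $\tau$-subG up to absolute constant factors in $\tau$ and~$\eps$ (harmless after adjusting the call parameters, but not the justification you gave). The paper's Hamiltonian-simulation route sidesteps both issues.
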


\begin{proof}
    In this proof, we index entries of vectors starting from $0$. Let $I_1=[-d^{-0.25},d^{-0.25}]$ and $I_2=[-1,1]\setminus I_1$. Let $u=u_1+u_2$, where $(u_1)_j=u_j\mathds{1}_{I_1}(u_j)$ and $(u_2)_j=u_j\mathds{1}_{I_2}(u_j)$ for every $j\in[d]-1$.
    Informally, $u_1$ is the vector with smallish entries, and $u_2$ is the vector with largish entries. We separately estimate 
    $\langle u_1,w\rangle$ and $\langle u_2,w\rangle$.

    \paragraph{Finding $u_2$ and computing $\langle u_2,w\rangle$.}
    The number of nonzero entries of $u_2$ is at most $\sqrt{d}$ because $\nrm{u}_2\leq 1$.
We first find (with success probability at least $1-\delta/2$) all the nonzero entries of $u_2$ in $\mathcal{O}{(\sqrt{d\cdot \sqrt{d}}\cdot\poly\log(d/\delta))}=\mathcal{O}{(d^{0.75}\cdot\poly\log(d/\delta))}$ time using \Cref{thm:findallsolutions}. 
Since we can query the entries of $w$ through its KP-tree, we can now compute $\langle u_2,w\rangle$ in time $\mathcal{\tilde{O}}{(\sqrt{d})}$.

\paragraph{Estimating $\langle u_1,w\rangle$.}
Our goal below is first  to show how to prepare (in time $\mathcal{\tilde{O}}(1)$) a superposition corresponding to the vector $d^{-0.25}u_1$, using the fact that all entries of $u_1$ are small; and then to use this to estimate $\langle u_1,w\rangle$ with Gaussian error in time $\mathcal{\tilde{O}}(d^{0.25}/\eps)$.

We can implement $O_{u_1}$ using 2 queries to $O_u$ and $\mathcal{\tilde{O}}(1)$ elementary gates: query $O_u$, and then apply $O_u^{-1}$ conditional on the magnitude of the value being $>d^{-0.25}$  to set the value back to 0 for entries that are in the support of $u_2$ rather than $u_1$.
Let $CR$ be a controlled rotation such that for every $a\in[-d^{-0.25},d^{-0.25}]$
    $$
      CR\ket{a}\ket{0}= \ket{a}(a\cdot d^{0.25}\ket{0}+\sqrt{1-a^2\cdot d^{0.5}}\ket{1}).
    $$
    This can be implemented up to negligibly small error by $\mathcal{\tilde{O}}(1)$ elementary gates.
    Using one application each of $O_{u_1}$, $O^{-1}_{u_1}$, and $CR$, and $\mathcal{\tilde{O}}(1)$ elementary gates, we can map
    \begin{align*}
        \ket{0^{\otimes\log d}}\ket{0}\ket{0}&\xrightarrow{H^{\otimes \log d}\otimes I}\sum\limits_{j\in[d]-1}d^{-0.5}\ket{j}\ket{0}\ket{0}\xrightarrow{O_{u_1}\otimes I}\sum\limits_{j\in[d]-1}d^{-0.5}\ket{j}\ket{(u_1)_j}\ket{0}\\
        &\xrightarrow{I_d\otimes CR} \sum\limits_{j\in[d]-1}(d^{-0.25}(u_1)_j\ket{j}\ket{(u_1)_j}\ket{0}+(\sqrt{d^{-1}-(u_1)_j^2\cdot d^{-0.5}}\ket{j}\ket{(u_1)_j}\ket{1})\\
        &\xrightarrow{O_{u_1}^{-1}\otimes I} \sum\limits_{j\in[d]-1}(d^{-0.25}(u_1)_j\ket{j}\ket{0}\ket{0}+(\sqrt{d^{-1}-(u_1)_j^2\cdot d^{-0.5}}\ket{j}\ket{0}\ket{1}).
    \end{align*}
    Swapping the second register to the front of the state, we showed how to implement the state-preparation unitary $U_{d^{-0.25}u_1}$ that maps
    $$
U_{d^{-0.25}u_1}:\ket{0}\ket{0^{\otimes\log d}}\rightarrow\ket{0}\sum\limits_{j\in[d]-1}(d^{-0.25}(u_1)_j)\ket{j}+\ket{1}\ket{\Phi},
    $$
for some arbitrary unnormalized state $\ket{\Phi}$.
    
    Now we show how to estimate $\langle u_1,w\rangle$. Since we have a KP-tree of $w$, we can implement the state-preparation unitary $U_w$ that maps $\ket{0^{\otimes \log d}}\rightarrow \sum_{j\in[d]-1}w_j\ket{j}$ using $\mathcal{\tilde{O}}(1)$ time by \Cref{thm:StatePrepare}. Let $W=I\otimes U_w\otimes Z$ and $V= U_{d^{-0.25}u_1}\otimes I$. Using \Cref{thm:blockEncodeIP} (with $x$ ranging over 2 cases), we can implement a $(1,\log d+2,0)$ block-encoding of diag($\{d^{-0.25}\langle u_1,w\rangle,-d^{-0.25}\langle u_1,w\rangle\}$) in $\mathcal{\tilde{O}}(1)$ time. In order to be able to use our Gaussian phase estimator, we want to convert $\langle u_1,w\rangle$ into an eigenphase.
    To that end, by \Cref{thm:exp_simluation}, we implement a unitary $U_{exp}$ which is a $(1,\log d+2,0)$-block-encoding of $W=\exp(i\pi(\langle u_1,w\rangle/4) Z)$ using $\mathcal{\tilde{O}}(d^{0.25})$ time. 
Since $\ket{0}$ is an eigenvector of $W$ with eigenvalue $\exp(i\pi \langle u_1,w\rangle/4)$, our Gaussian phase estimator (\Cref{Cor:GAE}) can output (with success probability $\geq 1-\delta/2$) an estimator $\tilde{\eta}$ such that $\tilde{\eta}-\langle u_1,w\rangle$ is $\tau$-close to $\tau$-subG($\eps^2$) using $\mathcal{\tilde{O}}(\poly\log(1/\tau)/\eps)$ applications of controlled-$U_{exp}$, controlled-$U^{-1}_{exp}$, and time. 

Because $\langle u,w\rangle=\langle u_1,w\rangle+\langle u_2,w\rangle$, with probability at least $1-\delta$ we obtain an estimator $\tilde{\mu}=\tilde{\eta}+\langle u_2,w\rangle$ such that $\tilde{\mu}-\langle u,w\rangle=\tilde{\eta}-\langle u_1,w\rangle$ is $\tau$-close to $\tau$-subG($\eps^2$).
\hfill
\end{proof}

Note that the two terms in the time complexity of the above theorem are roughly equal if $\eps$ is a small constant times $1/\sqrt{d}$. Such a small error-per-coordinate translates into a small overall $\ell_2$-error for a $d$-dimensional vector. Accordingly, such a setting of $\eps$ is what we use in our quantum noisy power method for estimating the top eigenvector. 

\begin{algorithm}[hbt]
\SetKwData{Left}{left}\SetKwData{This}{this}\SetKwData{Up}{up}
\SetKwFunction{Union}{Union}\SetKwFunction{FindCompress}{FindCompress}
\SetKwInOut{Input}{input}\SetKwInOut{Output}{output}

\Input{a Hermitian matrix $A\in[-1,1]^{d \times d}$ with operator norm at most~$1$;}
Let $\gamma =|\lambda_1(A)|-|\lambda_2(A)|$; $\eps \in(0,1)$; $K=\frac{10|\lambda_1|}{\gamma}\log(20d/\eps)$; \\
Let $\delta=1/(1000Kd)$; $\tau=\delta/(1000Kd^2)$; $\zeta=\frac{\eps\gamma}{100d^{0.5}\sqrt{\log(1000Kd/\delta)} };$\\
Let $w_0$ be a unit vector randomly chosen from $S^{d-1}$;\\
\SetAlgoLined

  \For{$k\leftarrow 0$ \KwTo $K-1$}{
   Prepare a KP-tree for $w_k$;\\
       For every $j\in[d]$, compute an estimator $(y_k)_j$ of  $\langle A_j,w_k \rangle$ using IPE$(\tau,\delta,\zeta)$ of~\Cref{thm:IE} ($A_j$ is the $j$th row of $A$);\\

   $w_{k+1}=y_k/\|y_k\|_2$\;
  }
  \Output{$w_K$;}
 \caption{Quantum noisy power method using Gaussian phase estimator}
 \label{Alg:qNPM_top}
\end{algorithm}

\begin{theorem}[Quantum noisy power method using Gaussian phase estimator, \Cref{Alg:qNPM_top}]\label{thm:qNPM_top_Gaussian}
    Let $A\in[-1,1]^{d \times d}$ be a symmetric matrix with operator norm at most~$1$, first and second eigenvalues $\lambda_1(A)$ and $\lambda_2(A)$, $\gamma=|\lambda_1(A)|-|\lambda_2(A)|$, $v_1=v_1(A)$ be the top eigenvector of $A$, and $\eps \in(0,1]$. Suppose we have quantum query access to entries of $A$. 
    There exists a quantum algorithm (namely \Cref{Alg:qNPM_top}) that with probability at least $0.89$, outputs a $d$-dimensional vector $w$ such that $|\inProd{w}{v_1}|\geq 1-\eps^2/2$, using $\tilde{\mathcal{O}}(d^{1.75}/(\gamma^2\eps))$ time and $\tilde{\mathcal{O}}(d)$ QRAM bits.
\end{theorem}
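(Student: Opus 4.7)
The plan is to show that the noise vectors $G_k := y_k - Aw_k$ produced by \Cref{Alg:qNPM_top} satisfy, with high probability for every $k\in\{0,\ldots,K-1\}$, the preconditions of the classical noisy power method (\Cref{thm:npm}), namely $|\langle G_k, v_1\rangle|\leq \gamma/(50\sqrt{d})$ and $\|G_k\|_2\leq \epsilon\gamma/5$, and then invoke that theorem. Each entry $(y_k)_j$ is obtained by applying \Cref{thm:IE} to the $j$-th row $A_j$ with the current iterate $w_k$, so conditional on $w_k$ the $d$ entries of $G_k$ are mutually independent, and each is $\tau$-close in total variation to a $\tau$-sub-Gaussian variable with parameter $\zeta=\epsilon\gamma/(100\sqrt{d\log(1000Kd/\delta)})$. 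Iteratively applying \Cref{lem:condTVBound}, first across the $d$ coordinates within a single iteration and then across the $K$ iterations (using that $w_{k+1}$ is a deterministic function of $w_k$ and $y_k$), the joint distribution of the actual run is $O(Kd(\tau+\delta))$-close in total variation to an ``idealized'' process $\{\widecheck G_k\}$ in which every IPE call succeeds and, conditional on $w_k$, the $d$ coordinates of $\widecheck G_k$ are genuinely independent sub-Gaussians of parameter $\zeta$.

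Under this idealized process, fix $k$ and condition on $w_k$. By the sub-Gaussian aggregation property from \Cref{sec:subGaussian}, $\langle \widecheck G_k, v_1\rangle$ is itself sub-Gaussian with parameter $\zeta\|v_1\|_2=\zeta$, so the tail bound \eqref{eq:subGTail} gives $|\langle \widecheck G_k, v_1\rangle|\leq \zeta\sqrt{2\log(2/\delta)}\leq \gamma/(50\sqrt{d})$ with probability $\geq 1-\delta$ by the choice of $\zeta$. Applying the same tail bound entrywise and union-bounding over $j\in[d]$ yields $\|\widecheck G_k\|_\infty\leq \zeta\sqrt{2\log(2d/\delta)}$ with probability $\geq 1-\delta$, hence $\|\widecheck G_k\|_2\leq \sqrt{d}\,\|\widecheck G_k\|_\infty\leq \epsilon\gamma/5$ by the same choice of~$\zeta$. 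Union-bounding over the $K$ iterations, both preconditions hold throughout the idealized run with probability $\geq 1-2K\delta$. Combined with \Cref{cor:innerproductsphere} (with $c=10$), which yields $|\langle w_0,v_1\rangle|\geq 1/(10\sqrt{d})$ with probability $\geq 0.9$, \Cref{thm:npm} then outputs $|\langle w_K, v_1\rangle|\geq 1-\epsilon^2/2$ in the idealized process with probability $\geq 0.9-2K\delta$; pulling back to the actual run costs another $O(Kd(\tau+\delta))$ in total variation, and the choices $\delta=1/(1000Kd)$ and $\tau=\delta/(1000Kd^2)$ keep the total failure probability below $0.11$.

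For the complexity: each of the $K=O(\log(d/\epsilon)/\gamma)$ iterations first builds a KP-tree for $w_k$ in $\tilde{O}(d)$ time and then performs $d$ calls to \Cref{thm:IE}, each of which, with our choice of $\zeta$, costs $\tilde{O}(d^{0.75}+d^{0.25}/\zeta)=\tilde{O}(d^{0.75}/(\epsilon\gamma))$. This gives a per-iteration cost of $\tilde{O}(d^{1.75}/(\epsilon\gamma))$ and a total time of $\tilde{O}(d^{1.75}/(\gamma^2\epsilon))$, while storing a single KP-tree uses $\tilde{O}(d)$ QRAM bits. The main obstacle is to correctly propagate the sub-Gaussian structure through both the $\tau$-closeness of each IPE output and the iterative dependency $w_k\to y_k\to w_{k+1}$; this is handled uniformly via \Cref{lem:condTVBound} together with the conditional independence of the $d$ per-row errors given $w_k$.
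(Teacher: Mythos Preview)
Your proof is correct and follows essentially the same approach as the paper's: both verify that the per-iteration noise vectors $G_k$ satisfy the hypotheses of \Cref{thm:npm} by exploiting the sub-Gaussian guarantee of \Cref{thm:IE}, bounding $|\langle G_k,v_1\rangle|$ via the aggregation property and $\|G_k\|_2$ via an $\ell_\infty$ bound, then union-bounding over the $K$ iterations. The only presentational difference is that you make the $\tau$-closeness bookkeeping explicit through a global total-variation coupling (invoking \Cref{lem:condTVBound} across coordinates and iterations), whereas the paper simply adds $\tau$ (respectively $d\tau$ after summing over coordinates) directly to each tail-probability bound; these are two equivalent ways of accounting for the same slack, and the resulting numerics and complexity analysis match.
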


\begin{proof}
    Each iteration of \Cref{Alg:qNPM_top} uses $\tilde{\mathcal{O}}(d)$ time and QRAM bits to build the KP-tree for $w_k$, and for every $j\in[d]$, we use $\mathcal{\tilde{O}}(d^{0.75}\poly\log(d/\delta)+d^{0.25}\poly\log(1/\tau)/\zeta)$ time and $\mathcal{\tilde{O}}(\sqrt{d})$ QRAM bits for estimating $\langle A_j,w_\ell\rangle$ by IPE($\tau, \delta,\zeta$) in \Cref{thm:IE}. Hence the total number of elementary gates we used and queries to entries of $A$ is $\tilde{\mathcal{O}}(d\cdot(d^{0.75}+d^{0.25}/\zeta)\cdot K)=\tilde{\mathcal{O}}(d^{1.75}/(\gamma^2\eps))$.

    Now we are ready to show the correctness of \Cref{Alg:qNPM_top}. By \Cref{thm:npm} and the union bound, it suffices to show that for each $k\in[K]-1$, both $\|y_k-Aw_k\|_2\leq \gamma\eps/5$ and $|\langle y_k-Aw_k, v_1\rangle|\leq \gamma/(50\sqrt{d})$ hold with probability $\geq 1-1/(100K)$. Fix $k$. By \Cref{thm:IE}, we know that for every $j\in[d]$, with probability at least $1-\delta$, $(y_k-Aw_k)_j$ is $\tau$-close to $\tau$-subG($\zeta^2$) for every $j\in[d]$. Let $e=y_k-Aw_k$. 
    There are three different kinds of bad events, whose probabilities we now analyze.
    Firstly, we can see that for every $j\in[d]$, with probability at least $1-\delta$, 
    \begin{align*}
        \Pr\left[|e_j|>\frac{\gamma\eps}{5\sqrt{d}}\right]\leq 2\exp(\tau)\cdot \exp\left(-\frac{(\frac{\gamma\eps}{5\sqrt{d}})^2}{2\zeta^2}\right)+\tau=2\exp(\tau-200\log(1000Kd/\delta)) +\tau  
        \leq \frac{\delta}{100Kd}.
    \end{align*}
    Therefore, with probability at least $1-d\delta-\delta/(100K)$, $|e_j|\leq \frac{\gamma\eps}{5\sqrt{d}}$ for every $j\in[d]$, implying that $\|e\|_2\leq \frac{\gamma\eps}{5}$. Secondly, by the properties of sub-Gaussians from \Cref{sec:subGaussian}, we know that with probability at least $1-d\delta$, $\langle e, v_1\rangle$ is $d\tau$-close to $d\tau$-subG($\sum\limits_{j\in[d]} (v_1)_j^2 \zeta^2$). Thirdly, since $v_1$ is a unit vector, we have that (with probability at least $1-d\delta$) $\langle e, v_1\rangle$ is $d\tau$-close to $d\tau$-subG($\zeta^2$) and hence 
    \begin{align*}
        \Pr\left[|\langle e, v_1\rangle|>\frac{\gamma}{50\sqrt{d}}\right] \leq & ~2\exp(d\tau)\cdot \exp\left(-\frac{(\frac{\gamma}{50\sqrt{d}})^2}{2\zeta^2}\right)+d\tau\\
        = & ~2\exp(d\tau)\cdot \exp\left(-2\frac{\log(1000Kd/\delta)}{\eps^2}\right)+d\tau\leq \frac{\delta}{100Kd}.
    \end{align*}
    As a result, by the union bound over the three kinds of error probabilities, for each $k\in[K]\cup \{0\}$, with probability $1-2d\delta-\delta/(50K)\geq 1-1/(100K)$,  we have both $\|y_k-Aw_k\|_2\leq \gamma\eps/5$ and $|\langle y_k-Aw_k, v_1\rangle|\leq \gamma/(50\sqrt{d})$. This proves correctness of \Cref{Alg:qNPM_top}. 
    \hfill
 \end{proof}

\subsection{Almost optimal process-tomography of ``low-rank'' reflections}\label{ssec:tomolowrankreflections}

In this subsection we describe an essentially optimal algorithm for the ``tomography'' of projectors $\Pi$ of rank at most $q$ (or of the corresponding unitary reflection $2\Pi-I$).\footnote{Having access to a controlled reflection $2\Pi-I$ is equivalent up to constant factors to having access to controlled $U_\Pi^{\pm 1}$ (i.e., controlled-$U_\Pi$ and its inverse) for a block-encoding $U_\Pi$ of the projector $\Pi$, as follows from the QSVT framework~\cite{gilyen2018QSingValTransf}.} 
We will use this in the next subsection to approximate the eigensubspace spanned by the top-$q$ eigenvectors. For generality, we will from here on allow our matrices to have complex entries, not just real entries like in the earlier subsections.

Our algorithm is inspired by the noisy power method and has query complexity $\bigOt{dq/\eps}$ and time complexity $\bigOt{dq/\eps+dq^2}$ (using QRAM). When $q\ll d$ this gives a better complexity than the optimal unitary process-tomography algorithm of Haah, Kothari, O’Donnell, and Tang~\cite{HKOT23QueryOptUnitaryProcTomo}. Also in the special case when $q=1$ this gives a qualitative improvement over prior pure-state tomography algorithms~\cite{kerenidis2018QIntPoint,apeldoorn2022QTomographyWStatePrepUnis} which required a state-preparation unitary, while for us it suffices to have a reflection about the state, which is a strictly weaker input model.\footnote{Indeed, we can implement a reflection about a state $\ket{\psi}$ by a state-preparation unitary and its inverse as in amplitude amplification. However, if we only have access to a reflection about an unknown classical basis state $\ket{i}$ for $i\in[d]$, then we need to use this reflection $\Omega(\sqrt{d})$ times to find $i$ (because of the optimality of Grover search) showing that the reflection input is substantially weaker than the state-preparation-unitary input.} Surprisingly, it turns out that this weaker input essentially does not affect the query and time complexity. 

Observe that if we have two projectors $\Pi,\Pi'$ of rank $r$, then
\begin{align}\label{eq:TrToOp}
\nrm{\Pi-\Pi'}_1/(2r)\leq \nrm{\Pi-\Pi'} \leq \nrm{\Pi-\Pi'}_1/2.
\end{align}
This implies that the $\eps$-precise estimation of a rank-$1$ projector $\Pi$ (i.e., the density matrix corresponding to a pure state) is equivalent to $\eps$-precise approximation of the quantum state $\Pi$. Since the complexity of pure-state tomography using state-preparation unitaries is known to be $\widetilde{\Theta}(d/\eps)$~\cite{apeldoorn2022QTomographyWStatePrepUnis}, it follows that our algorithm is optimal up to log factors in the $q=1$ case.

Similarly, the query complexity of our algorithm is optimal for $\eps=\frac16$ up to log factors, as can be seen by an information-theoretic argument using $\eps$-nets, since by \Cref{lem:epsSupspaceNet} there is an ensemble of $\exp(\Omega(dq))$ rank-$q'$ projectors for $q'=\min(q,\lfloor\frac{d}{2}\rfloor)$ such that each pair of distinct projectors is more than $\frac13$ apart in operator norm.
We conjecture that the query complexity of the task is actually $\widetilde{\Omega}(dq/\eps)$, meaning that our algorithm has essentially optimal query complexity for all $\eps \in(0,\frac16]$.

Now we briefly explain our algorithm for finding an orthonormal basis of the image of $\Pi$, assuming for ease of exposition that its rank is exactly~$q$. 
We want to find a $d\times q$ isometry $W$ such that $\|WW^\dagger-\Pi\|$ is small. 
Our algorithm (\Cref{alg:qNPMopt} below) can be seen as a variant of the noisy power method: We start with generating $m$ Gaussian vectors $g_1,\ldots,g_{m}$ ($m=\widetilde{\Theta}(q)$ will be slightly bigger than $q$), with i.i.d.\ (complex) normal entries each having standard deviation $\sim 1/\sqrt{d}$. 
Subsequently, we repeat the following process $K \approx \log\sqrt{d/m}$ times: we estimate $\Pi g_i$ for every $i\in[m]$ using our quantum state tomography algorithm (\Cref{thm:UnbiasedTomoMeas} and \Cref{cor:EffPrepTomo}), and multiply the outcome by $2$, resulting in $m$ new vectors (these factors of~2 allows our analysis to treat all the errors together in one geometric series later). 
Finally, let $V$ be the $d\times m$ matrix whose columns are the versions of those $m$ vectors after the last iteration. The algorithm classically computes the singular value decomposition (SVD) of this $V$, and outputs those left-singular vectors $w_1,w_2,\ldots, w_q$ that have singular value greater than the threshold of $\frac{1}{14}$.

Note that the span of $g_1,\ldots,g_m$ includes the $q$-dimensional image of $\Pi$ almost surely.
Hence if no error occurs in the tomography step, then $V=2^K\Pi [g_1,\ldots,g_m]$, and the image of $V$ is the image of $\Pi$ (almost surely). Thus, if $\sum_{i=1}^q \varsigma_i w_i u_i^\dagger$ is the SVD of $V$, then $\sum_{i\in[q]}w_iw_i^\dagger=\Pi$. That is, we get the desired output $W$ by rounding the singular values of $V$ appropriately: the first $q$ singular values are rounded to~1, and the others are rounded to~0. To show the stability of this approach it remains to show that the singular values $\varsigma_1,\ldots,\varsigma_q$ of the final $V$ in the non-error-free case are still large ($\approx 1$), the other $d-q$ singular values are still essentially~0, and the error incurred by the quantum state tomography is small.
Our analysis relies on the concentration of the singular values of sufficiently random matrices, see \Cref{sec:MatConc}. 

To bound the effect of errors induced by tomography we combine the operator norm bound on random matrices of \Cref{thm:heavyRows} with our unbiased tomography algorithm (\Cref{thm:UnbiasedTomoMeas}). The key observation is that \Cref{cor:EffPrepTomo} gives an estimator $\tilde{\psi}$ that is $\delta$-close in total variation distance to an ``ideal'' (though not error-free) estimator $\widecheck{\psi}$ that satisfies $\mathbb{E}[\widecheck{\psi}]=\psi$, whose covariance matrix has operator norm at most  $S^2\leq\frac{\eps^2}{d}$, and $\nrm{\Pi\widecheck{\psi}-\Pi\psi}_2\leq\eps\sqrt{\frac{q}{d}}$ with certainty. 
For the sake of analysis we can assume that we work with $\widecheck{\psi}$, because we only notice the difference between $\widecheck{\psi}$ and $\tilde{\psi}$ with probability at most $\delta$, which can be made negligibly small at a logarithmic cost in \Cref{cor:EffPrepTomo}. Finally, we use perturbation bounds on singular values and vectors from \Cref{sec:SVPert}.

\begin{algorithm}[hbt]
    \DontPrintSemicolon
    \SetAlgoLined
    \SetKwInOut{Input}{Input}
    \SetKwInOut{Output}{Output}
    \SetKwInput{kwInit}{Init}    
    \SetKwFor{For}{for}{do}{endfor}
    \SetKwIF{If}{ElseIf}{Else}{if}{then}{elseif}{else}{endif}
    \SetKwRepeat{Do}{do}{while}
    \SetKw{kwABORT}{ABORT}

    \Input{Dimension $d$, maximum failure probability $\delta'\in (0,\frac12]$, target precision $\eps\in (0,\frac12]$, $\bigOt{\frac{\eps\delta'}{dq}}$-approximate block-encoding $U_\Pi$ of a  projector $\Pi$ of rank $\leq q$}

    \Output{$\eps$-approximate orthonormal basis $W$ of the image of $\Pi$, i.e., $\norm{WW^\dagger-\Pi}\leq \eps$}
    \kwInit{$m\!:=\!\min\!\Big(\Big\lceil\!\max\!\Big(16C^2q,8c\ln\left(\frac{10}{\delta'}\right)\Big)\Big\rceil,d\Big)$, $K\!:=\!\Big\lceil\log_2\!\Big(\!\sqrt{\!\frac{d}{m}}+1\Big)\Big\rceil $, $\eps'\!:= \frac{\eps}{65+98\sqrt{\ln\!\big( \frac{10d}{\delta'}\big)/c'}}$\;
    \tcp*[h]{the constants $c,C$ come from \Cref{cor:wellConditioned} and in the real case$^\textsuperscript{\ref{foot:realVsComplexBound}}$ $c,C=1$}\;
    \tcp*[h]{the constant $c'$ comes from \Cref{thm:heavyRows}}\;}
    
    \BlankLine
    \textbf{1.)}
    \lIf{$m=d$}{set $ g_j = \ket{j}/4$,
    \lElse{ generate $m$ random vectors $ g_j $ with i.i.d.\ complex (or real if $\Pi\in\mathbb{R}^{d\times d}$) standard normal entries multiplied by $\eta:=\frac{4\cdot 2^{-K}}{7\sqrt{m}}$ \tcp*[h]{note $\eta < \frac{4}{7\sqrt{d}}$}
    }}
	\vskip-4.5mm
    \textbf{2.)} \For{ $j = 1$ \KwTo $m$}{
      Set $g_j^{(0)} = g_j$; \lIf{$\nrm{g_j} > 2$}{\kwABORT}
      \For{$ k = 0 $ \KwTo $ K-1 $}{
        Classically compute $\nrm{g_j^{(k)}}$ and store $g_j^{(k)}/\nrm{g_j^{(k)}}$ in a KP-tree\\ (we can now unitarily prepare $\ket{\psi}=\Pi g_j^{(k)}\!/\nrm{g_j^{(k)}}$ using this KP-tree and $U_\Pi$)\;
        Obtain $ y_j^{(k+1)} $ via $\eps'/\nrm{g_j^{(k)}}$-precise         tomography (\Cref{cor:EffPrepTomo}) on $\ket{\psi}$ setting $ \delta\leftarrow \delta'/(5mK) $ \tcp*[h]{query complexity is $\bigOt{ d / \eps}$}\;
        Set $ g_j^{(k+1)} \leftarrow 2\nrm{g_j^{(k)}}y_j^{(k+1)}$;  \lIf{$\nrm{g_j^{(k+1)}} > 2$}{\kwABORT}
      }
    }
    \textbf{3.)} Output the left-singular vectors of $V\!=[g_1^{(K)}\!, \ldots, g_{m}^{(K)}]$ with singular value above $\frac{1}{14}$.\kern-1mm\;
    \tcp*[h]{Classical complexity is $\bigOt{dq^2}$ via direct diagonalization of $V^\dagger V$ }\;
     \tcp*[h]{The output is correct with probability at least $1-\delta'$}\;
    \tcp*[h]{The total $ U_\Pi$-query and quantum gate complexity is $\bigOt{dq/\eps} $}\;
     \caption{Time-efficient approximation of the top-$q$ eigensubspace}\label{alg:qNPMopt}
\end{algorithm}

We say that $U_\Pi$ is an $\eps$-approximate block-encoding of $\Pi$ if $\nrm{U_\Pi-U}\leq \eps$ for some $U$ satisfying $\Pi=(\bra{0^a}\otimes I)U(\ket{0^a}\otimes I)$.\footnote{The condition $\nrm{(\bra{0^a}\otimes I)U_\Pi(\ket{0^a}\otimes I)-\Pi}\leq \eps'$ appears similar, however is in some sense quadratically weaker. Consider, e.g., $\Pi=1$, $a=1$, and $U_\Pi=\left(\begin{array}{cc}
    \cos(x) & -\sin(x) \\
    \sin(x) &  \cos(x) 
\end{array}\right)$, then $1-\cos(x) = x^2/2 + \mathcal{O}(x^4)$, but for any unitary $U=\ketbra{0}{0}+z\ketbra{1}{1}$ we have $\nrm{U_\Pi-U}\geq |\sin(x)| = |x| +  \mathcal{O}(|x|^3)$. Nevertheless, because $\Pi$ is a projector, we can remedy this in general by converting $U_\Pi$ to an (approximate) block-encoding of $\Pi/2$ via linear combination of unitaries, and then applying quantum singular value transformation with the polynomial $-T_3(x)= 3 x-4 x^3$; the resulting unitary is then indeed $\mathcal{O}(\eps')$-close to a perfect block-encoding of $\Pi$, see for example the proof of \cite[Lemma 23]{gilyen2018QSingValTransf}.}

\begin{theorem}[Correctness of \Cref{alg:qNPMopt}]\label{thm:LearnProj}
    Let $\Pi\in\mathbb{C}^{d\times d}$ be an orthonormal projector of rank at most $q$, given via an $\bigOt{\frac{\eps\delta'}{dq}}$-approximate block-encoding $U_\Pi$. \Cref{alg:qNPMopt} outputs an isometry $W$ such that, with probability at least $1-\delta'$, $\|WW^\dagger-\Pi\|\leq \eps$, using $\bigOt{\frac{dq}{\eps}}$ controlled $U_\Pi$, $U_\Pi^\dagger$, two-qubit quantum gates, read-outs of a QRAM of size $\mathcal{\tilde{O}}(d)$, and $\bigOt{dq^2}$ classical computation. 
\end{theorem}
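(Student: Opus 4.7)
The plan is to view \Cref{alg:qNPMopt} as a noisy instantiation of the power iteration $g_j^{(k+1)}\approx 2\Pi g_j^{(k)}$, exploiting the key structural fact that $\Pi^2=\Pi$ forces all but the \emph{final}-iteration tomography errors to be absorbed into $\mathrm{Im}(\Pi)$. First I invoke \Cref{cor:EffPrepTomo} to couple each tomography output $y_j^{(k+1)}$ with an unbiased ``ideal'' estimator $\widecheck{y}_j^{(k+1)}$ of $\Pi g_j^{(k)}/\|g_j^{(k)}\|$, whose conditional covariance has operator norm $\leq \eps'^2/(d\|g_j^{(k)}\|^2)$ and which satisfies $|\langle v,\widecheck{y}_j^{(k+1)}-\Pi g_j^{(k)}/\|g_j^{(k)}\|\rangle|\leq \eps'/(\|g_j^{(k)}\|\sqrt{d})$ almost surely for every $v$ in a fixed orthonormal basis chosen to contain an orthonormal basis of $\mathrm{Im}(\Pi)$; a union bound over the $mK$ calls keeps the idealised run within total variation distance $\delta'/5$ of the real one. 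Writing $F_j^{(k)}:=g_j^{(k)}-2\Pi g_j^{(k-1)}$ and unrolling the recursion (using $\Pi^2=\Pi$) gives the identity
\begin{equation*}
g_j^{(K)}=2^K\Pi g_j+\sum_{k=1}^{K-1}2^{K-k}\Pi F_j^{(k)}+F_j^{(K)},
\end{equation*}
so that with $E:=V-2^K\Pi G$ the decomposition $V=V_\parallel+V_\perp$ cleanly splits into $V_\parallel:=2^K\Pi G+\Pi E$ and $V_\perp=(I-\Pi)F^{(K)}$, with the crucial feature that \emph{only} the last error contributes to $V_\perp$.

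The analysis then reduces to bounding three operator norms, with failure probability summing to $\leq 3\delta'/5$. By \Cref{cor:wellConditioned} applied to the restriction of $G$ to $\mathrm{Im}(\Pi)$ (a standard $q'\times m$ Gaussian scaled by $\eta$, where $q'=\mathrm{rank}(\Pi)\leq q$, using $m\geq 16C^2q$), all nonzero singular values of $2^K\Pi G$ lie in $[\tfrac{1}{7},1]$ thanks to the choice of $\eta$ and $K$. The initial $\|g_j\|\leq 2$ check passes by \Cref{prop:rndVecLength}, and the later ABORT checks follow inductively from the unrolled identity together with the error bounds below. For the error matrices the key point is that the $m$ trajectories are mutually independent, so the columns of $E$ (rows of $E^\dagger$) are independent random vectors. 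Within each column the sequence $\Pi F_j^{(1)},\ldots,\Pi F_j^{(K)}$ is a martingale difference sequence, so iterating \Cref{lem:condCovBound} yields $\|\mathrm{Cov}(\Pi E_j)\|\leq \tfrac{4^{K+1}}{3}\cdot\tfrac{\eps'^2}{d}$, while the orthonormal-basis $\ell_2$-bound in \Cref{cor:EffPrepTomo} gives $\|\Pi F_j^{(k)}\|_2\leq 2\eps'\sqrt{q/d}$ almost surely, hence $\|\Pi E_j\|_2\leq 2^{K+1}\eps'\sqrt{q/d}\leq 4\eps'$ (using $m\geq 16C^2q$ and $2^K\leq 2\sqrt{d/m}$). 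Applying \Cref{thm:heavyRows} to $(\Pi E)^\dagger$ with $t=\Theta(\sqrt{\ln(d/\delta')/c'})$ bounds $\|\Pi E\|$ by $O(\eps'(1+\sqrt{\ln(d/\delta')/c'}))$, and an identical argument for $V_\perp$ (using $\|\mathrm{Cov}((I-\Pi)F_j^{(K)})\|\leq 4\eps'^2/d$ and $\|(I-\Pi)F_j^{(K)}\|_2\leq 2\eps'$) gives the same bound for $\|V_\perp\|$. With the specific choice $\eps'=\eps/(65+98\sqrt{\ln(10d/\delta')/c'})$, both $\|\Pi E\|$ and $\|V_\perp\|$ are at most $\eps/30$.

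The conclusion then follows from the standard singular-value/vector perturbation toolkit. \Cref{thm:Weyl} shows that $V=V_\parallel+V_\perp$ has its top $q'$ singular values within $2\eps/30$ of $[\tfrac{1}{7},1]$ and all remaining singular values at most $2\eps/30\leq \tfrac{1}{14}$ (for $\eps\leq\tfrac{1}{2}$), so the threshold $\tfrac{1}{14}$ separates them cleanly and the algorithm outputs an isometry $W$ with exactly $q'$ columns. Since $V_\parallel$ has left-singular image equal to $\mathrm{Im}(\Pi)$, \Cref{thm:Wedin} applied with perturbation $V-V_\parallel=V_\perp$ gives $\|(I-WW^\dagger)\Pi\|\leq \|V_\perp\|/\bigl(\tfrac{1}{14}-\tfrac{\eps}{14}\bigr)\leq \eps$; combined with \Cref{lem:ProjDiff} (using that $WW^\dagger$ and $\Pi$ are both rank-$q'$ projectors) this yields $\|WW^\dagger-\Pi\|\leq \eps$, and a union bound over all failure events totals at most $\delta'$ (including the contribution from the $\bigOt{\eps\delta'/(dq)}$ block-encoding error of $U_\Pi$, which propagates additively through the $\bigOt{dq/\eps}$ uses of $U_\Pi$). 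Complexity is immediate: $mK=\bigOt{q}$ iterations, each costing $\bigOt{d/\eps}$ applications of $U_\Pi$ via \Cref{cor:EffPrepTomo} plus $\bigOt{d}$ QRAM work to refresh the KP-tree, totalling $\bigOt{dq/\eps}$ queries and gates, with a final $\bigOt{dq^2}$ classical diagonalization of the $m\times m$ Gram matrix $V^\dagger V$. The main technical obstacle I anticipate is the error-matrix step: carefully combining the martingale-style additivity of conditional covariances across $k$ within each column (via \Cref{lem:condCovBound}) with the across-$j$ independence needed to legitimately invoke \Cref{thm:heavyRows}, all while threading through the $2\|g_j^{(k-1)}\|$ scaling factors from the tomography recurrence.
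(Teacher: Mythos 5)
Your proposal is correct and follows the same overall strategy as the paper's proof: couple each tomography call to an ``ideal'' unbiased estimator via \Cref{cor:EffPrepTomo} (accumulating total-variation error via \Cref{lem:condTVBound}), bound the operator norm of the accumulated error matrix by combining the martingale covariance bound (\Cref{lem:condCovBound}) with \Cref{thm:heavyRows}, and close with Weyl plus Wedin and \Cref{lem:ProjDiff}. The one genuine (if modest) departure is your explicit decomposition $V=V_\parallel+V_\perp$ with $V_\parallel=\Pi V$ and $V_\perp=(I-\Pi)F^{(K)}$, which makes structural the fact that all but the final-round tomography error is absorbed into $\mathrm{Im}(\Pi)$. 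The paper instead keeps the entire error $E_{tomo}=V-V_{ideal}$ (with $V_{ideal}=2^K\Pi G$) in a single matrix, bounds $\nrm{E_{tomo}}\leq\eps/14$ in one shot, and applies Wedin with $A=V_{ideal}$, $B=V$, $\alpha=0$, $\delta=1/14$; you apply Wedin with $A=V$, $B=V_\parallel$, $\alpha=1/14$, $\delta\approx 1/14-\eps/14$, controlling the parallel and perpendicular errors separately. Your split costs one extra invocation of \Cref{thm:heavyRows} and a slightly more delicate threshold argument, but buys nothing essential here (it could marginally help the constants since only the perpendicular part appears as the Wedin perturbation). The remaining ingredients you describe -- the ABORT check via \Cref{prop:rndVecLength}, the singular-value localization of $2^K\Pi G$ via \Cref{cor:wellConditioned}, the across-$j$ independence needed for \Cref{thm:heavyRows}, the propagation of the $\tilde{O}(\eps\delta'/(dq))$ block-encoding error through $\tilde{O}(dq/\eps)$ uses of $U_\Pi$, and the $\bigOt{dq^2}$ classical diagonalization of $V^\dagger V$ -- all match the paper's treatment.
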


\begin{proof}
First consider the case when there is no error in tomography and in the implementation of~$\Pi$. 
Then we end up with $g_j^{(K)} = 2^K \Pi g_j^{(0)} $, and the corresponding matrix $V_{ideal}= 2^K[\Pi g_1^{(0)},\ldots,\Pi g_m^{(0)}]$ has almost surely $\mathrm{rank}(\Pi)$ nonzero singular values with associated left-singular vectors lying in the image of $\Pi$. The $m=d$ case is trivial. If $m<d$, then due to \Cref{cor:wellConditioned} all corresponding singular values are in $(\frac{1}{7},1)$ with probability at least $1-\frac{\delta'}{5}$.\footnote{The matrix $\frac{1}{\eta}[g_1^{(0)},\ldots,g_m^{(0)}]$ is a $d\times m$ random matrix with i.i.d.\ (complex) standard normal entries. After multiplying by $\Pi$ this effectively (up to a rotation) becomes a $q\times m$ random matrix with i.i.d.\ (complex) standard normal entries. 
We apply \Cref{cor:wellConditioned} (with $N=m$) to the latter matrix, obtaining the interval $[\frac{1}{4}\sqrt{m},\frac{7}{4}\sqrt{m}]$ for its singular values. Multiplying by $\eta 2^K=4/(7\sqrt{m})$ we get the interval $[\frac17,1]$ for the singular values of $V_{ideal}$.}
By \Cref{prop:rndVecLength} (and a union bound over all $j\in[m]$) we know that with probability at least $1-\frac{\delta'}{10}$ we have for all $j\in[m]$ that\footnote{We have $\eta(\sqrt{d} + \sqrt{2\ln(10m/\delta')})\leq \eta\sqrt{d} + \frac{2}{7\sqrt{m}}\sqrt{2\ln(10m/\delta')} < \frac{4}{7} + \frac{2}{7}\sqrt{\frac{2\ln(10m/\delta')}{m}}=\frac{4}{7} + \frac{2}{7}\sqrt{\frac{2\ln(m)+2\ln(10/\delta')}{m}}< \frac{6}{7}$, because $m\geq 8\ln(10/\delta')$ and $\frac{2\ln(x)}{x}$ takes its maximum at $x=e$, where it is less than $\frac{3}{4}$.} $\nrm{g_j} < \eta(\sqrt{d} + \sqrt{2\ln(10m/\delta')})<\frac{6}{7}$.
Thereby with probability at least $1-\frac{3\delta'}{10}$ \Cref{alg:qNPMopt} does not abort at the intialization of $g_j^{(0)}\!$, and the left-singular vectors of $V_{ideal}$ with singular value at least $1/7$ form the columns of the desired matrix $W$ such that $WW^\dagger=\Pi$; in the remainder of the proof we assume this is the case.

Second, we consider what happens when the tomography has error, but we can implement $\Pi$ exactly.
Let $\tilde{e}^{(k,\ell)}_j:=g_j^{(k)}-2^{k-\ell}\Pi g_j^{(\ell)}$ be the aggregate tomography error that occurred from the $\ell$-th iteration to the $k$-th iteration where $k\in[K]$, $\ell\in[k]-1$, and observe that $\tilde{e}^{(k,\ell)}_j= \tilde{e}^{(k,k-1)}_j + \sum_{i=\ell+1}^{k-1}2^{k-i}\cdot \Pi \tilde{e}^{(i,i-1)}_j$.
For each iteration, we do the tomography with precision $\eps'/\nrm{g_j^{(k)}}$ on $\Pi g_j^{(k)}/\nrm{g_j^{(k)}}$ via \Cref{cor:EffPrepTomo}, guaranteeing that the random variable $\tilde{e}^{(k,k-1)}_j$ is $\delta$-close in total variation distance to an ``ideal'' random variable $e^{(k,k-1)}_j$ such that $\nrm{e^{(k,k-1)}_j}\leq \eps'$ and $\nrm{\Pi e^{(k,k-1)}_j}\leq \eps'\sqrt{q/d}$ almost surely, and $\mathbb{E}[e^{(k,k-1)}_j]=0$, $\nrm{\mathrm{Cov}(e^{(k,k-1)}_j)}\leq \frac{\eps'}{d^2}$ (to see this, choose $V$ to be an orthonormal basis whose first $q$ elements span the image of $\Pi$). We define analogously $e^{(k,\ell)}_j := e^{(k,k-1)}_j + \sum_{i=\ell+1}^{k-1}2^{k-i}\cdot \Pi e^{(i,i-1)}_j$.
Note that the distribution of $\tilde{e}^{(k,k-1)}_j$ (and $e^{(k,k-1)}_j$) can depend on $\tilde{e}^{(i,i-1)}_j$ (and $e^{(i,i-1)}_{j'}$, respectively) only if $j=j'$ and $i\leq k$.
Let $\vec{\tilde{e}}_j:=(\tilde{e}^{(1,0)}_j,\tilde{e}^{(2,1)}_j,\ldots,\tilde{e}^{(K,K-1)}_j)$, and define $\vec{e}_j$ analogously.
We can apply \Cref{lem:condTVBound} recursively to show that $d_{TV}(\vec{\tilde{e}}_j,\vec{e}_j)\leq {K}{\delta}$.
Since the $\vec{\tilde{e}}_j$ are independent from each other, we can assume without loss of generality that so are the $\vec{e}_j$. 
Once again by \Cref{lem:condTVBound} we get that when comparing the two sequences of $m$ random variables, we have $d_{TV}((\vec{\tilde{e}}_j\colon j \in m),(\vec{e}_j\colon j \in m))\leq {mK}{\delta}=\frac{\delta'}{5}$. From now on we replace $(\vec{\tilde{e}}_j\colon j \in m)$ by $(\vec{e}_j\colon j \in m)$ throughout the analysis, which can therefore hide an additional failure probability of at most $\frac{\delta'}{5}$.

By the triangle inequality we have that 
\begin{align*}
\nrm{e^{(k,0)}_j} & \leq \nrm{e^{(k,k-1)}_j}+\sum_{i=1}^{k-1}2^{k-i}\nrm{\Pi e^{(i,i-1)}_j} \leq \eps'+\sum_{i=1}^{k-1}2^{k-i} \eps'\sqrt{q/d}\\ &
\leq (1+2^{k}\sqrt{q/d}) \eps'< (1+2(\sqrt{d/m}+1)\sqrt{q/d}) \eps'\leq 5\eps'
\end{align*}
for every $k\in[K]$. 
This also implies that for every $k\in[K]$ we have $\nrm{g_j^{(k)}}\leq \nrm{2^k\Pi g_j^{(0)}} + \nrm{e^{(k,0)}_j}\leq \nrm{2^K\Pi g_j^{(0)}} + 5\eps'\leq 1+ 5\eps'\leq 2$, and therefore \Cref{alg:qNPMopt} also does not abort in the for-loop.

Let us define $E_{tomo}:=[e^{(K,0)}_1,\ldots,e^{(K,0)}_m]=V-V_{ideal}$ as the matrix of accumulated tomography errors.
We can apply \Cref{lem:condCovBound} recursively with $X=\Pi e^{(k-1,0)}_j, Y=(I-\Pi) e^{(k-1,0)}_j, Z= e^{(k,k-1)}_j$ to show that for all $k\in [K]$ we have
\begin{align*}
    \nrm{\mathrm{Cov}(e^{(k,0)}_j)}&\leq \nrm{\mathrm{Cov}(e^{(k,k-1)}_j)}+\sum_{i=1}^{k-1}4^{k-i} \nrm{\mathrm{Cov}(\Pi e^{(i,i-1)}_j)}\\&
    \leq \nrm{\mathrm{Cov}(e^{(k,k-1)}_j)}+\sum_{i=1}^{k-1}4^{k-i} \nrm{\mathrm{Cov}(e^{(i,i-1)}_j)}\\&
    \leq \sum_{i=1}^{k}4^{k-i} \frac{\eps'^2}{d} 
    \leq \frac{4^K\eps'^2}{3d}
    \leq \frac{4(\sqrt{d/m}+1)^2\eps'^2}{3d}
    \leq \frac{16\eps'^2}{3m}.
\end{align*}
Applying \Cref{thm:heavyRows} for $t=\sqrt{\ln(10d/\delta')/c'}$ gives that with probability at least $1-\frac{\delta'}{5}$ we have
	\begin{equation}
		\nrm{ E_{tomo} } \le \frac{8\eps'}{\sqrt{3}} + 7 \eps' \sqrt{\ln(10d/\delta')/c'} \leq \frac{\eps}{14} .
	\end{equation}
Since $\nrm{ E_{tomo} }\leq \frac{\eps}{14}$, using the notation of \Cref{thm:Wedin} we have that $\Pi=\Pi_{>0}^{V_{ideal}}$ and the rank of $\Pi_{>\frac{1}{14}}^{V}$ is rank$(\Pi)\leq q$ due to Weyl's bound (\Cref{thm:Weyl}). Therefore, by \Cref{thm:Wedin} and \Cref{lem:ProjDiff} we have $\nrm{\Pi - \Pi_{>\frac{1}{14}}^{V}}\leq 14 \nrm{V-V_{ideal}}=14\nrm{ E_{tomo} } \leq \eps$ as desired. 

Finally, let us analyze the effect of implementation errors in $U_\Pi$. We perform tomography $m K$ times via \Cref{cor:EffPrepTomo}, each time using $T=\mathcal{O}(\frac{d}{\eps}\poly\log(d/(\eps\delta))$ applications of $U^{\pm 1}_\Pi$, therefore the induced total variation distance\footnote{With more careful tracking of error spreading in the estimated vectors it might be possible to show that it suffices to have access to a block-encoding satisfying the weaker condition $\nrm{(\bra{0^a}\otimes I)U_\Pi(\ket{0^a}\otimes I)-\Pi} \leq\frac{\eps'}{\sqrt{d m}}$.} in the output distribution is at most $TmK\cdot \bigOt{\frac{\eps\delta'}{dq}}\leq \frac{\delta'}{10}$. Preparing a KP-tree that allows a similar precision for the preparation of $g_j^{(k)}\!/\nrm{g_j^{(k)}}$ likewise induces at most an additional $\frac{\delta'}{10}$ total variation distance, implying that our algorithm outputs a sufficiently precise answer with probability at least $1-\delta'$ when all approximations are considered.
The quantum gate complexity comes entirely from \Cref{cor:EffPrepTomo}, which is gate efficient, while the final computation requires computing the SVD of a $d\times m$ matrix, which can be performed in $\bigOt{dm^2}=\bigOt{dq^2}$ classical time.
\hfill
\end{proof}

Using basic quantum information theory, one can see that recovering the $q$-dimensional subspace with small constant error~$\eps=1/6$ gains us $\Omega(dq)$ bits of information about the subspace, and hence requires $\widetilde{\Omega}(dq)$ quantum queries. This shows that the query complexity of our previous algorithm is essentially optimal in its $dq$-dependence. 
We first show that there exists a large set of $q$-dimensional projectors that are all far apart from each other. This might be a standard result, but we did not find a reference in the literature so provide our own proof.
 
\begin{lemma}\label{lem:epsSupspaceNet}
	Let $q\leq d/2$. There exists a set $S$ of $q$-dimensional subspaces of $\mathbb{R}^d$ of size $\exp(\Omega(dq))$ such that for any distinct $s,r\in S$ we have $\nrm{\Pi_s-\Pi_r}>\frac13$, where $\Pi_t$ denotes the projector to the subspace $t$.
\end{lemma}

\begin{proof}
We can assume without loss of generality that $d\geq 128^3$. 

	First let us assume that $q\leq d/64$; we show the existence of such a set $S$ via the probabilistic method, by showing that for any set $S$ of subspaces, if $|S|< \exp(qd/32-1)$, then with non-zero probability a Haar-random  $q$-dimensional subspace $r$ satisfies $\nrm{\Pi_s-\Pi_r}>\frac13$ for every $s\in S$ (thereby we can take $S\leftarrow S\cup\{r\}$). We sample $r$ as follows: generate a random matrix $R\in\mathbb{R}^{d\times q}$ with i.i.d.\ standard normal entries, and accept $R$ only if $\varsigma_{\min}(R)\geq \frac34\sqrt{d}$ (i.e., take a sample conditioned on this happening -- we know by  \Cref{thm:wellConditioned} that this happens with probability $\geq \frac{1}{e}$). 

	Upon acceptance we compute a singular value decomposition $R=U \Sigma V^\dagger$ and define $\Pi = U U^\dagger$ as the projector corresponding to the subspace. 	
	Since $\Pi = U U^\dagger$ is an orthogonal projection to the image of $R=U \Sigma V^\dagger$ we have $\Pi r_i=r_i$ for all columns $r_i$ of $R$ and thus
	\begin{align*}
		\nrm{\Pi_s r_i}&\geq \nrm{\Pi r_i}-\nrm{(\Pi-\Pi_s) r_i}\\&
		=\nrm{r_i}-\nrm{(\Pi-\Pi_s) r_i}\\&
		\geq (1-\nrm{\Pi-\Pi_s})\nrm{r_i}\\&		
		\geq (1-\nrm{\Pi-\Pi_s})\varsigma_{\min}(R)\\&
		\geq \frac{3}{4}\sqrt{d}(1-\nrm{\Pi-\Pi_s}).
	\end{align*}
	Hence $\nrm{\Pi-\Pi_s}\leq\frac13$ is only possible if $\nrm{\Pi_s r_i} \geq \sqrt{d}/2$ for all columns of $R$. Without the conditioning (on $\varsigma_{\min}(R)\geq \frac34\sqrt{d}$, the condition we accept $R$), $\Pi_s r_i$ is effectively a random $q$-dimensional vector with i.i.d.\ standard normal entries. Since $\sqrt{d}/2-\sqrt{q}\geq \sqrt{d}/4$, by \Cref{prop:rndVecLength} for any $s\in S$ we have $\Pr[\nrm{\Pi_s r_i} \geq \sqrt{d}/2]\leq\exp(-d/32)$, and due to the independece of the columns the probability that this happens for all $i\in [q]$ is less than $\exp(1-qd/32)$ even after conditioning. Taking the union bound over all $s\in S$ we can conclude that with non-zero probability $\nrm{\Pi-\Pi_s}>\frac13$ for all $s\in S$.

 	The statement for $q=\Omega(d)$ follows from \cite[Lemma 8]{haah2017OptTomography}. Alternatively, if $q> d/64$, we can set $q':=\lceil q/128\rceil$, $d':=d-(q-q')$ so that $q'\leq d'/64$. Then from a large set $S'$ of $q'$-dimensional subspaces of $\mathbb{R}^{d'}$ satisfying $\nrm{\Pi_{s'}-\Pi_{r'}}>\frac13$ for any distinct $s',r'\in S'$ we construct $S:=\{s \colon \Pi_s = \Pi_{s'}\oplus I_{q-q'} \text{ for some } s'\in S'\}$ so that also $\nrm{\Pi_{s}-\Pi_{r}}>\frac13$ for distinct $s,r\in S$. 
  \hfill
\end{proof}

\subsection{Time-efficiently approximating the subspace spanned by top-\texorpdfstring{$q$}{q} eigenvectors}\label{sec:qspaceapproximate}

In this subsection, we give a quantum algorithm to ``approximate the top-$q$ eigenvectors'' in a strong sense using $qd^{1.5+o(1)}$ time (and $q\sqrt{s}d^{1+o(1)}$ time if the matrix is $s$-sparse). 
In particular, when $q=1$, this algorithm outputs a vector that approximates the top eigenvector using $d^{1.5+o(1)}$ time. This is what we referred to as our ``second algorithm'' in \Cref{ssec:resultsalgos}. 

Consider the following situation: for $q\in [d]$, suppose we only know there is a significant eigenvalue gap between the $q$th eigenvalue $\lambda_q$ and the $(q+1)$th eigenvalue $\lambda_{q+1}$ (it doesn't really matter whether we order the eigenvalues by value or by absolute value, we can make the algorithm work in each of these two cases, see Footnote~\ref{footnote:gaps}). Is there a way we can learn the subspace spanned by the top-$q$ eigenvectors? Here we consider the subspace instead of the top-$q$ eigenvectors directly, because there might be degeneracy among $\lambda_1,\ldots,\lambda_q$, in which case the set of the top-$q$ eigenvectors is not uniquely defined. 

We first estimate the magnitude of $\lambda_q$ (with additive error $\gamma/100$) using the following theorem.

\begin{theorem}\label{thm:findGap}
    Let $\delta\in(0,1)$, $q<d$, $A\in\mathbb{C}^{d\times d}$ be a Hermitian matrix with operator norm at most~$1$, $v_1,\ldots,v_d$ be an orthonormal basis of eigenvectors of~$A$, and corresponding eigenvalues $\lambda_1,\ldots,\lambda_d$ such that $|\lambda_1|\geq\cdots\geq|\lambda_d|$, where we know the gap $\gamma=|\lambda_q|-|\lambda_{q+1}|$. Suppose $U_A=\exp(\pi iA)$. 
    There is a quantum algorithm that with probability at least $1-\delta$, estimates $|\lambda_q|$ with additive error $\gamma/100$, using ${\mathcal{O}}\Big(\frac{\sqrt{qd}}{\gamma}\log(\frac{\log(1/\gamma)}{\delta})\log(\frac{d}{\delta})\log(\frac{1}{\gamma})\Big)$ controlled applications of $U_A^\pm$ and $\tilde{\mathcal{O}}\Big(\frac{\sqrt{qd}}{\gamma}\log(\frac{\log(1/\gamma)}{\delta})\log(\frac{d}{\delta})\log(\frac{1}{\gamma})\Big)$ time.  
\end{theorem}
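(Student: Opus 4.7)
The plan is to reduce the estimation of $|\lambda_q|$ to $O(\log(1/\gamma))$ rounds of a quantum \emph{eigenvalue-counting} primitive, wrapped in a binary search over a threshold $\mu\in[0,1]$. The goal of each counting call is to decide whether the number $r_\mu := |\{i\colon |\lambda_i|\geq \mu\}|$ is $\geq q$ or $<q$; the gap assumption then pins $|\lambda_q|$ down to a window of width $\gamma/100$.

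First I convert the Hamiltonian-simulation unitary $U_A=\exp(i\pi A)$ into an $\eta$-precise block-encoding of~$A$ via Corollary~71 of~\cite{gilyen2018QSingValTransf} (cited in the preliminaries), at cost $O(\log(1/\eta))$ per invocation of $U_A^{\pm 1}$. Given such a block-encoding, I use QSVT with an odd polynomial $p_\mu$ of degree $O((1/\gamma)\log(1/\eta))$ that approximates $\mathds{1}_{|x|\geq \mu}$ to additive error $\eta$ everywhere outside a transition band of width $\gamma/200$ around $\pm\mu$ (obtained by composing standard polynomial approximations of the sign function). This yields an approximate block-encoding of an operator $\tilde{\Pi}_\mu$ which coincides with the true projector onto eigenvectors having $|\lambda_i|\geq \mu+\gamma/200$ and vanishes on eigenvectors with $|\lambda_i|\leq \mu-\gamma/200$; by the gap assumption at most one of $|\lambda_q|,|\lambda_{q+1}|$ can lie inside this band for any particular $\mu$.

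With the block-encoding of $\tilde{\Pi}_\mu$ in hand, I prepare $\ket{\phi}=\frac{1}{\sqrt{d}}\sum_{i\in[d]}\ket{i}$, so that $\|\tilde{\Pi}_\mu\ket{\phi}\|^2 = \frac{1}{d}\sum_i p_\mu(\lambda_i)^2$ equals $r_\mu/d$ up to at most $1/d$ from a single band-crossing eigenvalue (this uses basis-independence: expanding $\ket{\phi}$ in the eigenbasis of $A$ still gives uniform $1/\sqrt{d}$ amplitudes). Running amplitude estimation (\Cref{thm:amplitude_estimation}) with $M=\Theta(\sqrt{qd})$ iterations and $O(\log(\log(1/\gamma)/\delta))$ repetitions gives an estimate of $\sqrt{r_\mu/d}$ within $\frac{1}{M}\leq \frac{1}{4\sqrt{qd}}$, which suffices to distinguish $r_\mu\geq q$ from $r_\mu\leq q-1$ because $\sqrt{q/d}-\sqrt{(q-1)/d}\geq 1/(2\sqrt{qd})$, provided the currently tested $\mu$ is not closer than $\gamma/200$ to either $|\lambda_q|$ or $|\lambda_{q+1}|$.

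Binary-searching $\mu$ over $[0,1]$ with $O(\log(1/\gamma))$ rounds halves the candidate interval each step; once the interval has shrunk below $\gamma/100$ I output its midpoint. Setting the block-encoding precision $\eta = \Theta(\gamma/(d\sqrt{qd}))$ ensures the accumulated QSVT error stays below the amplitude-estimation resolution, giving $\log(1/\eta)=O(\log(d/\delta))$; a union bound over the $O(\log(1/\gamma))$ rounds yields overall success $1-\delta$. Multiplying the factors — $\sqrt{qd}$ from amplitude estimation, $1/\gamma$ from the polynomial degree, $\log(1/\gamma)$ from the binary search, $\log(\log(1/\gamma)/\delta)$ from repetition, and $\log(d/\delta)$ from the block-encoding precision — yields the claimed complexity. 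The main technical obstacle is calibrating the polynomial transition width against the binary-search resolution so that every count query is unambiguous except possibly the last one (where the answer is already within the stated $\gamma/100$ precision), and simultaneously managing the three nested error budgets (Hamiltonian-simulation/block-encoding, polynomial approximation, and amplitude-estimation additive error) so that they can all be absorbed into a single $\log(d/\delta)$ factor rather than compounding into extra polylog terms.
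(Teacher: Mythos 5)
There is a concrete gap in your proposal: you apply the approximate projector $\tilde{\Pi}_\mu$ to the single-register uniform superposition $\ket{\phi}=\frac{1}{\sqrt{d}}\sum_{i\in[d]}\ket{i}$ and claim that $\|\tilde{\Pi}_\mu\ket{\phi}\|^2=\frac{1}{d}\sum_i p_\mu(\lambda_i)^2$ because ``expanding $\ket{\phi}$ in the eigenbasis of $A$ still gives uniform $1/\sqrt{d}$ amplitudes.'' This is false: the uniform computational-basis superposition is highly basis-\emph{dependent}. Writing $\ket{\phi}=\sum_j\beta_j\ket{v_j}$ one gets $\|\tilde{\Pi}_\mu\ket{\phi}\|^2=\sum_j|\beta_j|^2\,p_\mu(\lambda_j)^2$, and nothing forces the $|\beta_j|^2$ to equal $1/d$. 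As an extreme example, if some eigenvector $v_j$ happens to be the normalized all-ones vector, then $\ket{\phi}=\ket{v_j}$ and the estimated quantity is $p_\mu(\lambda_j)^2\in\{0,1\}$ regardless of $q$, so the ``count'' carries no information about $r_\mu$. This breaks the entire counting primitive and hence the binary search.

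The correct device — and the one the paper uses — is the two-register maximally entangled state $\frac{1}{\sqrt{d}}\sum_{i}\ket{i}\ket{i}$, which equals $\frac{1}{\sqrt{d}}\sum_{i}\ket{v_i}\ket{v_i^*}$ by the $U\otimes U^*$-invariance of maximally entangled states. Applying $\tilde{\Pi}_\mu\otimes I$ to that state does give squared norm $\frac{1}{d}\sum_i p_\mu(\lambda_i)^2$, exactly what you want, and the rest of your argument (QSVT threshold polynomial in place of the paper's phase estimation plus marking unitary $R_\mu$, followed by $\bigO{\sqrt{qd}}$-precision amplitude estimation and binary search over $\mu$) then goes through in essentially the same way as the paper's proof, with the same complexity budget. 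So the fix is local — replace the single-register state by the maximally entangled state and apply the projector to one half — but as written the central step of your argument is incorrect.
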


\begin{proof}
    Let $\delta'=\delta/(10d)$, $A=\sum\limits_{i\in[d]}\lambda_i \ketbra{v_i}{v_i}$, and $T=2^{\lceil \log (200\log(1/\delta')/\gamma)\rceil+2}$. 
    Unitary $W=\sum\limits_{t=0}^{T-1} \ketbra{t}{t}\otimes\exp(\pi i t A)$ does Hamiltonian simulation according to~$A$ on the second register, for an amount of time specified in the first register.
    Observe that $\frac{1}{\sqrt{d}}\sum\limits_{i\in[d]}\ket{i}\ket{i}=\frac{1}{\sqrt{d}}\sum\limits_{i\in[d]}\ket{v_i}\ket{v_i^*}$ because of the invariance of maximally entangled states under unitaries of the form $U\otimes U^\dagger$. Hence we can apply phase estimation with precision $\gamma/200$ and failure probability $\delta'$ to the quantum state $\frac{1}{\sqrt{d}}\sum\limits_{i\in[d]}\ket{v_i}\ket{v_i^*}\ket{0}$ using $W$, to obtain the state
    \begin{equation}\label{eq:statewithLi}
    \frac{1}{\sqrt{d}}\sum\limits_{i\in[d]}\ket{v_i}\ket{v_i^*}\ket{L_i}, 
    \end{equation}
    where the state $\ket{L_i}$ contains a superposition over different estimates $\tilde{\lambda_i}$ of $\lambda_i$. For each $i\in[d]$, if we were to measure $\ket{L_i}$ in the computational basis, then with probability at least $1-\delta'$ we get an outcome $\tilde{\lambda_i}$ such that $|\lambda_i-\tilde{\lambda_i}|\leq \gamma/200$.\footnote{There's a small technical issue here: the unitary $e^{\pi iA}$ (to which we apply phase estimation) has phases ranging between $-\pi$ and $\pi$ because the $\lambda_j$ range between $-1$ and 1, and phase estimation treats $-\pi$ and $\pi$ the same. However, we can easily fix that by applying phase estimation to the unitary $e^{\pi iA/2}$, whose phases range between $-\pi/2$ and $\pi/2$.} Let $\mu \in[0,1]$, and $R_\mu$ be a unitary that marks whether a number's absolute value is $<\mu$, i.e., for every $a\in[-1,1]$
    $$
    R_\mu\ket{a}\ket{0}=\begin{cases}
        \ket{a}\ket{0}, \text{ if } { |a|\geq \mu }\\
        \ket{a}\ket{1}, \text{ otherwise.}
    \end{cases}
    $$
    This unitary can be implemented up to negligibly small error by $\mathcal{\tilde{O}}(1)$ elementary gates. Applying $R_\mu$ on the last register of the state of Eq.~\eqref{eq:statewithLi} and an additional $\ket{0}$, we obtain $\sqrt{p_\mu}\ket{\phi_0}\ket{0}+\sqrt{1-p_\mu}\ket{\phi_1}\ket{1}$ for some $\ket{\phi_0}$ and $\ket{\phi_1}$, where $p_\mu$ is the probability of outcome~0 if we were to measure the last qubit. Note that if $\mu\geq |\lambda_q|+\gamma/150>|\lambda_q|+\gamma/200$, then $p_\mu \leq (q-1)/d+(d-q+1)\delta'/d$, where the first term on the right-hand side is the maximal contribution (to the probability~$p_\mu$ of getting outcome~0 for the last qubit) coming from $\ket{L_i}$ with $i\leq q-1$ and the second term is the maximal contribution coming from $\ket{L_i}$ with $i>q-1$. On the other hand, if $\mu\leq |\lambda_{q}|-\gamma/150<|\lambda_{q}|-\gamma/200$, then $p_\mu\geq (q/d)\cdot(1-\delta')$, which is the minimal contribution coming from $\ket{L_i}$ with $i\leq q$. The difference between the square-roots of these two values is therefore
    \begin{equation}\label{eq:diffinroots}
        \sqrt{\frac{q}{d}\cdot(1-\delta')}-\sqrt{\frac{q-1}{d}+\frac{d-q+1}{d}\delta'} \geq \sqrt{\frac{q}{d}-\delta'}-\sqrt{\frac{q-1}{d}+\delta'}=\frac{\frac{1}{d}-2\delta'}{\sqrt{\frac{q}{d}-\delta'}+\sqrt{\frac{q-1}{d}+\delta'}}, 
    \end{equation}
    where the last equality is because $a-b=(a^2-b^2)/(a+b)$.
    Because $\delta'=\delta/(10d)\in (0,1/(10d))$, both terms in the denominator are $\leq \sqrt{q/d}$, and hence the right-hand side of Eq.~\eqref{eq:diffinroots} is at least $2/(5\sqrt{qd})$. To estimate $|\lambda_q|$ with additive error $\gamma/100$, it therefore suffices to do binary search over the values of $\mu$ (with precision $\gamma/300$, that is, binary search over $\mu\in\{0,\gamma/300,2\gamma/300,\ldots,1\}$), in each iteration estimating $\sqrt{p_\mu}$ to within $\pm 1/(5\sqrt{qd})$. We can implement the unitary that maps $\ket{0}\ket{0}\rightarrow \sqrt{p_\mu}\ket{\phi_0}\ket{0}+\sqrt{1-p_\mu}\ket{\phi_1}\ket{1}$ using one application of $W$ and $\mathcal{\tilde{O}}(1)$ time. Let $\delta''=\delta/(10\log(300/\gamma))>0$. By \Cref{thm:amplitude_estimation} with additive error $\eta=1/(5\sqrt{qd})=\Theta(1/\sqrt{qd})$ and with failure probability $\delta'',$ one iteration of the binary search succeeds with probability at least $1-\delta''$ and uses $\mathcal{{O}}(\log(1/\delta'')/\eta)=\mathcal{{O}}\big(\sqrt{qd}\log(\frac{\log(1/\gamma)}{\delta})\big)$ applications of $W$ and $W^\dagger$, and $\mathcal{\tilde{O}}\big(\sqrt{qd}\log(\frac{\log(1/\gamma)}{\delta})\big)$ time. 
   Therefore by the union bound, with probability at least $1-(\lceil\log(300/\gamma)\rceil+1)\cdot \delta''\geq 1-\delta$, all iterations of the binary search give a sufficiently good estimate of the value $\sqrt{p_\mu}$ of that iteration, so the binary search gives us an estimate of $|\lambda_q|$ to within $\pm (\gamma/150+\gamma/300)=\pm\gamma/100$.
   
    Since we use $O(\log(1/\gamma))$ iterations of binary search, we use $\mathcal{{O}}\Big(\sqrt{qd}\log(\frac{\log(1/\gamma)}{\delta})\log(\frac{1}{\gamma})\Big)$ applications of $W$ and $W^\dagger$ and $\tilde{\mathcal{{O}}}\Big(\sqrt{qd}\log(\frac{\log(1/\gamma)}{\delta})\log(\frac{1}{\gamma})\Big)$ time for the whole binary search.
    We can implement $W$ and $W^\dagger$ using $\mathcal{O}(T)={\mathcal{O}}(\frac{\log(d/\delta)}{\gamma})$ controlled applications of $U_A^\pm$ and $\tilde{\mathcal{O}}(\frac{\log(d/\delta)}{\gamma})$ time.
   Thus we obtain a good estimate of $|\lambda_q|$ with probability $\geq1-\delta$, using  ${\mathcal{O}}\Big(\frac{\sqrt{qd}}{\gamma}\log(\frac{\log(1/\gamma)}{\delta})\log(\frac{d}{\delta})\log(\frac{1}{\gamma})\Big)$ controlled applications of $U_A^\pm$ and $\tilde{\mathcal{O}}\Big(\frac{\sqrt{qd}}{\gamma}\log(\frac{\log(1/\gamma)}{\delta})\log(\frac{d}{\delta})\log(\frac{1}{\gamma})\Big)$ time.
   \hfill
 \end{proof}

The above theorem assumes perfect access to $\exp(\pi iA)=U_A$ for doing phase estimation. If we only assume we have sparse-query-access to $A$, then by \Cref{thm:Low19OptimalHsim} we can implement a unitary $\widetilde{U}_A$ such that $\|\widetilde{U}_A-\exp(\pi iA)\|\leq \eps$ using $\mathcal{\tilde{O}}({s}^{0.5+o(1)}/\eps^{o(1)})$ time and queries.

Since the procedure in \Cref{thm:findGap} makes use of $D={\mathcal{O}}\Big(\frac{\sqrt{qd}}{\gamma}\log(\frac{\log(1/\gamma)}{\delta})\log(\frac{d}{\delta})\log(\frac{1}{\gamma})\Big)$ controlled applications of $U_A^\pm$, if we replace $U_A$ with $\widetilde{U}_A$, the algorithm still outputs the desired answer with success probability at least $1-\delta-D\eps$.\footnote{Here we use the fact that if two unitaries are $\eps$-close in operator norm, and they are applied to the same quantum state, then the resulting two states are $\eps$-close in Euclidean norm, and the two probability distributions obtained by measuring the resulting two states in the computational basis are $\eps$-close in total variation distance.} 
By plugging in the time complexity for constructing $\widetilde{U}_A$ with $\eps=\Theta\Big(\frac{\delta\gamma}{\sqrt{qd}\log(\log(1/\gamma)/\delta)\log(d/\delta)\log(1/\gamma)}\Big)$, with the constant in the $\Theta(\cdot)$ chosen such that $T\eps\leq \delta$ (and rescaling $\delta$ by factor of $2$), we immediately have the following corollary.

 \begin{corollary}\label{cor:est_q}
    Let $q<d$ and $A\in\mathbb{C}^{d\times d}$ be a Hermitian matrix with operator norm at most~$1$, $v_1,\ldots,v_d$ be an orthonormal basis of eigenvectors of~$A$, and eigenvalues $\lambda_1,\ldots,\lambda_d$ such that $|\lambda_1|\geq\cdots\geq|\lambda_d|$, where we know the gap $\gamma=|\lambda_q|-|\lambda_{q+1}|$, and $\delta \in(0,1)$.  Suppose $A$ has sparsity $s$ and we have sparse-query-access to $A$. There is a quantum algorithm that with success probability at least $1-\delta$, estimates $|\lambda_q|$ with additive error $\gamma/100$, using $\mathcal{\tilde{O}}\Big(\frac{1}{\delta^{o(1)}}(\frac{\sqrt{dqs}}{\gamma})^{1+o(1)}\Big)$ queries and time.
\end{corollary}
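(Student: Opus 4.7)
The plan is to reduce Corollary~\ref{cor:est_q} to Theorem~\ref{thm:findGap} by replacing the ideal unitary $U_A=\exp(\pi i A)$ (which is not directly accessible in the sparse-access model) with an approximation $\widetilde{U}_A$ obtained via Hamiltonian simulation, and then to carefully bound the accumulated error introduced by this replacement. The discussion immediately preceding the corollary already sketches this reduction; I will flesh out the book-keeping and the choice of simulation precision.

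First I would invoke Theorem~\ref{thm:Low19OptimalHsim} with $t=\pi$ to build a unitary $\widetilde{U}_A$ such that $\|\widetilde{U}_A - \exp(\pi i A)\|\le \eps$, at a cost of $\widetilde{\mathcal{O}}(\sqrt{s}(\sqrt{s}/\eps)^{o(1)})$ queries and time per call. Next I would feed $\widetilde{U}_A$ into the algorithm of Theorem~\ref{thm:findGap} in place of $U_A$. Inspecting the complexity in that theorem, the number of (controlled) calls to $U_A^{\pm}$ is
\begin{equation*}
D \;=\; \mathcal{O}\!\left(\tfrac{\sqrt{qd}}{\gamma}\log\!\tfrac{\log(1/\gamma)}{\delta}\log\!\tfrac{d}{\delta}\log\!\tfrac{1}{\gamma}\right) \;=\; \widetilde{\mathcal{O}}\!\left(\tfrac{\sqrt{qd}}{\gamma}\right).
\end{equation*}
By the standard hybrid argument, replacing each ideal call by an $\eps$-close one distorts the final measurement distribution in total variation distance by at most $D\eps$ (a unitary error of $\eps$ in operator norm propagates to an $\eps$ error in the output state in $\ell_2$-norm, hence an $\eps$ error in total variation after measurement).

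Then I would pick $\eps$ small enough that $D\eps\le \delta$, i.e.\ $\eps = \Theta(\delta\gamma/(\sqrt{qd}\,\polylog(d/(\delta\gamma))))$, which by the union bound makes the overall failure probability at most $2\delta$; rescaling $\delta\leftarrow \delta/2$ absorbs this constant. The total runtime then becomes
\begin{equation*}
D \cdot \widetilde{\mathcal{O}}\!\left(\sqrt{s}\,(\sqrt{s}/\eps)^{o(1)}\right) \;=\; \widetilde{\mathcal{O}}\!\left(\tfrac{\sqrt{qd}}{\gamma}\cdot\sqrt{s}\cdot\!\left(\tfrac{\sqrt{s}\sqrt{qd}}{\delta\gamma}\right)^{\!o(1)}\right) \;=\; \widetilde{\mathcal{O}}\!\left(\tfrac{1}{\delta^{o(1)}}\!\left(\tfrac{\sqrt{dqs}}{\gamma}\right)^{\!1+o(1)}\right),
\end{equation*}
which matches the stated bound, where I absorb factors of $\polylog(d/(\delta\gamma))$ into the $o(1)$ exponent since $d^{o(1)}$ dominates any fixed polylogarithm.

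The only mildly delicate step is to verify that the $o(1)$ from Theorem~\ref{thm:Low19OptimalHsim}, when multiplied by $D$ and raised against the $1/\eps^{o(1)}$ dependence, still collapses neatly into a single $1+o(1)$ exponent on $\sqrt{dqs}/\gamma$; this is immediate once one notes that $\log(1/\eps) = \widetilde{\mathcal{O}}(\log(dqs/(\delta\gamma)))$ is polylogarithmic in the main parameters, so $(\sqrt{s}/\eps)^{o(1)}$ contributes only a further $(dqs/(\delta\gamma))^{o(1)}$ factor. No other part of the argument requires anything beyond Theorem~\ref{thm:findGap} and Theorem~\ref{thm:Low19OptimalHsim}, so this is essentially a bookkeeping calculation rather than a genuine obstacle.
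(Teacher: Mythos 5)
Your proposal follows the paper's proof exactly: replace $U_A=\exp(\pi iA)$ by the approximate Hamiltonian-simulation unitary $\widetilde{U}_A$ from \Cref{thm:Low19OptimalHsim}, feed it into \Cref{thm:findGap}, bound the cumulative total variation error by $D\eps$ via the standard hybrid argument, and choose $\eps = \Theta(\delta\gamma/(\sqrt{qd}\,\polylog))$ so that $D\eps\leq\delta$ before rescaling $\delta$. The bookkeeping of the $o(1)$ exponents and the final complexity bound also agree with the paper's calculation.
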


The following proposition shows that every bounded-error quantum algorithm needs $\Omega(\sqrt{ds})$ sparse-access queries to estimate the top eigenvalue of an $s$-sparse matrix with constant additive error. This implies the above corollary is near-optimal when $q=1$.

\begin{proposition}\label{prop:LB_top_eigenvalue}
    Let $A\in \mathbb{C}^{d \times d}$ be a Hermitian matrix with operator norm at most $3$. Suppose $A$ has sparsity $s$ and we have sparse-query-access to $A$. Every bounded-error quantum algorithm that estimates the top eigenvalue of $A$ with additive error $0.1$ uses $\Omega(\sqrt{ds})$ queries.
\end{proposition} 

\begin{proof} For simplicity and without loss of generality we assume $A$ has sparsity $2s+1$ and $d\geq 2s+1$ is a multiple of $s$.
    The idea is to encode an $s(d-s)$-bit string into a $2s+1$-sparse $d\times d$ matrix. Given a string $X\in\{0,1\}^{s(d-s)}\coloneq X^{(1)}X^{(2)}\ldots X^{(d/s)-1}$ with Hamming weight either $0$ or $1$, where $X^{(k)}$ is an $s^2$-bit Boolean string for each $k\in [d/s-1]$. For every $k\in [d/s-1]$, define $Y^{(k)}\in\{0,1\}^{s \times s}$ as $(Y^{(k)})_{ij}=X^{(k)}_{s\cdot i+j}$.  Let $A$ be defined by $d/s \cdot d/s=d^2/s^2$ many $s\times s$ square matrices such that for $i\geq j$ 
    $$A_{ij}=\begin{cases}
        I_s &  \text{ if } i=j\\
        Y^{(i)}+2^{-d}\cdot J_s  & \text{ if } i=j+1\\
        0_s & \text{ otherwise, }\\
    \end{cases}$$
    where $I_s$ is the $s\times s$ identity matrix, $J_s$ is the $s\times s$ all-$1$ matrix, and $0_s$ is the $s\times s$ all-$0$ matrix; and for $i<j$, $A_{ij}=A_{ji}^T$. One can easily see that $A$ has sparsity $2s+1$, and because the Hamming weight Ham($X$) of $X$ is at most $1$, the operator norm of $A$ is at most $2+2s\cdot 2^{-d}\leq 2+d\cdot 2^{-d} \leq 3$. Note that given access to the oracle $O_X $ that maps $\ket{i}\ket{0}\rightarrow\ket{i}\ket{X_i}$ for every $i\in[s(d-s)]$, one can construct an oracle that allows us to make sparse-query-access to $A$ using $2$ applications of $O_X^{\pm}$. 
    
    Observe that if Ham($X$)=0, then the operator norm of $A$ is at most $1+2s\cdot 2^{-d}\leq 1+d\cdot2^{-d} \leq 1+1/(e\cdot \ln 2)<  1.6$, while if Ham($X$)=1, then the operator norm of $A$ is at least $2$. Therefore, if there exists a $T$-query quantum algorithm $\mathcal{A}$ that estimates the top eigenvector of $A$ with additive error $0.1$, then $\mathcal{A}$ can also be used to decide whether the $s (d-s)$-bit string $X$ has Hamming weight~$0$ or~$1$ using $2T$ queries to $O_X$. By the well-known quantum query lower bound for search, every bounded-error quantum algorithm needs $\Omega(\sqrt{ds})$ queries for this, implying  $T=\Omega(\sqrt{ds})$.
    \hfill
\end{proof}

Note that a $\sqrt{ds}$ lower bound for estimating the top eigenvalue implies a $\sqrt{ds}$ lower bound for approximating the quantum state $\ket{v_1}$ of the top eigenvector: once we have $\ket{v_1}$, we can apply the quantum phase estimation algorithm with precision $\eps$ to estimate $\lambda_1$ with additive error $\eps$ using $\bigOt{1/\eps}$ controlled applications of $\exp(\pi iA)$ and $\exp(-\pi iA)$. Combining this with Hamiltonian simulation (\Cref{thm:Low19OptimalHsim}) and our lower for approximating $\lambda_1$ (\Cref{prop:LB_top_eigenvalue}), we conclude that every bounded-error quantum algorithm that outputs a state at $\ell_2$-distance $\leq 0.05$ from $\ket{v_1}$, needs $\Omega(\sqrt{ds})$ queries.

Once we know $|\lambda_q|$, we can apply \cite[Theorem 31]{gilyen2018QSingValTransf} to implement a block-encoding of $U_\Pi$ using $\bigOt{1/\gamma}$ applications of a block-encoding of~$A$, where $\Pi=\sum_{i\in[q]}v_iv_i^\dagger$. Combining the above argument with \Cref{thm:EntrytoBlock}, we directly get the following corollary of \Cref{thm:LearnProj}:

\begin{corollary}\label{cor:qeigenspacesparse}
    Let $q< d$ and $A\in\mathbb{C}^{d\times d}$ be a Hermitian matrix with operator norm at most~$1$, $v_1,\ldots,v_d$ be an orthonormal eigenbasis of~$A$ with respective eigenvalues $\lambda_1,\ldots,\lambda_d$ such that $|\lambda_1|\geq\cdots\geq|\lambda_d|$,  where we know the gap $\gamma=|\lambda_q|-|\lambda_{q+1}|$. Let $\eps,\delta \in(0,1)$ and $\Pi=\sum_{i\in[q]}v_iv_i^\dagger$. Suppose $A$ has sparsity $s$ and we have sparse-query-access to $A$.
    There exists a quantum algorithm that outputs a $d\times q$ matrix $W$ with orthonormal columns such that, with probability at least $1-\delta$, $\|WW^\dagger-\Pi\|\leq \eps$, using $\bigOt{\left(\frac{d\sqrt{s}q}{\gamma\eps}\right)^{1+o(1)}+\frac{1}{\delta^{o(1)}}\left(\frac{\sqrt{dqs}}{\gamma}\right)^{1+o(1)}+dq^2}$ time and $\mathcal{\tilde{O}}(d)$ QRAM bits. 
\end{corollary}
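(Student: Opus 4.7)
The plan is to combine three building blocks already established in the excerpt: eigenvalue-gap estimation, QSVT-based construction of a block-encoding of $\Pi$, and the ``low-rank reflection tomography'' of \Cref{thm:LearnProj}.

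\emph{Step 1 (Estimate $|\lambda_q|$).} First I would invoke \Cref{cor:est_q} with failure probability $\delta/3$ to obtain, in time $\tilde{\mathcal{O}}\!\left(\frac{1}{\delta^{o(1)}}\left(\frac{\sqrt{dqs}}{\gamma}\right)^{1+o(1)}\right)$, an estimate $\tilde{\lambda}$ with $|\tilde{\lambda}-|\lambda_q||\le \gamma/100$. This accounts for the second additive term in the claimed runtime.

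\emph{Step 2 (Block-encode $\Pi$).} Using sparse-query-access and \Cref{thm:EntrytoBlock}, I can implement a $(4/\pi,2,\eps_1)$-block-encoding $U_A$ of $A$ in time $\tilde{\mathcal{O}}(\sqrt{s}(s/\eps_1)^{o(1)})$, for a precision parameter $\eps_1$ to be fixed below. Knowing $\tilde{\lambda}$ and the promised gap, the interval $[|\lambda_q|,|\lambda_1|]$ lies above $\tilde{\lambda}-\gamma/100$ and $[|\lambda_{q+1}|,0]$ lies below $\tilde{\lambda}-99\gamma/100$; equivalently, there is a margin of roughly $\gamma/2$ around the threshold $t:=\tilde{\lambda}-\gamma/2$ that separates the top-$q$ singular values of $A$ from the remaining ones. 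Applying the QSVT theorem (\cite[Theorem~31]{gilyen2018QSingValTransf}) with an odd polynomial that approximates $\mathrm{sgn}(|x|-t)$ to precision $\eps_2$ on $[-1,1]\setminus[t-\gamma/4,t+\gamma/4]$, followed by a constant-degree shift, produces an $(1,\mathcal{O}(1),\eps_2)$-block-encoding $U_\Pi$ of the spectral projector $\Pi=\sum_{i\in[q]} v_iv_i^\dagger$, using $\tilde{\mathcal{O}}(1/\gamma)$ applications of $U_A^{\pm 1}$ and extra elementary gates.

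\emph{Step 3 (Tomography).} Finally I would feed $U_\Pi$ into \Cref{thm:LearnProj} with failure probability $\delta/3$ and precision $\eps$. That algorithm uses $\tilde{\mathcal{O}}(dq/\eps)$ (controlled) calls to $U_\Pi^{\pm 1}$ and $\tilde{\mathcal{O}}(dq^2)$ classical post-processing, contributing the $dq^2$ term. Each $U_\Pi$ call costs $\tilde{\mathcal{O}}(1/\gamma)$ applications of $U_A$, and each $U_A$ call costs $\tilde{\mathcal{O}}(\sqrt{s}\cdot(\cdot)^{o(1)})$; chaining these yields time $\tilde{\mathcal{O}}\!\left(\bigl(\frac{d\sqrt{s}\,q}{\gamma\eps}\bigr)^{1+o(1)}\right)$ for this step, matching the first additive term.

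\emph{Error budget and main obstacle.} The one delicate point is the propagation of the three precisions $\eps_1,\eps_2,$ and the tomography target. \Cref{thm:LearnProj} requires access to an $\tilde{\mathcal{O}}(\eps\delta/(dq))$-approximate block-encoding of $\Pi$, so I need $\eps_2\le \tilde{\mathcal{O}}(\eps\delta/(dq))$; and since $U_\Pi$ is built from $\tilde{\mathcal{O}}(1/\gamma)$ uses of $U_A$, taking $\eps_1\le \gamma\cdot\eps_2$ suffices, which only changes constants in the $o(1)$ exponents. I also must verify that the $\gamma/100$ estimation error in Step~1 only slightly shrinks the usable gap (to at least $\gamma/4$ on each side of the QSVT threshold), so that the degree of the sign-approximating polynomial remains $\tilde{\mathcal{O}}(1/\gamma)$. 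Assembling the three failure probabilities by a union bound gives total failure probability $\le \delta$, and the QRAM usage is inherited directly from \Cref{thm:LearnProj}, which is $\tilde{\mathcal{O}}(d)$.
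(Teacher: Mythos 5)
Your proposal is correct and follows essentially the same route as the paper: estimate $|\lambda_q|$ via \Cref{cor:est_q}, build a block-encoding of $\Pi$ from a sparse block-encoding of $A$ via QSVT with the sign polynomial, and feed it into \Cref{thm:LearnProj}. Your error-budget paragraph spelling out the precision requirements $\eps_2 \le \tilde{\mathcal{O}}(\eps\delta/(dq))$ and $\eps_1 \le \gamma\eps_2$ is a welcome elaboration of what the paper states more tersely.
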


For the case of dense matrix $A$, we can set $s=d$ to get time complexity roughly $qd^{1.5}$.
The special case $q=1$ gives our main result for approximating the top eigenvector (with additive $\ell_2$-error $\eps$)\footnote{
Note that for every unit $w, v \in \mathbb{C}^d$ it holds that $\nrm{ww^\dagger-vv^\dagger}=2\sqrt{1-|\inProd{w}{v}|^2}$, thus $\eps \geq \nrm{ww^\dagger-vv^\dagger}$ implies $|\inProd{w}{v}|\geq \sqrt{1-\eps^2/4}\geq {1-\eps^2/4}\geq 1-\eps^2/2$.} in time $\bigOt{\big(d^{1.5}/(\gamma\eps)\big)^{1+o(1)}}$. The $\eps$-dependency is slightly worse than the algorithm in \Cref{sec:qNPM_G}, while both the $d$-dependency and $\gamma$-dependency are significantly better (for $d$, the power is $1.5+o(1)$ instead of $ 1.75$; for $\gamma$, the power is $1+o(1)$ instead of $2$). In \Cref{sec:LB} we show that its $d$-dependence to be essentially optimal. However, the complexity with respect to $q$ is sub-optimal for $q$ close to~$d$, because one can diagonalize the entire matrix $A$ classical in matrix-multiplication time $\mathcal{O}(d^\omega)$.

We also remark that one can (approximately) prepare the quantum state $\ket{v_1}$ of the top eigenvector of the matrix $A$  (with success probability $\geq 0.9$) in time  $\sim(\frac{\sqrt{ds}}{\gamma})^{1+o(1)}$ without using QRAM: first use~\Cref{cor:est_q} to estimate $\lambda_1$ (with additive error $\gamma/100$ and with success probability $\geq 0.99$) and then implement an $\epsilon'$-approximate block-encoding of projector $\ket{v_1}\bra{v_1}$ in time  $\tilde{\mathcal{O}}\Big(\frac{\sqrt{s}({s/\epsilon'})^{o(1)}}{\gamma}\Big)$ by \Cref{thm:EntrytoBlock} and~\cite[Theorem 31]{gilyen2018QSingValTransf}. After that, we generate a vector $ g $ with i.i.d.\ (complex) standard normal entries, prepare the corresponding normalized quantum state $\ket{g}$  (with probability $\geq 0.99-\exp(-d/2)$ it has overlap $\geq 1/(100\sqrt{2d})$ with $v_1$)\footnote{Since $ g $ has i.i.d.\ (complex) standard normal entries, $\braket{v_1|g}$ has (complex) standard normal distribution. Since by~\Cref{prop:rndVecLength}, with probability $\geq 1-\exp(-d/2)$, $\|g\|\leq \sqrt{2d}$, and since the pdf of the standard normal distribution is $p(x)=\frac{1}{\sqrt{2\pi}}\exp(-x^2/2)$), we know with probability $\geq 1-\exp(-d/2)-0.01\cdot 2 \cdot\frac{1}{\sqrt{2\pi}}\geq 0.99-\exp(-d/2)$, $|\braket{v_1|g}|/\|g\|\geq 1/(100\sqrt{2d})$.} and apply the block-encoding to project it down, obtaining a state that looks like $\alpha\ket{0}\ket{v_1}+\sqrt{1-|\alpha|^2}\ket{1}\ket{v_1^\perp}$ where $|\alpha|=\Omega(1/\sqrt{d})$.  By \Cref{thm:Fixed_AA}, $\bigO{\sqrt{d}\log (1/\eps)}$ amplitude amplification rounds then suffice to prepare $\ket{v_1}$ (up to $\ell_2$-norm error $\eps/2$), and the total cost is therefore $\bigOt{\frac{1}{\eps^{o(1)}}(\frac{\sqrt{ds}}{\gamma})^{1+o(1)}}$ by choosing $\eps'=\Theta(\eps/(\sqrt{d}\log (1/\eps)))$ such that $T\eps'\leq \eps/2$, where $T$ is number of amplitude amplification rounds. This algorithm is near-optimal in its $d$-dependence and $s$-dependence because of the argument after the proof of~\Cref{prop:LB_top_eigenvalue}.

\begin{corollary}\label{cor:prepare_v1}
    Let $A\in\mathbb{C}^{d\times d}$ be a Hermitian matrix with operator norm at most~$1$, $v_1,\ldots,v_d$ be an orthonormal eigenbasis of~$A$ with respective eigenvalues $\lambda_1,\ldots,\lambda_d$ such that $|\lambda_1|\geq\cdots\geq|\lambda_d|$,  where we know the gap $\gamma=|\lambda_1|-|\lambda_{2}|$. Let $\eps \in(0,1)$. 
    Suppose $A$ has sparsity $s$ and we have sparse-query-access to $A$.
    There exists a quantum algorithm that with probability at least $0.98-\exp(-d/2)$, outputs $\ket{\tilde{v_1}}$ that approximates $\ket{v_1}$ with $\ell_2$-error $\eps$ using $\mathcal{\tilde{O}}\Big(\frac{1}{\eps^{o(1)}}(\frac{\sqrt{ds}}{\gamma})^{1+o(1)}\Big)$ time. 
\end{corollary}

\section{Lower bounds for approximating the top eigenvector}\label{sec:LB}

In this section we prove essentially tight classical and quantum query lower bounds for approximating the top eigenvector of a matrix whose entries we can query. 

\subsection{The hard instance for the lower bound}\label{ssec:hardinstance}

Consider the following case, which is the ``hard instance'' for which we prove the lower bounds. Let $u\in\{-1,1\}^d$ be a vector, and define symmetric random matrix $A=\frac{1}{d}uu^T+N$ where the entries of $N$ are i.i.d.\ $N_{ij}\sim N(0,\frac{1}{4\cdot 10^6d})$ for all $1\leq i\leq j\leq d$ (and $N_{ij}=N_{ji}$ if $i>j$); the goal is to recover most (say, 99\%) of the entries of the vector~$u$. In this problem, the information about the $u_i$-s is hidden in the matrix $A$: the entry $A_{ij}$ is clearly a sample from $N(\frac{u_iu_j}{d},\frac{1}{4\cdot 10^6d})$. Hence to learn entries of~$u$, intuitively we should be able to distinguish the distribution $N(\frac{1}{d},\frac{1}{4\cdot 10^6d})$ from the distribution $N(-\frac{1}{d},\frac{1}{4\cdot 10^6d})$. In the classical case (where querying an entry of $A_{ij}$ is the same as obtaining one sample fom the distribution) it requires roughly $\Omega(d)$ queries to the entries of the $i$th row and column to learn one $u_i$, even if all $u_j$ with $j\neq i$ are already known. In the quantum case, it requires $\Omega(\sqrt{d})$ queries. Intuitively, learning $0.99d$ of the $u_i$-s should then require roughly $d$ times more queries, so $\Omega(d^2)$ and $\Omega(d^{1.5})$ classical and quantum queries in total, respectively. We show in the next two subsections that this is indeed the case.

First we show that (with high probability) $A$ has a large eigenvalue gap.
Note that $A=\frac{1}{d}uu^T+\frac{1}{2000\sqrt{d}}G$ where $G_{ij}\sim N(0,1)$ for $1\leq i\leq j\leq d$ and $G_{ij}=G_{ji}$ for the lower-triangular elements. 
Since $G$ itself is symmetric and its entries are i.i.d.\ $\sim N(0,1)$, by \Cref{thm:GauOperator} with $t=0.4\sqrt{d}$ (here $b_{max}=\sqrt{d}$ and $b^*_{max}=1$) we have that 
the operator norm of $G$ is upper bounded by $2.5\sqrt{d}+(0.4+o(1))\sqrt{d}\leq 3\sqrt{d}$ with probability
at least $1-\exp(-0.04d)$;
below we assume this is indeed the case. Therefore, by triangle inequality, the top eigenvalue of $A$ is upper bounded by $1+\frac{3}{2000}$ and lower bounded by $\|A\frac{u}{\sqrt{d}}\|_2\geq 1-\frac{3}{2000}$, implying that there is a unit top eigenvector $v_1=v_1(A)$ of~$A$ that has inner product nearly~1 with the unit vector $\frac{u}{\sqrt{d}}$:
$$
1-\frac{3}{2000}\leq \|Av_1\|_2 \leq \|\frac{1}{d}uu^Tv_1\|_2+\|\frac{1}{2000\sqrt{d}}Gv_1\|_2\leq |\frac{1}{\sqrt{d}}u^Tv_1|+ \frac{3}{2000},
$$
hence $|\inProd{v_1}{\frac{u}{\sqrt{d}}}|\geq 1-\frac{3}{1000}$. We may assume without loss of generality that the eigenvector~$v_1$ has been chosen such that $\inProd{v_1}{\frac{u}{\sqrt{d}}}$ is positive, so we can ignore the absolute value sign.
The signs of $v_1(A)$ have to agree with the signs of $u$ in at least $99.4\%$ of the $d$ entries, because each entry where the signs are different contributes at least $1/d$ to the squared distance between $u/\sqrt{d}$ and $v_1(A)$:
$$
 \frac{1}{d} \#\{j\in[d] \mid u_j\cdot (v_1(A))_j \leq 0\}\leq \|\frac{u}{\sqrt{d}}-v_1(A)\|_2^2 = 2-2\inProd{\frac{u}{\sqrt{d}}}{v_1(A)}\leq \frac{3}{500}.
$$
Moreover, the second eigenvalue $\lambda_2(A)$ of $A$ is at most $0.08$:
$$
\lambda_2(A)=\max\limits_{w:\|w\|_2=1, w\perp v_1 }\|Aw\|_2\leq \frac{3}{2000}+ \max\limits_{w:\|w\|_2=1, w\perp v_1 }\|\frac{uu^Tw}{d}\|_2\leq \frac{3}{2000}+\frac{\sqrt{3\cdot1997}}{997}<0.08,
$$
where the last inequality holds because $\inProd{v_1}{\frac{u}{\sqrt{d}}}\geq 1-\frac{3}{1000}$.\footnote{Let $v_1=\alpha_1\frac{u}{\sqrt{d}}+\beta_1w_1$ and $v_2=\alpha_2\frac{u}{\sqrt{d}}+\beta_2w_2$ for some unit vectors $w_1,w_2 \perp \frac{u}{\sqrt{d}}$ and for some $\alpha_1,\alpha_2,\beta_1,\beta_2\in[-1,1]$ satisfying $\alpha_1^2+\beta^2_1=\alpha_2^2+\beta^2_2=1$. Since $v_1 \perp v_2$, we have $\langle v_1,v_2\rangle=\alpha_1\alpha_2+\beta_1\beta_2\langle w_1,w_2\rangle=0$, implying $|\alpha_2|=\frac{|\beta_1\beta_2|}{|\alpha_1|}|\langle w_1,w_2\rangle|\leq \frac{|\beta_1|}{|\alpha_1|}=\frac{\sqrt{1-\alpha^2_1}}{|\alpha_1|}\leq \frac{\sqrt{3\cdot1997}}{{997}}$.} Hence there is a constant gap between the top and the second eigenvalue of $A$. 

If we have an algorithm that outputs a vector $\tilde{u}$ satisfying $\|\tilde{u}-v_1(A)\|_2\leq \frac{1}{1000}$, then we can use the signs of $\tilde{u}$ to learn $99\%$ of the $u_i$-s. Hence a (classical or quantum) query lower bound for recovering (most of) $u$ is also a query lower bound for approximating the top eigenvector of our hard instance.

\subsection{A classical lower bound}\label{ssec:clasLBtopeigenvector}

We first show that every classical algorithm that recovers $99\%$ of the $u_i$-s needs $\Omega(d^2)$ queries.

\begin{theorem}\label{thm:CLB}
    Let $u\in\{-1,1\}^d$ be a vector, $e_{11},e_{12},\ldots,e_{1d},e_{22},\ldots,e_{dd}$ be $d(d+1)/2$ independent samples drawn from $N(0,\frac{1}{4\cdot 10^6d})$, and $A\in \mathbb{R}^{d\times d}$ be the matrix defined by 
    $$
    A_{ij}=\begin{cases}
        \frac{1}{d}u_iu_j+e_{ij}, \text{ if } 1\leq i\leq j\leq d,\\
        A_{ji}, \text{ otherwise. }
    \end{cases}
    $$ 
    Suppose we have query access to entries of~$A$. Every bounded-error classical algorithm that computes a $\tilde{u}\in\{-1,1\}^d$ at Hamming distance $\leq d/100$ from $u$, uses $\Omega{(d^2)}$ queries.
\end{theorem}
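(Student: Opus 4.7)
The plan is to reduce full recovery of $u$ to $d$ parallel Gaussian mean-testing problems and then apply a standard information-theoretic lower bound coordinate by coordinate.

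\textbf{Averaging to isolate a coordinate.} Place the uniform prior on $u\in\{-1,1\}^d$, independent from the Gaussian noise. Let $Q_i$ denote the (random, possibly adaptive) number of queries the algorithm makes to entries in the $i$th row or column of $A$. Since each of the $T$ queries contributes to at most two of the $d$ rows/columns, $\sum_{i=1}^d Q_i\leq 2T$ pointwise, and hence $\E_{\mathbf{i}\sim[d]}\E[Q_{\mathbf{i}}]\leq 2T/d$. By the success hypothesis, with probability $\geq 2/3$ the output $\tilde u$ agrees with $u$ on a $(1-1/100)$-fraction of coordinates, so the average agreement probability satisfies $\E_{\mathbf{i}\sim[d]}\Pr[\tilde u_{\mathbf{i}}=u_{\mathbf{i}}]>0.65$. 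A routine union bound furnishes a specific coordinate $i^\star\in[d]$ with $\E[Q_{i^\star}]\leq 8T/d$ and $\Pr[\tilde u_{i^\star}=u_{i^\star}]\geq 0.55$; an additional averaging step fixes a specific $u_{-i^\star}\in\{-1,1\}^{d-1}$ preserving both conditions up to constants.

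\textbf{Reduction to Gaussian mean testing.} Conditional on $u_{-i^\star}$ the algorithm becomes a tester for $u_{i^\star}=+1$ versus $u_{i^\star}=-1$. Any query $A_{k\ell}$ with $\{k,\ell\}\cap\{i^\star\}=\emptyset$ can be answered by the tester locally, by evaluating $u_ku_\ell/d$ and adding fresh Gaussian noise; the diagonal entry $A_{i^\star i^\star}=1/d+\mathrm{noise}$ carries no information about the sign of $u_{i^\star}$. Every off-diagonal query in row/column $i^\star$, say $A_{i^\star\ell}=u_{i^\star}u_\ell/d+e$, becomes, upon multiplication by the known sign $u_\ell\in\{\pm 1\}$, a single i.i.d.\ sample from $N(u_{i^\star}/d,\sigma^2)$ with $\sigma^2=1/(4\cdot 10^6\, d)$. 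Aborting whenever the number of forwarded samples exceeds $32T/d$ costs at most $1/4$ in success probability by Markov applied to $Q_{i^\star}$, and produces a $(32T/d)$-sample tester with advantage bounded away from $1/2$.

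\textbf{KL lower bound and conclusion.} It remains to invoke the folklore Gaussian mean-testing bound: any tester distinguishing $N(+1/d,\sigma^2)^{\otimes n}$ from $N(-1/d,\sigma^2)^{\otimes n}$ with constant advantage requires $n=\Omega(\sigma^2 d^2)=\Omega(d)$ samples, even adaptively. This follows from Pinsker combined with tensorization of $D_{KL}$: each Gaussian pair contributes $D_{KL}=(2/d)^2/(2\sigma^2)=2/(d^2\sigma^2)$, so the total KL after $n$ adaptive samples is $2n/(d^2\sigma^2)$, which must be $\Omega(1)$. Combining with the previous step, $32T/d=\Omega(d)$, and hence $T=\Omega(d^2)$.

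\textbf{Main obstacle.} The only real point of care is that $Q_i$ is itself a random variable because the algorithm is adaptive, so ``the number of queries in row/column $i^\star$'' depends on the realized matrix entries. This causes no serious difficulty because each averaging argument uses either the pointwise inequality $\sum_i Q_i\leq 2T$ or simple linearity of expectation, and the standard chain-rule proof of the Gaussian mean-testing bound applies verbatim to adaptive sample access. Everything else is routine bookkeeping with constants.
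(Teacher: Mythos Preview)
Your proposal is correct and follows essentially the same route as the paper: average to find a coordinate $i^\star$ with small expected query mass and good recovery probability, reduce to distinguishing $N(+1/d,\sigma^2)$ from $N(-1/d,\sigma^2)$ by locally simulating all entries outside row/column $i^\star$, convert expected to worst-case sample count via Markov, and finish with Pinsker plus KL tensorization. The only caveat is that your specific constants ($0.55$ success before a $1/4$ Markov loss) leave you below $1/2$; the paper avoids this by first boosting the error to $\leq 1/20$ so the good-index set has $\Pr[\tilde u_i=u_i]\geq 0.8$, but this is indeed the ``routine bookkeeping with constants'' you flag.
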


\begin{proof}
    Suppose there exists a $T$-query bounded-error classical algorithm $\mathcal{A}$ to compute such a $\tilde{u}$, with worst-case error probability $\leq 1/20$. Note that the only entries that depend on $u_i$ are in the $i$th column and row of $A$. Let random variable $T_i$ be the number of queries that $\cal A$ makes in the $i$th column and row (here the randomness comes from the input distribution and from the internal randomness of~$\cal A$). Because every query is counted at most twice among the $T_i$-s (and only once if the query is to a diagonal entry of $A$), we have $\E[\sum_{i\in[d]}T_i]\leq 2T$. Define index $i$ as ``good'' if $\Pr[u_i=\tilde{u}_i]\geq 0.8$, and let $I_G$ be the set of good indices. Since $\mathcal{A}$ has error probability at most~$1/20$, we can bound the expected Hamming distance by
    $$
    \E[\text{Ham}(u,\tilde{u})]\leq \Pr[\text{Ham}(u,\tilde{u})\leq \frac{d}{100}]\cdot \frac{d}{100}+\Pr[\text{Ham}(u,\tilde{u})> \frac{d}{100}]\cdot d\leq 1\cdot \frac{d}{100}+\frac{1}{20}\cdot d\leq \frac{d}{10}.
    $$
    By plugging in the definition of $I_G$, we obtain
    $$
    \frac{d}{10}\geq \E[\text{Ham}(u,\tilde{u})]= \sum\limits_{i\in[d]}\Pr[u_i\neq \tilde{u}_i] 
    \geq\sum\limits_{i\in[d]\setminus I_G}\Pr[u_i\neq \tilde{u}_i]\geq\sum\limits_{i\in[d]\setminus I_G} \frac{1}{5}= 
    \frac{d- |I_G|}{5},
    $$
    implying that $|I_G|\geq d/2$. Because $\E[\sum_{i\in I_G}T_i]\leq\E[\sum_{i\in[d]}T_i]\leq 2T$, by averaging there exists an index $\mathbf{i}\in I_G$ such that $\E[T_i]\leq 4T/d$.  This implies that there is a classical algorithm $\mathcal{A'}$ that recovers $u_{\mathbf{i}}$ with probability at least $0.8$ using an expected number of at most $4T/d$ queries to entries in the $\mathbf{i}$th row and column of $A$.
    
    Now suppose we want to distinguish $u_{\mathbf{i}}=1$ from $u_{\mathbf{i}}=-1$ using samples from $N(\frac{u_{\mathbf{i}}}{d},\frac{1}{4\cdot 10^6d})$. We can use $\mathcal{A'}$ for this task, as follows. Generate $u_1,\ldots,u_{\mathbf{i}-1},u_{\mathbf{i}+1},\ldots,u_d$ uniformly at random from $\pm1$, and generate $e'_{ij}$ from $N(0,\frac{1}{4\cdot 10^6d})$ for all $1\leq i\leq j\leq d$. Then we define a $d\times d$ matrix $A'$ as
    \begin{align*}
        A'_{ij} = \begin{cases}
  \frac{1}{d}u_iu_j+e'_{ij},  & \text{if  $1\leq i\leq j\leq d$, $\mathbf{i}\not\in\{i,j\}$}\\
 u_j\cdot\text{sample from }N(\frac{u_{\mathbf{i}}}{d},\frac{1}{4\cdot 10^6d}) & \text{if $i=\mathbf{i}$ and $i<j$} \\
  u_i\cdot\text{sample from }N(\frac{u_{\mathbf{i}}}{d},\frac{1}{4\cdot 10^6d}) & \text{if $j=\mathbf{i}$ and $i<j$} \\
 A'_{ji}, & \text{otherwise}.
\end{cases}
    \end{align*}
    Note that for every $i,j\in[d]$, $A'_{ij}$ is a sample drawn from $N(\frac{u_iu_j}{d},\frac{1}{4\cdot 10^6d})$.
Given that our algorithm knows the values it generated itself (in particular, all $u_j$ with $j\neq\mathbf{i}$), it can implement one query to an entry in the $\mathbf{i}$th row or column of $A'$ by at most one new sample from $N(\frac{u_{\mathbf{i}}}{d},\frac{1}{4\cdot 10^6d})$, and queries to other entries of~$A'$ do not cost additional samples.
    
    By running $\mathcal{A}'$ on the input matrix $A'$, 
    we have $\Pr[u_{\mathbf{i}}=\tilde{u}_{\mathbf{i}}]\geq 0.8$.  Hence by using an expected number of $4T/d$ samples drawn from $N(\frac{u_{\mathbf{i}}}{d},\frac{1}{4\cdot 10^6d})$, we can distinguish $u_{\mathbf{i}}=1$ from $u_{\mathbf{i}}=-1$ with probability $\geq 0.8$. Then by Markov's inequality,  worst-case $40T/d$ samples suffice to distinguish $u_{\mathbf{i}}=1$ from $u_{\mathbf{i}}=-1$ with probability $\geq 0.7$.  
    
    The KL-divergence between $N(\frac{1}{d},\frac{1}{4\cdot 10^6d})$ and $N(-\frac{1}{d},\frac{1}{4\cdot 10^6d})$ is $\E_{x\sim N(\frac{1}{d},\frac{1}{4\cdot 10^6d})}[8\cdot 10^6x]=\frac{8\cdot 10^6}{d}$. As a result, by using the well-known Pinsker's inequality ($d_{TV}(P,Q)\leq \sqrt{\frac{1}{2}D_{KL}(P\,\|\,Q)}$ \cite{kullback1967PinskerTight,csiszar1967InfMeasProbDist,kemperman1969OptInfTransmitionRate}) and the fact that $D_{KL}(P^{\otimes t}\,\|\,Q^{\otimes t})=t\cdot D_{KL}(P\,\|\,Q)$ for any distributions $P,Q$ and natural number $t$, we obtain
    \begin{align*}
        \Omega(1)&\leq{d_{TV}\Big(N(\frac{1}{d},\frac{1}{4\cdot 10^6d})^{\otimes \frac{40T}{d}},N(-\frac{1}{d},\frac{1}{4\cdot 10^6d})^{\otimes \frac{40T}{d}}\Big)}\\
        &\leq\sqrt{\frac{1}{2}D_{KL}\Big(N(\frac{1}{d},\frac{1}{4\cdot 10^6d})^{\otimes \frac{40T}{d}}\,\Big\|\, N(-\frac{1}{d},\frac{1}{4\cdot 10^6d})^{\otimes \frac{40T}{d}}\Big)}\tag*{(by Pinsker's inequality)}\\
        &=\sqrt{\frac{1}{2}\cdot \frac{40T}{d}\cdot D_{KL}\Big(N(\frac{1}{d},\frac{1}{4\cdot 10^6d})\,\Big\|\, N(-\frac{1}{d},\frac{1}{4\cdot 10^6d})\Big)} =  \bigO{\sqrt{T/d^2}},
    \end{align*}
    implying $T=\Omega(d^2)$.
\end{proof}

By the discussion in \Cref{ssec:hardinstance}, we therefore obtain the following corollary.

\begin{corollary}\label{cor:classicalLB}
    Let $A$ be a $d\times d$ symmetric matrix with $\norm{A}=\mathcal{O}(1)$ and an $\Omega(1)$ gap between its top and second eigenvalues.
    Suppose we have query access to the entries of $A$. Every classical algorithm that with probability at least $\geq 99/100$, approximates the top eigenvector of $A$ with $\ell_2$-error at most $\frac{1}{1000}$ uses $\Omega(d^2)$ queries.
\end{corollary}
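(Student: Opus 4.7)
The plan is to reduce the top-eigenvector approximation problem to the Hamming-recovery problem of \Cref{thm:CLB}, by instantiating the hard distribution from \Cref{ssec:hardinstance} and then extracting a sign vector from any $\ell_2$-approximation of $v_1(A)$. Concretely, sample $u\in\{-1,1\}^d$ uniformly at random and draw $A=\frac{1}{d}uu^T+N$ with $N_{ij}\sim N(0,\frac{1}{4\cdot 10^6 d})$ for $i\leq j$ (and $N_{ji}=N_{ij}$). By the discussion in \Cref{ssec:hardinstance}, with probability at least $1-\exp(-0.04d)$ this matrix satisfies $\|A\|\leq 1+\frac{3}{2000}$, has second eigenvalue $\lambda_2(A)\leq 0.08$ (hence eigenvalue gap $\geq 0.9$), and its top unit eigenvector $v_1(A)$ (with the sign chosen so that $\inProd{v_1(A)}{u/\sqrt d}\geq 0$) satisfies $\inProd{v_1(A)}{u/\sqrt d}\geq 1-\frac{3}{1000}$. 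This $A$ therefore lies in the class of matrices considered by the corollary.

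Next I would carry out the sign-extraction reduction. Suppose an algorithm uses $T$ queries and, with probability $\geq 99/100$, outputs $\tilde u\in\mathbb{R}^d$ with either $\|\tilde u-v_1(A)\|_2\leq \frac{1}{1000}$ or $\|\tilde u+v_1(A)\|_2\leq \frac{1}{1000}$ (the $\ell_2$-approximation guarantee is up to sign, since $-v_1(A)$ is also a top eigenvector). Define $\hat u\in\{-1,1\}^d$ by $\hat u_j=\mathrm{sign}(\tilde u_j)$ (breaking ties arbitrarily), and choose the global sign of $\hat u$ so that it agrees with $u/\sqrt d$ rather than $-u/\sqrt d$ as appropriate — equivalently, by symmetry of the distribution of $u$ under negation we may assume $\tilde u$ approximates $v_1(A)$ (not $-v_1(A)$). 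By the triangle inequality,
\begin{align*}
\nrm{\tilde u-u/\sqrt d}_2
\;\leq\;\nrm{\tilde u-v_1(A)}_2+\nrm{v_1(A)-u/\sqrt d}_2
\;\leq\;\tfrac{1}{1000}+\sqrt{2-2\inProd{v_1(A)}{u/\sqrt d}}\;\leq\;\tfrac{1}{1000}+\sqrt{\tfrac{3}{500}}.
\end{align*}
Any coordinate $j$ where $\mathrm{sign}(\tilde u_j)\neq u_j$ contributes at least $1/d$ to $\nrm{\tilde u-u/\sqrt d}_2^2$, so the Hamming distance between $\hat u$ and $u$ is at most $d\cdot(\tfrac{1}{1000}+\sqrt{3/500})^2<d/100$. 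Thus the composed algorithm satisfies the hypothesis of \Cref{thm:CLB} on the distribution of $A$ described there, with success probability $\geq 99/100-\exp(-0.04d)\geq 19/20$ after accounting for the rare event that $A$ fails the spectral bounds.

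Finally, \Cref{thm:CLB} gives $T=\Omega(d^2)$ for any bounded-error classical algorithm achieving Hamming distance $\leq d/100$ from $u$ on the input distribution above. Since every query to our composed algorithm translates to exactly one query to $A$, the same $\Omega(d^2)$ lower bound applies to the original top-eigenvector approximation algorithm, which is what the corollary asserts. The only mildly delicate point is handling the sign ambiguity in the definition of a top eigenvector — but this is harmless because $u$ and $-u$ induce the same matrix $A$ (both $uu^T$ and $(-u)(-u)^T$ equal $uu^T$), so without loss of generality we may fix the sign convention on the output and on $v_1(A)$ once and for all; no step in the reduction is obstructed by it.
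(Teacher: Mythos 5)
Your proof is correct and follows the same route the paper takes: instantiate the hard distribution $A=\frac{1}{d}uu^T+N$ from \Cref{ssec:hardinstance}, verify it satisfies the corollary's hypotheses with overwhelming probability, extract $\hat u=\mathrm{sign}(\tilde u)$ from any $\ell_2$-approximation of $v_1(A)$, bound the Hamming distance via the per-coordinate $1/d$ contribution, and then invoke \Cref{thm:CLB}. Your arithmetic checks out: $\big(\tfrac{1}{1000}+\sqrt{3/500}\big)^2\approx 0.0062<1/100$. You actually spell out the sign-extraction step (triangle inequality plus the $1/d$-per-wrong-sign observation) more explicitly than the paper does, which is helpful since the paper only gestures at it in the last paragraph of \Cref{ssec:hardinstance}.

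One remark on the global-sign handling: you write that you ``choose the global sign of $\hat u$ so that it agrees with $u/\sqrt{d}$,'' which the reduction of course cannot do directly since it does not know $u$. Your follow-up — that by symmetry of $u\leftrightarrow -u$ one may WLOG assume $\tilde u$ approximates $+v_1(A)$ — is morally correct but is a slight glossing-over that is also present in the paper's own treatment (indeed, \Cref{thm:CLB} as literally stated suffers from the same ambiguity, since $u$ and $-u$ induce the same distribution on $A$). The clean resolution lives inside the proof of \Cref{thm:CLB}: the inner reduction there fixes all $u_j$ for $j\neq\mathbf{i}$ and hence can disambiguate the global sign by majority vote against the known coordinates before reading off $\hat u_{\mathbf{i}}$. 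Since this is not part of what the corollary's proof needs to supply, your proposal is fine as is.
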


This query lower bound is tight up to the constant factor, since we can compute the top eigenvector exactly using $d^2$ queries: just query every entry of~$A$ and diagonalize the now fully known matrix~$A$ (without any further queries) to find the top eigenvector exactly.

\subsection{A quantum lower bound}

Now we move to the quantum case, still using the same hard instance. Our proof uses similar ideas as the hybrid method~\cite{BBBV97Hybrid} and adversary method~\cite{Amb02AdvMethod,Amb06weightedAdv} for quantum query lower bounds, but adjusted to continuous random variables instead of input bits.

\begin{theorem}
 Let $u\in\{-1,1\}^d$ be a uniformly random vector, $e_{11},e_{12},\ldots, e_{1d},e_{22},\ldots,e_{dd}$ be $d(d+1)/2$ independent samples drawn from $N(0,\frac{1}{4\cdot 10^6d})$, and $A\in \mathbb{R}^{d\times d}$ be the random matrix defined by 
    $$
    A_{ij}=\begin{cases}
        \frac{1}{d}u_iu_j+e_{ij}, \text{ if } 1\leq i\leq j\leq d,\\
        A_{ji}, \text{ otherwise. }
    \end{cases}
    $$  
    Suppose we have quantum query access to entries of $A$. Every bounded-error quantum algorithm that computes $\tilde{u}\in\{-1,1\}^d$ at Hamming distance $\leq d/100$ from $u$, uses $\Omega{(d^{1.5}/\sqrt{\log d})}$ queries. 
\end{theorem}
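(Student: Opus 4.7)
The plan is to follow the skeleton of the classical proof of \Cref{thm:CLB}, replacing its classical sample-complexity step by a modified Ambainis adversary argument adapted to continuous Gaussian samples. Quantitatively, this should yield a $\tilde\Omega(\sqrt d)$ lower bound on the quantum query ``mass'' the algorithm must spend in any single row/column, which, combined with the averaging step, gives $T=\Omega(d^{1.5}/\sqrt{\log d})$.

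First I would lift the classical per-row query-count argument to a quantum query-mass argument. Let $\mathcal{A}$ be a $T$-query bounded-error quantum algorithm and let $\ket{\phi^{(t)}_A}$ be its state just before the $(t{+}1)$-th query on input $A$. Define the quantum query mass at position $(i,j)$ by
\[
q_{ij}(A):=\sum_{t=0}^{T-1}\nrm{(\ketbra{i,j}{i,j}\otimes I)\ket{\phi^{(t)}_A}}^2,
\]
so that $\sum_{i,j}q_{ij}(A)=T$ and the row-column mass $Q_{\mathbf{i}}(A):=\sum_{j}q_{\mathbf{i}j}(A)+\sum_{i}q_{i\mathbf{i}}(A)-q_{\mathbf{i}\mathbf{i}}(A)$ satisfies $\sum_{\mathbf{i}}Q_{\mathbf{i}}(A)\le 2T$. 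Exactly as in the classical argument, the Hamming-distance condition on the output forces a ``good'' set $I_G\subseteq[d]$ of size at least $d/2$ on which $\mathcal A$ recovers the $\mathbf{i}$-th bit with probability at least $4/5$; averaging $\mathbb{E}_A[Q_\mathbf{i}(A)]$ over $\mathbf{i}\in I_G$ yields an index $\mathbf{i}$ with $\mathbb{E}_A[Q_\mathbf{i}(A)]\le 4T/d$. Conditioning on $u_{-\mathbf{i}}$ and on all noise entries outside row/column $\mathbf{i}$ (which the algorithm can simulate by itself) and dividing each row/column entry by the known sign $u_j$ reduces the task to the following: given quantum query access to $d$ i.i.d.\ samples from either $P_+=N(1/d,\sigma^2)$ or $P_-=N(-1/d,\sigma^2)$ with $\sigma^2=1/(4\cdot 10^6 d)$, decide which is the case with probability $\ge 4/5$ using at most $\tilde T=O(T/d)$ queries in expectation.

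The core step is to prove the distinguishing lower bound $\tilde T=\tilde\Omega(\sqrt d)$ via a modified adversary argument based on a joint distribution $\mu$ over pairs $(x,x')\in\mathbb{R}^d\times\mathbb{R}^d$. One first discretizes the real line at scale $\Theta(\sigma/\sqrt{\log d})$ (so the tails are truncated cleanly into finite-precision strings and the oracle states $\ket{x_j},\ket{x'_j}$ remain identical on any coordinate where the discretized samples coincide), and then applies the optimal total-variation coupling of the discretized versions of $P_\pm$ independently in each coordinate. The resulting $\mu$ has marginals $P_+^{\otimes d}$ and $P_-^{\otimes d}$ and satisfies $\mathbb{E}_\mu|\{j:x_j\neq x'_j\}|=O(\sqrt{d/\log d})$. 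Defining the progress function
\[
W^{(t)}:=\mathbb{E}_{(x,x')\sim\mu}\,\mathrm{Re}\,\braket{\phi^{(t)}_x|\phi^{(t)}_{x'}},
\]
we have $W^{(0)}=1$ and $W^{(\tilde T)}\le 1-\Omega(1)$ whenever $\mathcal A$ distinguishes the two marginals. Only coordinates where $x_j\neq x'_j$ contribute to $W^{(t+1)}-W^{(t)}$, and a Cauchy--Schwarz expansion bounds the total drift by $O\!\left(\mathbb{E}_\mu\sum_{j:\,x_j\neq x'_j}\sqrt{Q_j(x)\,Q_j(x')}\right)$, where $Q_j(x)$ is the total query mass on coordinate $j$ for input $x$ in the reduced problem. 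Combining this with the Hamming-distance bound above, the constraint $\sum_jQ_j(x)=\tilde T$, and a careful handling of the correlation between $Q_j$ and $\mathds{1}_{x_j\neq x'_j}$ under $\mu$, one obtains $\tilde T=\tilde\Omega(\sqrt d)$. Substituting $\tilde T\le O(T/d)$ yields $T=\Omega(d^{1.5}/\sqrt{\log d})$.

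The hard part is the modified adversary argument: a direct application to two fixed continuous inputs collapses because the oracle states $\ket{x_j}$ and $\ket{x'_j}$ are orthogonal whenever $x_j\neq x'_j$, giving no per-coordinate benefit from the closeness of $P_+$ and $P_-$. The fix is to discretize and couple so that exact coordinate-wise agreement is the typical event, and then to propagate the expectation through the Cauchy--Schwarz bookkeeping while controlling the correlation between the query-mass pattern chosen by $\mathcal A$ and the random disagreement set determined by the coupling. A secondary issue is the interface between the averaging and the adversary step: the averaging only bounds $\mathbb{E}_A Q_\mathbf{i}(A)\le 4T/d$, so one needs a Markov-plus-conditioning argument to ensure that the distinguishing lower bound applies on an event of high enough probability to preserve a constant success probability for the reduced task. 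The $\sqrt{\log d}$ slack in the final bound is the residual cost of these technical maneuvers.
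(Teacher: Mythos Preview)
Your proposal shares the paper's architecture---averaging to find a good index $\mathbf{i}$ with small expected row-mass, then an adversary coupling between the $u_{\mathbf{i}}=\pm 1$ distributions---but introduces two detours the paper avoids. First, the paper runs the adversary directly on full $d\times d$ matrices rather than reducing to a separate distinguishing subproblem: the coupling $\mu$ is on pairs $(x,y)$ of full matrices agreeing outside row/column $\mathbf{i}$, and the progress measure $S_t=\E_\mu\langle\psi_x^t|\psi_y^t\rangle$ is for the original algorithm. Since $x,y$ differ only in row $\mathbf{i}$, $\sum_t\Delta_t$ is bounded directly by (a pointwise flip-probability bound) $\times\,\E_x[p_{\mathbf{i}x}]=T_{\mathbf{i}}\le 4T/d$, which eliminates the Markov/conditioning headache you flag: the expectation is exactly what the adversary sum needs.

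Second, no discretization is needed. The paper couples by setting $y_{\mathbf{i}j}=-x_{\mathbf{i}j}$ with probability $(f(x_{\mathbf{i}j})-g(x_{\mathbf{i}j}))/f(x_{\mathbf{i}j})$ and $y_{\mathbf{i}j}=x_{\mathbf{i}j}$ otherwise; this \emph{is} the optimal TV coupling for the two symmetric Gaussians, and it produces exact coordinate-wise equality or inequality without any rounding, so the hybrid bound applies as is. Your ``correlation between $Q_j$ and the disagreement set'' is then resolved by the one-line split you allude to but do not supply: the conditional flip probability given $x_{\mathbf{i}j}=z$ is $1-g(z)/f(z)=1-e^{-8\cdot10^6 z}=O(|z|)$, which is $O(\sqrt{\log d}/\sqrt d)$ whenever $|z|\le 10\sqrt{\log d}/\sqrt d$, while the complementary Gaussian tail has probability $\le d^{-100}$. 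This split is precisely where the $\sqrt{\log d}$ loss enters. (Incidentally, your claimed coupling Hamming distance $O(\sqrt{d/\log d})$ should be $O(\sqrt d)$, but that quantity is never actually used in the bound.)
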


\begin{proof}
 Let $\nu$ denote the input distribution given in the theorem statement, and $X\sim\nu$ be a random $d\times d$ input matrix according to that distribution (with instantiations of random variable~$X$ denoted by lower-case~$x$). Let $O_x$ denote the query oracle to input matrix $x$. Suppose there exists a $T$-query quantum algorithm $\mathcal{A}=U_TO_xU_{T-1}\cdots U_1O_xU_0$, alternating queries and input-independent unitaries on some fixed initial state (say, all-0), to compute such a $\tilde{u}$ with error probability $\leq 1/20$, probability taken over both $\nu$ and the internal randomness of $\mathcal{A}$ caused by the measurement of its final state. 
 Our goal is to lower bound~$T$.
 
 For $t\in\{0,\ldots,T-1\}$, let $\ket{\psi^t_x}=\sum\limits_{i,j\in[d]}\alpha^t_{xij}\ket{i,j}\ket{\phi_{xij}^t}$ be the quantum state of algorithm $\cal A$ just before its $(t+1)$st query on input matrix $x$. Let $\ket{\psi^T_x}$ be the final state on input~$x$. 
    We define the query mass on $(i,j)\in[d]\times[d]$ on input $x$ as $p_{ijx}=\sum\limits_{t\in[T]}|\alpha_{xij}^t|^2$, and define the query mass on $i$ on input $x$ as $p_{ix}=\sum\limits_{t\in[T],j\in[d]}|\alpha_{xij}^t|^2+|\alpha_{xji}^t|^2$. 
    Note that $\sum_{i,j}|\alpha^t_{xij}|^2\leq 1$ for all $x$ and $t$.  Every $|\alpha^t_{xij}|^2$ is counted once in $p_{ix}$ and once in $p_{jx}$ if $i\neq j$, and counted only in $p_{ix}$ if $i=j$. Hence we have $\sum_{i\in[d]}p_{ix}\leq 2T$ for every~$x$.    Define $T_i=\Exp_{x\sim\nu}[p_{ix}]$ as the expected query mass on the $i$th row and column of the input matrix, then $\sum_{i\in[d]}T_i\leq 2T$.

Call index $i\in[d]$ ``good'' if $\Pr[u_i=\tilde{u}_i]\geq 0.8$, where the probability is taken over~$\nu$ and the internal randomness of the algorithm. 
    Let $I_G$ be the set of good indices.
    Since $\mathcal{A}$ has error probability at most~$1/20$, we have
    $$
    \E_x[\text{Ham}(u,\tilde{u})]\leq \Pr[\text{Ham}(u,\tilde{u})\leq \frac{d}{100}]\cdot \frac{d}{100}+\Pr[\text{Ham}(u,\tilde{u})> \frac{d}{100}]\cdot d
    \leq 1\cdot \frac{d}{100}+\frac{1}{20}\cdot d
    \leq \frac{d}{10}.
    $$
    Using linearity of expectation and the definition of $I_{G}$, we have
    $$
    \frac{d}{10}\geq \E_x[\text{Ham}(u,\tilde{u})]= \sum\limits_{i\in[d]}\Pr[u_i\neq \tilde{u}_i] \geq \sum\limits_{i\in[d]\setminus I_G}\Pr[u_i\neq \tilde{u}_i]\geq \frac{d- |I_G|}{5},
    $$
    which implies $|I_G|\geq d/2$. Since $\sum_{i\in I_G}T_i\leq \sum_{i\in[d]}T_i\leq 2T$, by averaging there exists an index $\mathbf{i}\in I_G$ such that $T_{\mathbf{i}}\leq 4T/d$.
   We fix this $\mathbf{i}$ for the rest of the proof. Note that because $\mathbf{i}$ has $\Pr_\nu[u_{\mathbf{i}}=\tilde{u}_{\mathbf{i}}]\geq 0.8$, we also have
    $\Pr_{\nu_+}[u_{\mathbf{i}}=\tilde{u}_{\mathbf{i}}]\geq 0.6$ and $\Pr_{\nu_-}[u_{\mathbf{i}}=\tilde{u}_{\mathbf{i}}]\geq 0.6$, where the distributions $\nu_+$ and $\nu_-$ are $\nu$ conditioned on $u_{\mathbf{i}}=1$ and $u_{\mathbf{i}}=-1$, respectively.

We now define an (adversarial) joint distribution $\mu$ on $(X,Y)$-pairs of input matrices, such that the marginal distribution of $X$ is $\nu_+$ and the marginal distribution of $Y$ is $\nu_-$.
First sample a matrix $x$ (with associated $u\in\{-1,1\}^d$ with $u_{\mathbf{i}}=1$) according to $\nu_+$. We want to probabilistically modify this into a matrix $y$ by changing only a small number of entries, and only in the $\mathbf{i}$th row and column of~$x$.
Let $f$ and $g$ be the pdf of $N(\frac{1}{d},\frac{1}{4\cdot 10^6d})$ and $N(-\frac{1}{d},\frac{1}{4\cdot 10^6d})$, respectively. Consider an entry $x_{\mathbf{i}j}$ in the $\mathbf{i}$th row of $x$, with $j\neq \mathbf{i}$. Conditioned on the particular $u$ we sampled, its pdf was $f$ if $u_j=1$ and $g$ if $u_j=-1$. If the pdf of $x_{\mathbf{i}j}$ was $f$, then obtain $y_{\mathbf{i}j}$ from $x_{\mathbf{i}j}$ as follows: if $x_{\mathbf{i}j} >0$, then negate it with probability $\frac{f(x_{\mathbf{i}j})-g(x_{\mathbf{i}j})}{f(x_{\mathbf{i}j})}$, else leave it unchanged.

\medskip

\begin{claim}
   If $x_{\mathbf{i}j}\sim N(\frac{1}{d},\frac{1}{4\cdot 10^6d})$, then $y_{\mathbf{i}j}\sim N(-\frac{1}{d},\frac{1}{4\cdot 10^6d})$.
\end{claim}

\begin{claimproof}
Let $h$ be the pdf of $y_{\mathbf{i}j}$. 
    For a value $z>0$, we have
    $
    h(z)=f(z)-f(z)\cdot \frac{f(z)-g(z)}{f(z)}=g(z).
    $\\
    For $z\leq 0$ we have
    $
    h(z)=f(z)+f(-z)\cdot \frac{f(-z)-g(-z)}{f(-z)}=f(z)+f(-z)-g(-z)=f(-z)=g(z).
    $
\end{claimproof}

\bigskip

If the pdf of $x_{\mathbf{i}j}$ was $g$ instead of $f$, then we do something analogous: if $x_{\mathbf{i}j}<0$, then negate it with probability $\frac{g(x_{\mathbf{i}j})-f(x_{\mathbf{i}j})}{g(x_{\mathbf{i}j})}$. This gives the analogous claim: the pdf of $y_{\mathbf{i}j}$ is then $f$.

Let matrix $y$ be obtained by applying this probabilistic process to all entries in the $\mathbf{i}$th row of~$x$, and changing the entries in the $\mathbf{i}$th column to equal the new $\mathbf{i}$th row (since the resulting $y$ needs to be a symmetric matrix). Outside of the $\mathbf{i}$th row and column, $x$ and $y$ are equal. Let $\mu$ be the resulting joint distribution on $(x,y)$ pairs. An important property of this distribution that we use below, is that the $d\times d$ matrices $x$ and $y$ typically only differ in roughly $\sqrt{d}$ entries, because the probability with which $x_{\mathbf{i}j}$ is modified (=negated) is $\mathcal{O}(1/\sqrt{d})$. The marginal distribution of $Y$ is $\nu_-$, because the change we made in the $X$-distribution corresponds exactly to changing $u_{\mathbf{i}}$ from 1 to $-1$. We could equivalently have defined $\mu$ by first sampling $Y\sim\nu_-$, and then choosing $x_{ij}$ by an analogous negating procedure on~$y_{ij}$.

We now use the general template of the adversary method~\cite{Amb02AdvMethod} together with our distribution $\mu$ to lower bound the total number~$T$ of queries that $\mathcal{A}$ makes. Define a progress measure $S_t=\E_{xy\sim \mu}[\langle{\psi_{x}^t}|{\psi_{y}^t}\rangle]$. As usual in the adversary method, this measure is large at the start of the algorithm and becomes small at the end: $S_0=1$ because $\langle{\psi_{x}^0}|{\psi_{y}^0}\rangle=1$ for all $x,y$ (since the initial state is fixed, independent of the input); and $S_T\leq 1-\Omega(1)$ because for $(x,y)\sim \mu$, our algorithm outputs~1 with probability at least 0.6 on $x$ and outputs $-1$ with probability at least 0.6 on $y$, meaning that $\langle{\psi_{x}^T}|{\psi_{y}^T}\rangle$ is typically bounded below~1.
 Let $\Delta_t=|S_{t+1}-S_t|$ be the change in the progress measure due to the $(t+1)$st query. We can upper bound that change as follows:
    \begin{align*}
        \Delta_t &= |\E_{xy\sim \mu}[\langle{\psi_{x}^{t+1}}|{\psi_{y}^{t+1}}\rangle-\langle{\psi_{x}^t}|{\psi_{y}^t}\rangle]|\\
         &=|\E_{xy\sim \mu}[\langle{\psi_{x}^t}|\big( O_x^\dagger O_y-I \big)|{\psi_{y}^t}\rangle]|\\
        &=|\E_{xy\sim \mu}\left[\sum\limits_{i,j\in[d]} \alpha^t_{xij}\bra{i,j}\bra{\phi^t_{xij}} \sum\limits_{i,j:x_{ij}\neq y_{ij}}\alpha^t_{yij}\ket{i,j}\ket{\phi'^t_{yij}} \right]|\\
      & \leq \E_{xy\sim \mu} \left[\sum\limits_{i,j:x_{ij}\neq y_{ij}}|\alpha^t_{xij}|\cdot|\alpha^t_{yij}|\right]\\ 
        & = \E_{xy\sim \mu}\left[\sum\limits_{j:x_{\mathbf{i}j}\neq y_{\mathbf{i}j}}|\alpha^t_{x\mathbf{i}j}|\cdot|\alpha^t_{y\mathbf{i}j}|+\sum\limits_{j:x_{j\mathbf{i}} \neq y_{j\mathbf{i}}}|\alpha^t_{xj\mathbf{i}}|\cdot|\alpha^t_{yj\mathbf{i}}|\right]\\
        & \leq  \frac{1}{2}\E_{xy\sim \mu}\left[\sum\limits_{j:x_{\mathbf{i}j}\neq y_{\mathbf{i}j}}\Big({|\alpha^t_{x\mathbf{i}j}|^2}+{|\alpha^t_{y\mathbf{i}j}|^2}\Big)+\sum\limits_{j:x_{j\mathbf{i}}\neq y_{j\mathbf{i}}}\Big({|\alpha^t_{xj\mathbf{i}}|^2}+{|\alpha^t_{yj\mathbf{i}}|^2}\Big)\right].   
        \end{align*}
    where we use that $\ket{\psi^{t+1}_x}=U_{t+1}O_x\ket{\psi^t_x}$ and $\ket{\psi^{t+1}_y}=U_{t+1}O_y\ket{\psi^t_y}$, that $O^\dagger_xO_y:\ket{i,j,b}\rightarrow \ket{i,j}\ket{b-x_{ij}+y_{ij}}$, that $x$ and $y$ only differ in the $\mathbf{i}$th row and column, and the AM-GM inequality ($ab\leq(a^2+b^2)/2$) in the last step.
    
   Now observe that
    \begin{align*}
        \E_{xy\sim \mu}\left[\sum\limits_{j:x_{\mathbf{i}j}\neq y_{\mathbf{i}j}}|\alpha^t_{x\mathbf{i}j}|^2\right] = &\sum\limits_{j\in[d]}\E_{x}[\Pr\limits_{y\sim\mu|x}[x_{\mathbf{i}j}\neq y_{\mathbf{i}j}]\cdot|\alpha^t_{x\mathbf{i}j}|^2]\\
        = & \sum\limits_{j}\int_{0}^\infty \frac{f(x_{\mathbf{i}j})-g(x_{\mathbf{i}j})}{f(x_{\mathbf{i}j})} \cdot |\alpha^t_{x\mathbf{i}j}|^2\cdot f(x_{\mathbf{i}j})\text{d}x_{\mathbf{i}j}\\
        =& \sum\limits_{j}\Big(\int_{0}^{\frac{10\sqrt{\log d}}{\sqrt{d}}} \big(1-\frac{g(x_{\mathbf{i}j})}{f(x_{\mathbf{i}j})}\big) \cdot |\alpha^t_{x\mathbf{i}j}|^2\cdot f(x_{\mathbf{i}j})\text{d}x_{\mathbf{i}j}\\
        &+ \int_{\frac{10\sqrt{\log d}}{\sqrt{d}}}^\infty {\big(f(x_{\mathbf{i}j})-g(x_{\mathbf{i}j})}\big) \cdot |\alpha^t_{x\mathbf{i}j}|^2\text{d}x_{\mathbf{i}j}\Big)\\
        \leq &\sum\limits_{j} \Big(\max\limits_{z\in[0, \frac{10\sqrt{\log d}}{\sqrt{d}}]}|1-\exp(-8\cdot 10^6z)| \cdot \E_{x}[|\alpha^t_{x\mathbf{i}j}|^2]\\
        &+\int_{\frac{10\sqrt{\log d}}{\sqrt{d}}}^\infty {\big(f(x_{\mathbf{i}j})-g(x_{\mathbf{i}j})}\big)\text{d}x_{\mathbf{i}j}\Big)\\
        \leq & \sum\limits_{j} \Big(\frac{8\cdot 10^7\sqrt{\log d}}{\sqrt{d}}\cdot \E_{x\sim\nu_+}[|\alpha^t_{x\mathbf{i}j}|^2]+ 2d^{-100}\Big),
    \end{align*}
    where the first part of the first inequality holds because $\frac{g(x_{\mathbf{i}j})}{f(x_{\mathbf{i}j})}=\exp(-8\cdot 10^6x_{\mathbf{i}j})$, the first part of the second inequality holds because for every $z$, $1-\exp(-z)\leq z$, and the second part of the second inequality holds because both $f$ and $g$ are Gaussians with variance $\frac{1}{4\cdot 10^6d}$ and $\frac{10\sqrt{\log d}}{\sqrt{d}}-\frac{1}{d}\geq 10\cdot \frac{1}{2000\sqrt{d}}$.

We can similarly upper bound 
$ \sum\limits_{t\in[T]} \E_{xy\sim \mu}[\sum\limits_{j:x_{j\mathbf{i}}\neq y_{j\mathbf{i}}}|\alpha^t_{xj\mathbf{i}}|^2]$, 
$\sum\limits_{t\in[T]} \E_{xy\sim \mu}[\sum\limits_{j:x_{\mathbf{i}j}\neq y_{\mathbf{i}j}}|\alpha^t_{y\mathbf{i}j}|^2]$,\\ 
and $ \sum\limits_{t\in[T]} \E_{xy\sim \mu}[\sum\limits_{j:x_{j\mathbf{i}}\neq y_{j\mathbf{i}}}|\alpha^t_{yj\mathbf{i}}|^2]$. Now we have
\begin{align*}
\Omega(1)& \leq |S_0-S_T|\\
    & \leq\sum\limits_{t\in[T]-1}\Delta_t\\
       & \leq  \frac{1}{2} \sum\limits_{t\in[T]-1} \E_{xy\sim \mu}\left[\sum\limits_{j:x_{\mathbf{i}j}\neq y_{\mathbf{i}j}}\Big({|\alpha^t_{x\mathbf{i}j}|^2}+{|\alpha^t_{y\mathbf{i}j}|^2}\Big)+\sum\limits_{j:x_{j\mathbf{i}}\neq y_{j\mathbf{i}}}\Big({|\alpha^t_{x\mathbf{i}j}|^2}+{|\alpha^t_{y\mathbf{i}j}|^2}\Big)\right]\\
        & \leq\sum\limits_{t\in[T]-1,j\in[d]} \Big(\frac{4\cdot 10^7\sqrt{\log d}}{\sqrt{d}}\big(\E_{x\sim\nu_+}[|\alpha^t_{x\mathbf{i}j}|^2+|\alpha^t_{x\mathbf{i}j}|^2]+ \E_{y\sim\nu_-}[|\alpha^t_{y\mathbf{i}j}|^2+|\alpha^t_{yj\mathbf{i}}|^2]\big)+ 4d^{-100}\Big)\\
        & \leq  \frac{4\cdot 10^7\sqrt{\log d}}{\sqrt{d}}\Big(\E_{x\sim\nu_+}[p_{\mathbf{i}x}]+\E_{y\sim \nu_-}[p_{\mathbf{i}y}]\Big)+ 4d^{-97}\\
        & \leq  \frac{8\cdot 10^7\sqrt{\log d}}{\sqrt{d}}\E_{x\sim\nu}[p_{\mathbf{i}x}]+ 4d^{-97}\\
        & \leq \frac{8\cdot 10^7\sqrt{\log d}}{\sqrt{d}}\cdot \frac{4T}{d}+ 4d^{-97},
\end{align*}
where the sixth inequality uses that $\nu_+ + \nu_-=2\nu$, and that $j$ ranges over $d$ values and $t$ ranges over $T$ values (and $T\leq d^2$ without loss of generality).
This implies $T= \Omega(d^{1.5}/\sqrt{\log d})$.
\end{proof}

Again invoking the discussion in \Cref{ssec:hardinstance}, we obtain the following corollary which shows that our second algorithm is close to optimal.

\begin{corollary}\label{cor:quantumLB}
    Let $A$ be a $d\times d$ symmetric matrix with $\norm{A}=\mathcal{O}(1)$ and an $\Omega(1)$ gap between its top and second eigenvalues. Suppose we have quantum query access to the entries of $A$. Every quantum algorithm that with probability at least $\geq 99/100$, approximates the top eigenvector of $A$ with $\ell_2$-error at most $\frac{1}{1000}$ uses $\Omega(d^{1.5}/\sqrt{\log d})$ queries.
\end{corollary}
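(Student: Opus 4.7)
The plan is a direct reduction from the Hamming-distance recovery task established in the preceding theorem, using the hard instance already constructed in Subsection~5.1. Draw $u\in\{-1,1\}^d$ uniformly at random and form $A=\frac{1}{d}uu^T+N$ with the symmetric Gaussian noise $N$ specified there. The structural facts already collected in that subsection give that, except with probability $\exp(-\Omega(d))$ over the draw of $A$, one has $\nrm{A}\leq 1+\tfrac{3}{2000}$, the absolute-value eigenvalue gap $|\lambda_1(A)|-|\lambda_2(A)|\geq 1-\tfrac{3}{2000}-0.08=\Omega(1)$, and the appropriately signed top unit eigenvector $v_1(A)$ satisfies $\inProd{v_1(A)}{u/\sqrt{d}}\geq 1-\tfrac{3}{1000}$, equivalently $\nrm{v_1(A)-u/\sqrt{d}}_2^2\leq \tfrac{3}{500}$. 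Thus almost every realization of $A$ lies in the hypothesis class of the corollary.

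Next, suppose $\mathcal{B}$ is a $T$-query bounded-error quantum algorithm that, with probability at least $99/100$, outputs a unit vector $\tilde{v}$ within $\ell_2$-distance $\tfrac{1}{1000}$ of $v_1(A)$ (modulo the unavoidable global sign). I would build a recovery algorithm $\mathcal{A}$ that simply runs $\mathcal{B}$ and outputs $\tilde{u}:=\mathrm{sign}(\tilde{v})$. On the good event, the triangle inequality gives
\[
\nrm{\tilde{v}-u/\sqrt{d}}_2\leq \tfrac{1}{1000}+\sqrt{3/500}<0.08,\qquad\text{so}\qquad \nrm{\tilde{v}-u/\sqrt{d}}_2^2<\tfrac{1}{100}.
\]
Since every coordinate $i$ with $\tilde{u}_i\neq u_i$ forces $\tilde{v}_i$ and $u_i/\sqrt{d}$ to have opposite signs and hence contributes at least $1/d$ to this squared distance, I would conclude $\mathrm{Ham}(u,\tilde{u})< d/100$. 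The failure probability of $\mathcal{A}$ is at most $\tfrac{1}{100}+\exp(-\Omega(d))<\tfrac{1}{20}$, which fits the bounded-error regime of the preceding theorem, so that theorem yields $T=\Omega(d^{1.5}/\sqrt{\log d})$, as claimed.

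The main obstacle I anticipate is purely notational: the global-sign ambiguity in recovering $v_1(A)$ (and hence $u$), since $A$ only depends on $uu^T$. This is benign because the preceding theorem's proof is naturally insensitive to simultaneously flipping $u$ and $\tilde{u}$, and in any case one can output both sign choices of $\tilde{u}$ and post-select using the majority sign of $A_{1j}\tilde{u}_1\tilde{u}_j$ over a handful of additional queries, which adds only $\mathcal{O}(1)$ to the query count and negligible failure probability. The only other thing to verify is constant-tuning in the Hamming-vs-$\ell_2$ calculation, which the display above settles with comfortable slack.
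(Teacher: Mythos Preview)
Your proposal is correct and follows exactly the route the paper takes: the paper's entire ``proof'' of this corollary is the one-line pointer back to the hard-instance discussion in \Cref{ssec:hardinstance}, and you have simply written out that discussion in detail, with the same sign-agreement counting argument.

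One small correction on the sign-ambiguity fix: your fallback of resolving the global sign via ``the majority sign of $A_{1j}\tilde{u}_1\tilde{u}_j$ over a handful of additional queries'' cannot work with $\mathcal{O}(1)$ queries, since each entry carries signal $\pm 1/d$ buried in noise of order $1/\sqrt{d}$; in fact the global sign is information-theoretically unrecoverable from $A$, because $A$ depends only on $uu^T$. Your first remark is the right one: the adversary proof of the preceding theorem only ever compares inputs that differ in a \emph{single} coordinate $u_{\mathbf{i}}$ (not in a global flip), so it goes through verbatim for the task ``recover $u$ up to global sign,'' which is what your reduction actually achieves. The paper glosses over this point just as you do.
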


\subsubsection*{Acknowledgements.}
We thank Alex Wang for discussions about his work on classical optimization algorithms~\cite{ding&wang:sharp} that stimulated the question of quantum speed-up for the power method, Hsin-Po Wang for useful discussions, in particular for the suggestion that Gaussian noise in entries might suffice for the power method, Jeroen Zuiddam for inducing us to look at matrix-matrix multiplication and verification, Jordi Weggemans for useful discussions, Fran\c{c}ois Le Gall for mentioning the middle-eigenvalue problem, Frédéric Magniez for pointing out a small technical issue with the wrap-around in Gaussian phase estimation (which we fixed in this version), and Christoph Lampert and Marco Mondelli for discussions about applications of algorithms for finding the top eigenvector.

\bibliographystyle{alphaUrlePrint}
\bibliography{ref.bib,qc_gily.bib}

\end{document}